\newtheorem{Theorem}{Theorem}
\newtheorem{Lemma}{Lemma}
\newtheorem{Corollary}{Corollary}%[section]
\numberwithin{Theorem}{section}
\numberwithin{Definition}{section}
\numberwithin{Lemma}{section}
\numberwithin{Corollary}{section}
\numberwithin{Algorithm}{section}
\numberwithin{equation}{section}
\DeclarePairedDelimiter{\ceil}{\lceil}{\rceil}
\DeclarePairedDelimiter{\floor}{\lfloor}{\rfloor}
\title{The Stackelberg Game: \\responses to regular strategies}
\author{Thomas Byrne}
\date{2020}
\begin{document}
%\nobibliography*{}
%\pagenumbering{roman}

%\maketitle
\begin{titlepage}
%\setstretch{1.3}
\vspace*{.5em}
\center
\vspace*{1em}
%\begin{figure}[!h]
%\centering
%\includegraphics[width=150pt]{Logo.png}
%\end{figure}
\vspace{2em}
\textbf{\Huge{The Stackelberg Game: \\responses to regular strategies}}\par%\\
\vspace{2em}
\textbf{\normalsize{ }}\par%\\
\textbf{\LARGE{Thomas Byrne}}\\
\vspace{1em}
\text{tbyrne@ed.ac.uk}\\
\vspace{1em}

%\flushleft{\hspace{5.9cm}Strathclyde Business School\\
%
%\hspace{5.9cm}University of Strathclyde\\
%
%\hspace{5.9cm}199 Cathedral Street\\
%
%\hspace{5.9cm}Glasgow\\
%
%\hspace{5.9cm}G4 0QU\\
%
%\hspace{5.9cm}United Kingdom}

\flushleft{}

\parbox{\linewidth}{This document contains exclusively Chapter 5 of the author's PhD thesis titled \textit{Facility Location Problems and Games}. Besides the page numbering (where page~1 in this document corresponds to page~75 in the thesis), the chapter is reproduced here in its entirety and appears identically to how it appears in the thesis. This document serves to act as a reference chapter. Interested readers, and those desiring a more complete introduction to the notation used and the material explored in this chapter, are encouraged to read the preceding and subsequent chapters in the thesis. Additionally, the related paper \href{https://doi.org/10.48550/arXiv.2011.13275}{\textit{Competitive Location Problems: Balanced Facility Location and the One-Round
Manhattan Voronoi Game}} may be of interest.}

\vspace{1em}

\parbox{\linewidth}{The following results from \textit{Facility Location Problems and Games} are referenced in the chapter:}

\setcounter{chapter}{4}
\setcounter{section}{1}
\setcounter{subsection}{1}
\begin{Lemma}
\label{steal}
For any arena $\mathcal{P}$, Black can place a point $b$ within any bounded Voronoi cell $V^\circ(w)$ of White's in order to steal at least $50(1-\varepsilon)\%$ of $V^\circ(w)$ for any $\varepsilon>0$.
\end{Lemma}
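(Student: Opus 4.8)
The plan is to place $b$ extremely close to $w$, still inside $V^\circ(w)$, and in a direction chosen so that the perpendicular bisector of the segment $wb$ passes almost exactly through $w$ and splits $V^\circ(w)$ into two (almost) equal halves; Black then owns the half on $b$'s side. Write $A=\mathrm{area}(V^\circ(w))\in(0,\infty)$ and let $D<\infty$ be its diameter. First I would record that it suffices to control the \emph{stolen} region $\{x\in V^\circ(w):d(x,b)<d(x,w)\}$: if $x\in V^\circ(w)$ then $d(x,w)\le d(x,w')$ for every other White facility $w'$, so $d(x,b)<d(x,w)$ already makes $b$ the closest facility to $x$ among all of White's and Black's points, i.e.\ $x$ is captured. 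The equidistant set of $w$ and $b$ is a line and hence contributes no area, so ties are irrelevant.

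\textbf{Choosing the direction by an intermediate value argument.} For a unit vector $u$, let $\ell_u$ be the line through $w$ orthogonal to $u$ and set $h(u)=\mathrm{area}\bigl(V^\circ(w)\cap\{x:\langle x-w,u\rangle>0\}\bigr)$, the area of the part of the cell strictly on the $u$-side of $\ell_u$. Then $h$ is continuous on the unit circle and $h(u)+h(-u)=A$, so by the intermediate value theorem there is a direction $u^\star$ with $h(u^\star)=A/2$; that is, $\ell_{u^\star}$ bisects $V^\circ(w)$. Since $w\in V^\circ(w)$, the directions pointing from $w$ into $V^\circ(w)$ form a cone, which is the whole circle whenever $w$ is not on the boundary of the arena, and in that case it contains $u^\star$ or $-u^\star$; relabelling if necessary, assume $u^\star$ points into $V^\circ(w)$.

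\textbf{Placing $b$ and bounding the loss.} For small $\delta>0$ put $b=w+\delta u^\star$, so that $b\in V^\circ(w)$. The perpendicular bisector of $wb$ is the translate of $\ell_{u^\star}$ through the midpoint $w+\tfrac{\delta}{2}u^\star$, with $b$ on its far side, so the region stolen from $V^\circ(w)$ is exactly
\[
S_\delta \;=\; V^\circ(w)\cap\Bigl\{x:\langle x-w,u^\star\rangle>\tfrac{\delta}{2}\Bigr\}.
\]
This is $V^\circ(w)\cap\{x:\langle x-w,u^\star\rangle>0\}$ (area $A/2$) with the sliver of $V^\circ(w)$ in the strip $0<\langle x-w,u^\star\rangle\le\tfrac{\delta}{2}$ removed; since $V^\circ(w)$ lies in a disc of radius $D$, that sliver has area at most $\delta D$. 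Hence $\mathrm{area}(S_\delta)\ge \tfrac{A}{2}-\delta D$, and taking $\delta\le \varepsilon A/(2D)$ gives $\mathrm{area}(S_\delta)\ge \tfrac{A}{2}(1-\varepsilon)$, i.e.\ Black steals at least $50(1-\varepsilon)\%$ of $V^\circ(w)$.

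\textbf{Expected main obstacle.} The two geometric facts used — that a bounded planar region is bisected by a line through any interior point, and that shifting that line by $\tfrac{\delta}{2}$ costs only $O(\delta)$ area — are routine. The genuinely delicate point is honouring the constraint ``$b$ within $V^\circ(w)$'' when $w$ lies on the boundary of the arena: there the feasible directions at $w$ form only a proper cone, which need not contain the bisecting direction $u^\star$ nor $-u^\star$, so one must instead place $b$ along a near-optimal feasible direction and check the resulting half still carries at least $\tfrac{A}{2}(1-\varepsilon)$, or else invoke the thesis's standing assumptions on admissible facility positions. For $w$ interior to the arena, which is the typical situation, no such complication arises.
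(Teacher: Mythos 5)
There is a genuine gap, and it is not in the measure-theoretic bookkeeping but in the metric. This chapter (and the thesis it is extracted from, cf.\ the companion paper on the \emph{Manhattan} Voronoi game, the cone conditions $\mathcal{CC}^1(w),\dots,\mathcal{CC}^8(w)$, the ``diagonal part'' of bisectors, and Lemma~\ref{symmetricarea} about the horizontal and vertical through $w$) works in the $L_1$ metric, not the Euclidean one. Your argument uses two Euclidean-specific facts: that the equidistant set of $w$ and $b$ is always a line of measure zero, and that for $b=w+\delta u^\star$ the stolen region is the half-plane $\{x:\langle x-w,u^\star\rangle>\delta/2\}$. Neither holds in $L_1$: the bisector of two points in general position is a polyline (an axis-parallel middle segment flanked by two diagonal rays), so the stolen set is not the $u^\star$-side of a translate of $\ell_{u^\star}$, and when $w$ and $b$ are at $45^\circ$ to each other the equidistant set even contains two-dimensional pieces. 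Consequently the intermediate-value argument over all unit directions $u^\star$ does not deliver what you need: you may well find a Euclidean line through $w$ bisecting the cell, but placing $b$ along its normal does not make Black capture (approximately) that half.

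The repair is short and is almost certainly what the thesis does. Restrict to the four axis directions: if $b=w+(\delta,0)$ then the $L_1$ bisector of $w$ and $b$ \emph{is} the vertical line $x=w_x+\delta/2$, and similarly for the other three directions. The vertical line through $w$ splits $V^\circ(w)$ into two parts whose areas sum to $A$, so one of them has area at least $A/2$; place $b$ at distance $\delta$ into that side (this also resolves your boundary worry, since the favoured side has positive area and the cell is star-shaped about $w$, so such a $b$ lies in $V^\circ(w)\subseteq\mathcal{P}$ for small $\delta$). Your sliver estimate then applies verbatim: the loss is contained in a strip of width $\delta/2$ and diameter at most $D$, so area at most $\delta D$, and $\delta\le\varepsilon A/(2D)$ gives at least $50(1-\varepsilon)\%$. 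Note also that this document only quotes Lemma~\ref{steal} from an earlier chapter and does not reproduce its proof, so the comparison above is against the chapter's standing conventions rather than a printed argument.
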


\begin{Lemma}
\label{equalarea}
For any arena $\mathcal{P}$, if the Voronoi cells $V^\circ(w)$ have unequal area then Black can win.
\end{Lemma}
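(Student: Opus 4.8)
The plan is to exploit the slack in Lemma~\ref{steal}. A single Black point near a White site $w$ is only guaranteed $50\%$ of $V^\circ(w)$, but \emph{two} Black points near $w$ can be guaranteed almost all of it; doubling up on the largest cell while abandoning the smallest then breaks the $50/50$ tie that one point per cell would give. So the first step is to record a two-point strengthening of Lemma~\ref{steal}: for any $\varepsilon>0$, Black can place two points inside a bounded cell $V^\circ(w)$ so as to secure at least $(1-\varepsilon)\,|V^\circ(w)|$ of it. I would prove this by mimicking Lemma~\ref{steal}: place both Black points within a small distance $\delta$ of $w$, one on each side of $w$ along a line $\ell$ through $w$. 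Any point of $\mathcal{P}$ in the interior of a different White cell is strictly closer to that cell's site than to $w$, hence closer to it than to either Black point once $\delta$ is small, so Black gains essentially nothing outside $V^\circ(w)$; inside $V^\circ(w)$, the region won by Black's two points is all of $V^\circ(w)$ except a strip of width $\delta$ about the line through $w$ perpendicular to $\ell$, whose area within the (bounded) cell tends to $0$ as $\delta\to 0$.

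Next comes the bookkeeping. Write $A=|\mathcal{P}|=\sum_{w}|V^\circ(w)|$, the sum over White's $n$ sites (all of whose cells we take to be bounded, as is implicit in speaking of their areas), so that Black wins once he secures more than $A/2$. Since the cells have unequal area there are $n\ge 2$ sites, and among them there is one, $u$, with maximal cell area and one, $v\neq u$, with minimal cell area; set $\gamma:=|V^\circ(u)|-|V^\circ(v)|>0$.

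Black's strategy, for a small $\varepsilon>0$ to be fixed at the end, would be: place two points in $V^\circ(u)$ as in the strengthened lemma, securing at least $(1-\varepsilon)|V^\circ(u)|$ of it; place one point in each of the other $n-2$ White cells apart from $V^\circ(v)$, each securing at least $\tfrac12(1-\varepsilon)$ of its cell by Lemma~\ref{steal}; and place nothing in $V^\circ(v)$. This uses exactly $2+(n-2)=n$ points. Since every Black point sits very close to a White site, it contributes only a vanishing amount to cells other than that site's, so the amounts Black secures across the various White cells add (and contributions from ``foreign'' Black clusters only help), giving a total of at least
\[
(1-\varepsilon)\Bigl(|V^\circ(u)|+\tfrac12\!\!\sum_{w\neq u,v}\!\!|V^\circ(w)|\Bigr)=(1-\varepsilon)\bigl(\tfrac12 A+\tfrac12\gamma\bigr),
\]
using $\sum_{w}|V^\circ(w)|=A$. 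As $\gamma>0$, this exceeds $A/2$ once $\varepsilon<\gamma/(A+\gamma)$, so fixing such an $\varepsilon$ completes the proof.

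I expect the main obstacle to be the two-point strengthening of Lemma~\ref{steal}: one must make rigorous, with a single choice of the small radius $\delta$, both that Black captures almost all of $V^\circ(w)$ from within and that he captures essentially nothing outside it, and then verify that the per-cell gains genuinely sum over all of White's cells without harmful interference between Black's clusters (the ``foreign'' captures are $O(\delta)$ and nonnegative, so they are harmless, but this should be checked). Once that is in place, the counting step is routine.
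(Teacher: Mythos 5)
Your proposal is correct and is essentially the argument the thesis uses for this lemma: double up on a largest cell to claim almost all of it via the $\varepsilon$-close placement of Lemma~\ref{steal}, take almost half of every other cell except a smallest one, and observe that the total $\approx |V^\circ(u)|+\tfrac12\sum_{w\neq u,v}|V^\circ(w)|$ exceeds half the arena by half the area gap. The only remark worth making is that your worry about ``foreign'' captures is moot: since the guaranteed gains are computed within pairwise disjoint White cells and any extra area Black picks up elsewhere is nonnegative, no uniform control of spillover is needed.
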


\begin{Lemma}
\label{symmetricarea}
For any arena $\mathcal{P}$, if any Voronoi cell $V^\circ(w)$ has unequal area either side of the horizontal or vertical through $w$ then Black can win.
\end{Lemma}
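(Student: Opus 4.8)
The plan is to hand Black an explicit response that captures strictly more than half of $\mathcal{P}$. By Lemma~\ref{equalarea} we may assume that every Voronoi cell of White has a common area $a$ (otherwise Black already wins), so that if $n$ denotes the number of cells then $\operatorname{area}(\mathcal{P}) = na$. Let $V^\circ(w_0)$ be the cell with unequal area on the two sides of, say, the horizontal line $\ell$ through $w_0$ — the vertical case is identical after exchanging the coordinate axes. Writing $a^+$ and $a^-$ for the areas of $V^\circ(w_0)$ strictly above and strictly below $\ell$, we have $a^+ + a^- = a$ and, without loss of generality, $a^+ > a^-$; hence $\Delta := a^+ - \tfrac a2 = \tfrac12(a^+ - a^-)$ is a fixed positive number, which will be the surplus driving the whole argument.

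Black's response depends on small parameters $\delta,\varepsilon>0$ fixed at the end. Into $V^\circ(w_0)$ Black places $b_0 := w_0 + (0,\delta)$, i.e.\ just on the heavier side of $w_0$; into every other cell $V^\circ(w_i)$ Black places a point $b_i$ as furnished by Lemma~\ref{steal} with parameter $\varepsilon$, so that $b_i$ on its own would steal at least $50(1-\varepsilon)\%$ of $V^\circ(w_i)$, and the construction there can be arranged so that $b_i$ lies within a distance of $w_i$ that shrinks to $0$ as $\varepsilon\to 0$. The point of placing $b_0$ where I do is that the bisector of $b_0$ and $w_0$ is the horizontal line $\{y = w_{0,y} + \delta/2\}$ — true in the $L_1$ metric exactly as in the Euclidean one — so the subset of $V^\circ(w_0)$ strictly nearer to $b_0$ than to $w_0$ is $V^\circ(w_0)\cap\{y>w_{0,y}+\delta/2\}$, whose area is at least $a^+ - O(\delta)$. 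Since all of Black's remaining sites cluster near the corresponding $w_i$, the site $b_0$ is in fact the globally closest black site to this region apart from a thin layer of width $O(\delta+\varepsilon)$ along $\partial V^\circ(w_0)$, so Black secures at least $a^+ - O(\delta+\varepsilon)$ of $V^\circ(w_0)$, and recall $a^+ = \tfrac a2 + \Delta$.

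From each of the other $n-1$ cells Black secures at least the $(1-\varepsilon)\tfrac a2$ promised by Lemma~\ref{steal}, less an $O(\delta+\varepsilon)$ correction for interference with neighbouring black sites (and any area so lost is in any case gained by Black in the adjacent cell). Summing over all $n$ cells, Black's total area is at least
\[
a^+ - O(\delta+\varepsilon) + (n-1)\!\left((1-\varepsilon)\tfrac a2 - O(\delta+\varepsilon)\right) \;=\; \tfrac{na}{2} + \Delta - O\!\left(n(\delta+\varepsilon)\right) \;=\; \tfrac12\operatorname{area}(\mathcal{P}) + \Delta - O\!\left(n(\delta+\varepsilon)\right),
\]
and since $\Delta>0$ and $n$ are fixed, taking $\delta$ and $\varepsilon$ small enough makes this strictly larger than $\tfrac12\operatorname{area}(\mathcal{P})$; hence Black wins.

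The delicate point — and the one I would put the most care into — is the claim that the black sites placed simultaneously really do realise the per-cell guarantees of Lemma~\ref{steal} (and of the $b_0$-estimate) up to the stated $O(\delta+\varepsilon)$ losses, rather than cannibalising one another's territory: this is precisely why each $b_i$ is taken so close to $w_i$, for then $b_i$'s Voronoi cell among \emph{all} the sites still contains the region guaranteed by Lemma~\ref{steal} except for a strip of width $O(\delta+\varepsilon)$ running along $\partial V^\circ(w_i)$, and such strips contribute only $O(n(\delta+\varepsilon))$ in total, which is dominated by the fixed surplus $\Delta$. A secondary point to check is that $b_0$ indeed lies in the interior of $V^\circ(w_0)$ for small $\delta$, which is immediate since $w_0$ is an interior point of its own cell.
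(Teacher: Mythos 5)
Your proof is correct and follows essentially the same route as the paper: reduce to equal-area cells via Lemma~\ref{equalarea}, place one black point at $w_0+(0,\delta)$ so that the $L_1$ bisector with $w_0$ is the horizontal line $y=w_{0,y}+\delta/2$ and Black captures the heavier half up to $O(\delta)$, then apply Lemma~\ref{steal} in each remaining cell and let $\delta,\varepsilon\to 0$ against the fixed surplus $\Delta$. The only remark is that your ``delicate point'' about the black sites cannibalising one another is moot: within each cell $V^\circ(w_i)$ any point closer to $b_i$ than to $w_i$ is automatically closer to a black site than to every white site, so the per-cell lower bounds occupy disjoint regions of $\mathcal{P}$ and add up with no interference correction at all.
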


\setcounter{section}{2}
\begin{Theorem}
The only winning arrangement for White is the $1 \times n$ arrangement for $\frac{p}{q} \geq n$. Otherwise Black wins.
\label{TheoremVoronoiGame}
\end{Theorem}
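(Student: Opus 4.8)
The plan is to prove both directions of the claimed dichotomy: that White's $1\times n$ row with $p/q\ge n$ is a winning configuration, and that White loses (i.e.\ Black has a winning response) against every other configuration.

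For the ``otherwise'' direction I would use the three lemmas of the preceding section as a filter. Since Black can always take at least $50(1-\varepsilon)\%$ of every bounded cell (Lemma~\ref{steal}), White can only hope to win if Black is effectively pinned to half of each cell, so by Lemma~\ref{equalarea} all of White's cells must have equal area and by Lemma~\ref{symmetricarea} each cell must be symmetric about the horizontal and the vertical line through its generator. The heart of the argument is then to show that, among configurations whose cells are equal-area and doubly axis-symmetric, only the $1\times n$ row with $p/q\ge n$ withstands Black's best reply. I would argue first that such a family of $L_1$-Voronoi cells tiling the rectangular arena must consist of congruent axis-parallel rectangles with each generator at the centre, so White's configuration is an $r\times s$ grid with $rs=n$; then I would hand Black an explicit winning response in each remaining case — when the grid is genuinely two-dimensional, $\min(r,s)\ge 2$, by placing points at the interior vertices where four cells meet (where the $L_1$ bisector structure lets a single Black point claim more than a quarter of each incident cell), and when the configuration is a single row or column but the cells are tall, $p/q<n$, by placing one point per cell so as to take strictly more than half of it together with slivers of its neighbours. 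Turning these pictures into a rigorous area estimate — in particular tracking the contributions a Black point makes across adjacent cells — is, I expect, the main obstacle.

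For the forward direction I fix White's $1\times n$ row, so every cell is a $\tfrac{p}{n}\times q$ rectangle whose width $\tfrac pn$ is at least its height $q$, with the generator at the centre, and I must show that no placement of Black's $n$ points captures more than $\tfrac12 pq$. The basic device is a reflection argument: if $R$ is point-symmetric about a generator $w$, then for every $b\ne w$ the set strictly closer (in the $L_1$ metric) to $b$ than to $w$ and its image under reflection through $w$ are disjoint, because $d_1(x,b)+d_1(x,2w-b)\ge 2\,d_1(x,w)$ for all $x$; hence a single Black point claims at most half of such an $R$. This alone does not close the proof, since Black might concentrate several points near one generator (``flanking'') to take almost all of a cell; the real content is that, because the cells are wide and lie in a single row, such concentration is never profitable — the gain on the flanked cell is paid for by the cells Black must then neglect, and there is no leakage between non-adjacent cells. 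Making this accounting precise, and checking that the bound $\tfrac12 pq$ is only approached in the limit so that White's share is strictly larger, is where the computation lies; a fallback is to prove the uniform per-point estimate that each Black point captures at most the area of one half-cell even when it straddles a cell boundary, and then sum over Black's $n$ points.
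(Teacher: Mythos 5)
A preliminary remark: this document does not actually contain the proof of Theorem~\ref{TheoremVoronoiGame} --- it is imported from Chapter~4 of the thesis and only its ingredients are described here (the four structural lemmas, and the fact that ``the supremum of all areas of $V^+(b_1)$ in Sections $III$ and $IV$ was found to be $\frac{pq}{2n}$''). Your overall architecture matches that description: necessary conditions filter White's candidate arrangements down to grids, explicit Black replies kill the survivors, and a per-point supremum bound of $\frac{pq}{2n}$ on $Area(V^+(b_1))$ handles the $1\times n$, $\frac{p}{q}\geq n$ case. But two of your concrete steps fail. First, your proposed winning reply to a genuinely two-dimensional grid --- placing Black points at interior vertices where four cells meet --- does not win. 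For the square grid (which, by Lemma~\ref{lem:grids}, is the only two-dimensional survivor of the filter), a Black point at such a vertex claims from the cell of the generator at $(d,d)$ exactly the triangle $\{x+y<d\}\cap[0,d]^2$ of area $\frac{d^2}{2}$, and nothing from the three other quadrants of that cell; summed over the four incident cells this is $2d^2$, i.e.\ exactly half of one $4d^2$ cell, so each such point nets at most $\frac{pq}{2n}$ --- and there are only $(a-1)(b-1)<n$ interior vertices to use. The placements that actually beat a grid are those found in Section~\ref{sec:BlackGrid}: e.g.\ $(0,\frac{q}{2b})$ relative to a generator, on the shared edge between two vertically adjacent cells, which nets $\frac{pq}{2ab}+\frac{q^2}{8b^2}>\frac{pq}{2ab}$.

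Second, two of your reductions are asserted where the real work lies. The claim that equal-area, doubly axis-symmetric $L_1$-cells must be congruent axis-parallel rectangles does not follow from Lemmas~\ref{equalarea} and~\ref{symmetricarea} alone: $L_1$ Voronoi cells admit diagonal bisector edges, and the reduction to grids genuinely needs Corollary~\ref{armsequal} and Lemmas~\ref{armsperpendicular} and~\ref{armsparallel} together with the neighbour-propagation argument of Lemma~\ref{lem:grids}. And in the forward direction, your reflection inequality $d_1(x,b)+d_1(x,2w-b)\geq 2d_1(x,w)$ only bounds the theft from \emph{each individual cell} by half of \emph{that} cell; a single Black point in Section $2l$ or $2l+1$ steals simultaneously from up to $2l+1$ cells, so this gives a bound of roughly $l\cdot\frac{pq}{2n}$ rather than $\frac{pq}{2n}$. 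Your ``fallback'' --- the uniform per-point estimate $Area(V^+(b_1))\leq\frac{pq}{2n}$ with the supremum unattained --- is in fact the theorem's actual engine, and establishing it is precisely the section-by-section computation (of the kind carried out at length in Section~\ref{sec:WhiteRow}) that the proposal defers. As written, then, the proposal identifies the right skeleton but leaves the decisive estimates unproved and proposes one Black strategy that demonstrably does not win.
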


\begin{Corollary}
For any Voronoi cell $V^\circ(w)$ in a winning arrangement of White's, if $V^\circ(w)$ does not touch opposite sides of $\mathcal{P}$ then its arms are all equal length.
\label{armsequal}
\end{Corollary}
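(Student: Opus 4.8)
My first move would be to check whether the statement is already settled by Theorem~\ref{TheoremVoronoiGame}: if the arena $\mathcal P$ is the rectangle of that theorem, then White's only winning arrangement is the $1\times n$ one, every cell of which is a full-height (or full-width) strip and hence meets two opposite sides of $\mathcal P$, so no cell satisfies the hypothesis and the Corollary holds vacuously. For a general arena $\mathcal P$, the setting of Lemmas~\ref{steal}, \ref{equalarea} and \ref{symmetricarea}, the statement has genuine content, and I would argue by contradiction: assume White's arrangement is winning, let $V^\circ(w)$ be a bounded cell not meeting two opposite sides of $\mathcal P$, suppose its four arms --- the maximal axis-parallel segments from $w$ to $\partial V^\circ(w)$ pointing up, down, left, right --- are not all of the same length, and build a Black response that captures more than half of $\mathcal P$. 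By Lemma~\ref{equalarea} all bounded cells have the common area $A=\lvert\mathcal P\rvert/n$, and by Lemma~\ref{symmetricarea} the horizontal and vertical through $w$ each cut $V^\circ(w)$ into halves of area $A/2$; these are the facts I aim to contradict.

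The engine of the argument is an upgrade of Lemma~\ref{symmetricarea} from the coordinate axes to all directions. I claim that in a winning arrangement every line through $w$ bisects $V^\circ(w)$: if some line $\ell$ through $w$ left area $\tfrac{A}{2}+\delta$ on one side, Black places a single point close to $w$ and just across $\ell$ (legitimate, as $w$ is interior to a bounded full-dimensional cell), which as the point tends to $w$ captures the heavy side of $\ell$ inside $V^\circ(w)$ and nothing from neighbouring cells, hence secures more than $\tfrac{A}{2}+\tfrac{\delta}{2}$ there; Black then plays one Lemma~\ref{steal}-type point deep in each of the other $n-1$ cells for at least $\tfrac12(1-\varepsilon)A$ from each with no interference, and for $\varepsilon$ small the surplus $\tfrac{\delta}{2}$ outweighs the total shortfall $(n-1)\tfrac{\varepsilon}{2}A$, so Black takes strictly more than $\lvert\mathcal P\rvert/2$ --- a contradiction. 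Thus every chord of $V^\circ(w)$ through $w$ halves it, and by the classical fact that a convex body, every chord of which through a fixed interior point halves the area, is centrally symmetric about that point, $V^\circ(w)$ is centrally symmetric about $w$; in particular the up- and down-arms agree and the left- and right-arms agree, and it remains to show the two perpendicular arm-lengths coincide.

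That final step is where the hypothesis ``does not touch opposite sides'' is consumed, and it is the part I expect to be hardest. Since $V^\circ(w)$ does not span $\mathcal P$ horizontally, at least one horizontal arm ends on a true Voronoi edge shared with a neighbour $w'$, whose cell is again a centrally-symmetric convex set of area $A$; likewise vertically. Writing $a=\ell=r$ and $b=u=d$, I would propagate a width--height imbalance $a\neq b$ along a maximal chain of horizontally adjacent cells --- which, being maximal, runs between two sides of $\mathcal P$ --- and argue that fitting congruent-area centrally-symmetric convex pieces together along this chain, compatibly with the analogous vertical chain through $w$, forces some intermediate cell to fail Lemma~\ref{symmetricarea} unless the chain's end cells meet a pair of opposite sides; failing that, a residual imbalance leaves room near one long arm for a single Lemma~\ref{steal}-type point to claim more than $A/2$ of $V^\circ(w)$ by itself. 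Either way Black wins, so $a=b$ and all four arms are equal. The real work, and the main obstacle, is making this packing/propagation analysis precise and confirming that the sole way White avoids the contradiction is to let a cell run between two opposite sides of $\mathcal P$ --- exactly the configuration excluded by the hypothesis and realised by the strips of the $1\times n$ arrangement.
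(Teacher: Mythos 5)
Your central device --- upgrading Lemma~\ref{symmetricarea} to ``every line through $w$ bisects $V^\circ(w)$'' by placing a Black point at $w+\varepsilon v$ and letting $\varepsilon\to 0$ --- does not survive contact with the metric this work actually uses. The game is played under the Manhattan ($L_1$) distance: that is what gives the cells their ``arms'', the eight configuration cones $\mathcal{CC}^1,\dots,\mathcal{CC}^8$, and bisectors built from horizontal, vertical and diagonal parts. For $b=w+\varepsilon v$ the set of $z$ with $\lVert z-b\rVert_1<\lVert z-w\rVert_1$ is, in the limit, decided quadrant by quadrant by the sign of $\operatorname{sgn}(z_x-w_x)\,v_x+\operatorname{sgn}(z_y-w_y)\,v_y$, so the region Black captures inside $V^\circ(w)$ is always a union of coordinate quadrants of $w$ --- never the heavy side of an oblique line $\ell$. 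This is precisely why Lemma~\ref{symmetricarea} is stated only for the horizontal and vertical through $w$; moreover the cells are staircase-bounded rather than convex, so the classical ``all area-bisecting chords through a point imply central symmetry'' theorem has neither its hypothesis nor its convexity assumption available. What the established lemmas actually yield is only that opposite quadrants of $V^\circ(w)$ have equal area (subtract the two equalities of Lemma~\ref{symmetricarea}), which is a long way from all four arms being equal.

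Even granting central symmetry, you concede it gives only $u=d$ and $l=r$, and the decisive step --- that the horizontal arm length equals the vertical one, which is exactly where the hypothesis ``does not touch opposite sides'' must be spent --- is left as a programme to ``propagate the imbalance along a chain of cells'' whose precise execution you yourself flag as the real work. That step is the entire content of the corollary, so as written the proposal establishes nothing beyond Lemma~\ref{symmetricarea}. A viable argument has to exploit the $L_1$ structure of the cell boundary in each quadrant (a monotone staircase of axis-parallel and slope-$\pm1$ pieces joining the tips of two adjacent arms) or explicit area computations of the kind carried out in Sections~\ref{sec:WhiteRow} and~\ref{sec:WhiteGrid}, in order to exhibit a single Black point that steals strictly more than $\frac{pq}{2n}$ from a non-spanning cell with unequal arms. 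Note finally that the corollary is applied in Lemma~\ref{lem:grids} to arrangements that are \emph{not} assumed winning, so the ``vacuously true because only the $1\times n$ row wins'' reading of your first paragraph cannot be the intended proof; the statement must be proved in the contrapositive form ``unequal arms and no spanning imply Black can win''.
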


\setcounter{Lemma}{1}
\begin{Lemma}
For any Voronoi cell $V^\circ(w)$ in a winning arrangement of White's, if one of the arms does not touch the boundary of $\mathcal{P}$ then the opposite arm parallel to this one is no shorter than the arms perpendicular to these arms.
\label{armsperpendicular}
\end{Lemma}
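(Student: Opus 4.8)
\emph{Proof proposal.} Two routes suggest themselves. If Theorem~\ref{TheoremVoronoiGame} may be quoted, the lemma is immediate: the only winning arrangement is the $1\times n$ arrangement with $\tfrac pq\ge n$, whose Voronoi cells are the $n$ congruent rectangles of dimensions $\tfrac pn\times q$ with each generator at the centre; for every such cell the arms that fail to meet $\partial\mathcal{P}$ are exactly the horizontal ones, of length $\tfrac p{2n}$, while the opposite horizontal arm again has length $\tfrac p{2n}$ and the two vertical arms have length $\tfrac q2$, so the assertion reads $\tfrac p{2n}\ge\tfrac q2$, which is precisely the hypothesis $\tfrac pq\ge n$. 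Since, however, a lemma of this shape is naturally one of the ingredients used to prove Theorem~\ref{TheoremVoronoiGame} in the first place, I prefer to sketch a self-contained argument resting only on Lemmas~\ref{steal},~\ref{equalarea},~\ref{symmetricarea} and Corollary~\ref{armsequal}.

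Fix a winning arrangement and a cell $V^\circ(w)$ with arm lengths $a_\uparrow,a_\downarrow,a_\leftarrow,a_\rightarrow$, and let $a$ be the arm that does not meet $\partial\mathcal{P}$. If $V^\circ(w)$ does not reach a pair of opposite sides of $\mathcal{P}$, then Corollary~\ref{armsequal} gives $a_\leftarrow=a_\uparrow=a_\downarrow=a_\rightarrow$ and there is nothing to prove. Otherwise $V^\circ(w)$ reaches a pair of opposite sides; since $a$ stays off $\partial\mathcal{P}$ this cannot be the left/right pair, so it is the top/bottom pair and $a$ is one of the horizontal arms — after a reflection in the vertical through $w$ we may take $a=a_\rightarrow$, so that we must prove $a_\leftarrow\ge a_\uparrow$ and $a_\leftarrow\ge a_\downarrow$. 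Lemma~\ref{equalarea} now tells us every cell has area $\tfrac1n\operatorname{area}(\mathcal{P})$, and Lemma~\ref{symmetricarea} tells us the horizontal and the vertical line through $w$ each bisect $V^\circ(w)$; in particular $w$ sits at mid-height and the plain steal of Lemma~\ref{steal} recovers exactly one half of $V^\circ(w)$.

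Assume for contradiction that $a_\leftarrow<a_\uparrow$ (the inequality $a_\leftarrow\ge a_\downarrow$ then follows by reflecting in the horizontal through $w$). I would manufacture a Black response contradicting that White wins. Place a point $b$ inside $V^\circ(w)$, near $w$ and slightly towards the short (left) side. By the balance above, $b$ already claims essentially one half of $V^\circ(w)$; but because the left arm is strictly shorter than the vertical half-extent of the cell on that side, the generator of the neighbouring cell across the left arm is forced into a position from which $b$, slid suitably outward along that side, also over-claims a piece of positive area of that neighbour, so that the combined claim strictly exceeds one cell's area $\tfrac1n\operatorname{area}(\mathcal{P})$. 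Playing the same move simultaneously near every generator of White — the claimed regions lying in pairwise distinct cells, hence disjoint — then hands Black more than $\tfrac12\operatorname{area}(\mathcal{P})$, the desired contradiction; hence $a_\leftarrow\ge a_\uparrow$, and likewise $a_\leftarrow\ge a_\downarrow$.

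The reductions via Corollary~\ref{armsequal} and the global bookkeeping via Lemmas~\ref{equalarea} and~\ref{symmetricarea} are routine; the substance, and the step I expect to be the main obstacle, is the local geometric estimate: one must pin down exactly which sub-regions of $V^\circ(w)$ and of its left neighbour a point $b$ captures as a function of its position, write the combined captured area in terms of $a_\leftarrow,a_\uparrow,a_\downarrow$ (using that the cell has full height and is vertically balanced about $w$), and show that $a_\leftarrow<a_\uparrow$ pushes the best achievable total above $\tfrac1n\operatorname{area}(\mathcal{P})$ — in particular upgrading the $\varepsilon$-approximate $50\%$ of Lemma~\ref{steal} to a genuine strict inequality that also excludes the borderline equality $a_\leftarrow=a_\uparrow$.
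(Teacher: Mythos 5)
First, a caveat about the comparison you asked for implicitly: the chapter reproduced here only \emph{states} Lemma~\ref{armsperpendicular} as an imported result from Chapter~4 of the thesis, so there is no in-text proof to measure you against, and I can only judge your argument on its own terms. Your first route is rightly discarded as circular, and the contrapositive shape of your second route (``if the arm inequality fails, Black can win'') is the natural one --- it matches the phrasing of Lemmas~\ref{equalarea} and~\ref{symmetricarea}. But what you have written is a strategy, not a proof: the entire mathematical content of the lemma is the ``local geometric estimate'' that you explicitly defer in your closing paragraph. You never specify where $b$ goes, never pin down where the generator $w'$ across the short left arm can sit (the hypothesis $a_\leftarrow<a_\uparrow$ only constrains $w'$ through the bisector $B(w,w')$, and extracting a usable position for $w'$ from that is exactly the work), and never compute the area $b$ gains inside $V^\circ(w')$ against the area it sacrifices inside $V^\circ(w)$ by moving off $w$. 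Without that computation the contradiction is asserted rather than derived, and nothing in the sketch distinguishes the forbidden strict inequality from the borderline case $a_\leftarrow=a_\uparrow$, which the lemma must allow.

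There are also concrete errors in the surrounding bookkeeping. The claim that the pair of opposite sides reached ``cannot be the left/right pair'' is a non-sequitur: the cell could span $\mathcal{P}$ horizontally while the boundary-avoiding arm is vertical, and the paper's own exclusion of that configuration (in the proof of Lemma~\ref{lem:grids}) \emph{uses} the present lemma, so you cannot borrow it; you must either run the rotated argument or justify the exclusion independently. The quantitative target is misstated: for $n$ simultaneous copies of the move to exceed $\frac{pq}{2}$, each copy must claim strictly more than $\frac{pq}{2n}$ (half a cell), not ``one cell's area'' $\frac{pq}{n}$. The disjointness assertion is false as written: by construction each claimed region straddles $V^\circ(w_i)$ and its left neighbour, so the regions attached to $w_i$ and to $w_{i-1}$ both meet $V^\circ(w_{i-1})$ and their non-overlap (as well as the non-interference of the $n$ black points with one another) has to be argued, not declared. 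Finally, ``$w$ sits at mid-height'' does not follow from Lemma~\ref{symmetricarea} alone --- equal areas on either side of the horizontal through $w$ do not force equal vertical arms; that is essentially Lemma~\ref{armsparallel}, which you excluded from your toolkit. Each of these is repairable, but together with the missing central estimate they leave the proposal as an outline rather than a proof.
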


\begin{Lemma}
For any Voronoi cell $V^\circ(w)$ in a winning arrangement of White's, if one of the arms does not touch the boundary of $\mathcal{P}$ then the arms perpendicular to this arm are equal.
\label{armsparallel}
\end{Lemma}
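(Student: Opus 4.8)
The plan is a contradiction argument built on Lemma~\ref{symmetricarea}. Normalise so that $w$ is the origin, the free arm points upward with length $u$, and let $d,\ell,r$ be the lengths of the down-, left- and right-arms; the goal is $\ell=r$. First I would extract the structural content of the hypothesis ``the up-arm does not touch $\partial\mathcal P$'': the arm's endpoint $(0,u)$ is then interior to $\mathcal P$ and lies on $\partial V^\circ(w)$, hence on the $L_1$-bisector of $w$ with some other White site $w'$; since the arm leaves $w$ vertically, $w'$ lies weakly above $w$, and near $(0,u)$ the boundary of $V^\circ(w)$ is either a horizontal segment of $y=u$ (when $w'$ is directly above) or a segment of slope $\pm1$ through $(0,u)$. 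Either way $V^\circ(w)$ is \emph{capped from above by a neighbour} rather than by the arena, which is the rigidity the rest of the proof exploits. I would also record, by the very argument behind Lemma~\ref{symmetricarea}, that in a winning arrangement $V^\circ(w)$ is area-balanced not only across the horizontal and vertical through $w$ but also across each of the two diagonals through $w$ (placing Black's point a tiny diagonal step from $w$ yields a single slope-$\mp1$ bisector line through a point arbitrarily close to $w$, so a diagonally heavier side would already hand Black more than $50\%$ of $V^\circ(w)$).

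Next I would split on which remaining arms are free. If $V^\circ(w)$ touches neither pair of opposite sides of $\mathcal P$ — in particular the left and right arms also fail to reach $\partial\mathcal P$ — then Corollary~\ref{armsequal} gives all four arms equal and we are done. In the remaining cases $V^\circ(w)$ meets a wall on (at least) one of its left/right or top/bottom sides; by a symmetric relabelling I may assume it meets the wall on the left. I then argue directly: supposing $\ell\neq r$, say $\ell>r$, I combine the cap at height $u$, convexity, the slope-$\pm1$/axis-parallel edge constraint on $L_1$-Voronoi cells, and the four balance conditions through $w$ to pin the shape of $V^\circ(w)$ tightly enough to exhibit a Black point — placed at a specific \emph{finite} distance from $w$, so that its $L_1$-bisector is a tailored kinked polyline — for which the captured part of $V^\circ(w)$ has area strictly above $\tfrac12\lvert V^\circ(w)\rvert$. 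Together with Lemma~\ref{steal} applied to the other cells, Black then captures more than half of $\mathcal P$: a contradiction.

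The normalisation and the reduction via Corollary~\ref{armsequal} are routine. The main obstacle is the final step: area-balance alone can ``hide'' the discrepancy $\ell>r$ behind compensating bevels of the cell, so the work is to show that the cap by a neighbour (rather than by $\mathcal P$) removes exactly this freedom and forces the existence of a Black move claiming more than half of $V^\circ(w)$. I expect this to come down to a short case analysis over the admissible $L_1$-Voronoi shapes that are capped at height $u$ and balanced in the four directions through $w$, from which the location and area of the stolen region can be computed; Lemma~\ref{armsperpendicular}, which already gives $d\geq\ell$ and $d\geq r$, should cut down the shapes that need to be considered.
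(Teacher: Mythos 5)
Your overall strategy --- reduce via Corollary~\ref{armsequal} to the case where $V^\circ(w)$ spans $\mathcal{P}$ in the direction perpendicular to the free arm, then derive a contradiction from $\ell\neq r$ by exhibiting a Black placement capturing more than half of $V^\circ(w)$ --- is the right shape of argument, but the proposal does not contain the proof. The decisive step, showing that $\ell>r$ together with the neighbour-cap at height $u$ forces a Black point stealing strictly more than $\tfrac12$ of $V^\circ(w)$, is exactly the content of the lemma, and you explicitly defer it (``I expect this to come down to a short case analysis\dots''). Nothing in the plan identifies the placement, the bisector it generates, or the area computation, so the contradiction is asserted rather than derived. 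Note also that your case split is slightly off: the complement of the hypothesis of Corollary~\ref{armsequal} is that $V^\circ(w)$ touches a \emph{pair} of opposite sides, not that it merely meets one wall; since the free arm cannot reach its own wall, the only surviving case is that the cell reaches both walls perpendicular to the free arm, and that is the case you must actually treat.

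The auxiliary ``diagonal balance'' claim is unsound as argued. The game is played with the $L_1$ metric, and the $L_1$ bisector of $w$ and a point displaced from $w$ by $(\varepsilon,\varepsilon)$ is degenerate: it contains two two-dimensional quarter-plane pieces, not ``a single slope-$\mp1$ line''. For a nearby non-degenerate placement $(\varepsilon,\delta)$ with $\delta\neq\varepsilon$ the bisector consists of two axis-parallel rays joined by a short diagonal segment of length $O(\varepsilon)$ near $w$, so as $\varepsilon,\delta\to0$ the region Black claims tends to the horizontal or the vertical half of $V^\circ(w)$ --- which Lemma~\ref{symmetricarea} already controls --- and not to a diagonal half. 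Hence the extra balance condition you intend to feed into the final case analysis is not established by the argument you give, and the machinery you plan to ``combine'' in the last step is missing one of its inputs. To complete the proof you would need to replace the infinitesimal diagonal perturbation by a genuinely finite, explicitly chosen placement (as you yourself anticipate) and carry out the resulting area comparison; as written, the lemma remains unproved.
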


\vfill
%\text{\textit{Corresponding author}: Thomas Byrne}
\end{titlepage}

\thispagestyle{empty}

\setcounter{chapter}{4}
\setcounter{page}{0}
\chapter{The Stackelberg Game: keeping regular}
\label{StackelbergGameGrid}

%\section{Introduction}

Following the solution to the One-Round Voronoi Game we naturally may want to consider similar games based upon the competitive locating of points and subsequent dividing of territories. In order to appease White's tears after they have potentially been tricked into going first in a game of point-placement, an alternative game (or rather, an extension of the previous game) is the Stackelberg game where all is not lost if Black gains over half of the contested area.

The set-up is identical to that of the Voronoi game. We consider the Voronoi game as before with two players, White and Black, who take turns to place a total of $n$ points into the playing arena (without the ability to place atop or move an existing point) before it is partitioned into the Voronoi diagram of these points. Each player gains a score equal to the area of the Voronoi cells generated by their points $W$ and $B$ respectively and each player's objective is to maximise this score not to be more than their competitor's score, but to have the largest score. That is, White and Black wish to maximise their respective scores
\begin{equation*}
\begin{split}
\mathcal{W}=\sum_{w\in W} {Area(V(w))} \\
\mathcal{B}=\sum_{b\in B} {Area(V(b))}
\end{split}
\end{equation*}
where, as before, we have the notation scheme:
\begin{equation*}
    \begin{split}
        \mathcal{VD}(W) &= \{V^\circ(x) : x\in W \} \\
        \mathcal{VD}(W \cup b) &= \{V^+(x) : x\in W \cup b \} \\
        \mathcal{VD}(W \cup B) &= \{V(x) : x\in W \cup B \} \,.
    \end{split}
\end{equation*}

This is subtly different to the Voronoi game wherein each player cared solely about controlling more than the other player (or over half of the playing arena) and so did not present an arrangement in such cases where they could not win over half of the playing area. Because of this, the Stackelberg game is the obvious extension to the Voronoi game.

Stackelberg games (generally defined to be a game in which a leader and a follower compete for certain quantities) present themselves in a wide range of applications so, perhaps unsurprisingly, there is substantial literature on a diverse range of interpretations. For a full classification of these competitive facility location problems and their many variations see the survey \citeA{Pla01}, and the detailed \citeA{EisLap97} for a study focused upon the more sequential problems.

Many bi-level Stackelberg location models make use of an attractiveness measure for each facility, the most popular of which is the gravity-based model proposed by \citeA{Rei31} wherein the patronage of each customer is decided (deterministically or randomly) based upon a function proportional to the attractiveness score of the facility and inversely proportional to the distance between the facility and customer. Both the location and attractiveness of new facilities is allowed to be optimised in \citeA{KucAraKub12} where the leader locates new facilities within a market containing the follower's existing facilities in order to maximise captured demand, before the follower is allowed the opportunity to adjust their facilities.

Given one existing facility and a number of demand points, \citeA{Dre82} located a new facility in order to maximise its attracted buying power both in the situation where the existing facility is fixed, and where the follower is allowed to open a new facility. A centroid model is proposed in the presence of continuous demand in \citeA{BhaEisJar03} which gives the follower the opportunity to respond to the leader's facility placement with placements of their own.

\citeA{SerRev94} introduced a model wherein both players locate the same number of facilities in a network with customers patronising only the closest facility, and two accompanying heuristic algorithms are presented therein. However, in a cruel twist the objective of each player is to minimise the score of the other player. Nevertheless this may not be a surprising sentiment of each player since, as \citeA{MooBar90} indicated, the players' objectives almost always conflict with one another in the Stackelberg game.

In the hope of some level of benevolence between warring players White and Black, again we shall focus on the One-Round Stackelberg Game over a rectangular playing arena $\mathcal{P}$ with length $p$ and height $q$. Just as in the Voronoi game, this is impossible to write in a closed form since the objectives rely entirely on the relative locations of the other points and so we approach the problem from a geometrical standpoint.

Firstly we shall note that the winning arrangement found for White for the Voronoi game carries over to this game since, if $\frac{p}{q} \geq n$, it was shown that deviating from this arrangement in any way would give Black more than $\frac{pq}{2n}$ and so decrease White's score. We also found the optimal strategy for Black in response to this arrangement given that the condition $\frac{p}{q} \geq n$ held in the proof of Theorem \ref{TheoremVoronoiGame}. %4.2.1.
The supremum of all areas of $V^+(b_1)$ in Sections $III$ and $IV$ was found to be $\frac{pq}{2n}$, achieved when $b_1$ lay atop one of White's existing points. Therefore Black's optimal strategy would be that described in Lemma \ref{steal}, %4.1.1,
placing each separate point as close as possible to one of White's points and thereby securing a score of $\frac{(1-\varepsilon)pq}{2}$.

What remains to be explored for the Stackelberg game is how best White can mitigate the damage of Black's placements when $\frac{1}{n} < \frac{p}{q} < n$. [Note that since we will not enforce that $p \geq q$ within this chapter we must ensure that $\frac{p}{q} < n$ holds upon reflection in $y=x$ (i.e. for $p$ and $q$ swapped giving $\frac{q}{p} < n$), thus providing the $\frac{1}{n} < \frac{p}{q}$ condition.]
%we must enforce $\frac{p}{q} \geq 1$ (allowing a reflection in $y=x$ in order to swap $p$ and $q$ if necessary) since the results in Chapter~\ref{VoronoiGame} relied upon $p \geq q$ and dropping this constraint in this chapter would allow $\frac{q}{p} \geq n$ for which the results in Chapter~\ref{VoronoiGame} carry over albeit reflected in $y=x$. We shall not necessarily enforce that $p \geq q$ in this chapter ]

Since the Lemmas \ref{equalarea}, \ref{symmetricarea}, \ref{armsperpendicular}, and \ref{armsparallel} %4.1.2, 4.1.3, 4.2.2, and 4.2.3
outlined significant weaknesses in certain arrangements, we shall first consider arrangements that still satisfy these results and explore how Black can best exploit these positions. This investigation begins in Section \ref{sec:StackelbergGrid} wherein an early result shows that White must play a certain grid arrangement. From there we consider Black's possible responses, exploring their best positions for stealing area from White and then their best overall strategy for when White plays a row (in Sections~\ref{sec:WhiteRow} and~\ref{sec:BlackRow}) or a grid (in Sections~\ref{sec:WhiteGrid} and~\ref{sec:BlackGrid}). %Following this, White may wonder how they fare by relaxing any of the conditions provided in the aforementioned lemmas. Since the remaining case (i.e. White does not play a grid) is incredibly non-restricted we keep the sensible requirement that White's cells are balanced, thereby investigating area-symmetric $V^\circ(w)$ of equal areas, studied in Section \ref{sec:StackelbergNonGrid} along with Black's optimal responses. Concluding thoughts are summarised in Section \ref{sec:StackelbergConc}.

\section{White's optimal strategy: a grid}
\label{sec:StackelbergGrid}

%Thus, White needs to ensure that $VD(W)$ contains only equally-sized cells, with equal area each side horizontally and vertically through each $w\in W$, otherwise Black can certainly win. Since $\mathcal{P}$ is a rectangle, and bisectors consist of purely horizontal, vertical, and diagonal lines, any bisector intersecting the perimeter of $\mathcal{P}$ must be perpendicular to the edge of $\mathcal{P}$ -- else it is a diagonal, requiring White to place points on the perimeter of $\mathcal{P}$, contradicting the area symmetry requirement. This means that the bottom left corner of $\mathcal{P}$ will be contained in one Voronoi cell. \todo[inline]{Can we prove from these that White must play a grid?} %Since the bottom left corner of $\mathcal{P}$ is contained in only one Voronoi cell, by symmetry this cell must be a rectangle. If the top edge of the cell is not an edge of $\mathcal{P}$ then it must be a bisector. A horizontal bisector requires the points to be on the same vertical. This requires the generator of the neighbouring cell to be directly above the bottom left generator, and by symmetry the height of this neighbouring cell is the same as the bottom left cell, and therefore the same width to be of equal size. The same argument is applied to horizontal neighbours to require that White produces a Voronoi diagram of identical rectangles with their generators evenly distributed regularly across $\mathcal{P}$.

It was proven in Chapter 4 %\ref{VoronoiGame}
that any winning arrangement of White's points in the Voronoi game must have cells $V^\circ(w)$ of equal area (Lemma \ref{equalarea}), %4.1.2),
each with every horizontal and vertical half of the cell equal (Lemma \ref{symmetricarea}), %4.1.3),
and that if any arm does not touch the boundary of $\mathcal{P}$ then the opposite arm is not shorter than the perpendicular arms (Lemma \ref{armsperpendicular}) %4.2.2)
and these perpendicular arms are of equal length (Lemma \ref{armsparallel}). %4.2.3).
It is natural to wonder what forms an arrangement can take if it adheres to all of these results, and this is summarised in Lemma \ref{lem:grids}.

Firstly, let us define a regular orthogonal grid. A set of $n$ points is a \emph{regular orthogonal $a \times b$ grid} within $\mathcal{P}$ ($n=ab$ and $a, b \geq  1$) if, without loss of generality locating the origin at the bottom left vertex of $\mathcal{P}$, for every point $w \in W$ there exists $i,j\in \mathbb{Z}$, $0\leq i< a$ and $0\leq j< b$, such that $w=(\frac{p}{2a}+\frac{p}{a}i,\frac{q}{2b}+\frac{q}{b}j)$. Additionally, a regular orthogonal $a \times b$ grid is a \emph{square regular orthogonal $a \times b$ grid} if $\frac{p}{a}=\frac{q}{b}$. From this point onwards, unless explicitly stated otherwise, we shall simply use the term \emph{grid} to mean a regular orthogonal grid, and \emph{square grid} to mean a square regular orthogonal grid.

\pagebreak The following result establishes the properties of an arrangement which satisfies Lemmas~\ref{equalarea}, \ref{symmetricarea}, \ref{armsperpendicular}, and \ref{armsparallel} %4.1.2, 4.1.3, 4.2.2, and 4.2.3
\cite{ByrFekKalKle20}.

\begin{Lemma}\label{lem:grids}
For any arrangement $W$ satisfying Lemmas~\ref{equalarea}, \ref{symmetricarea}, \ref{armsperpendicular}, and \ref{armsparallel}, %4.1.2, 4.1.3, 4.2.2, and 4.2.3,
if $\frac{p}{q} \geq n$ then $W$ is a $1 \times n$ grid; otherwise, $W$ is a square grid or no such arrangement exists.
\end{Lemma}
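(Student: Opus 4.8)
The plan is to split the statement into an easy classification of the admissible \emph{grids} and one harder structural step showing that \emph{every} admissible arrangement is a grid; here call $W$ \emph{admissible} if it satisfies Lemmas~\ref{equalarea}, \ref{symmetricarea}, \ref{armsperpendicular} and~\ref{armsparallel}. For the classification: the cells of an $a\times b$ grid are the $\tfrac pa\times\tfrac qb$ axis-aligned rectangles centred at the sites, so Lemmas~\ref{equalarea} and~\ref{symmetricarea} hold automatically. If $\min(a,b)\ge 2$ then the corner cell touches only two adjacent sides of $\mathcal P$, so Corollary~\ref{armsequal} makes its four arms equal, i.e.\ $\tfrac pa=\tfrac qb$ and the grid is square. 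If instead $b=1$, a cell is a $\tfrac pn\times q$ rectangle, so its free horizontal arm $\tfrac p{2n}$ must by Lemma~\ref{armsperpendicular} be at least its vertical arm $\tfrac q2$, forcing $\tfrac pq\ge n$ (and $a=1$ gives the reflected $\tfrac qp\ge n$). Conversely each of these arrangements is readily verified admissible. Since a square grid requires $a=\sqrt{np/q}$ and $b=\sqrt{nq/p}$ to be positive integers, when $\tfrac1n<\tfrac pq<n$ there may be no admissible grid at all — and then, since the structural step shows every admissible $W$ is a grid, no admissible arrangement exists; while for $\tfrac pq\ge n$ the only admissible grid is the $1\times n$ line.

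The structural step is where the work lies. By Lemma~\ref{equalarea} every cell has area $A:=\tfrac{pq}n$ and is convex (an intersection of halfplanes with $\mathcal P$). I would prove the key claim that every Voronoi edge is axis-parallel — equivalently, adjacent sites $u,v$ always share a coordinate. Granting this, each cell has only axis-parallel edges and so is an axis-aligned rectangle, and then Lemma~\ref{symmetricarea} (equal area on either side of the vertical, resp.\ horizontal, line through the site) places the site on both centre-lines of that rectangle, hence at its centre. To prove the claim I would argue by contradiction: if $u,v$ are adjacent with $uv$ of slope $\notin\{0,\infty\}$, put $u$ on the side of the bisector line $B$ not containing $v$; exactly two of the four axis-parallel rays from $u$ cross $B$, and they meet it at distinct distances, so tracking where each actually meets $\partial V^\circ(u)$ and doing a casework on which arms of $u$ and of $v$ are anchored to $\partial\mathcal P$ contradicts Lemma~\ref{armsparallel} or Corollary~\ref{armsequal} — except when the cell in question runs between a pair of opposite sides of $\mathcal P$, a case excluded by inspecting a topmost (or rightmost) site, whose arm towards that side is forced to be anchored. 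This claim is the main obstacle: it is the only place where the actual bisector geometry has to be reconciled with the arm inequalities, and it requires the careful boundary-touching casework.

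With every cell now an axis-aligned rectangle of area $A$ centred at its site and tiling $\mathcal P$, I would conclude by a propagation argument. If some cell meets both the top and the bottom of $\mathcal P$ it is a full-height strip; the vertical line carrying one of its lateral sides must be a single bisector, so the neighbour across it is the mirror image of the site and hence a full-height strip of the same width — iterating left and right, all $n$ cells are full-height strips of width $\tfrac pn$, so $W$ is the $1\times n$ line and Lemma~\ref{armsperpendicular} forces $\tfrac pq\ge n$ as in the first paragraph; the horizontally spanning case is the reflection and gives $\tfrac qp\ge n$. Otherwise no cell meets a pair of opposite sides, so Corollary~\ref{armsequal} makes every cell a square, and equal areas make them congruent with side $s=\sqrt A$; congruent axis-aligned squares tiling a rectangle form the regular $\tfrac ps\times\tfrac qs$ grid (fix the bottom row, then induct upward), which is a square grid. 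Assembling the three paragraphs yields the claim.
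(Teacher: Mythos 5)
Your decomposition (classify the admissible grids, then show every admissible arrangement is a grid) is reasonable, and the classification half is essentially sound. But the structural step --- your ``key claim'' that adjacent sites always share a coordinate, so that every cell is an axis-aligned rectangle centred at its site --- is exactly the content of the lemma, and the argument you sketch for it does not go through. First, it is set in the wrong geometry: this chapter works with the Manhattan metric (hence the configuration cones $\mathcal{CC}^i(w)$ and the ``breakpoints'' of $B(w,w')$), so the bisector of two sites in general position is not a line but a polyline with a diagonal middle segment, and cells are not intersections of halfplanes and need not be convex; ``only axis-parallel edges $\Rightarrow$ rectangle'' cannot be routed through convexity. Second, and more fundamentally, the observation that two of the axis-parallel rays from $u$ meet $B(u,v)$ at distinct distances does not by itself contradict Corollary~\ref{armsequal} or Lemma~\ref{armsparallel}: an arm of $V^\circ(u)$ ends at the \emph{nearest} constraint along that ray, which may be contributed by a different site or by $\partial\mathcal{P}$ rather than by $B(u,v)$, so distinct crossing distances with one particular bisector say nothing about the relative arm lengths. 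The ``careful boundary-touching casework'' you defer is where the whole difficulty lives, and it is not supplied.

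The paper avoids a global edge-orientation lemma altogether. It first disposes of $\frac{p}{q}\geq n$ directly (every half cell has area at least $\frac{q^2}{2}$, which forces every cell to span the height of $\mathcal{P}$ and every site onto the horizontal centre line, whence vertical bisectors at spacing $\frac{p}{n}$). For $\frac{p}{q}<n$ it anchors at the cell containing the bottom-left corner of $\mathcal{P}$: there are no bisectors in the third quadrant of that site, so Corollary~\ref{armsequal} makes its bottom-left quadrant a genuine $d\times d$ square; Lemmas~\ref{equalarea} and~\ref{symmetricarea} then force the opposite quadrant to be a $d\times d$ square as well, and this pins the unique neighbour $w'$ across the right edge onto the same horizontal at $(3d,d)$ with arms of length $d$. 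Iterating rightwards and upwards propagates the square structure across all of $\mathcal{P}$. If you want to salvage your outline, you should replace the contradiction sketch with a corner-first induction of this type (or genuinely carry out the casework you defer), and you should also surface the divisibility condition --- $\sqrt{pn/q},\sqrt{qn/p}\in\mathbb{N}$ --- which is what makes the ``no such arrangement exists'' branch of the statement bite.
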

\begin{proof}
Firstly let us clarify that, from Theorem~\ref{TheoremVoronoiGame}, %4.2.1, 
if $\frac{p}{q} \geq n$ then the only winning strategy for White in the Voronoi game is a $1 \times n$ row. This, however, does not provide us with our required result here since we no longer restrict $W$ to being a winning arrangement.

If $\frac{p}{q} \geq n$ then, by Lemmas \ref{equalarea} and \ref{symmetricarea}, %4.1.2 and 4.1.3,
the area of every half cell of $\mathcal{VD}(W)$ is $\frac{pq}{2n}= \frac{p}{q} \times \frac{q^2}{2n} \geq \frac{q^2}{2}$. In order to achieve this area, since the height of every cell is bounded above by $q$, the left and right arms of every cell must be at least $\frac{q}{2}$. If any cell were not to touch opposite sides of $\mathcal{P}$ then, by Corollary \ref{armsequal}, %4.2.1,
its arms must be of equal length and so would be of length no less than $\frac{q}{2}$ which would make it touch the horizontal sides of $\mathcal{P}$. Therefore every cell touches opposite sides of $\mathcal{P}$. If a cell were to touch both vertical sides of $\mathcal{P}$ then, by Lemma \ref{armsperpendicular}, %4.2.2,
at least one of the vertical arms would have to be longer than the horizontal arms, the minimum length therefore being $\frac{p}{2}$. If this vertical arm did not touch the boundary of $\mathcal{P}$ then the same logic would apply to the other vertical arm, forcing it to have length at least $\frac{p}{2}$, which would create two vertical arms with lengths summing to $p$ ($>q$). However, if this arm did touch the boundary of $\mathcal{P}$ then the half cell containing the arm, split along the horizontal arms, would have area $p\times \frac{p}{2} = \frac{p^2}{2} > \frac{pq}{2n}$. Thus every cell must touch each horizontal edge of $\mathcal{P}$.

By Lemma \ref{armsparallel} %4.2.3
the vertical arms of every cell are therefore $\frac{q}{2}$, i.e. every point of $W$ is placed on the horizontal centre line of $\mathcal{P}$. Noting that all bisectors are now vertical lines, the only way to distribute these across $\mathcal{P}$ in order to divide $\mathcal{P}$ into equal areas (of $\frac{pq}{n}$) satisfying Lemma \ref{equalarea} %4.1.2
is to place them at intervals of $\frac{p}{n}$. This corresponds to the $1 \times n$ grid.
	
For the $\frac{p}{q} < n$ case, let us consider the point $w$ whose cell $V^\circ(w)$ contains the bottom left corner of $\mathcal{P}$. If $V^\circ(w)$ were to touch both horizontal edges of $\mathcal{P}$ then by Lemma \ref{armsperpendicular} %4.2.2
its left arm would be of length no less than $\frac{q}{2}$, causing the left half of $V^\circ(w)$ to have area at least $\frac{q^2}{2} > \frac{p}{qn} \times \frac{q^2}{2} = \frac{pq}{2n}$. $V^\circ(w)$ also cannot touch both vertical sides of $\mathcal{P}$ by the same argument presented in the $\frac{p}{q} \geq n$ case. Therefore we can apply the result from Corollary \ref{armsequal} %4.2.1
and all arms of the cell are of equal length, $d$ say.

Since the bottom left vertex of $\mathcal{P}$ is contained in $V^\circ(w)$ there are no $\mathcal{CC}^4(w)$, $\mathcal{CC}^5(w)$, $\mathcal{CC}^6(w)$, or $\mathcal{CC}^7(w)$ bisectors, so the entire third quadrant of $w$ contained in $\mathcal{P}$ is also contained in $V^\circ(w)$. Therefore the bottom left quadrant of $V^\circ(w)$ is a square of area $d^2$. By Lemmas \ref{equalarea} and \ref{symmetricarea}, %4.1.2 and 4.1.3,
the top right quadrant of $V^\circ(w)$ must also have area $d^2$, and with arms $u=r=d$ this top right quadrant must also be a square.

Considering the bisector which contributes the vertical segment bounding the top right quadrant of $V^\circ(w)$, the other point, $w'=(x,y)$, in this bisector must lie on the line $y=x-2d$ for $0 \leq y \leq d$ (as shown in Figure \ref{fig:GridProof}) and no other point may lie between $w$ and this line. Since $B(w,w')$ is a bound on the advancement of $V^\circ(w')$ and no other point can be closer than $w$ or $w'$ to the lower breakpoint of $B(w,w')$ (else this would contradict the shape of the top right quadrant of $V^\circ(w)$), the left arm of $w'$ must also be of length $d$.
	 
\begin{figure}[!ht]
\centering
\includegraphics[width=0.9\textwidth]{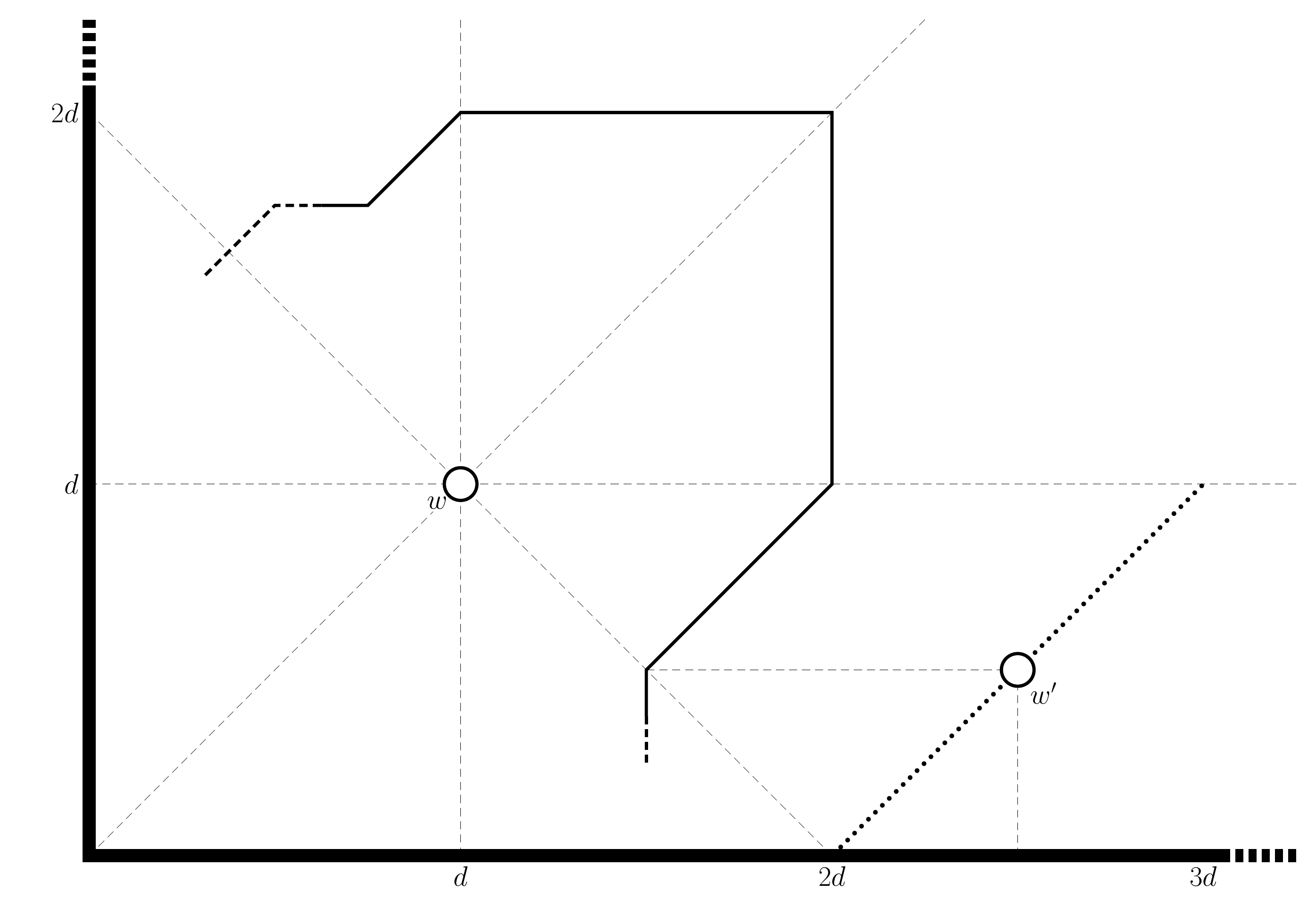}
\caption{$V^\circ(w)$ and a neighbouring point $w'$.}
\label{fig:GridProof}
\end{figure}

We can easily show that $V^\circ(w')$ cannot touch opposite sides of $\mathcal{P}$ since $w$ is blocking it from touching both vertical sides of $\mathcal{P}$ and to touch both horizontal sides of $\mathcal{P}$ would mean, by Lemma~\ref{armsparallel}, %4.2.3,
that its upper and lower arms are equal and so $\frac{q}{2}$, contradicting $0 \leq y \leq d$. Therefore, utilising Corollary \ref{armsequal}, %4.2.1,
all arms of $V^\circ(w')$ have length $d$. This places $w'=(3d,d)$, on the same horizontal as $w$, so the bisector $B(w,w')$ is vertical, and the bottom right quadrant of $V^\circ(w)$ is also square. This forces the top left quadrant to also be square in order to have area $d^2$.

Analogously this argument can be applied to the right-hand boundary of $V^\circ(w')$ (since the bottom left quadrant of $V^\circ(w')$ is now seen to be a $d \times d$ square) to establish that its unique neighbour $w''$ has arms of length $d$ and is situated at $(5d,d)$, and can be continued to give a row of points $w^{(i)}=((2i+1)d,d)$ for $i\in\mathbb{Z}^+$, giving $2d \times 2d$ square Voronoi cells up until the right-hand boundary of $\mathcal{P}$. By symmetry the argument is identical for the points above this row (starting from $w$ we get a column of $2d \times 2d$ square Voronoi cells, and then identically upwards from $w'$, and $w''$ and so on).

\pagebreak The iterative use of this argument gives a square grid arrangement. We should be careful to note, however, that in order for this to work we require that the dimensions of $\mathcal{P}$ allow $n$ squares of length $2d$ to fit within it. That is, there exists $a,b \in \mathbb{N}$ such that $a=\frac{p}{2d}$ and $b=\frac{q}{2d}$ where $a \times b = n$ for some $d \in \mathbb{R}^+$ (alternatively, $4d^2 = \frac{pq}{n}$ gives $2d=\sqrt{\frac{pq}{n}}$ so our conditions are $a=\sqrt{\frac{pn}{q}}\in\mathbb{N}$ and $b=\sqrt{\frac{qn}{p}}\in\mathbb{N}$).
\end{proof}

Since adherence to Lemmas \ref{equalarea}, \ref{symmetricarea}, \ref{armsperpendicular}, and \ref{armsparallel} %4.1.2, 4.1.3, 4.2.2, and 4.2.3
lends an obvious advantage to White in the Voronoi game, it may also be considered sensible to implement the strategies suggested by these results in the Stackelberg game. Therefore we shall explore such arrangements in the Stackelberg setting. Though Lemma \ref{lem:grids} provided constraints on the aspect ratio of $\mathcal{P}$, we shall explore the $a \times b$ grid where $a,b>1$, and $1 \times n$ row strategies (even relaxing the square grid constraint) for any aspect ratio to test the relationship between the games and outline how best White's positions can be exploited by Black.

\section{White plays a \texorpdfstring{$1 \times n$}{1 x n} row}
\label{sec:WhiteRow}

Firstly we shall explore the placement of Black's point $b_1$ assuming that White plays their points in a row. Without loss of generality let this be horizontally (a rotation of $\mathcal{P}$ can easily fix this -- note that, whichever rotation we choose, we are only required to explore $\frac{p}{n} < q$) and label the vertices of $W$ running from left to right as $w_1$ through to $w_n$. Since White's arrangement is repetitive and has such symmetry, our search for Black's optimal location is greatly simplified as we need only consider the placement of $b_1$ within a small selection of areas of $\mathcal{P}$.

We want to investigate the possible Voronoi diagrams $\mathcal{VD}(W\cup b_1)$ (in order to find the placement of $b_1$ so as to maximise $Area(V^+(b_1))$ which should give us an idea of how Black should play all of their points). To do this we aim to partition the arena into subsets within which the Voronoi diagram is structurally identical; that is, the vertices and line segments of the Voronoi diagram have the same algebraic representation in terms of the coordinates of $b_1$. We require this so that, once the algebraic representation of the area of $V^+(b_1)$ is found, we can maximise this over the partition to find the optimal placement of $b_1$ within that partition, thereby reducing Black's problem into many smaller, more manageable subproblems. Since $\mathcal{P}$ is rectangular and all of White's bisectors are vertical then, from \citeA{AveBerKalKra15}, the partitioning lines are simply the configuration lines of each of White's points.

The partition of the top right quadrant of a Voronoi cell of a general point $w_i \in W$ is shown in Figure \ref{fig:RowPartition}. Ignoring the bounding above and below of $\mathcal{P}$ (taking $q$ to be sufficiently large), notice that this partition is made up of configuration lines $\mathcal{CL}^1(w_j)$ for every $j \leq i$ and $\mathcal{CL}^3(w_k)$ for every $k>i$, creating exactly $n+1$ partition cells, irrespective of the value of $i$. For ease of computation we shall say $w_i=(0,0)$ and $b_1=(x,y)$.

\begin{figure}[!ht]
\centering
\includegraphics[width=1.0\textwidth]{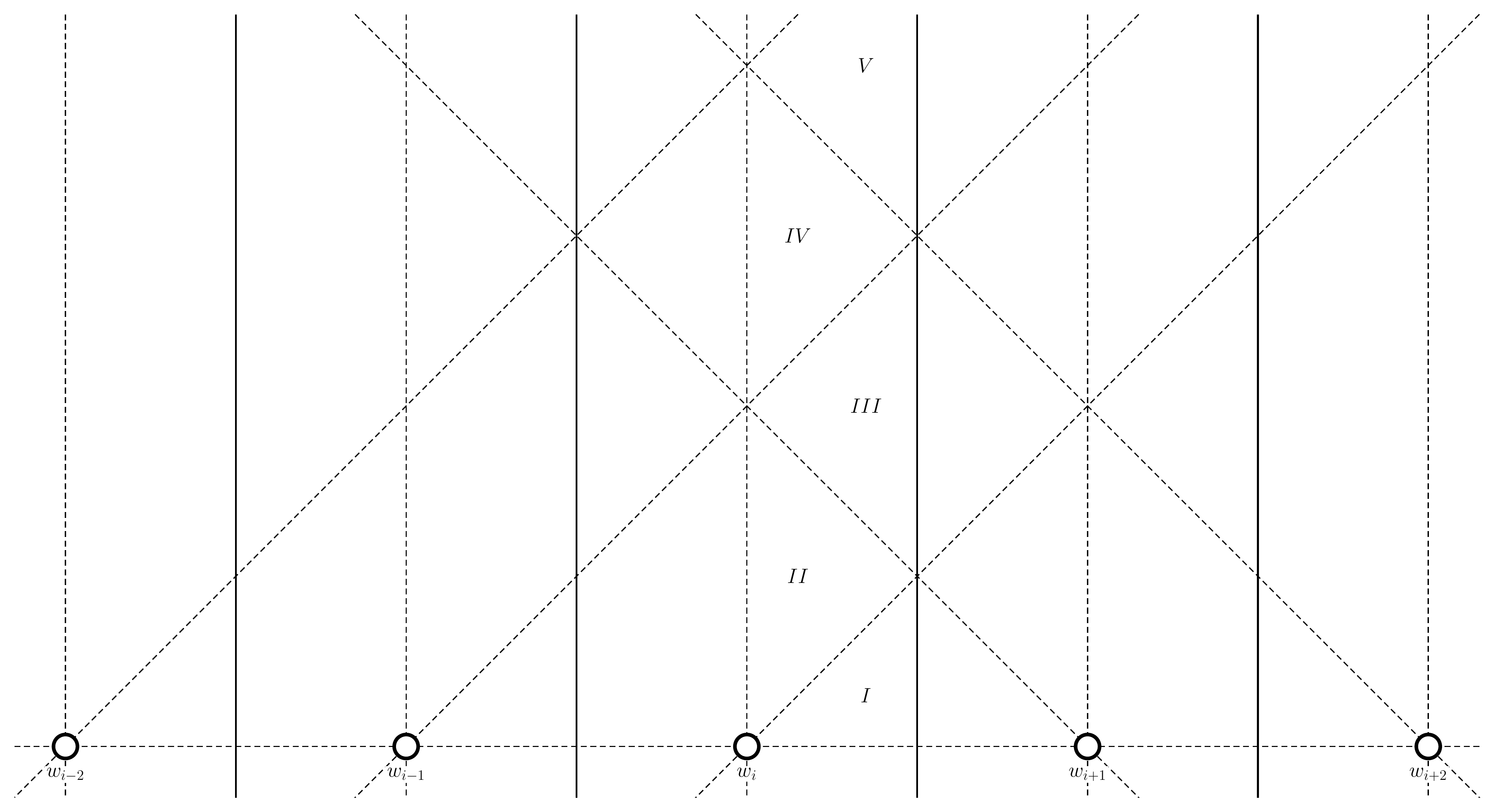}
\caption{The partition of $w_i$ in $\mathcal{P}$ in a boundless $\mathcal{P}$ of fixed width.}
\label{fig:RowPartition}
\end{figure}

Observing the cell structures of $V^+(b_1)$ for $b_1$ in the first few sections, as shown in Figure \ref{fig:RowCellExamples}, we can see the repetitive nature of these structures as each configuration line is crossed.

\begin{figure}[!ht]
\begin{subfigure}{.5\textwidth}
  \centering
  \includegraphics[width=0.9\textwidth]{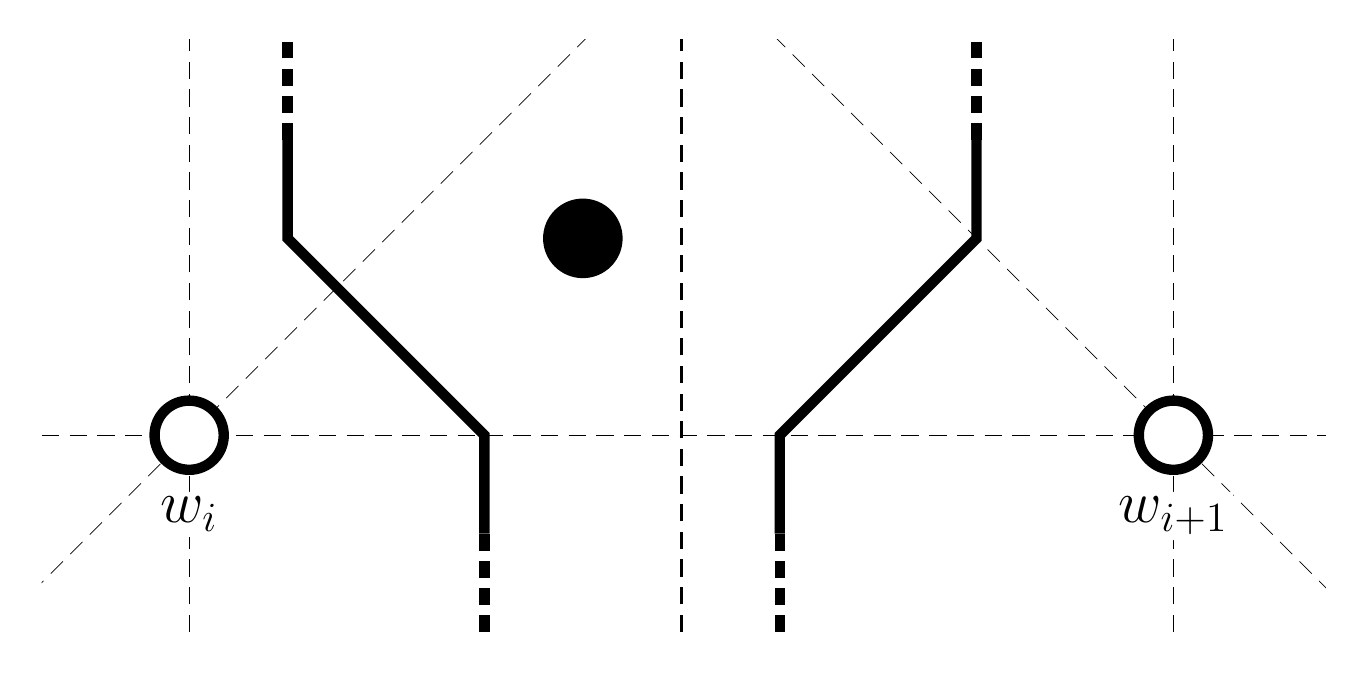}
  \caption{Voronoi cell $V^+(b_1)$ for $b_1$ in Section $I$.}
  \label{fig:RowCellI}
\end{subfigure}%
\begin{subfigure}{.5\textwidth}
  \centering
  \includegraphics[width=0.9\textwidth]{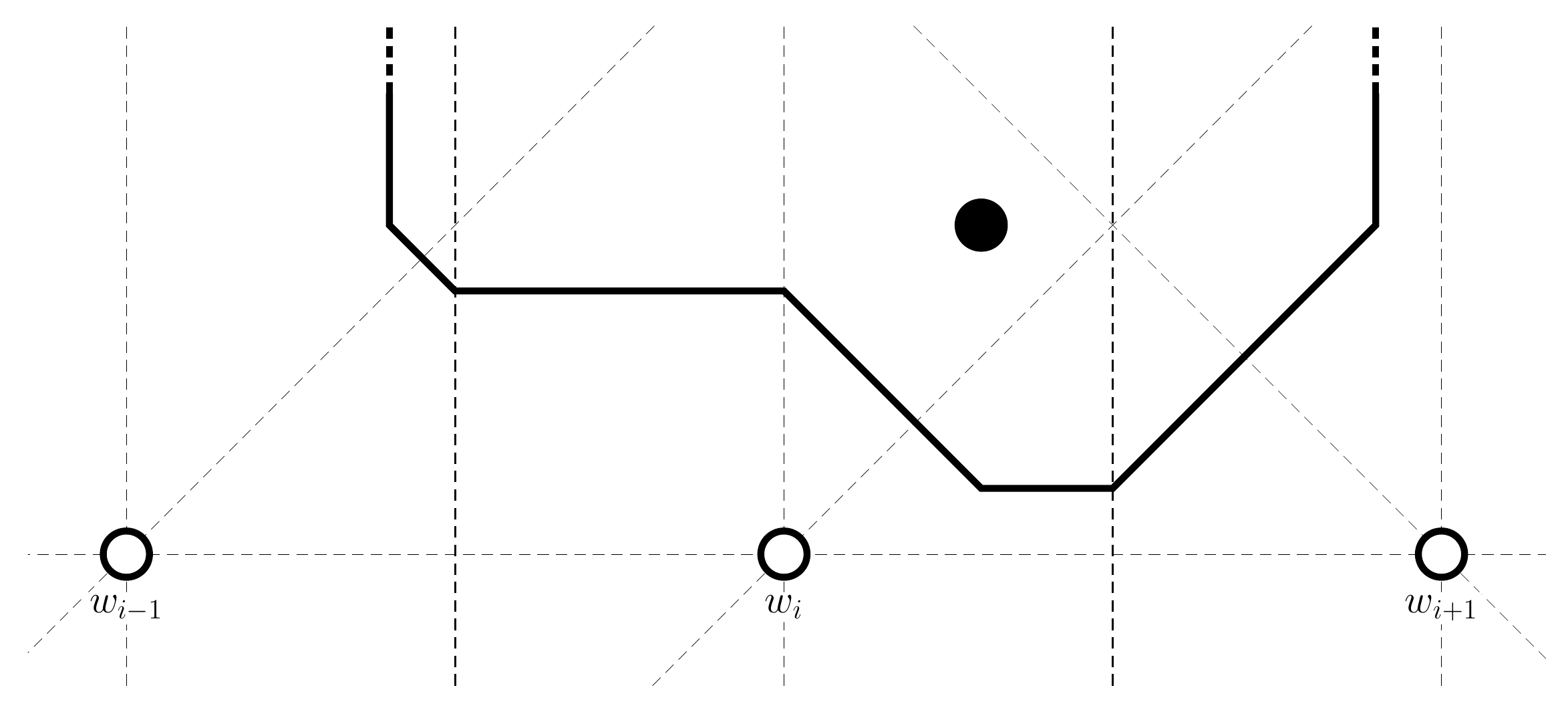}
  \caption{Voronoi cell $V^+(b_1)$ for $b_1$ in Section $II$.}
  \label{fig:RowCellII}
\end{subfigure}

\begin{subfigure}{.5\textwidth}
  \centering
  \includegraphics[width=0.9\textwidth]{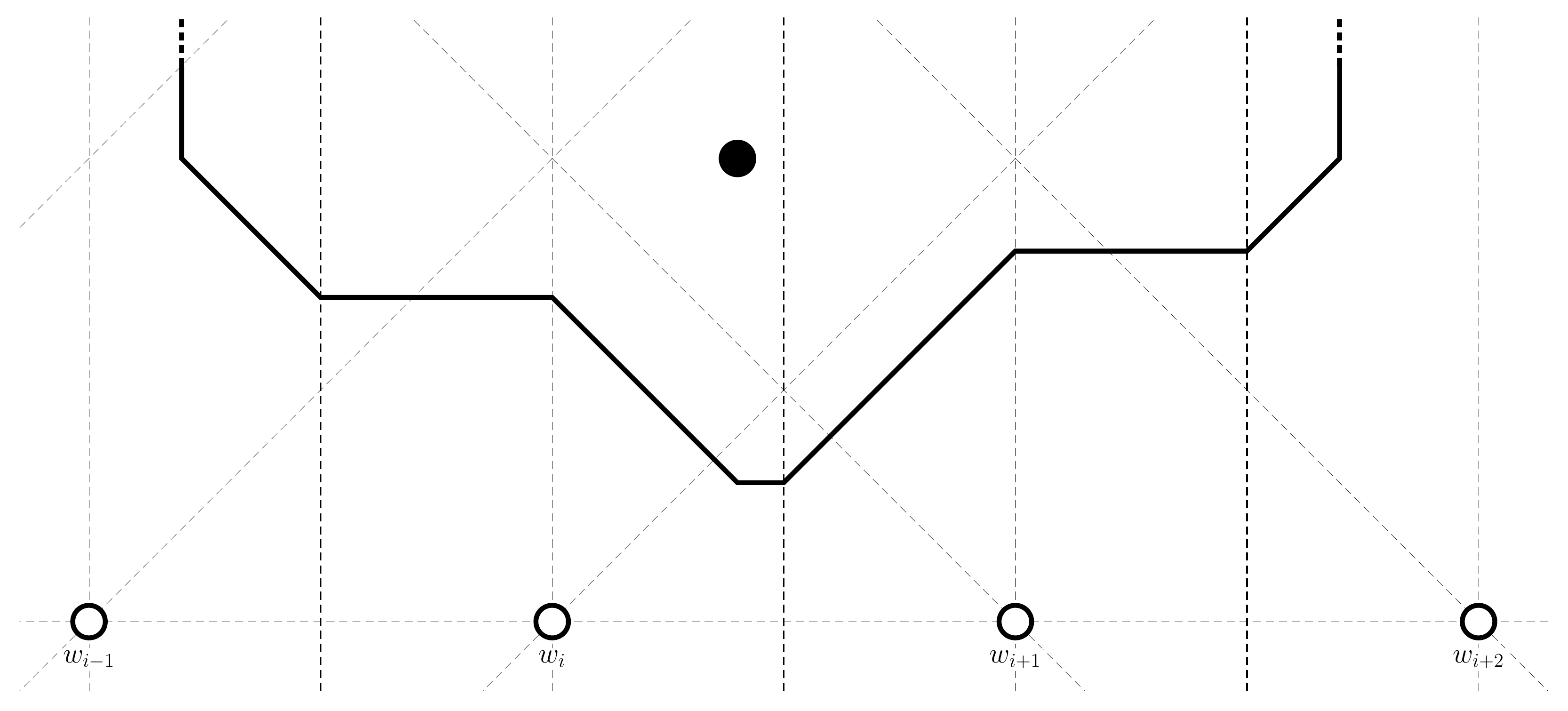}
  \caption{Voronoi cell $V^+(b_1)$ for $b_1$ in Section $III$.}
  \label{fig:RowCellIII}
\end{subfigure}%
\begin{subfigure}{.5\textwidth}
  \centering
  \includegraphics[width=0.9\textwidth]{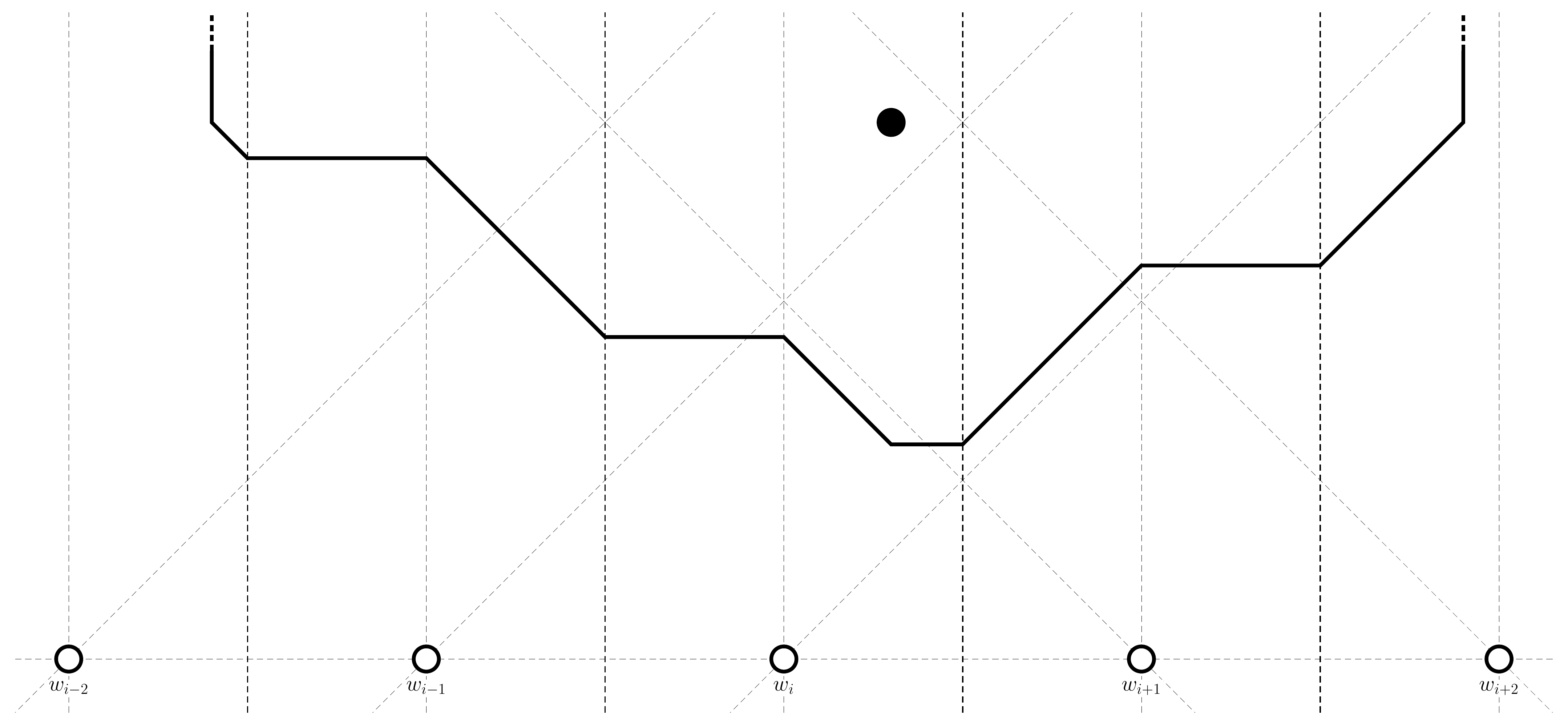}
  \caption{Voronoi cell $V^+(b_1)$ for $b_1$ in Section $IV$.}
  \label{fig:RowCellIV}
\end{subfigure}
\caption{Voronoi cells $V^+(b_1)$ for $b_1$ in respective sections according to Figure \ref{fig:RowPartition}.}
\label{fig:RowCellExamples}
\end{figure}

\pagebreak Let us first observe $V^+(b_1)$ when $b_1$ is in Section $I$ of Figure \ref{fig:RowPartition}. Assuming $i < n$, $V^+(b_1)$ has vertices $(\frac{x-y}{2},\frac{q}{2})$, $(\frac{x-y}{2},y)$, $(\frac{x+y}{2},0)$, $(\frac{x+y}{2},-\frac{q}{2})$, $(\frac{p}{2n}+\frac{x-y}{2},-\frac{q}{2})$, $(\frac{p}{2n}+\frac{x-y}{2},0)$, $(\frac{x+y}{2}+\frac{p}{2n},y)$, and $(\frac{x+y}{2}+\frac{p}{2n},\frac{q}{2})$ and so an area
\begin{equation*}
    \begin{split}
        Area(V^+(b_1)) &= (\frac{p}{2n}+\frac{x-y}{2} - \frac{x+y}{2}) \times q + 2 \times (y \times \frac{q}{2} - \frac{1}{2}y^2) \\
        &= \frac{pq}{2n} - y^2
    \end{split}
\end{equation*}
which is simply maximised at $(0,0)$. If $i=n$ then this instance of $V^+(b_1)$ will be maximised by also playing as close to $w_n$ as possible, stealing a total area bounded above by $\frac{pq}{2n}$.

\subsection{The encroachment of \texorpdfstring{$V^+(b_1)$}{V+(b1)} into \texorpdfstring{$V^\circ(w_j)$}{Vo(wj)}}

It may now seem a daunting task to work out the area of $V^+(b_1)$ for every possible placement of $b_1$. Instead, one more appealing approach would be to calculate the area that $b_1$ can steal from each Voronoi cell in $\mathcal{VD}(W)$, obtaining a formula based on the generator's location in relation to $w_i$ and to $b_1$. With this information we may be able to piece such areas together in order to obtain a general formula for the area of $V^+(b_1)$.

\paragraph{Theft from $V^\circ(w_i)$} Firstly, the area stolen from $V^\circ(w_i)$ takes two different forms depending on whether $b_1 \in \mathcal{CC}^1(w_i)$ or $b_1 \in \mathcal{CC}^2(w_i)$. If $b_1 \in \mathcal{CC}^1(w_i)$ then the area can only take the form that we have already explored in Section $I$ so we need not continue further along this avenue.
\iffalse
has vertices $(\frac{x+y}{2},-\frac{q}{2})$, $(\frac{x+y}{2},0)$, $(\frac{x-y}{2},y)$, $(\frac{x-y}{2},\frac{q}{2})$, $(\frac{p}{2n},\frac{q}{2})$, and $(\frac{p}{2n},-\frac{q}{2})$ and totals
\begin{equation*}
    \begin{split}
        Area(V^+(b_1) \cap V^\circ(w_i)) &= (\frac{p}{2n} - \frac{x+y}{2}) \times q + y \times  \frac{q}{2} - \frac{1}{2}y^2 \\
        &= - \frac{1}{2}y^2 - \frac{q}{2}x + \frac{pq}{2n}
    \end{split}
\end{equation*}
\fi
The area stolen from $V^\circ(w_i)$ if $b_1 \in \mathcal{CC}^2(w_i)$ has vertices $(-\frac{p}{2n},\frac{x+y}{2})$, $(0,\frac{x+y}{2})$, $(x,\frac{y-x}{2})$, $(\frac{p}{2n},\frac{y-x}{2})$, $(\frac{p}{2n},\frac{q}{2})$, and $(-\frac{p}{2n},\frac{q}{2})$ and totals
\begin{equation*}
    \begin{split}
        Area(V^+(b_1) \cap V^\circ(w_i)) &= \frac{p}{n} \times (\frac{q}{2} - \frac{x+y}{2}) + \frac{p}{2n} \times x - \frac{1}{2}x^2 \\
        &= - \frac{x^2}{2} - \frac{p}{2n}y + \frac{pq}{2n} \, .\\
    \end{split}
\end{equation*}

\paragraph{Theft from $V^\circ(w_j)$ for $j<i$} Next let us investigate what occurs when $b_1 \in \mathcal{CC}^1(w_j)$ for $j < i$. If there exists a $j$ such that $b_1 \in \mathcal{CC}^1(w_j) \setminus \mathcal{CC}^1(w_{j+1})$ then $V^+(b_1)$ enters $V^\circ(w_j)$. Therefore, writing $w_j$ as $(\frac{(j-i)p}{n},0)$, the area stolen from $V^\circ(w_j)$ if $b_1 \in \mathcal{CC}^1(w_j) \setminus \mathcal{CC}^1(w_{j+1})$ has vertices $(\frac{x-y}{2}+\frac{(j-i)p}{2n},y)$, $(\frac{(2(j-i)+1)p}{2n},\frac{x+y}{2}-\frac{(j-i+1)p}{2n})$, $(\frac{(2(j-i)+1)p}{2n},\frac{q}{2})$, and $(\frac{x-y}{2}+\frac{(j-i)p}{2n},\frac{q}{2})$ and totals
\begin{equation*}
    \begin{split}
        Area(V^+(b_1) \cap V^\circ(w_j)) &= \frac{1}{2} \times (\frac{(2(j-i)+1)p}{2n} - (\frac{x-y}{2}+\frac{(j-i)p}{2n})) \\ &\qquad\times (\frac{q}{2}-y + \frac{q}{2}- (\frac{x+y}{2}-\frac{(j-i+1)p}{2n})) \\
        &= \frac{1}{2} \times (\frac{(j-i+1)p}{2n} - \frac{x-y}{2}) \times (\frac{(j-i+1)p}{2n}+q - \frac{x+3y}{2}) \\
        &= \frac{1}{2} \times ((\frac{(j-i+1)p}{2n})^2 + \frac{(j-i+1)p}{2n} (q - \frac{x+3y}{2} - \frac{x-y}{2}) \\ &\qquad- \frac{x-y}{2}(q - \frac{x+3y}{2})) \\
        &= \frac{x^2}{8} - \frac{3y^2}{8} + \frac{xy}{4} + (\frac{(i-j-1)p}{4n} - \frac{q}{4})x + (\frac{(i-j-1)p}{4n} + \frac{q}{4})y \\ 
        &\qquad - \frac{(i-j-1)pq}{4n} + \frac{(i-j-1)^2p^2}{8n^2} \, . \\
        %(i^2 p^2)/(8 n^2) - (i j p^2)/(4 n^2) - (i p^2)/(4 n^2) - (i p q)/(4 n) + (i p x)/(4 n) + (i p y)/(4 n) + (j^2 p^2)/(8 n^2) + (j p^2)/(4 n^2) + (j p q)/(4 n) - (j p x)/(4 n) - (j p y)/(4 n) + p^2/(8 n^2) + (p q)/(4 n) - (p x)/(4 n) - (p y)/(4 n) - (q x)/4 + (q y)/4 + x^2/8 + (x y)/4 - (3 y^2)/8
        %(i^2 p^2 - 2 i j p^2 - 2 i n p q + 2 i n p x + 2 i n p y - 2 i p^2 + j^2 p^2 + 2 j n p q - 2 j n p x - 2 j n p y + 2 j p^2 - 2 n^2 q x + 2 n^2 q y + n^2 x^2 + 2 n^2 x y - 3 n^2 y^2 + 2 n p q - 2 n p x - 2 n p y + p^2)/(8 n^2)
        %CHECKED FOR x=0 and i-j-1=0 and y=p/n or =1 and =2p/n and x = y for i-j-1=0 and =1
    \end{split}
\end{equation*}

If $b_1 \in \mathcal{CC}^2(w_j)$ then $V^+(b_1)$ always steals from $V^\circ(w_j)$. This area stolen has vertices $(\frac{(2(j-i)-1)p}{2n},\frac{x+y}{2}-\frac{(j-i)p}{2n})$, $(\frac{(j-i)p}{n},\frac{x+y}{2}-\frac{(j-i)p}{2n})$, $(\frac{(2(j-i)+1)p}{2n},\frac{x+y}{2}-\frac{(j-i+1)p}{2n})$, $(\frac{(2(j-i)+1)p}{2n},\frac{q}{2})$, and $(\frac{(2(j-i)-1)p}{2n},\frac{q}{2})$ and totals

\begin{equation*}
    \begin{split}
        Area(V^+(b_1) \cap V^\circ(w_j)) &= \frac{p}{n} \times (\frac{q}{2} - (\frac{x+y}{2}-\frac{(j-i)p}{2n})) + \frac{1}{2}(\frac{p}{2n})^2 \\
        &= - \frac{p}{2n}x - \frac{p}{2n}y + \frac{pq}{2n} - \frac{(4(i-j)-1)p^2}{8n^2} \, .
        %VERIFIED TO BE TRUE
    \end{split}
\end{equation*}

\paragraph{Theft from $V^\circ(w_k)$ for $k>i$} Now moving our focus over to the Voronoi cells of $w_k$ for $k > i$, we have the analogous situations explored above. If there exists a $k$ such that $b_1 \in \mathcal{CC}^4(w_k) \setminus \mathcal{CC}^4(w_{k-1})$ then $V^+(b_1)$ enters $V^\circ(w_k)$. Therefore, writing $w_k$ as $(\frac{(k-i)p}{n},0)$, the area stolen from $V^\circ(w_k)$ if $b_1 \in \mathcal{CC}^4(w_k) \setminus \mathcal{CC}^4(w_{k-1})$ has vertices $(\frac{(2(k-i)-1)p}{2n},\frac{y-x}{2} + \frac{(k-i-1)p}{2n})$, $(\frac{x+y}{2} + \frac{(k-i)p}{2n},y)$, $(\frac{x+y}{2} + \frac{(k-i)p}{2n},\frac{q}{2})$, and $(\frac{(2(k-i)-1)p}{2n},\frac{q}{2})$ and totals
\begin{equation*}
    \begin{split}
        Area(V^+(b_1) \cap V^\circ(w_k)) &= \frac{1}{2} \times (\frac{x+y}{2} + \frac{(k-i)p}{2n} - \frac{(2(k-i)-1)p}{2n}) \\ &\qquad\times (\frac{q}{2} - (\frac{y-x}{2} + \frac{(k-i-1)p}{2n}) + \frac{q}{2} - y) \\
        &= \frac{1}{2} \times (\frac{x+y}{2} - \frac{(k-i-1)p}{2n}) \times (\frac{x-3y}{2} - \frac{(k-i-1)p}{2n} + q) \\
        &= \frac{x^2}{8} -\frac{3y^2}{8} -\frac{xy}{4} + (\frac{q}{4} - \frac{(k-i-1)p}{4n})x + (\frac{(k-i-1)p}{4n} + \frac{q}{4})y \\
        &\qquad - \frac{(k-i-1)pq}{4n} + \frac{(k-i-1)^2p^2}{8n^2} \, .
    \end{split}
\end{equation*}
We are comforted to see that this area is in fact identical, up to a reflection, to that for $b_1 \in \mathcal{CC}^2(w_j)$ where the axes have been reflecting in the $y$-axis (i.e. $x$ becomes $-x$ and $i-j$ becomes $k-i$).

And as before, if $b_1 \in \mathcal{CC}^3(w_k)$ then $V^+(b_1)$ always steals from $V^\circ(w_k)$. This area stolen has vertices $(\frac{(2(k-i)-1)p}{2n},\frac{y-x}{2} + \frac{(k-i-1)p}{2n})$, $(\frac{(k-i)p}{n},\frac{y-x}{2} + \frac{(k-i)p}{2n})$, $(\frac{(2(k-i)+1)p}{2n},\frac{y-x}{2} + \frac{(k-i)p}{2n})$, $(\frac{(2(k-i)+1)p}{2n},\frac{q}{2})$, and $(\frac{(2(k-i)-1)p}{2n},\frac{q}{2})$ and totals
\begin{equation*}
    \begin{split}
        Area(V^+(b_1) \cap V^\circ(w_k)) &= \frac{p}{n} \times (\frac{q}{2} - (\frac{y-x}{2} + \frac{(k-i)p}{2n})) + \frac{1}{2}(\frac{p}{2n})^2 \\
        &= \frac{p}{2n}x - \frac{p}{2n}y + \frac{pq}{2n} - \frac{(4(k-i)-1)p^2}{8n^2} \, .
    \end{split}
\end{equation*}
This is again identical to the area found for $b_1 \in \mathcal{CC}^2(w_j)$ after the reflection described previously.

We have now found all formulae for the area of $V^+(b_1)$ contained in each Voronoi cell of $\mathcal{VD}(W)$ when White plays a row. From these we can derive the area for a general cell $V^+(b_1)$ where $b_1 \in V^\circ(w_i)$ for some $i$, and find the optimal solution within each of the partition cells that produce such a structure of $V^+(b_1)$. Figure \ref{fig:RowOptimals} will depict all optimal locations of $b_1$ within each section under the certain circumstances we will discuss below unless optima have location $(0,0)$, a placement already described in Lemma \ref{steal}; %4.1.1;
Section $IV$ and Section $III$ are depicted as the poster children for the general Section $2l$ and Section $2l+1$ results respectively, and for clarity these respective sections will be shaded in each figure.

\subsection{\texorpdfstring{$V^+(b_1)$}{V+(b1)} not touching the vertical edges of \texorpdfstring{$\mathcal{P}$}{P}}

Since we have already explored Section $I$, we will look only at $b_1 \in \mathcal{CC}^2(w_i)$. Firstly, ignoring intersections with the vertical boundaries of $\mathcal{P}$, we can see from Figure \ref{fig:RowCellExamples} that the left and right ends of $V^+(b_1)$ always have the same structure. This is because there is always a $j$ such that $b_1 \in \mathcal{CC}^1(w_j) \setminus \mathcal{CC}^1(w_{j+1})$ and similarly always a $k$ such that $b_1 \in \mathcal{CC}^4(w_k) \setminus \mathcal{CC}^4(w_{k-1})$. Furthermore, viewing $j$ and $k$ as points $l$ away from $i$ we can write each partition cell in Figure~\ref{fig:RowPartition} as either $(\mathcal{CC}^1(w_{i-l}) \setminus \mathcal{CC}^1(w_{i-l+1})) \cap (\mathcal{CC}^4(w_{i+l}) \setminus \mathcal{CC}^4(w_{i+l-1}))$ or  $(\mathcal{CC}^1(w_{i-l}) \setminus \mathcal{CC}^1(w_{i-l+1})) \cap (\mathcal{CC}^4(w_{i+l+1}) \setminus \mathcal{CC}^4(w_{i+l}))$ (exploring the top right quadrant of $V^\circ(w_i)$ means we may interact with $w_j$ either for all $j=i-l, \dots, i+l$ or for all $j=i-l, \dots, i+l+1$).

\paragraph{Section $2l$} Therefore we have one of two area formulae, the first being for $b_1 \in (\mathcal{CC}^1(w_{i-l}) \setminus \mathcal{CC}^1(w_{i-l+1})) \cap (\mathcal{CC}^4(w_{i+l}) \setminus \mathcal{CC}^4(w_{i+l-1})) = \mathcal{CC}^1(w_{i-l}) \bigcap_{j=i-l+1}^{i}{\mathcal{CC}^2(w_j)} \bigcap_{j=i+1}^{i+l-1}{\mathcal{CC}^3(w_j)} \cap \mathcal{CC}^4(w_{i+l})$ for $l \in \mathbb{N}$ (this would be Section $2l$ in Figure \ref{fig:RowPartition}) with
\begin{equation*}
    \begin{split}
        Area(V^+(b_1)) &= Area(V^+(b_1) \cap V^\circ(w_{i-l})) + \sum_{j=i-l+1}^{i-1}Area(V^+(b_1) \cap V^\circ(w_j)) \\ &\qquad+ Area(V^+(b_1) \cap V^\circ(w_i)) + \sum_{j=i+1}^{i+l-1}Area(V^+(b_1) \cap V^\circ(w_j)) \\ &\qquad+ Area(V^+(b_1) \cap V^\circ(w_{i+l})) \\
        &= \frac{x^2}{8} - \frac{3y^2}{8} + \frac{xy}{4} + (\frac{(i-(i-l)-1)p}{4n} - \frac{q}{4})x + (\frac{(i-(i-l)-1)p}{4n} + \frac{q}{4})y \\ 
        &\qquad - \frac{(i-(i-l)-1)pq}{4n} + \frac{(i-(i-l)-1)^2p^2}{8n^2} \\ %CC1(w_{(i-l)}):: ((l - 1)^2 p^2)/(8 n^2) + x (((l - 1) p)/(4 n) - q/4) + y (((l - 1) p)/(4 n) + q/4) - ((l - 1) p q)/(4 n) + x^2/8 + (x y)/4 - (3 y^2)/8
        &\qquad+ \sum_{j=i-l+1}^{i-1}(- \frac{p}{2n}x - \frac{p}{2n}y + \frac{pq}{2n} - \frac{(4(i-j)-1)p^2}{8n^2}) %CC2(w_j)
        - \frac{x^2}{2} - \frac{p}{2n}y + \frac{pq}{2n} \\ %CC2(w_i):: (p q)/(2 n) - (p y)/(2 n) - x^2/2
        &\qquad+ \sum_{j=i+1}^{i+l-1}(\frac{p}{2n}x - \frac{p}{2n}y + \frac{pq}{2n} - \frac{(4(j-i)-1)p^2}{8n^2}) %CC3(w-j)
        + \frac{x^2}{8} -\frac{xy}{4} -\frac{3y^2}{8} \\
        &\qquad+ (\frac{q}{4} - \frac{((i+l)-i-1)p}{4n})x + (\frac{((i+l)-i-1)p}{4n} + \frac{q}{4})y \\
        &\qquad- \frac{((i+l)-i-1)pq}{4n} + \frac{((i+l)-i-1)^2p^2}{8n^2} \\ %CC4(w_{(i+l)}):: ((l - 1)^2 p^2)/(8 n^2) + x (q/4 - ((l - 1) p)/(4 n)) + y (((l - 1) p)/(4 n) + q/4) - ((l - 1) p q)/(4 n) + x^2/8 - (x y)/4 - (3 y^2)/8
        % w_{i-l} + w_{i+l} = (l^2 p^2)/(8 n^2) + ((l - 1)^2 p^2)/(8 n^2) - (l p^2)/(4 n^2) + (p q)/(2 n) - (l p q)/(2 n) + (l p y)/(2 n) + p^2/(8 n^2) - (p y)/(2 n) + (q y)/2 + x^2/4 - (3 y^2)/4
        %all outside sums = ((l - 1)^2 p^2)/(4 n^2) - (l p q)/(2 n) + (l p y)/(2 n) + (p q)/n - (p y)/n + (q y)/2 - x^2/4 - (3 y^2)/4
        &= - \frac{x^2}{4} - \frac{3y^2}{4} + (\frac{(l-2)p}{2n} + \frac{q}{2})y - \frac{(l-2)pq}{2n} + \frac{(l-1)^2p^2}{4n^2} \\
        &\qquad+ (- \frac{p}{2n}x - \frac{p}{2n}y + \frac{pq}{2n} + \frac{p^2}{8n^2})\times((i-1) - (i-l+1-1)) - \sum_{j=i-l+1}^{i-1}(\frac{4(i-j)p^2}{8n^2}) \\
        &\qquad+ (\frac{p}{2n}x - \frac{p}{2n}y + \frac{pq}{2n} + \frac{p^2}{8n^2})\times((i+l-1)-(i+1-1)) - \sum_{j=i+1}^{i+l-1}(\frac{4(j-i)p^2}{8n^2}) \\
        &= - \frac{x^2}{4} - \frac{3y^2}{4} + (\frac{(l-2)p}{2n} + \frac{q}{2})y - \frac{(l-2)pq}{2n} + \frac{(l-1)^2p^2}{4n^2} \\
        &\qquad+ (l-1)(- \frac{p}{n}y + \frac{pq}{n} + \frac{p^2}{4n^2}) - \frac{p^2}{2n^2}(\sum_{i-j=1}^{l-1}(i-j) + \sum_{j-i=1}^{l-1}(j-i)) \\
        %without sum bits (l^2 p^2)/(4 n^2) - (l p^2)/(4 n^2) + (l p q)/(2 n) - (l p y)/(2 n) + (q y)/2 - x^2/4 - (3 y^2)/4
        &= - \frac{x^2}{4} - \frac{3y^2}{4} + (-\frac{lp}{2n} + \frac{q}{2})y + \frac{lpq}{2n} - \frac{(l-1)lp^2}{4n^2} \, .
        %-(l^2 p^2)/(4 n^2) + (l p^2)/(4 n^2) + (l p q)/(2 n) - (l p y)/(2 n) + (q y)/2 - x^2/4 - (3 y^2)/4
    \end{split}
\end{equation*}
This area has partial derivatives
\begin{equation*}
\begin{split}
\frac{\delta A}{\delta x}&= - \frac{x}{2} \\
\frac{\delta A}{\delta y}&= - \frac{3y}{2} -\frac{lp}{2n} + \frac{q}{2}
\end{split}
\end{equation*}
giving the optimal value $b^*_1=(0,\frac{q}{3}-\frac{lp}{3n})$ with $Area(V^+(b_1^*))= \frac{lpq}{3n} + \frac{(3-2l)lp^2}{12n^2} + \frac{q^2}{12}$. This is depicted in Figure \ref{fig:RowOptimalNotTouchingIVb} for $l=2$. For $b^*_1$ to lie within Section $2l$ we must have $x^* + (l-1)\frac{p}{n} \leq y^* \leq l\frac{p}{n} - x^*$ so it must be the case that $\frac{(4l-3)p}{n} \leq q \leq \frac{4lp}{n}$.

If $\frac{4lp}{n} \leq q$ then the optimum must lie at the intersection of $x=0$ and $y=l\frac{p}{n} - x$ (since $\frac{\delta A}{\delta x}= - \frac{x}{2}$, the area will always increase as $x$ moves towards $0$ and since $\frac{4lp}{n} \leq q$ the global optimum lies above Section $2l$). Therefore the optimum in this section is $b^*_1=(0,l\frac{p}{n})$ achieving $Area(V^+(b_1))=\frac{lpq}{n} + \frac{(1-6l)lp^2}{4n^2}$. This is depicted in Figure~\ref{fig:RowOptimalNotTouchingIVa}.

Alternatively, if $\frac{(4l-3)p}{n} \geq q$ then the optimum must lie at the intersection of $x=0$ and $y=x+(l-1)\frac{p}{n}$ (since $\frac{\delta A}{\delta x}= - \frac{x}{2}$, the area will always increase as $x$ moves towards $0$ and since $\frac{(4l-3)p}{n} \geq q$ the global optimum lies below Section $2l$). Therefore the optimum in this section is $b^*_1=(0,(l-1)\frac{p}{n})$ achieving $Area(V^+(b_1))=\frac{(2l-1)pq}{2n} - \frac{3(2l-1)(l-1)p^2}{4n^2}$. This is depicted in Figure~\ref{fig:RowOptimalNotTouchingIVc}.

\begin{figure}[!ht]
\begin{subfigure}{.5\textwidth}
  \centering
  \includegraphics[width=0.9\textwidth]{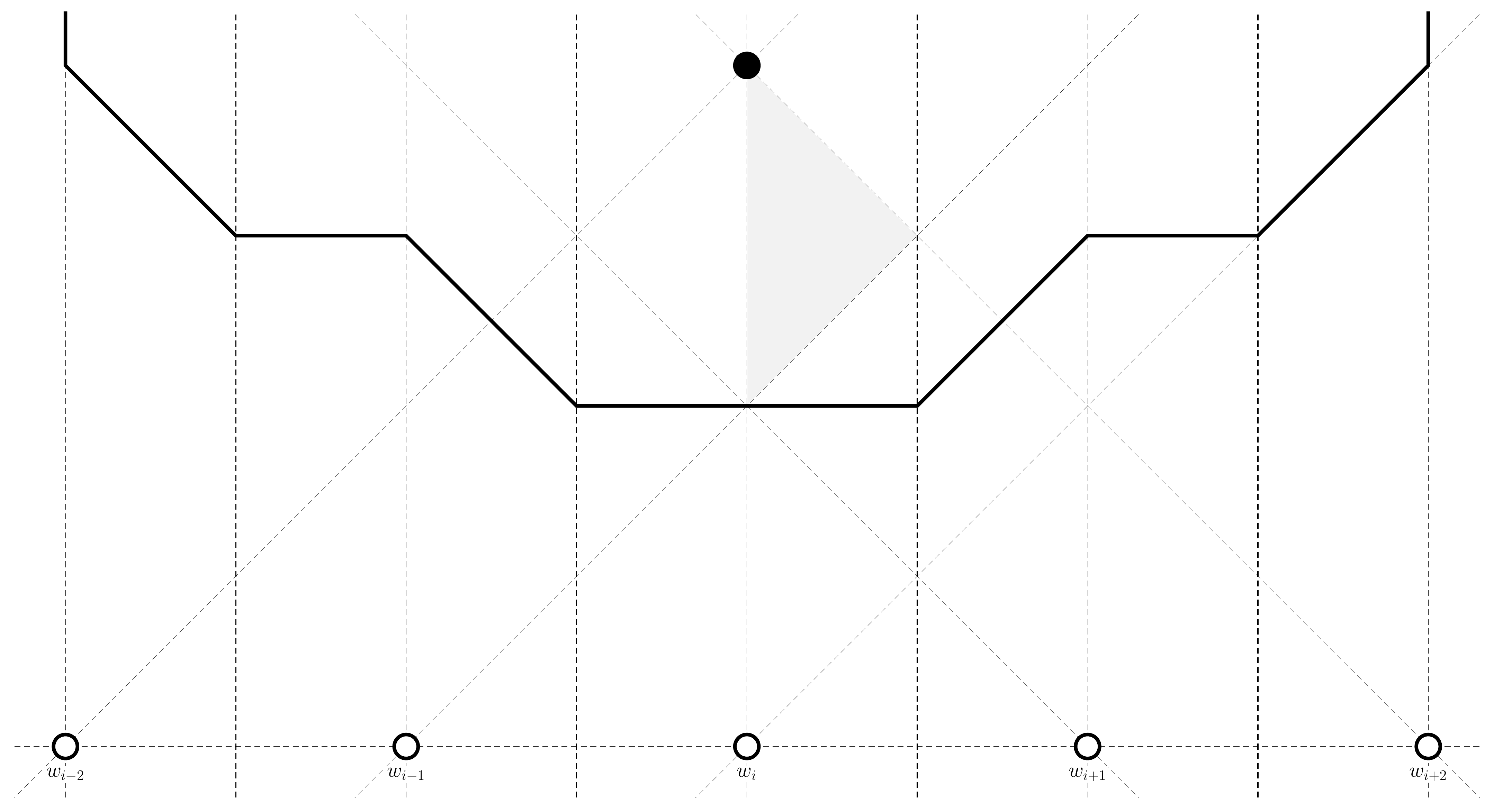}
  \caption{$b_1=(0,l\frac{p}{n})$ only if $\frac{4lp}{n} \leq q$.}
%  \caption{$Area(V^+((0,l\frac{p}{n})))=\frac{lpq}{n} + \frac{(1-6l)lp^2}{4n^2}$ \\only if $\frac{4lp}{n} \leq q$.}
  \label{fig:RowOptimalNotTouchingIVa}
\end{subfigure}%
\begin{subfigure}{.5\textwidth}
  \centering
  \includegraphics[width=0.9\textwidth]{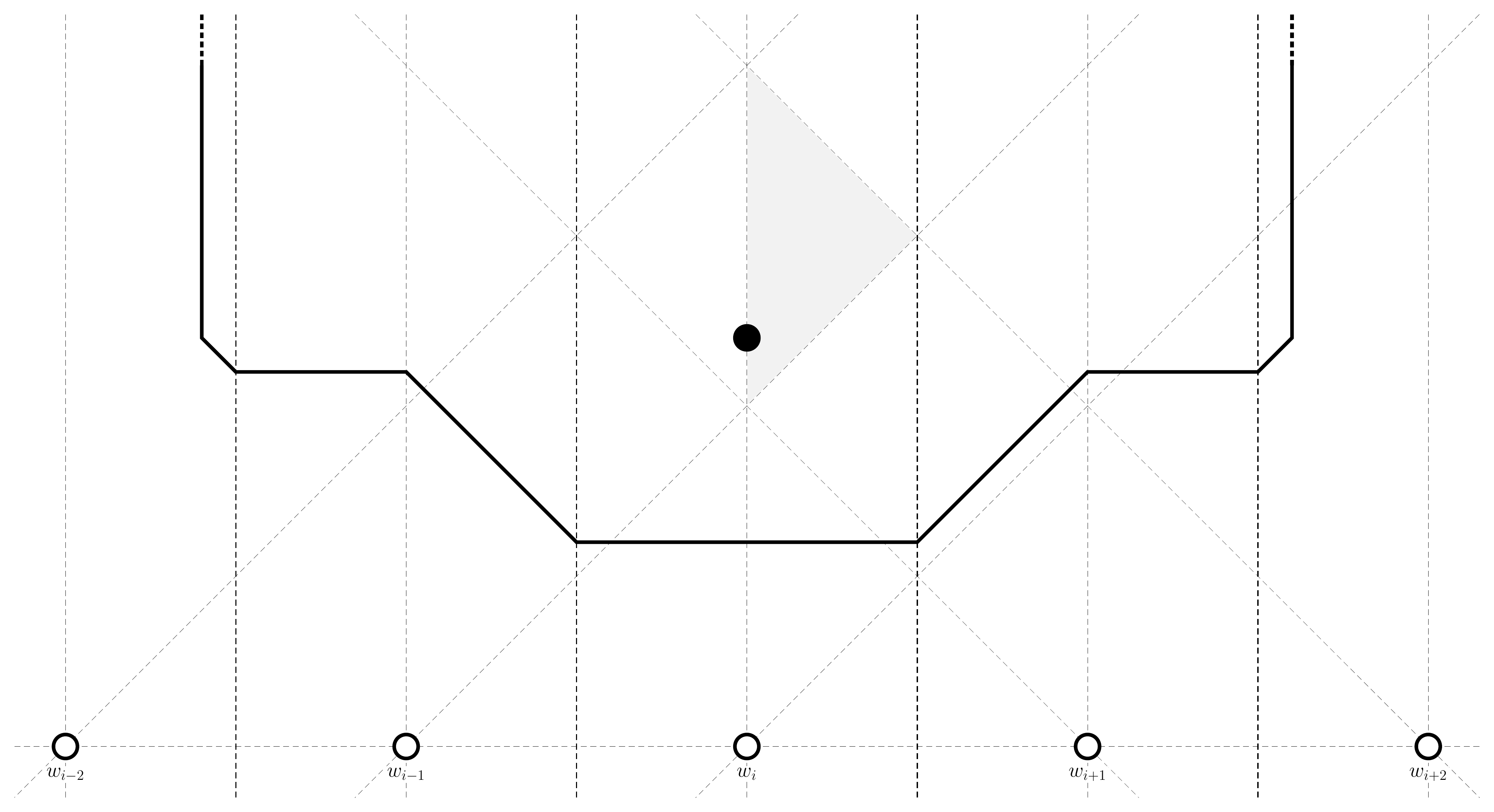}
  \caption{$b_1=(0,\frac{q}{3}-\frac{lp}{3n})$ only if $\frac{(4l-3)p}{n} \leq q \leq \frac{4lp}{n}$.}
%  \caption{$Area(V^+((0,\frac{q}{3}-\frac{lp}{3n})))=\frac{lpq}{3n} + \frac{(3-2l)lp^2}{12n^2} + \frac{q^2}{12}$ \\only if $\frac{(4l-3)p}{n} \leq q \leq \frac{4lp}{n}$.}
  \label{fig:RowOptimalNotTouchingIVb}
\end{subfigure}

\begin{subfigure}{1.0\textwidth}
  \centering
  \includegraphics[width=0.45\textwidth]{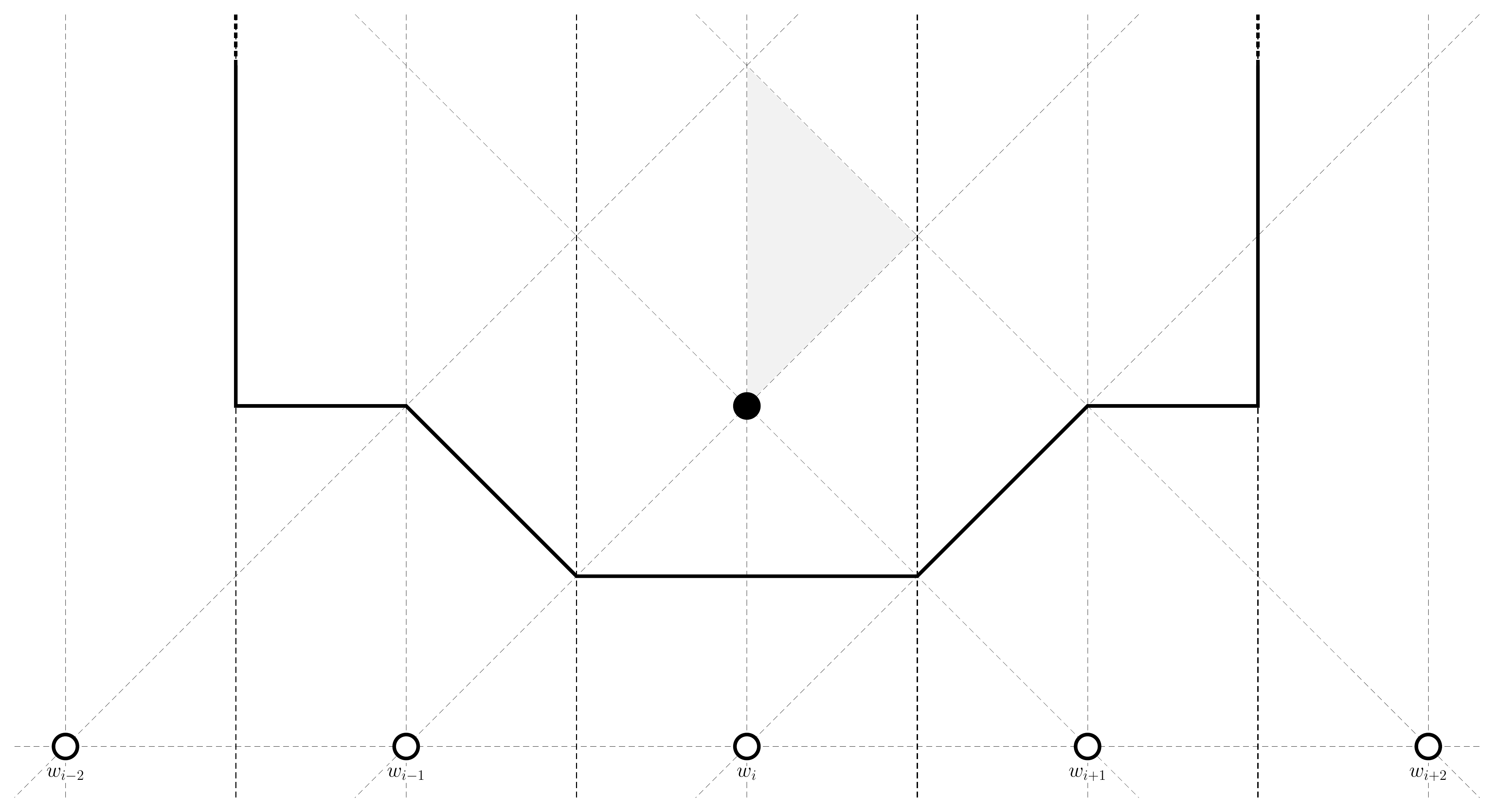}
  \caption{$b_1=(0,(l-1)\frac{p}{n})$ only if $\frac{(4l-3)p}{n} \geq q$.}
%  \caption{$Area(V^+((0,(l-1)\frac{p}{n}))) = \frac{(2l-1)pq}{2n} - \frac{3(2l-1)(l-1)p^2}{4n^2}$ \\only if $\frac{(4l-3)p}{n} \geq q$.}
  \label{fig:RowOptimalNotTouchingIVc}
\end{subfigure}
\caption{Maximal area Voronoi cells $V^+(b_1)$ for $b_1$ within Section $2l$ not touching the vertical edges of $\mathcal{P}$.}
\end{figure}

\paragraph{Section $2l+1$} The other area formula for $b_1 \in (\mathcal{CC}^1(w_{i-l}) \setminus \mathcal{CC}^1(w_{i-l+1})) \cap (\mathcal{CC}^4(w_{i+l+1}) \setminus \mathcal{CC}^4(w_{i+l})) = \mathcal{CC}^1(w_{i-l}) \bigcap_{j=i-l+1}^{i}{\mathcal{CC}^2(w_j)} \bigcap_{j=i+1}^{i+l}{\mathcal{CC}^3(w_j)} \cap \mathcal{CC}^4(w_{i+l+1})$ for $l \in \mathbb{N}$ (this would be Section $2l+1$ in Figure \ref{fig:RowPartition}) is, adapting from the formula found for Section $2l$,
\begin{equation*}
    \begin{split}
        Area(V^+(b_1)) &= Area(V^+(b_1) \cap V^\circ(w_{i-l})) + \sum_{j=i-l+1}^{i-1}Area(V^+(b_1) \cap V^\circ(w_j)) \\ &\qquad+ Area(V^+(b_1) \cap V^\circ(w_i)) + \sum_{j=i+1}^{i+l}Area(V^+(b_1) \cap V^\circ(w_j)) \\ &\qquad+ Area(V^+(b_1) \cap V^\circ(w_{i+l+1})) \\
        &=  - \frac{x^2}{4} - \frac{3y^2}{4} + (-\frac{lp}{2n} + \frac{q}{2})y + \frac{lpq}{2n} - \frac{(l-1)lp^2}{4n^2} %Previous calculation
        - \left(\frac{x^2}{8} -\frac{3y^2}{8} -\frac{xy}{4} \right. \\
        &\qquad+ \left. (\frac{q}{4} - \frac{((i+l)-i-1)p}{4n})x + (\frac{((i+l)-i-1)p}{4n} + \frac{q}{4})y - \frac{((i+l)-i-1)pq}{4n} \right. \\
        &\qquad+ \left. \frac{((i+l)-i-1)^2p^2}{8n^2}\right) %CC4(w_{i+l})
        + \frac{p}{2n}x - \frac{p}{2n}y + \frac{pq}{2n} - \frac{(4((i+l)-i)-1)p^2}{8n^2} \\ %CC3(w_{i+l})
        &\qquad+ \frac{x^2}{8} -\frac{3y^2}{8} -\frac{xy}{4} + (\frac{q}{4} - \frac{((i+l+1)-i-1)p}{4n})x + (\frac{((i+l+1)-i-1)p}{4n} + \frac{q}{4})y \\
        &\qquad- \frac{((i+l+1)-i-1)pq}{4n} + \frac{((i+l+1)-i-1)^2p^2}{8n^2} \\ %CC4(w_{i+l+1})
    \end{split}
\end{equation*}

\begin{equation*}
\begin{split}
      \qquad  &=  - \frac{x^2}{4} - \frac{3y^2}{4} + (-\frac{lp}{2n} + \frac{q}{2})y + \frac{lpq}{2n} - \frac{(l-1)lp^2}{4n^2} \\
      &\qquad- \left(\frac{p}{4n}x - \frac{p}{4n}y + \frac{pq}{4n} + \frac{(-2l+1)p^2}{8n^2}\right) + \frac{p}{2n}x - \frac{p}{2n}y + \frac{pq}{2n} - \frac{(4l-1)p^2}{8n^2} \\
        &= - \frac{x^2}{4} - \frac{3y^2}{4} + \frac{p}{4n}x + (-\frac{(2l+1)p}{4n} + \frac{q}{2})y + \frac{(2l+1)pq}{4n} - \frac{l(l+1)p^2}{4n^2} \, .
        %-(l (l + 1) p^2)/(4 n^2) + y (q/2 - ((2 l + 1) p)/(4 n)) + ((2 l + 1) p q)/(4 n) + (p x)/(4 n) - x^2/4 - (3 y^2)/4
\end{split}
\end{equation*}
This area has partial derivatives
\begin{equation*}
\begin{split}
\frac{\delta A}{\delta x}&= - \frac{x}{2} + \frac{p}{4n} \\
\frac{\delta A}{\delta y}&= - \frac{3y}{2} -\frac{(2l+1)p}{4n} + \frac{q}{2}
\end{split}
\end{equation*}
giving the optimal value $b^*_1=(\frac{p}{2n},\frac{q}{3}-\frac{(2l+1)p}{6n})$ and $Area(V^+((\frac{p}{2n},\frac{q}{3}-\frac{(2l+1)p}{6n})))=\frac{(2l+1)pq}{6n} - \frac{(2l^2+2l-1)p^2}{12n^2} + \frac{q^2}{12}$. For $b^*_1$ to lie within Section $2l$ we must have $l\frac{p}{n} - x^* \leq y^* \leq x^* + l\frac{p}{n}$ so it must be the case that $\frac{(4l-1)p}{n} \leq q \leq \frac{(4l+2)p}{n}$. This is depicted in Figure~\ref{fig:RowOptimalNotTouchingIIIb}.

If $\frac{(4l+2)p}{n} \leq q$ then the optimum must lie at the intersection of $x=\frac{p}{2n}$ and $y=x + l\frac{p}{n}$ (since $x^*=\frac{p}{2n}$ does not restrict the values of $y$ over Section $2l+1$ and since $\frac{(4l-1)p}{n} \geq q$ the global optimum lies above Section $2l+1$). Therefore the optimum in this section is $b^*_1=(\frac{p}{2n},\frac{(2l+1)p}{2n})$ achieving $Area(V^+(b_1))= \frac{(2l+1)pq}{2n}-\frac{(6l^2+6l+1)p^2}{4n^2}$. This is depicted in Figure~\ref{fig:RowOptimalNotTouchingIIIa}.

Alternatively, if $\frac{(4l-1)p}{n} \geq q$ then the optimum must lie at the intersection of $x=\frac{p}{2n}$ and $y=l\frac{p}{n} - x$ (since $x^*=\frac{p}{2n}$ does not restrict the values of $y$ over Section $2l+1$ and since $\frac{(4l-1)p}{n} \geq q$ the global optimum lies below Section $2l+1$). Therefore the optimum in this section is $b^*_1=(\frac{p}{2n},\frac{(2l-1)p}{2n})$ achieving $Area(V^+(b_1))=\frac{lpq}{n} - \frac{(3l-1)lp^2}{2n^2}$. This is depicted in Figure~\ref{fig:RowOptimalNotTouchingIIIc}.

\begin{figure}[!ht]\ContinuedFloat
\begin{subfigure}{.5\textwidth}
  \centering
  \includegraphics[width=0.9\textwidth]{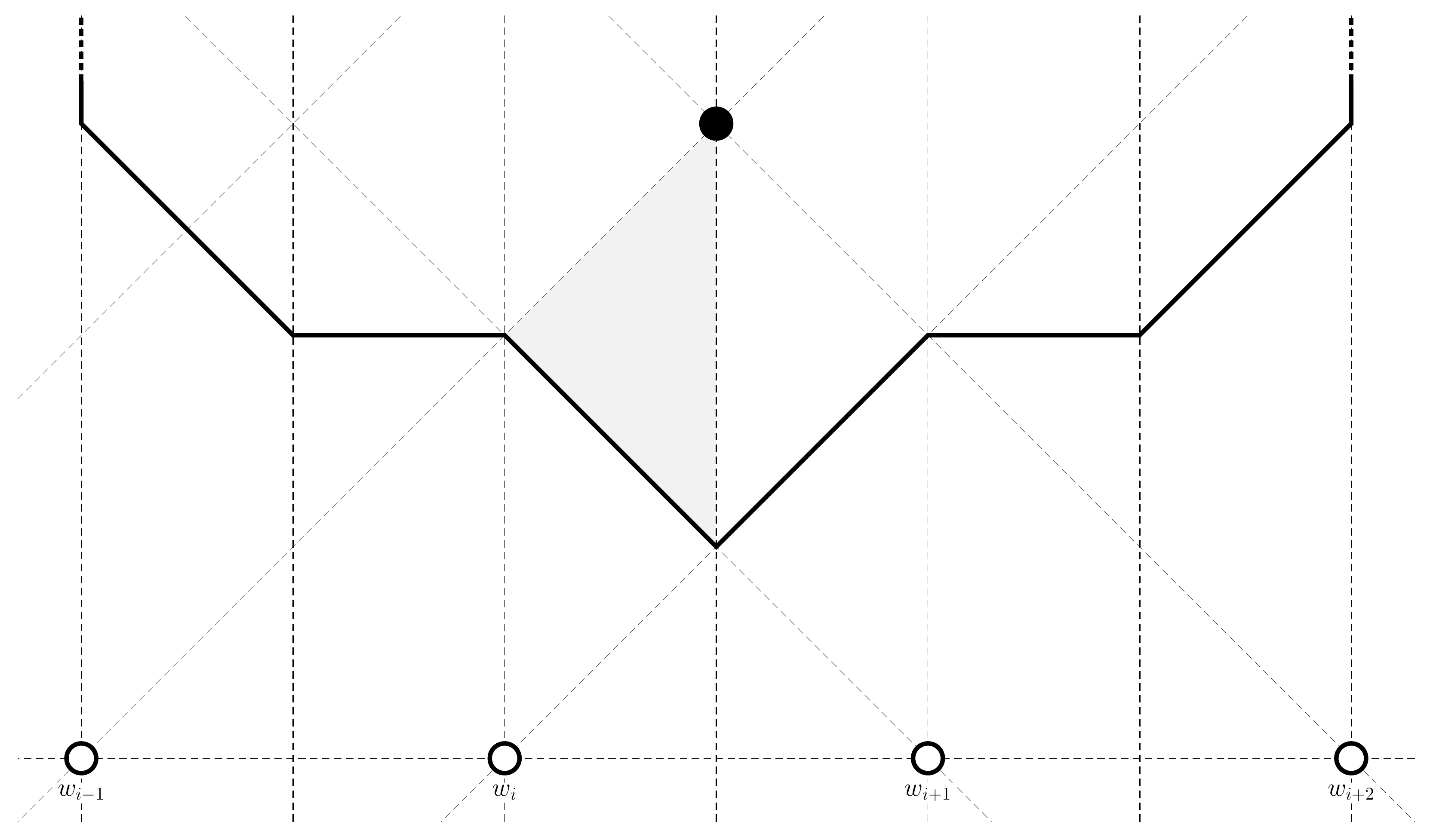}
  \caption{$b_1=(\frac{p}{2n},\frac{(2l+1)p}{2n})$ only if $\frac{(4l+2)p}{n} \leq q$. \\ \,}
%  \caption{$Area(V^+((\frac{p}{2n},\frac{(2l+1)p}{2n})))=\frac{(2l+1)pq}{2n}-\frac{(6l^2+6l+1)p^2}{4n^2}$ \\only if $\frac{(4l+2)p}{n} \leq q$.}
  \label{fig:RowOptimalNotTouchingIIIa}
\end{subfigure}%
\begin{subfigure}{.5\textwidth}
  \centering
  \includegraphics[width=0.9\textwidth]{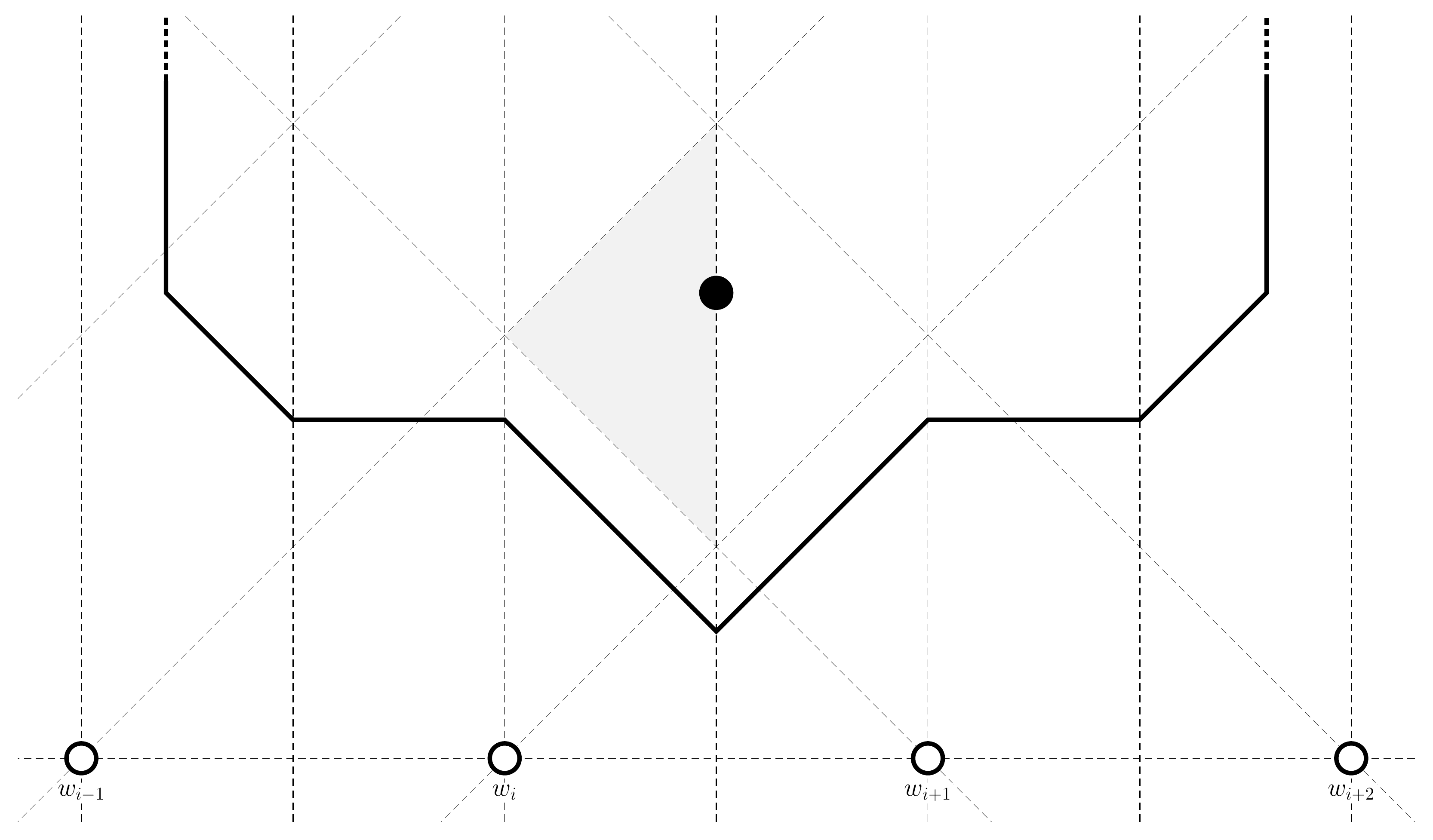}
  \caption{$b_1=(\frac{p}{2n},\frac{q}{3}-\frac{(2l+1)p}{6n})$ only if  $\frac{(4l-1)p}{n} \leq q \leq \frac{(4l+2)p}{n}$.}
%  \caption{$Area(V^+((\frac{p}{2n},\frac{q}{3}-\frac{(2l+1)p}{6n})))=\frac{(2l+1)pq}{6n} - \frac{(2l^2+2l-1)p^2}{12n^2} + \frac{q^2}{12}$ \\only if $\frac{(4l-1)p}{n} \leq q \leq \frac{(4l+2)p}{n}$.}
  \label{fig:RowOptimalNotTouchingIIIb}
\end{subfigure}

\begin{subfigure}{1.0\textwidth}
  \centering
  \includegraphics[width=0.45\textwidth]{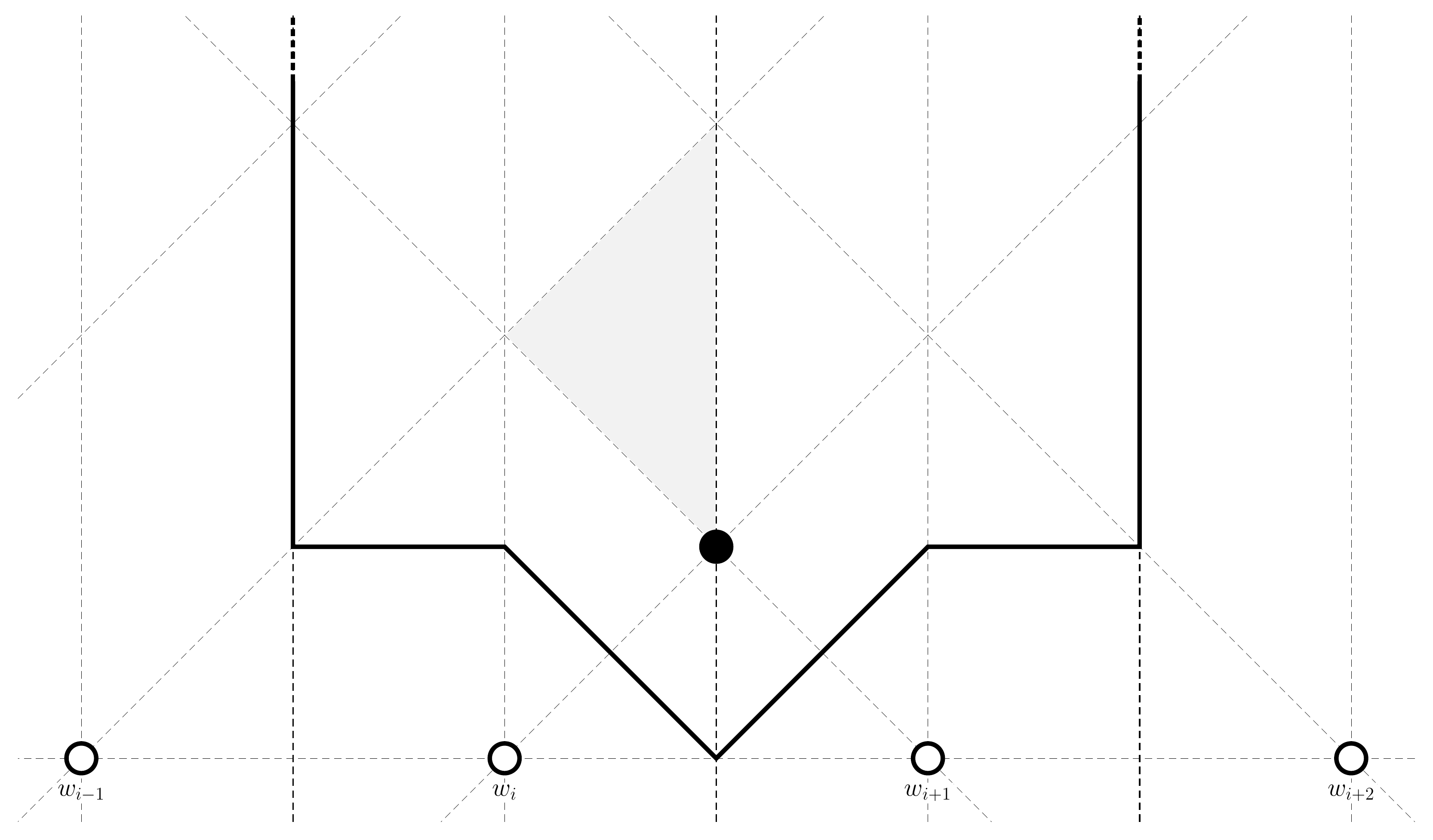}
  \caption{$b_1=(\frac{p}{2n},\frac{(2l-1)p}{2n})$ only if $\frac{(4l-1)p}{n} \geq q$.}
%  \caption{$Area(V^+((\frac{p}{2n},\frac{(2l-1)p}{2n})))=\frac{lpq}{n} - \frac{(3l-1)lp^2}{2n^2}$ \\only if $\frac{(4l-1)p}{n} \geq q$.}
  \label{fig:RowOptimalNotTouchingIIIc}
\end{subfigure}
\caption{Maximal area Voronoi cells $V^+(b_1)$ for $b_1$ within Section $2l+1$ not touching the vertical edges of $\mathcal{P}$.}
\end{figure}

\pagebreak
\subsection{\texorpdfstring{$V^+(b_1)$}{V+(b1)} touching only the leftmost vertical edge of \texorpdfstring{$\mathcal{P}$}{P}}

Now the only areas not yet calculated are those that intersect the vertical boundaries of $\mathcal{P}$. Since placing $b_1$ in Section $2l$ will cause $V^+(b_1)$ to steal from $V^\circ(w_j)$ for $j=i-l,...,i+l$ and placing $b_1$ in Section $2l+1$ will cause $V^+(b_1)$ to steal from $V^\circ(w_j)$ for $j=i-l,...,i+l+1$, $V^+(b_1)$ intersects a vertical boundary of $\mathcal{P}$ if $b_1$ is in Section $2l$ and $i-l \leq 0$ or $i+l > n$, or if $b_1$ is in Section $2l+1$ and $i-l \leq 0$ or $i+l+1 > n$. That is, $V^+(b_1)$ will intersect the leftmost boundary of $\mathcal{P}$ if $b_1$ is placed in Section $2i$ or above, and the rightmost boundary of $\mathcal{P}$ if placed in Section $2(n-i)+1$ or above.

\paragraph{Section $2l$} If $i \leq \frac{n}{2}$ then it can be the case that $V^+(b_1)$ intersects only the leftmost boundary (and not the rightmost boundary) of $\mathcal{P}$. For this, $b_1$ will have to be contained in Sections $2i$ to $2(n-i)$. In order to compute the area of $V^+(b_1)$ for $b_1$ within these sections, we can take the area calculated previously for $b_1$ in Section $2l$ where $V^+(b_1)$ does not touch either vertical edge of $\mathcal{P}$ and remove the extra areas included in the previous calculation which do not exist in the set-up studied here (i.e. the areas entering $V^\circ(w_j)$ for $j<1$); in calculations presented henceforth, whenever we use a previously formulated area expression and wish to remove an area from the original calculation, we shall display the foreign (or phantom) area $A$ being removed within quotation marks: $``A"$. Thus, if $b_1$ is in Section $2l$ for $l=i,...,n-i$ then, for $i>1$,
\begin{equation*}
    \begin{split}
        Area(V^+(b_1)) &= \sum_{j=1}^{i-1}Area(V^+(b_1) \cap V^\circ(w_j)) + Area(V^+(b_1) \cap V^\circ(w_i)) \\ &\qquad+ \sum_{j=i+1}^{i+l-1}Area(V^+(b_1) \cap V^\circ(w_j)) + Area(V^+(b_1) \cap V^\circ(w_{i+l})) \\
        &= - \frac{x^2}{4} - \frac{3y^2}{4} + (-\frac{lp}{2n} + \frac{q}{2})y + \frac{lpq}{2n} - \frac{(l-1)lp^2}{4n^2} %original calculation:: -(l^2 p^2)/(4 n^2) + (l p^2)/(4 n^2) + (l p q)/(2 n) - (l p y)/(2 n) + (q y)/2 - x^2/4 - (3 y^2)/4
        - ``Area(V^+(b_1) \cap V^\circ(w_{i-l}))" \\ &\qquad- \sum_{j=i-l+1}^{0}``Area(V^+(b_1) \cap V^\circ(w_j))" \\
        &= - \frac{x^2}{4} - \frac{3y^2}{4} + (-\frac{lp}{2n} + \frac{q}{2})y + \frac{lpq}{2n} - \frac{(l-1)lp^2}{4n^2}
        - \left(\frac{x^2}{8} - \frac{3y^2}{8} + \frac{xy}{4} \right.\\
        &\qquad+ (\frac{(i-(i-l)-1)p}{4n} - \frac{q}{4})x + (\frac{(i-(i-l)-1)p}{4n} + \frac{q}{4})y \\ 
        &\qquad \left. - \frac{(i-(i-l)-1)pq}{4n} + \frac{(i-(i-l)-1)^2p^2}{8n^2}\right) \\ %CC1(w_{(i-l)}):: ((l - 1)^2 p^2)/(8 n^2) + x (((l - 1) p)/(4 n) - q/4) + y (((l - 1) p)/(4 n) + q/4) - ((l - 1) p q)/(4 n) + x^2/8 + (x y)/4 - (3 y^2)/8
        &\qquad- \sum_{j=i-l+1}^{0}(- \frac{p}{2n}x - \frac{p}{2n}y + \frac{pq}{2n} - \frac{(4(i-j)-1)p^2}{8n^2}) \\ %CC2(w_j):: ((i - l) p ((-3 + 2 i + 2 l) p + 4 n (-q + x + y)))/(8 n^2)
        &= - \frac{3x^2}{8} - \frac{3y^2}{8} - \frac{xy}{4} - (\frac{(l-1)p}{4n} - \frac{q}{4})x + (-\frac{(3l-1)p}{4n} + \frac{q}{4})y + \frac{(3l-1)pq}{4n} \\
        &\qquad- \frac{(l-1)(3l-1)p^2}{8n^2} %SUM NON-SUM -(3 l^2 p^2)/(8 n^2) + (l p^2)/(2 n^2) + (3 l p q)/(4 n) - (3 l p y)/(4 n) - p^2/(8 n^2) - (p q)/(4 n) + (p y)/(4 n) + (q y)/4 - (3 x^2)/8 - (x y)/4 - (3 y^2)/8 - x (((l - 1) p)/(4 n) - q/4)
        - (- \frac{p}{2n}x - \frac{p}{2n}y + \frac{pq}{2n} + \frac{p^2}{8n^2})\times(0-(i-l+1-1)) \\
        &\qquad+ \frac{p^2}{2n^2}\sum_{j=i-l+1}^{0}(i-j) \\
        &= - \frac{3x^2}{8} - \frac{3y^2}{8} - \frac{xy}{4} + (\frac{(l-2i+1)p}{4n} + \frac{q}{4})x + (- \frac{(l+2i-1)p}{4n} + \frac{q}{4})y \\
        &\qquad+ \frac{(l+2i-1)pq}{4n} - \frac{(3l^2-3l-i+1)p^2}{8n^2} + \frac{p^2}{2n^2}\frac{(l-1+i)(l-1-(i-1))}{2} \\
        &= - \frac{3x^2}{8} - \frac{3y^2}{8} - \frac{xy}{4} + (\frac{(l-2i+1)p}{4n} + \frac{q}{4})x + (- \frac{(l+2i-1)p}{4n} + \frac{q}{4})y \\
        &\qquad+ \frac{(l+2i-1)pq}{4n} - \frac{(l^2 - l + 1 - 3i + 2i^2)p^2}{8n^2}
        %- (3 x^2)/8 - (x y)/4 - (3 y^2)/8 + (l p x)/(4 n) - (i p x)/(2 n) + (p x)/(4 n) + (q x)/4 - (i p y)/(2 n) - (l p y)/(4 n) + (p y)/(4 n) + (q y)/4 - (i^2 p^2)/(4 n^2) + (3 i p^2)/(8 n^2) + (i p q)/(2 n) - (l^2 p^2)/(8 n^2) + (l p^2)/(8 n^2) + (l p q)/(4 n) - p^2/(8 n^2) - (p q)/(4 n)
    \end{split}
\end{equation*}
or, for $i=1$,
\begin{equation*}
    \begin{split}
        Area(V^+(b_1)) &= Area(V^+(b_1) \cap V^\circ(w_1)) + \sum_{j=2}^{l}Area(V^+(b_1) \cap V^\circ(w_j)) + Area(V^+(b_1) \cap V^\circ(w_{l+1})) \\
        &= - \frac{x^2}{2} - \frac{p}{2n}y + \frac{pq}{2n} + \sum_{j=2}^{l}(\frac{p}{2n}x - \frac{p}{2n}y + \frac{pq}{2n} - \frac{(4(j-1)-1)p^2}{8n^2}) \\
        &\qquad+ \frac{x^2}{8} -\frac{3y^2}{8} -\frac{xy}{4} + (\frac{q}{4} - \frac{(l-1)p}{4n})x + (\frac{(l-1)p}{4n} + \frac{q}{4})y \\
        &\qquad- \frac{(l-1)pq}{4n} + \frac{(l-1)^2p^2}{8n^2} \\
        &= - \frac{3x^2}{8} -\frac{3y^2}{8} - \frac{xy}{4} + (\frac{q}{4} - \frac{(l-1)p}{4n})x + (\frac{(l-3)p}{4n} + \frac{q}{4})y - \frac{(l-3)pq}{4n} \\
        &\qquad+ \frac{(l-1)^2p^2}{8n^2} + (l-1)(\frac{p}{2n}x - \frac{p}{2n}y + \frac{pq}{2n} + \frac{5p^2}{8n^2}) - \frac{p^2}{2n^2}\sum_{j=2}^{l}(j) \\
        &= - \frac{3x^2}{8} -\frac{3y^2}{8} - \frac{xy}{4} + (\frac{q}{4} + \frac{(l-1)p}{4n})x + (-\frac{(l+1)p}{4n} + \frac{q}{4})y + \frac{(l+1)pq}{4n} \\
        &\qquad- \frac{l(l-1)p^2}{8n^2}
    \end{split}
\end{equation*}
(identical to the previous area upon a substitution of $i=1$, so we need only use this former representation).

This area has partial derivatives
\begin{equation*}
\begin{split}
\frac{\delta A}{\delta x}&= - \frac{3x}{4} - \frac{y}{4} + \frac{(l-2i+1))p}{4n} + \frac{q}{4} \\
\frac{\delta A}{\delta y}&= - \frac{3y}{4} - \frac{x}{4} - \frac{(l+2i-1))p}{4n} + \frac{q}{4} \\
&\Rightarrow  - 2x^* + \frac{(2l-2i+1)p}{2n} + \frac{q}{2} = 0 \Rightarrow x^* = \frac{(2(l-i)+1)p}{4n} + \frac{q}{4} \\
\text{and} &\Rightarrow 2y^* + \frac{(2l+2i-1)p}{2n} - \frac{q}{2} = 0 \Rightarrow y^* = -\frac{(2(l+i)-1)p}{4n} + \frac{q}{4} \\
\end{split}
\end{equation*}
but $x^* > \frac{p}{2n}$ (since $\frac{p}{4n}<\frac{q}{4}$) so we are required to investigate when $b_1$ is placed on the boundary of Section $2l$. Note that since the global optimum lies to the right of Section $2l$ we will not find the optimum on the boundary $x=0$ outside its endpoints.
\begin{itemize}
    \item Upon boundary $y=x+(l-1)\frac{p}{n}$ we have $Area(V^+((x,x+\frac{(l-1)p}{n}))) = - x^2 + (-\frac{(2(l+i)-3)p}{2n} + \frac{q}{2})x + \frac{(l+i-1)pq}{2n} - \frac{(6l^2 + 4il + 2i^2 - 11l - 7i + 6)p^2}{8n^2}$, maximised by $x^*= -\frac{(2(l+i)-3)p}{4n} + \frac{q}{4}$ giving $Area(V^+((-\frac{(2(l+i)-3)p}{4n} + \frac{q}{4}, \frac{(2(l-i)-1)p}{4n} + \frac{q}{4})))=\frac{(2(l+i)-1)pq}{8n} - \frac{(8l^2-10l-2i+3)p^2}{16n^2} + \frac{q^2}{16}$. However, for $0 \leq -\frac{(2(l+i)-3)p}{4n} + \frac{q}{4} \leq \frac{p}{2n}$ to be true we require $\frac{(2(l+i)-3)p}{n} \leq q \leq \frac{(2(l+i)-1)p}{n}$. If $\frac{(2(l+i)-3)p}{n} \geq q$ then the optimum lies on the endpoint $(0,\frac{(l-1)p}{n})$ giving $Area(V^+((0,\frac{(l-1)p}{n})))\\=\frac{(l+i-1)pq}{2n} - \frac{(6l^2 + 4il + 2i^2 - 11l - 7i + 6)p^2}{8n^2}$, and if $\frac{(2(l+i)-1)p}{n} \leq q$ then the optimum lies on the endpoint $(\frac{p}{2n},\frac{(2l-1)p}{2n})$ giving $Area(V^+((\frac{p}{2n},\frac{(2l-1)p}{2n})))= \frac{(2(l+i)-1)pq}{4n} - \frac{(6l^2+2i^2+4il-7l-3i+2)p^2}{8n^2}$.
    
    \item Upon boundary $y=l\frac{p}{n}-x$ we have $Area(V^+((x,\frac{lp}{n}-x))) = - \frac{x^2}{2} + \frac{lp}{n}x + \frac{(2(l+i)-1)pq}{4n} - \frac{(6l^2+2i^2+4il-3l-3i+1)p^2}{8n^2}$, maximised by $x^*=\frac{lp}{2n}$ which is only in Section $2l$ for $l=1$. So if $l>1$ the optimum on this boundary will lie on the endpoint $(\frac{p}{2n},\frac{(2l-1)p}{2n})$ giving $Area(V^+((\frac{p}{2n},\frac{(2l-1)p}{2n})))=\frac{(2(l+i)-1)pq}{4n} - \frac{(6l^2+2i^2+4il-7l-3i+2)p^2}{8n^2}$.
\end{itemize}

Since the optimum over boundary $y=l\frac{p}{n}-x$ is found at the endpoint of the boundary $y=x+(l-1)\frac{p}{n}$, we need only take the results from the latter for the optimal placement over all of Section $2l$. This means that our optimal areas are: if $\frac{(2(l+i)-1)p}{n} \leq q$ then $Area(V^+((\frac{p}{2n},\frac{(2l-1)p}{2n})))= \frac{(2(l+i)-1)pq}{4n} - \frac{(6l^2+2i^2+4il-7l-3i+2)p^2}{8n^2}$; if $\frac{(2(l+i)-3)p}{n} \leq q \leq \frac{(2(l+i)-1)p}{n}$ then $Area(V^+((-\frac{(2(l+i)-3)p}{4n} + \frac{q}{4}, \frac{(2(l-i)-1)p}{4n} + \frac{q}{4})))=\frac{(2(l+i)-1)pq}{8n} - \frac{(8l^2-10l-2i+3)p^2}{16n^2} + \frac{q^2}{16}$; and if $\frac{(2(l+i)-3)p}{n} \geq q$ then $Area(V^+((0,\frac{(l-1)p}{n})))=\frac{(l+i-1)pq}{2n} - \frac{(6l^2 + 4li + 2i^2 - 11l - 7i + 6)p^2}{8n^2}$.

\paragraph{Section $2l+1$} Alternatively, if $V^+(b_1)$ hits the leftmost, and not the rightmost, boundary (so $i< \frac{n}{2}$) and $b_1$ is in Section $2l+1$ for $l=i,...,n-i-1$ then, for $i>1$,
\begin{equation*}
    \begin{split}
        Area(V^+(b_1)) &= \sum_{j=1}^{i-1}Area(V^+(b_1) \cap V^\circ(w_j)) + Area(V^+(b_1) \cap V^\circ(w_i)) \\ &\qquad+ \sum_{j=i+1}^{i+l}Area(V^+(b_1) \cap V^\circ(w_j)) + Area(V^+(b_1) \cap V^\circ(w_{i+l+1})) \\
        &= - \frac{3x^2}{8} - \frac{3y^2}{8} - \frac{xy}{4} + (\frac{(l-2i+1))p}{4n} + \frac{q}{4})x + (- \frac{(l+2i-1))p}{4n} + \frac{q}{4})y \\
        &\qquad+ \frac{(l+2i-1)pq}{4n} - \frac{(l^2 - l + 1 - 3i + 2i^2)p^2}{8n^2} %previous calculation
        - \left(\frac{x^2}{8} -\frac{3y^2}{8} -\frac{xy}{4} \right. \\
        &\qquad+ \left. (\frac{q}{4} - \frac{((i+l)-i-1)p}{4n})x + (\frac{((i+l)-i-1)p}{4n} + \frac{q}{4})y - \frac{((i+l)-i-1)pq}{4n} \right. \\
        &\qquad+ \left. \frac{((i+l)-i-1)^2p^2}{8n^2}\right) %CC4(w_{i+l})
        + \frac{p}{2n}x - \frac{p}{2n}y + \frac{pq}{2n} - \frac{(4((i+l)-i)-1)p^2}{8n^2} \\ %CC3(w_{i+l})
        &\qquad+ \frac{x^2}{8} -\frac{3y^2}{8} -\frac{xy}{4} + (\frac{q}{4} - \frac{((i+l+1)-i-1)p}{4n})x + (\frac{((i+l+1)-i-1)p}{4n} + \frac{q}{4})y \\
        &\qquad- \frac{((i+l+1)-i-1)pq}{4n} + \frac{((i+l+1)-i-1)^2p^2}{8n^2} \\ %CC4(w_{i+l+1})
        &= - \frac{3x^2}{8} - \frac{3y^2}{8} - \frac{xy}{4} + (\frac{(l-2i+1))p}{4n} + \frac{q}{4})x + (- \frac{(l+2i-1))p}{4n} + \frac{q}{4})y \\
        &\qquad+ \frac{(l+2i-1)pq}{4n} - \frac{(l^2 - l + 1 - 3i + 2i^2)p^2}{8n^2}
        - \frac{(l-1)^2p^2}{8n^2} + \frac{l^2p^2}{8n^2} \\
        &\qquad+ \frac{p}{4n}x - \frac{p}{4n}y + \frac{pq}{4n} - \frac{(4l-1)p^2}{8n^2} \\
        &= - \frac{3x^2}{8} - \frac{3y^2}{8} - \frac{xy}{4} + (\frac{(l-2i+2)p}{4n} + \frac{q}{4})x + (- \frac{(l+2i)p}{4n} + \frac{q}{4})y \\
        &\qquad+ \frac{(l+2i)pq}{4n} - \frac{(l^2 + l + 2i^2 - 3i + 1)p^2}{8n^2} %-(p^2 (2 i^2 - 3 i + l^2 + l + 1))/(8 n^2) + x ((p (-2 i + l + 2))/(4 n) + q/4) + y (q/4 - (p (2 i + l))/(4 n)) + (p q (2 i + l))/(4 n) - (3 x^2)/8 - (x y)/4 - (3 y^2)/8
    \end{split}
\end{equation*}
or, for $i=1$,
\begin{equation*}
    \begin{split}
        Area(V^+(b_1)) &= Area(V^+(b_1) \cap V^\circ(w_i)) + \sum_{j=2}^{l+1}Area(V^+(b_1) \cap V^\circ(w_j)) + Area(V^+(b_1) \cap V^\circ(w_{l+2})) \\
        &= - \frac{3x^2}{8} -\frac{3y^2}{8} - \frac{xy}{4} + (\frac{q}{4} + \frac{(l-1)p}{4n})x + (-\frac{(l+1)p}{4n} + \frac{q}{4})y + \frac{(l+1)pq}{4n} \\
        &\qquad- \frac{l(l-1)p^2}{8n^2} + (\frac{p}{2n}x - \frac{p}{2n}y + \frac{pq}{2n} - \frac{(4l-1)p^2}{8n^2}) \\
        &\qquad+ \frac{x^2}{8} -\frac{3y^2}{8} -\frac{xy}{4} + (\frac{q}{4} - \frac{lp}{4n})x + (\frac{lp}{4n} + \frac{q}{4})y - \frac{lpq}{4n} + \frac{l^2p^2}{8n^2} \\
        &\qquad- (\frac{x^2}{8} -\frac{3y^2}{8} -\frac{xy}{4} + (\frac{q}{4} - \frac{(l-1)p}{4n})x + (\frac{(l-1)p}{4n} + \frac{q}{4})y - \frac{(l-1)pq}{4n} + \frac{(l-1)^2p^2}{8n^2}) \\
        &= - \frac{3x^2}{8} -\frac{3y^2}{8} - \frac{xy}{4} + (\frac{lp}{4n} + \frac{q}{4})x + (-\frac{(l+2)p}{4n} + \frac{q}{4})y + \frac{(l+2)pq}{4n} \\
        &\qquad- \frac{l(l+1)p^2}{8n^2} \\
    \end{split}
\end{equation*}
(which, again, we check is identical to the representation found for $i>1$ so we shall proceed to use the former formulation).

This area has partial derivatives
\begin{equation*}
\begin{split}
\frac{\delta A}{\delta x}&= - \frac{3x}{4} - \frac{y}{4} + \frac{(l-2i+2)p}{4n} + \frac{q}{4} \\
\frac{\delta A}{\delta y}&= - \frac{3y}{4} - \frac{x}{4} - \frac{(l+2i)p}{4n} + \frac{q}{4} \\
&\Rightarrow  - 2x^* + \frac{(2(l-i)+3)p}{2n} + \frac{q}{2} = 0 \Rightarrow x^* = \frac{(2(l-i)+3)p}{4n} + \frac{q}{4} \\
and &\Rightarrow  - 2y^* - \frac{(2l+2i+1)p}{2n} + \frac{q}{2} = 0 \Rightarrow y^* = - \frac{(2(l+i)+1)p}{4n} + \frac{q}{4} \, .\\
\end{split}
\end{equation*}

Using identical logic to that in Section $2l$, $x^*>\frac{p}{2n}$ so we explore the boundaries (all boundaries this time).
\begin{itemize}
    \item Upon boundary $x=\frac{p}{2n}$ we have $Area(V^+((\frac{p}{2n},y))) = - \frac{3y^2}{8}  + (- \frac{(2l+4i+1)p}{8n} + \frac{q}{4})y + \frac{(2l+4i+1)pq}{8n} - \frac{(4l^2+8i^2-4i-1)p^2}{32n^2}$, maximised by $y^*=- \frac{(2l+4i+1)p}{6n} + \frac{q}{3}$ giving $Area(V^+((\frac{p}{2n},\\ - \frac{(2l+4i+1)p}{6n} + \frac{q}{3})))= \frac{(2l+4i+1)pq}{12n} + \frac{(-2l^2-2i^2+4il+l+5i+1)p^2}{24n^2} + \frac{q^2}{24}$. However, for $\frac{(2l-1)p}{2n} \leq y^* \leq \frac{(2l+1)p}{2n}$ we require $\frac{(4(l+i)-5)p}{2n} \leq q \leq \frac{(8l+4i+7)p}{2n}$. If $\frac{(4(l+i)-5)p}{2n} \geq q$ then the optimum lies on the endpoint $(\frac{p}{2n},\frac{(2l-1)p}{2n})$ giving $Area(V^+((\frac{p}{2n},\frac{(2l-1)p}{2n}))) =\frac{(l+i)pq}{2n} - \frac{(6l^2 + 2i^2 + 4il - 3l - 3i)p^2}{8n^2}$, and if $\frac{(8l+4i+7)p}{2n} \leq q$ then the optimum lies on the endpoint $(\frac{p}{2n},\frac{(2l+1)p}{2n})$ giving $Area(V^+(\\(\frac{p}{2n},\frac{(2l+1)p}{2n}))) =\frac{(2(l+i)+1)pq}{4n} - \frac{(6l^2 + 2i^2 + 4il + 5l + i + 1)p^2}{8n^2}$.
    \item Upon boundary $y=\frac{lp}{n}-x$ we have $Area(V^+((x,\frac{lp}{n}-x))) = - \frac{x^2}{2} + \frac{(2l+1)p}{2n}x + \frac{(l+i)pq}{2n} - \frac{(6l^2 + 2i^2 + 4il - 3i + l + 1)p^2}{8n^2}$, maximised by $x^*=\frac{(2l+1)p}{2n} > \frac{p}{2n}$, so the optimum is achieved at $x=\frac{p}{2n}$, the value of which has been found above.
    \item Upon boundary $y=x+\frac{lp}{n}$ we have $Area(V^+((x,x+\frac{lp}{n})))= - x^2 + (-\frac{(2(l+i)-1)p}{2n} + \frac{q}{2})x + \frac{(l+i)pq}{2n} \, - \, \frac{(6l^2 + 2i^2 + 4il + l - 3i + 1)p^2}{8n^2}$,$\, \,$ maximised by $\, \, \, x^* \, = \, -\frac{(2(l+i)-1)p}{4n} \, + \, \frac{q}{4} \, \, \,$ giving $\, \, \, Area(V^+\\((-\frac{(2(l+i)-1)p}{4n} + \frac{q}{4},\frac{(2(l-i)+1)p}{4n}+\frac{q}{4})))= \frac{(2l + 2i + 1)pq}{8n} - \frac{(8l^2 + 6l - 2i + 1))p^2}{16n^2} + \frac{q^2}{16}$. However, for $0 \leq x^* \leq \frac{p}{2n}$ we require $\frac{(2(l+i)-1)p}{n} \leq q \leq \frac{(2(l+i)+1)p}{n}$. If $\frac{(2(l+i)-1)p}{n} \geq q$ then the optimum lies on the endpoint $(0,\frac{lp}{n})$: this lies on the boundary $y=\frac{lp}{n}-x$ upon which it was found never to be optimal. Alternatively, if $\frac{(2(l+i)+1)p}{n} \leq q$ then the optimum lies on the endpoint $(\frac{p}{2n},\frac{(2l+1)p}{2n})$ upon the boundary $x=\frac{p}{2n}$ for which all optimal values have been found.
\end{itemize}

Following this it is clear that, from the exploration of the boundaries $y=x+\frac{lp}{n}$ and $y=\frac{lp}{n}-x$, if $\frac{(2(l+i)+1)p}{n} \leq q$ or $\frac{(2(l+i)-1)p}{n} \geq q$ then the optimum lies on the boundary $x=\frac{p}{2n}$. What remains to be seen is whether it is optimal to place on the boundary $y=x+\frac{lp}{n}$ or $x=\frac{p}{2n}$ when $\frac{(2(l+i)-1)p}{n} \leq q \leq \frac{(2(l+i)+1)p}{n}$, so we are required to compare the optimal values found within boundaries $x=\frac{p}{2n}$ and $y=x+\frac{lp}{n}$ (so not the endpoints). Firstly, if $\frac{(2(l+i)-1)p}{n} \leq q \leq \frac{(2(l+i)+1)p}{n}$ then
\begin{alignat*}{2}
    \frac{(4(l+i)-5)p}{2n} &= \frac{(2(l+i)-2.5)p}{n} &&\, \\
    &< \frac{(2(l+i)-1)p}{n} \leq q &&\leq \frac{(2(l+i)+1)p}{n} \\
    & \, &&< \frac{(2(l+i) + 2l +3.5)p}{n} = \frac{(8l+4i+7)p}{2n}
\end{alignat*}
so the optimum upon $x=\frac{p}{2n}$ for these values of $p$ and $q$ is within the boundary (not an endpoint). Therefore we must check
\begin{equation*}
\begin{split}
    \frac{(2l+4i+1)pq}{12n} &+ \frac{(-2l^2-2i^2+4il+l+5i+1)p^2}{24n^2} + \frac{q^2}{24} \\
    &\qquad- \left(\frac{(2l + 2i + 1)pq}{8n} - \frac{(8l^2 + 6l - 2i + 1))p^2}{16n^2} + \frac{q^2}{16}\right) \\
    &= \frac{(-2l+2i-1)pq}{24n} + \frac{(20l^2-4i^2+8il+20l+4i+5)p^2}{48n^2} - \frac{q^2}{48} \\
    %&\geq 0 \Leftrightarrow \frac{(-4l+4i-2)pq}{n} + \frac{(20l^2-4i^2+8il+20l+4i+5)p^2}{n^2} - q^2 \geq 0 \\
    &=  \frac{1}{48}\left(\frac{(20l^2-4i^2+8il+20l+4i+5)p^2}{n^2} - \frac{(4(l-i)+2)pq}{n} - q^2\right) \\
    &\geq \frac{p^2}{48n^2}((20l^2-4i^2+8il+20l+4i+5) - (4(l-i)+2)(2(l+i)+1) - (2(l+i)+1)^2) \\
    %&= \frac{p^2}{48n^2}(20l^2-4i^2+8il+20l+4i+5 - 8l^2 + 8i^2 - 4l - 4i - 4l + 4i - 2 - 4l^2 - 8il - 4i^2 - 4l - 4i - 1)
    &= \frac{p^2}{48n^2}(8l^2+8l+2) > 0 \, .
\end{split}
\end{equation*}
This settles all concerns and proves that the optimum is always located on the boundary $x=\frac{p}{2n}$. This means that our optimal areas are: if $\frac{(8l+4i+7)p}{2n} \leq q$ then $Area(V^+((\frac{p}{2n},\frac{(2l+1)p}{2n}))) =\frac{(2(l+i)+1)pq}{4n} - \frac{(6l^2 + 2i^2 + 4il + 5l + i + 1)p^2}{8n^2}$ as depicted in Figure \ref{fig:RowOptimalLeftTouchingIIIc}; if $\frac{(4(l+i)-5)p}{2n} \leq q \leq \frac{(8l+4i+7)p}{2n}$ then $Area(V^+((\frac{p}{2n},- \frac{(2l+4i+1)p}{6n} + \frac{q}{3})))= \frac{(2l+4i+1)pq}{12n} + \frac{(-2l^2-2i^2+4il+l+5i+1)p^2}{24n^2} + \frac{q^2}{24}$ as depicted in Figure \ref{fig:RowOptimalLeftTouchingIIIb}; and if $\frac{(4(l+i)-5)p}{2n} \geq q$ then $Area(V^+((\frac{p}{2n},\frac{(2l-1)p}{2n}))) =\frac{(l+i)pq}{2n} - \frac{(6l^2 + 2i^2 + 4il - 3l - 3i)p^2}{8n^2}$ as depicted in Figure \ref{fig:RowOptimalLeftTouchingIIIa}.

\begin{figure}[!ht]\ContinuedFloat
\begin{subfigure}{.5\textwidth}
  \centering
  \includegraphics[width=0.9\textwidth]{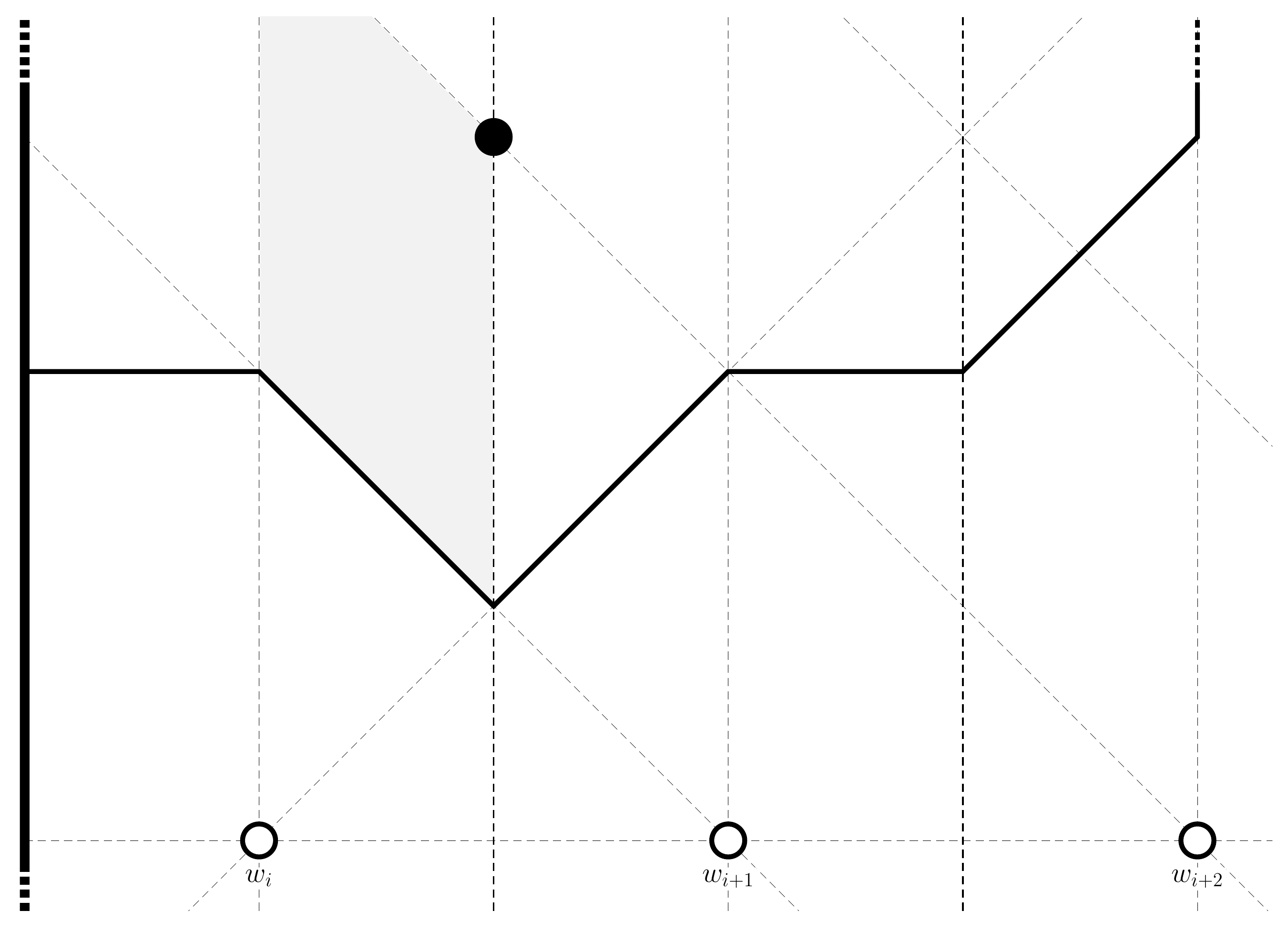}
  \caption{$b_1=(\frac{p}{2n},\frac{(2l+1)p}{2n})$ only if $\frac{(8l+4i+7)p}{2n} \leq q$. \\ \,}
%  \caption{$Area(V^+((\frac{p}{2n},\frac{(2l+1)p}{2n}))) =\frac{(2(l+i)+1)pq}{4n} - \frac{(6l^2 + 2i^2 + 4il + 5l + i + 1)p^2}{8n^2}$ only if $\frac{(8l+4i+7)p}{2n} \leq q$.}
  \label{fig:RowOptimalLeftTouchingIIIc}
\end{subfigure}%
\begin{subfigure}{.5\textwidth}
  \centering
  \includegraphics[width=0.9\textwidth]{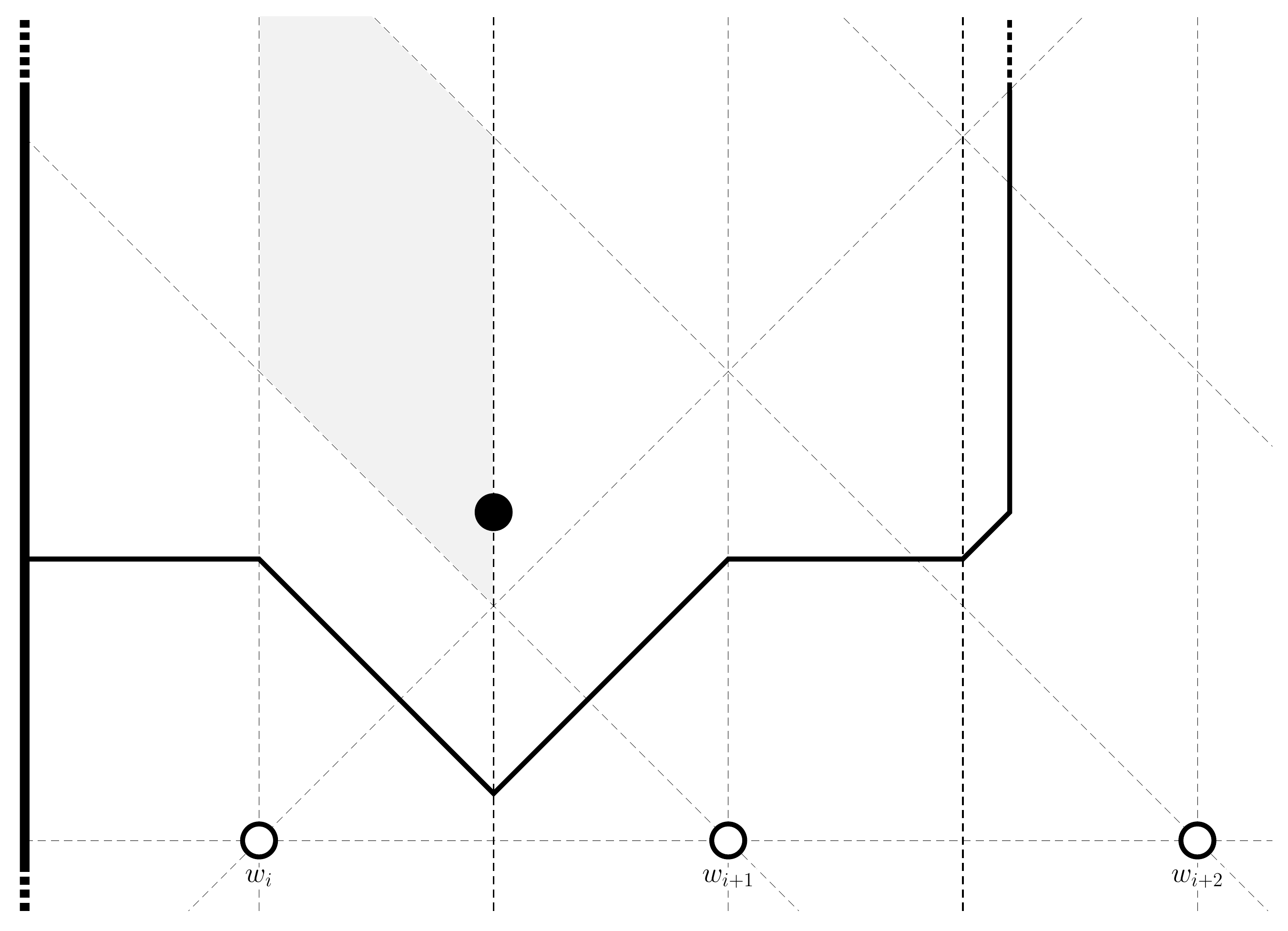}
  \caption{$b_1=(\frac{p}{2n},- \frac{(2l+4i+1)p}{6n} + \frac{q}{3})$ only if $\frac{(4(l+i)-5)p}{2n} \leq q \leq \frac{(8l+4i+7)p}{2n}$.}
%  \caption{$Area(V^+((\frac{p}{2n},- \frac{(2l+4i+1)p}{6n} + \frac{q}{3})))= \frac{(2l+4i+1)pq}{12n} + \frac{(-2l^2-2i^2+4il+l+5i+1)p^2}{24n^2} + \frac{q^2}{24}$ only if $\frac{(4(l+i)-5)p}{2n} \leq q \leq \frac{(8l+4i+7)p}{2n}$.}
  \label{fig:RowOptimalLeftTouchingIIIb}
\end{subfigure}

\begin{subfigure}{1.0\textwidth}
  \centering
  \includegraphics[width=0.45\textwidth]{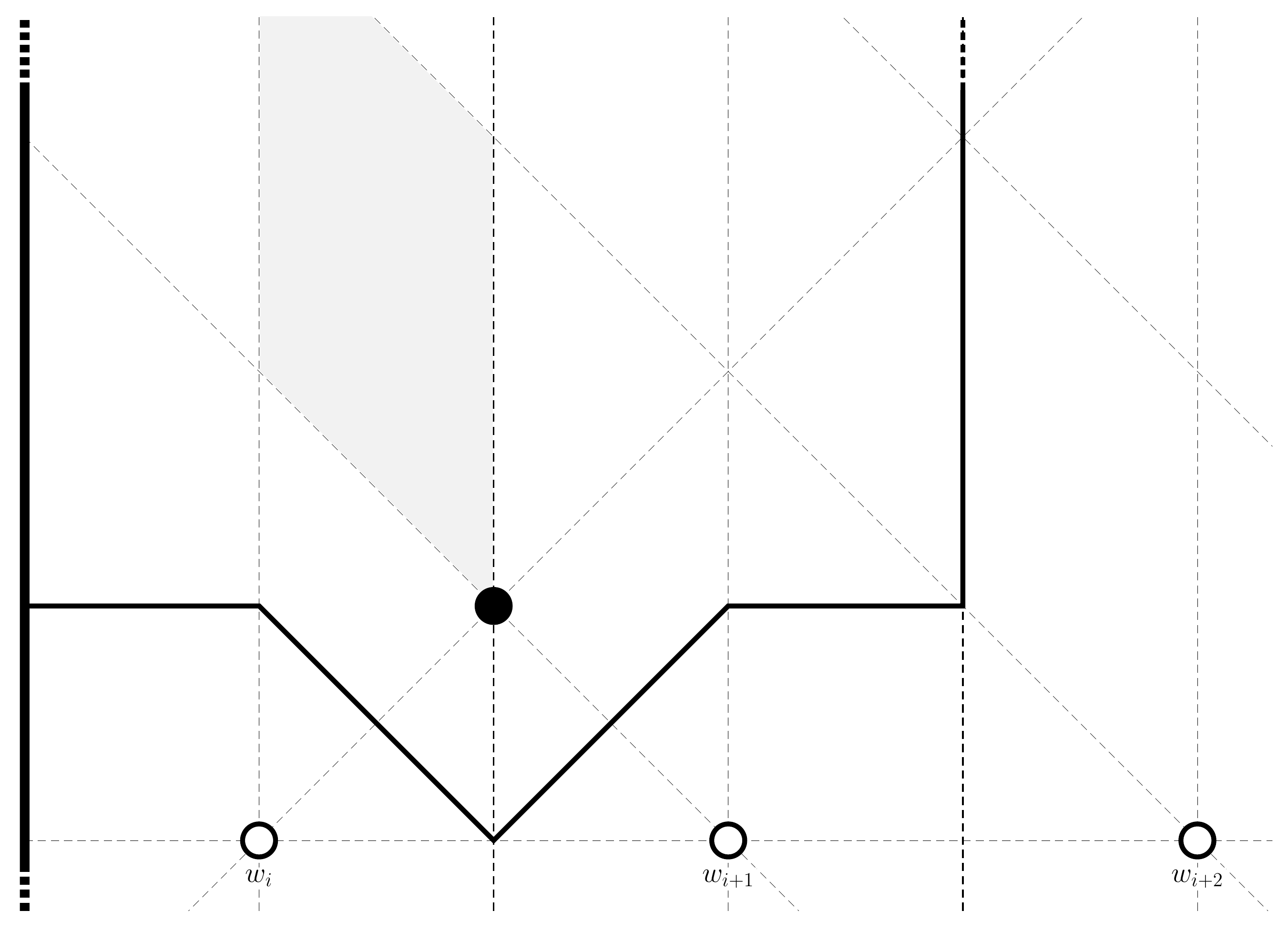}
  \caption{$b_1=(\frac{p}{2n},\frac{(2l-1)p}{2n})$ only if $\frac{(4(l+i)-5)p}{2n} \geq q$.}
%  \caption{$Area(V^+((\frac{p}{2n},\frac{(2l-1)p}{2n}))) =\frac{(l+i)pq}{2n} - \frac{(6l^2 + 2i^2 + 4il - 3l - 3i)p^2}{8n^2}$ only if $\frac{(4(l+i)-5)p}{2n} \geq q$.}
  \label{fig:RowOptimalLeftTouchingIIIa}
\end{subfigure}
\caption{Maximal area Voronoi cells $V^+(b_1)$ for $b_1$ within Section $2l+1$ touching the leftmost vertical edge of $\mathcal{P}$.}
\end{figure}

It is interesting to note that the structures of $V^+(b_1)$ for $b_1$ in Section $2l+1$ and $2(l+1)$ are identical, owing to the fact that the partitioning line $\mathcal{CC}^1(w_{i-l})$ which would normally divide the two does not exist, simply because $w_{i-l}$ does not exist for $l \geq i$ (which our values of $l$ satisfy). We can verify that the areas already found are in fact identical for these two sections. We will use this idea to greatly simplify our work in the following subsection.% \todo{Check? -- I hope it's true!}

However, before we do, let us compare the optimal locations of $b_1$ found in Section $2l+1$ and Section $2(l+1)$. In both of our calculations, the optimum was never found to be within the sections themselves so the boundary cases had to be explored. The optimum over Section $2(l+1)$ was found to be on the boundary $y=x + \frac{lp}{n}$, which is shared with Section $2l+1$, whilst the optimum over Section $2l+1$ was found to be on the boundary $x=\frac{p}{2n}$. Therefore the optimum over Section $2l+1$ and $2(l+1)$ is found on the $x=\frac{p}{2n}$ boundary, as described in the Section $2l+1$ workings.

It is important to note that this comparison is between Sections $2l+1$ and $2(l+1)$ for $l=i,...,n-i-1$, so for Section $2l$ where $l=i$ (the lowest possible value of $l$) there does not exist a Section $2l-1$ within which the Voronoi cell $V^+(b_1)$ touches the leftmost boundary of $\mathcal{P}$, so we must remember to use the Section $2l$ results for Section $2i$, as depicted in Figures~\ref{fig:RowOptimalLeftTouchingIVc}, \ref{fig:RowOptimalLeftTouchingIVb}, and \ref{fig:RowOptimalLeftTouchingIVa}.

\begin{figure}[!ht]\ContinuedFloat
\begin{subfigure}{.5\textwidth}
  \centering
  \includegraphics[width=0.9\textwidth]{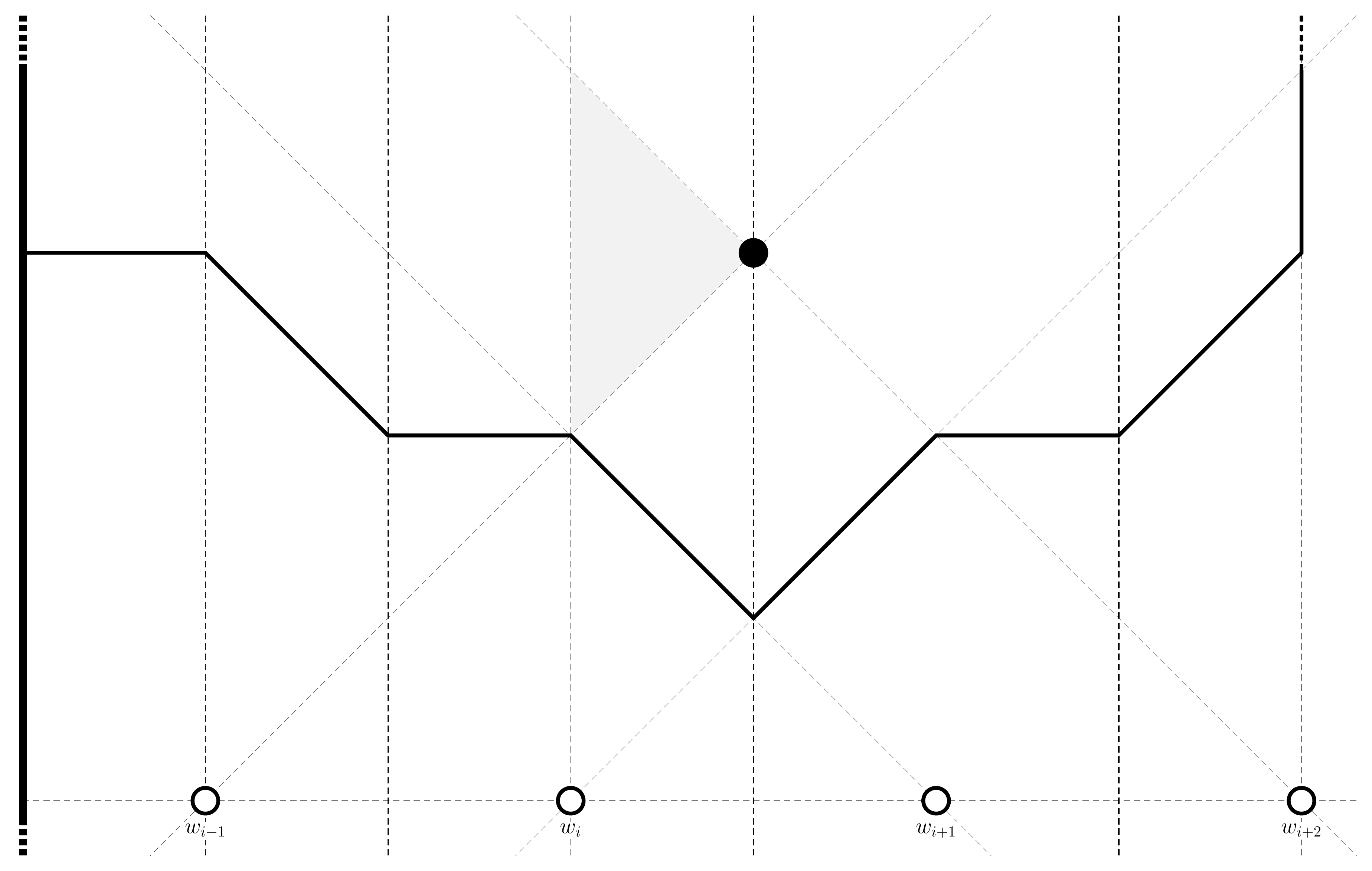}
  \caption{$b_1=(\frac{p}{2n},\frac{(2i-1)p}{2n})$ only if $\frac{(2(i+i)-1)p}{n} \leq q$.\\ \,}
%  \caption{$Area(V^+((\frac{p}{2n},\frac{(2i-1)p}{2n})))= \frac{(4i-1)pq}{4n} - \frac{(6i^2-5i+1)p^2}{4n^2}$ only if $\frac{(2(i+i)-1)p}{n} \leq q$.}
  \label{fig:RowOptimalLeftTouchingIVc}
\end{subfigure}%
\begin{subfigure}{.5\textwidth}
  \centering
  \includegraphics[width=0.9\textwidth]{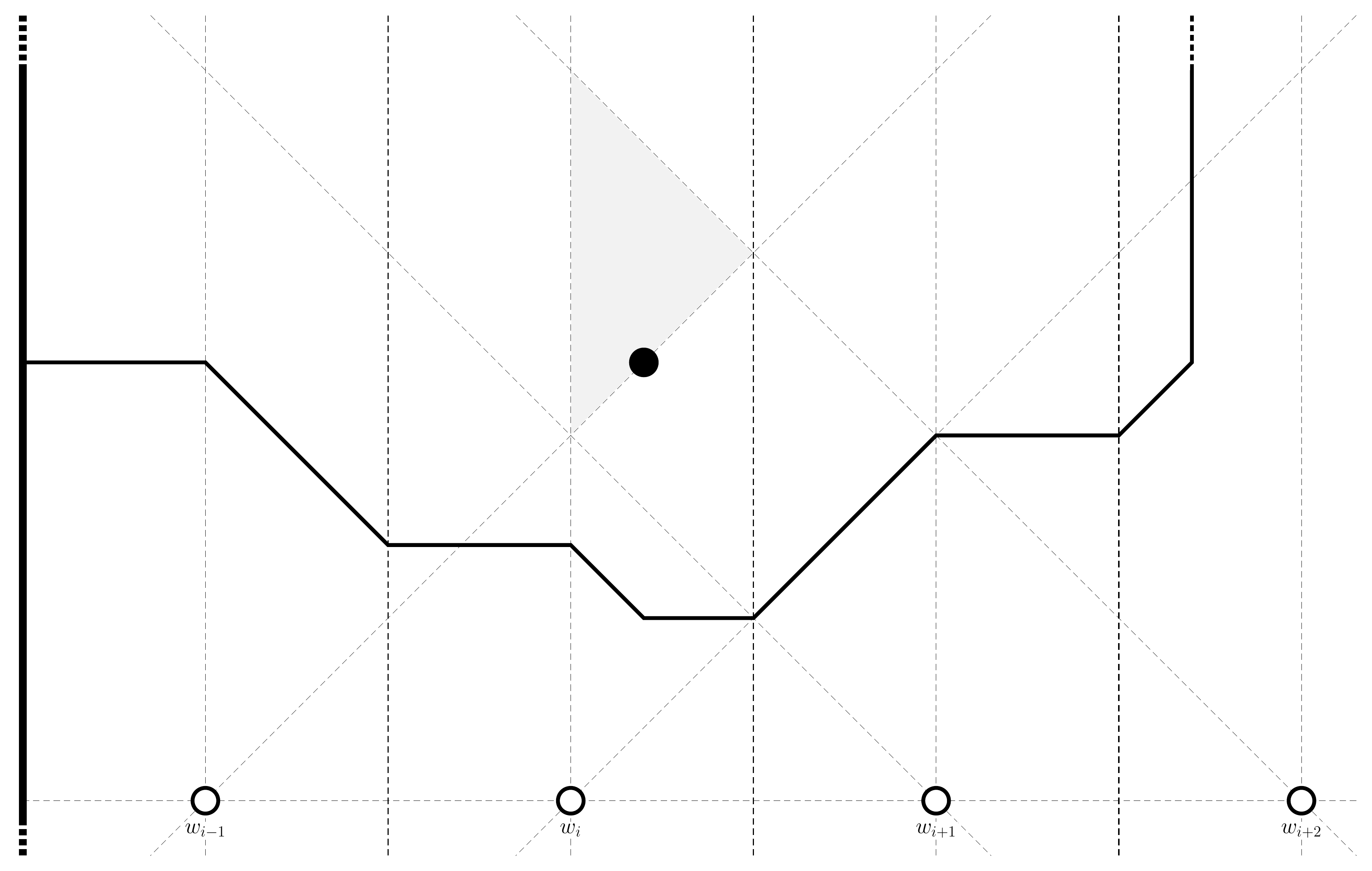}
  \caption{$b_1=(-\frac{(4i-3)p}{4n} + \frac{q}{4}, -\frac{p}{4n} + \frac{q}{4})$ only if $\frac{(4i-3)p}{n} \leq q \leq \frac{(4i-1)p}{n}$.}
%  \caption{$Area(V^+((-\frac{(4i-3)p}{4n} + \frac{q}{4}, -\frac{p}{4n} + \frac{q}{4})))=\frac{(4i-1)pq}{8n} - \frac{(8i^2-12i+3)p^2}{16n^2} + \frac{q^2}{16}$ only if $\frac{(4i-3)p}{n} \leq q \leq \frac{(4i-1)p}{n}$.}
  \label{fig:RowOptimalLeftTouchingIVb}
\end{subfigure}

\begin{subfigure}{1.0\textwidth}
  \centering
  \includegraphics[width=0.45\textwidth]{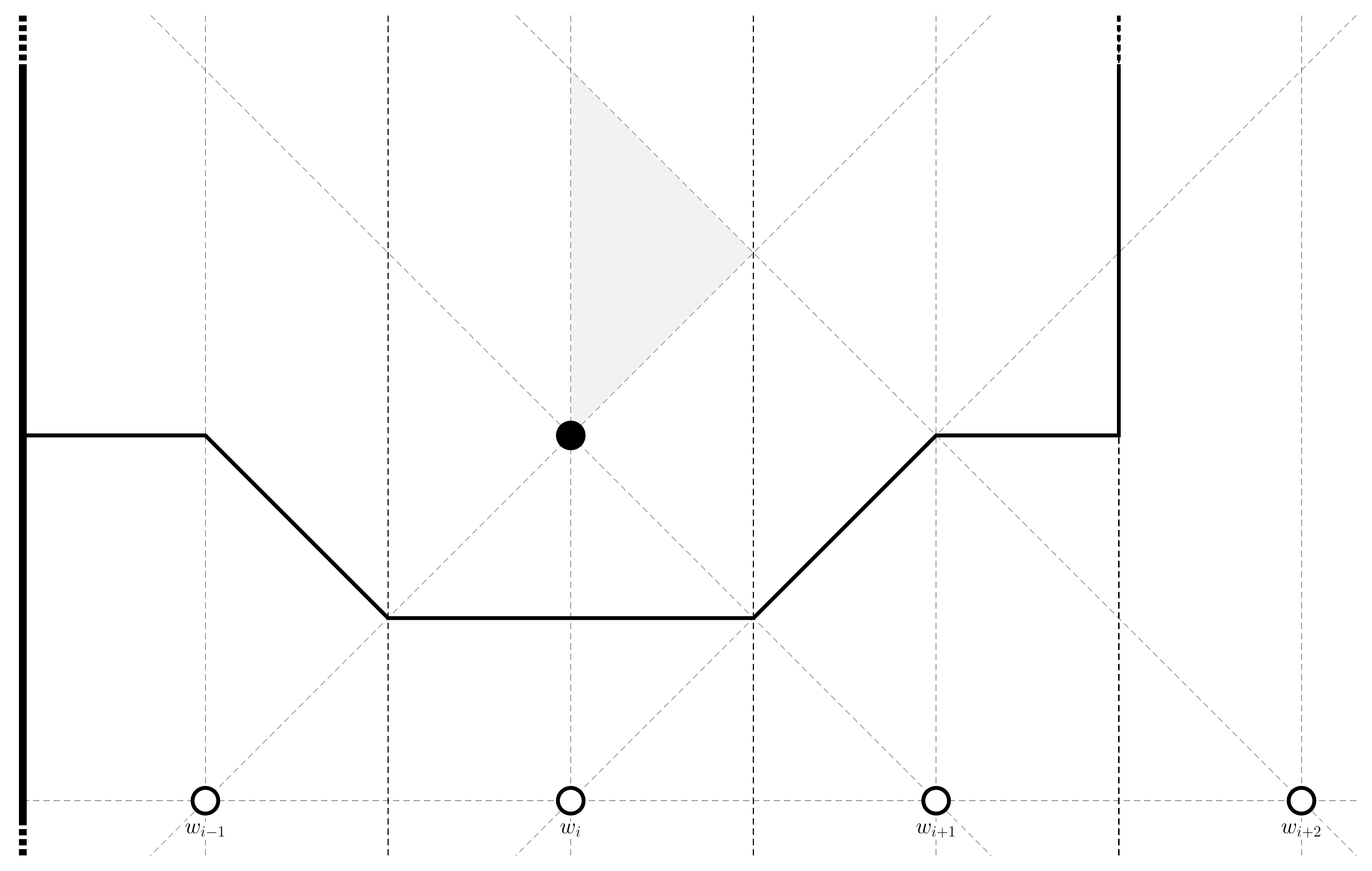}
  \caption{$b_1=(0,\frac{(i-1)p}{n})$ only if $\frac{(4i-3)p}{n} \geq q$.}
%  \caption{$Area(V^+((0,\frac{(i-1)p}{n})))=\frac{(2i-1)pq}{2n} - \frac{(6i^2 - 9i + 3)p^2}{4n^2}$ only if $\frac{(4i-3)p}{n} \geq q$.}
  \label{fig:RowOptimalLeftTouchingIVa}
\end{subfigure}
\caption{Maximal area Voronoi cells $V^+(b_1)$ for $b_1$ within Section $2i$ touching the leftmost vertical edge of $\mathcal{P}$.}
\end{figure}

\subsection{\texorpdfstring{$V^+(b_1)$}{V+(b1)} touching only the rightmost vertical edge of \texorpdfstring{$\mathcal{P}$}{P}}

Naturally the next avenue to explore is that of points $b_1$ which intersect the rightmost and not the leftmost boundary of $\mathcal{P}$. If $i > \frac{n}{2}$ then it can be the case that $V^+(b_1)$ intersects only the rightmost boundary (and not the leftmost boundary) of $\mathcal{P}$. For this, $b_1$ will have to be contained in Sections $2(n-i)+1$ to $2i-1$ (for $n-i>0$ and note that if $b_1$ is in an even section then we require $i > \frac{n}{2}$, otherwise we require $i \geq \frac{n}{2}$). For these sections, as described above for the case on intersecting the leftmost vertical edge of $\mathcal{P}$, the partitioning lines between Sections $2l$ and $2l+1$ no longer exist; they would be $\mathcal{CC}^3(w_{i+l})$ but $w_{i+l}$ does not exist. Therefore $V^+(b_1)$ takes the same form for $b_1$ in Sections $2l$ and $2l+1$ and we can explore them together. However, we must still check the first section (Section $2(n-i)+1$), for which (as described for Section $2i$ for $V^+(b_1)$ touching the leftmost vertical edge of $\mathcal{P}$) there is no Section $2l$ with which it can be paired.

If $b_1$ is in Section $2(n-i)+1$ (so $2l+1$ where $l=n-i$) or in Section $2l$ or Section $2l+1$ for $l=n-i+1,...,i-1$ then, making use of our calculations for $V^+(b_1)$ not touching either vertical boundary of $\mathcal{P}$, for $i<n$,
\begin{equation*}
    \begin{split}
        Area(V^+(b_1)) &= Area(V^+(b_1) \cap V^\circ(w_{i-l})) + \sum_{j=i-l+1}^{i-1}Area(V^+(b_1) \cap V^\circ(w_j)) \\ &\qquad+ Area(V^+(b_1) \cap V^\circ(w_i)) + \sum_{j=i+1}^{n}Area(V^+(b_1) \cap V^\circ(w_j)) + Area(V^+(b_1) \cap V^\circ(w_{i+l})) \\
        &= - \frac{x^2}{4} - \frac{3y^2}{4} + (-\frac{lp}{2n} + \frac{q}{2})y + \frac{lpq}{2n} - \frac{(l-1)lp^2}{4n^2} %original calculation:: -(l^2 p^2)/(4 n^2) + (l p^2)/(4 n^2) + (l p q)/(2 n) - (l p y)/(2 n) + (q y)/2 - x^2/4 - (3 y^2)/4
        - \sum_{j=n+1}^{i+l-1}``Area(V^+(b_1) \cap V^\circ(w_j))" \\ &\qquad- ``Area(V^+(b_1) \cap V^\circ(w_{i+l}))" \\
        &= - \frac{x^2}{4} - \frac{3y^2}{4} + (-\frac{lp}{2n} + \frac{q}{2})y + \frac{lpq}{2n} - \frac{(l-1)lp^2}{4n^2}
        - \sum_{j=n+1}^{i+l-1}(\frac{p}{2n}x - \frac{p}{2n}y + \frac{pq}{2n} - \frac{(4(j-i)-1)p^2}{8n^2}) \\
        &\qquad- (\frac{x^2}{8} -\frac{3y^2}{8} -\frac{xy}{4} + (\frac{q}{4} - \frac{((i+l)-i-1)p}{4n})x + (\frac{((i+l)-i-1)p}{4n} + \frac{q}{4})y \\
        &\qquad- \frac{((i+l)-i-1)pq}{4n} + \frac{((i+l)-i-1)^2p^2}{8n^2}) \\
        &= - \frac{3x^2}{8} - \frac{3y^2}{8} + \frac{xy}{4} + (\frac{(l-1)p}{4n} - \frac{q}{4})x + (- \frac{(3l-1)p}{4n} + \frac{q}{4})y \\
        &\qquad+ \frac{(3l-1)pq}{4n} - \frac{(3l-1)(l-1)p^2}{8n^2} - (i+l-1-n)(\frac{p}{2n}x - \frac{p}{2n}y + \frac{pq}{2n} + \frac{p^2}{8n^2}) \\
        &\qquad+ \frac{p^2}{2n^2}\sum_{j=n+1}^{i+l-1}(j-i) \\
        &= - \frac{3x^2}{8} - \frac{3y^2}{8} + \frac{xy}{4} - (\frac{(l-2n+2i-1)p}{4n} + \frac{q}{4})x + (- \frac{(l+2n-2i+1)p}{4n} + \frac{q}{4})y \\
        &\qquad+ \frac{(l+2n-2i+1)pq}{4n} - \frac{(l^2 - l + 2n^2 + n - 4in + 2i^2 - i)p^2}{8n^2} \\ % - (3 x^2)/8 - (3 y^2)/8 + (x y)/4 - (((a - 2 n + 2 i - 1) p)/(4 n) + q/4) x + (-((a + 2 n - 2 i + 1) p)/(4 n) + q/4) y + ((a + 2 n - 2 i + 1) p q)/(4 n) - ((a^2 - a + 2 n^2 + n - 4 i n + 2 i^2 - i) p^2)/(8 n^2)
    \end{split}
\end{equation*}
or, for $i=n$, 
\begin{equation*}
    \begin{split}
        Area(V^+(b_1)) &= Area(V^+(b_1) \cap V^\circ(w_{n-l})) + \sum_{j=n-l+1}^{n-1}Area(V^+(b_1) \cap V^\circ(w_j)) \\ &\qquad+ Area(V^+(b_1) \cap V^\circ(w_n)) \\
        &= \frac{x^2}{8} - \frac{3y^2}{8} + \frac{xy}{4} + (\frac{(n-(n-l)-1)p}{4n} - \frac{q}{4})x + (\frac{(n-(n-l)-1)p}{4n} + \frac{q}{4})y \\ 
        &\qquad - \frac{(n-(n-l)-1)pq}{4n} + \frac{(n-(n-l)-1)^2p^2}{8n^2} \\
        &\qquad+ \sum_{j=n-l+1}^{n-1}(- \frac{p}{2n}x - \frac{p}{2n}y + \frac{pq}{2n} - \frac{(4(n-j)-1)p^2}{8n^2}) - \frac{x^2}{2} - \frac{p}{2n}y + \frac{pq}{2n} \\
        &= - \frac{3x^2}{8} - \frac{3y^2}{8} + \frac{xy}{4} + (\frac{(l-1)p}{4n} - \frac{q}{4})x + (\frac{(l-3)p}{4n} + \frac{q}{4})y - \frac{(l-3)pq}{4n} + \frac{(l-1)^2p^2}{8n^2} \\
        &\qquad+ (n-1-(n-l))(- \frac{p}{2n}x - \frac{p}{2n}y + \frac{pq}{2n} + \frac{p^2}{8n^2}) - \frac{p^2}{2n^2}\sum_{j=n-l+1}^{n-1}(n-j) \\
        &= - \frac{3x^2}{8} - \frac{3y^2}{8} + \frac{xy}{4} - (\frac{(l-1)p}{4n} + \frac{q}{4})x + (-\frac{(l+1)p}{4n} + \frac{q}{4})y + \frac{(l+1)pq}{4n} - \frac{(l-1)lp^2}{8n^2} \, . %-((a - 1) a p^2)/(8 n^2) - x (((a - 1) p)/(4 n) + q/4) + y (q/4 - ((a + 1) p)/(4 n)) + ((a + 1) p q)/(4 n) - (3 x^2)/8 + (x y)/4 - (3 y^2)/8
    \end{split}
\end{equation*}
As before, this is identical to the representation found by substituting $i=n$ into the previous area formula, so it is this former formula that we use for our studies.

The area has partial derivatives
\begin{equation*}
\begin{split}
\frac{\delta A}{\delta x}&= - \frac{3x}{4} + \frac{y}{4} - \frac{(l-2n+2i-1)p}{4n} - \frac{q}{4} \\
\frac{\delta A}{\delta y}&= - \frac{3y}{4} + \frac{x}{4} - \frac{(l+2n-2i+1)p}{4n} + \frac{q}{4} \\
&\Rightarrow - 2x^* - \frac{(2l-2n+2i-1)p}{2n} - \frac{q}{2} = 0 \Rightarrow x^* =  - \frac{(2l-2(n-i)-1)p}{4n} - \frac{q}{4} \\
\text{and }&\Rightarrow - 2y^* - \frac{(2l+2n-2i+1)p}{2n} + \frac{q}{2} = 0 \Rightarrow y^* = - \frac{(2l+2(n-i)+1)p}{4n} + \frac{q}{4} \\
\end{split}
\end{equation*}
but $x^* = \frac{(2(n-i-l)+1)p}{4n} - \frac{q}{4} \leq \frac{1}{4}(\frac{p}{n} - q) < 0$ so we are required to investigate when $b_1$ is placed on the boundaries of its respective section -- noting that the optimum will never lie on a non-endpoint of $x=\frac{p}{2n}$ because $x^*<0$.

Within Section $2(n-i)+1$, for which $Area(V^+(b_1))= - \frac{3x^2}{8} - \frac{3y^2}{8} + \frac{xy}{4} + (\frac{(n-i+1)p}{4n} - \frac{q}{4})x + (- \frac{(3(n-i)+1)p}{4n} + \frac{q}{4})y + \frac{(3(n-i)+1)pq}{4n} - \frac{3(n-i)^2p^2}{8n^2}$, we produce the following calculations.%(-3 (-i + n)^2 p^2)/(8 n^2) + ((1 + 3 (-i + n)) p q)/(4 n) + (((1 - i + n) p)/(4 n) - q/4) x - (3 x^2)/8 + (-((1 + 3 (-i + n)) p)/(4 n) + q/4) y + (x y)/4 - (3 y^2)/8
\begin{itemize}
    \item Upon boundary $y=\frac{(n-i)p}{n}-x$ we have $Area(V^+((x,\frac{(n-i)p}{n}-x)))= - x^2 + (\frac{(4(n-i)+1)p}{2n} - \frac{q}{2})x + \frac{(4(n-i)+1)pq}{4n} - \frac{(6(n-i)^2 + (n-i))p^2}{4n^2}$, maximised by $x^*=\frac{(4(n-i)+1)p}{4n} - \frac{q}{4}$ giving $Area(V^+(\\(\frac{(4(n-i)+1)p}{4n} - \frac{q}{4},-\frac{p}{4n} + \frac{q}{4})))= \frac{(4(n-i) + 1)pq}{8n} - \frac{(8(n-i)^2 - 4(n-i) - 1)p^2}{16n^2} + \frac{q^2}{16}$. However, for $0 \leq \frac{(4(n-i)+1)p}{4n} - \frac{q}{4} \leq \frac{p}{2n}$ to be true we require $\frac{(4(n-i)-1)p}{n} \leq q \leq \frac{(4(n-i)+1)p}{n}$. If $q \leq \frac{(4(n-i)-1)p}{n}$ then the optimum lies on the endpoint $(\frac{p}{2n},\frac{(2(n-i)-1)p}{2n})$ giving $Area(V^+((\frac{p}{2n},\frac{(2(n-i)-1)p}{2n}))) \\= \frac{(n-i)pq}{n} - \frac{(6(n-i)^2 - 3(n-i))p^2}{4n^2}$. If $\frac{(4(n-i)+1)p}{n} \leq q$ then the optimum lies on the endpoint $(0,\frac{(n-i)p}{n})$ giving $Area(V^+((0,\frac{(n-i)p}{n}))) = \frac{(4(n-i)+1)pq}{4n} - \frac{(6(n-i)^2 + (n-i))p^2}{4n^2}$.
    \item Upon boundary $y=x+\frac{(n-i)p}{n}$ we have $Area(V^+((x,x+\frac{(n-i)p}{n})))= - \frac{x^2}{2} - \frac{(n-i)p}{n}x + \frac{(4(n-i)+1)pq}{4n} - \frac{(6(n-i)^2 + n-i)p^2}{4n^2}$, maximised by $x^* = -\frac{(n-i)p}{n} < 0$ so the optimum is achieved at $x=0$, the value of which has been found above.
\end{itemize}
Thus the optimum lies on the boundary $y=\frac{(n-i)p}{n}-x$ with maximal areas $Area(V^+((0,\frac{(n-i)p}{n}))) = \frac{(4(n-i)+1)pq}{4n} - \frac{(6(n-i)^2 + (n-i))p^2}{4n^2}$ if $\frac{(4(n-i)+1)p}{n} \leq q$ as depicted in Figure~\ref{fig:RowOptimalRightTouchingIIIc}, $Area(V^+((\frac{(4(n-i)+1)p}{4n} \\- \frac{q}{4},-\frac{p}{4n} + \frac{q}{4})))= \frac{(4(n-i) + 1)pq}{8n} - \frac{(8(n-i)^2 - 4(n-i) - 1)p^2}{16n^2} + \frac{q^2}{16}$ if $\frac{(4(n-i)-1)p}{n} \leq q \leq \frac{(4(n-i)+1)p}{n}$ as depicted in Figure~\ref{fig:RowOptimalRightTouchingIIIb}, and $Area(V^+((\frac{p}{2n},\frac{(2(n-i)-1)p}{2n}))) = \frac{(n-i)pq}{n} - \frac{(6(n-i)^2 - 3(n-i))p^2}{4n^2}$ if $q \leq \frac{(4(n-i)-1)p}{n}$ as depicted in Figure~\ref{fig:RowOptimalRightTouchingIIIa}.

\begin{figure}[!ht]\ContinuedFloat
\begin{subfigure}{.5\textwidth}
  \centering
  \includegraphics[width=0.9\textwidth]{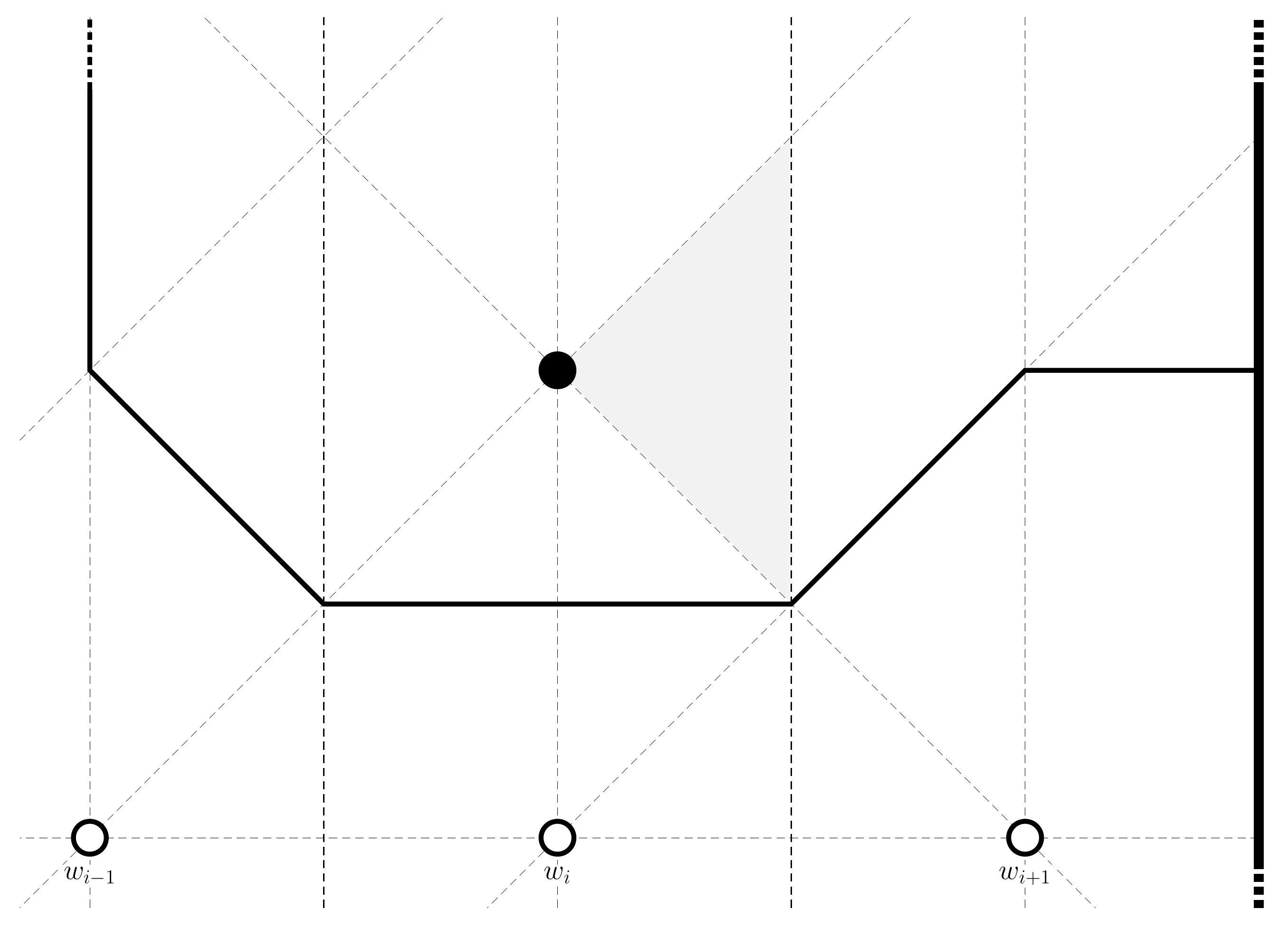}
  \caption{$b_1=(0,\frac{(n-i)p}{n})$ only if $\frac{(4(n-i)+1)p}{n} \leq q$. \\ \,}
%  \caption{$Area(V^+((0,\frac{(n-i)p}{n}))) = \frac{(4(n-i)+1)pq}{4n} - \frac{(6(n-i)^2 + (n-i))p^2}{4n^2}$ only if $\frac{(4(n-i)+1)p}{n} \leq q$.}
  \label{fig:RowOptimalRightTouchingIIIc}
\end{subfigure}%
\begin{subfigure}{.5\textwidth}
  \centering
  \includegraphics[width=0.9\textwidth]{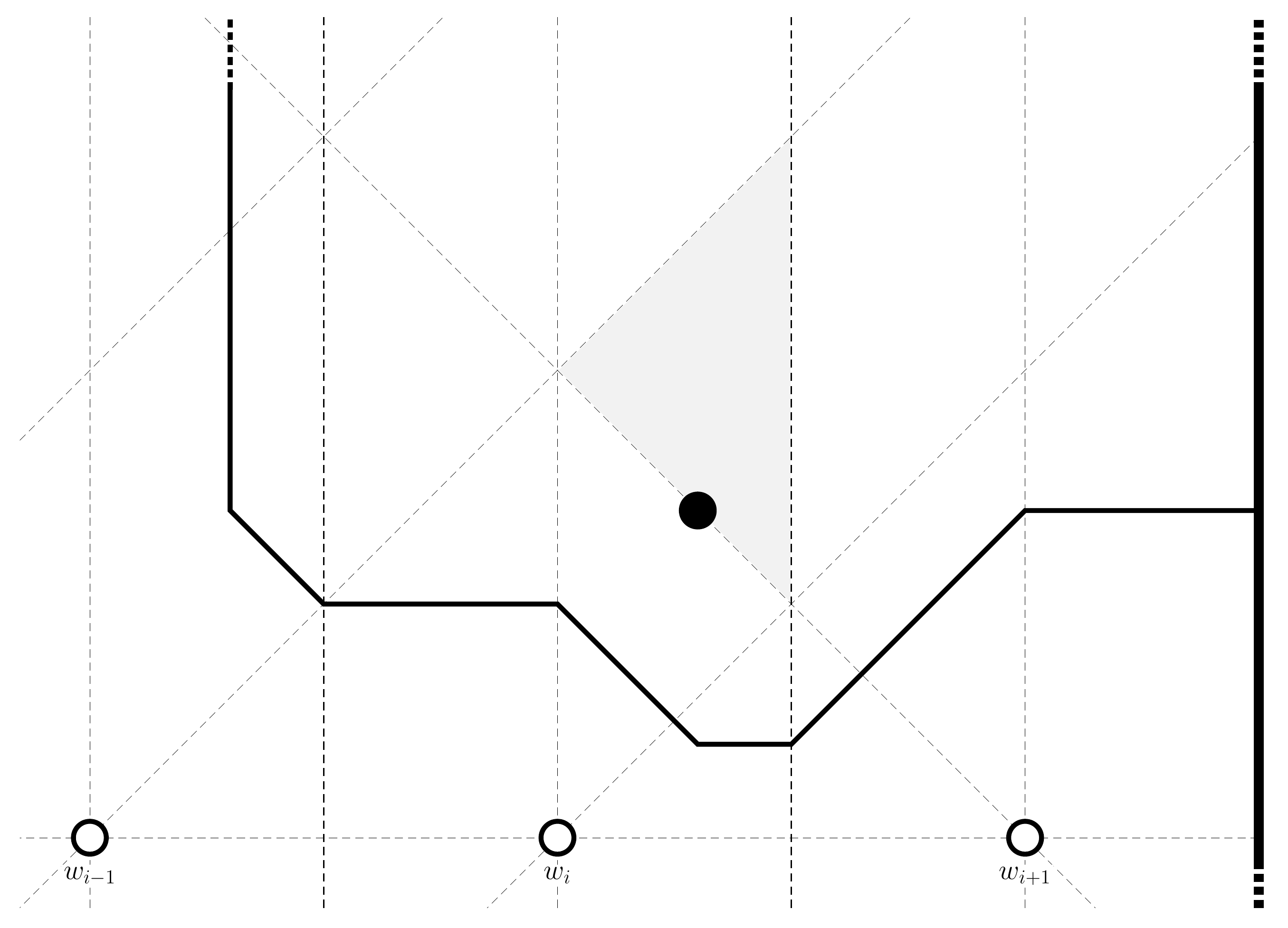}
  \caption{$b_1=(\frac{(4(n-i)+1)p}{4n} - \frac{q}{4},-\frac{p}{4n} + \frac{q}{4})$ only if $\frac{(4(n-i)-1)p}{n} \leq q \leq \frac{(4(n-i)+1)p}{n}$.}
%  \caption{$Area(V^+((\frac{(4(n-i)+1)p}{4n} - \frac{q}{4},-\frac{p}{4n} + \frac{q}{4})))= \frac{(4(n-i) + 1)pq}{8n} - \frac{(8(n-i)^2 - 4(n-i) - 1)p^2}{16n^2} + \frac{q^2}{16}$ only if $\frac{(4(n-i)-1)p}{n} \leq q \leq \frac{(4(n-i)+1)p}{n}$.}
  \label{fig:RowOptimalRightTouchingIIIb}
\end{subfigure}

\begin{subfigure}{1.0\textwidth}
  \centering
  \includegraphics[width=0.45\textwidth]{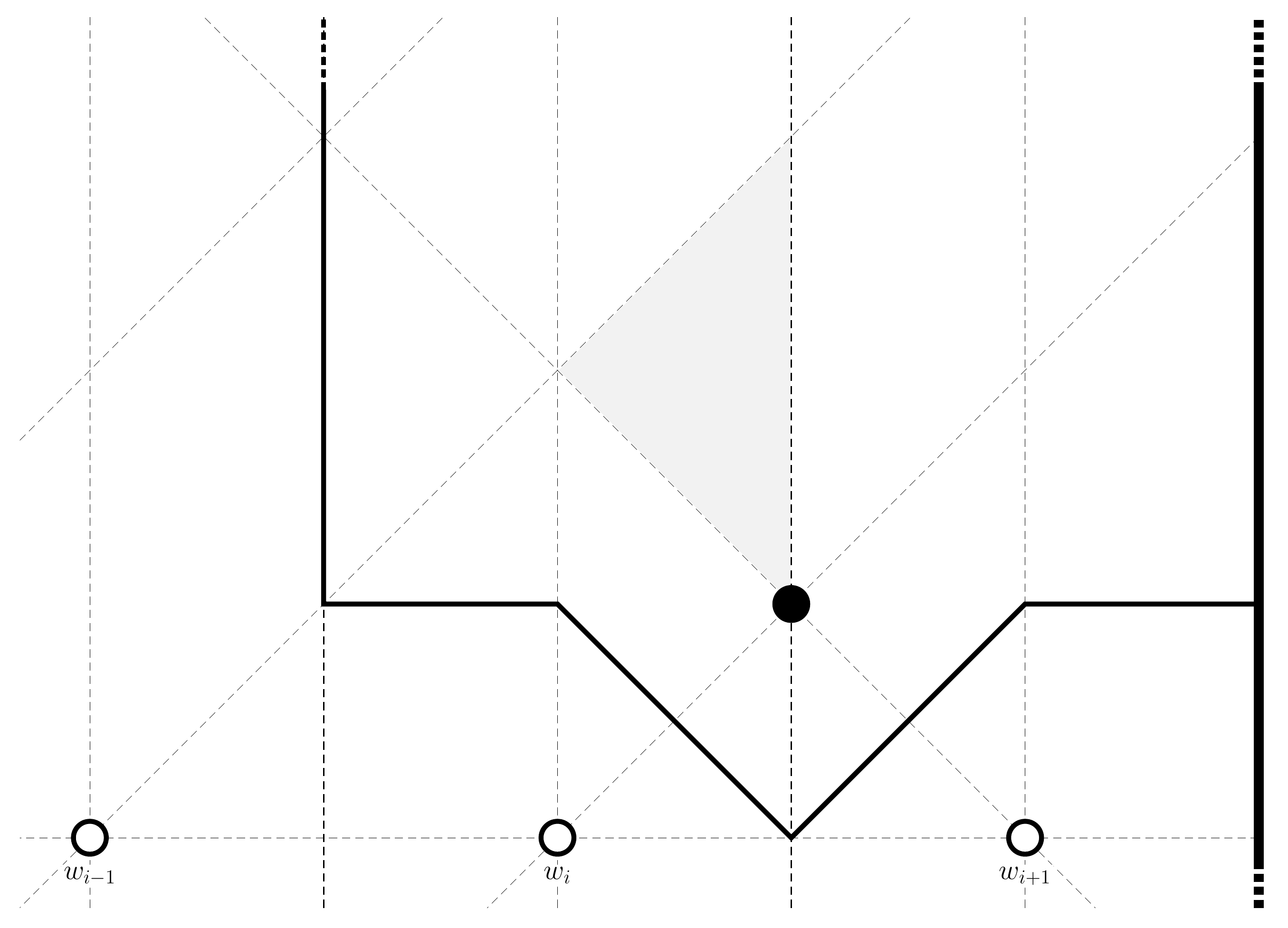}
  \caption{$b_1=(\frac{p}{2n},\frac{(2(n-i)-1)p}{2n})$ only if $\frac{(4(n-i)-1)p}{n} \geq q$.}
%  \caption{$Area(V^+((\frac{p}{2n},\frac{(2(n-i)-1)p}{2n}))) = \frac{(n-i)pq}{n} - \frac{(6(n-i)^2 - 3(n-i))p^2}{4n^2}$ only if $\frac{(4(n-i)-1)p}{n} \geq q$.}
  \label{fig:RowOptimalRightTouchingIIIa}
\end{subfigure}
\caption{Maximal area Voronoi cells $V^+(b_1)$ for $b_1$ within Section $2(n-i)+1$ touching the rightmost vertical edge of $\mathcal{P}$.}
\end{figure}

Alternatively, consider placing on the boundary of Section $2l$ and Section $2l+1$ (i.e. upon the boundaries $x=0$, $y=x+\frac{(l-1)p}{n}$, $x=\frac{p}{2n}$, and $y=x+\frac{lp}{n}$).
%Since $y^* = - \frac{(2l+2(n-i)+1)p}{4n} + \frac{q}{4} \leq - \frac{(2(n-i+1)+2(n-i)+1)p}{4n} + \frac{q}{4} = - \frac{(4(n-i)+3)p}{4n} + \frac{q}{4} ..?$ we will not find the optimum on the boundary $y=l\frac{p}{n}-x$ outside its endpoints.
\begin{itemize}
    \item Upon boundary $x=0$ we have $Area(V^+((0,y)))= - \frac{3y^2}{8} + (- \frac{(l+2n-2i+1)p}{4n} + \frac{q}{4})y + \frac{(l+2n-2i+1)pq}{4n} - \frac{(l^2 - l + 2n^2 + n - 4in + 2i^2 - i)p^2}{8n^2}$, maximised by $y^*= - \frac{(l+2n-2i+1)p}{3n} + \frac{q}{3}$ giving $Area(V^+(0, - \frac{(l+2n-2i+1)p}{3n} + \frac{q}{3}))=\frac{(l+2n-2i+1)pq}{6n} - \frac{(2(l-(n-i))^2 - 5l - (n-i) - 1)p^2}{24n^2} + \frac{q^2}{24}$. However, for $\frac{(l-1)p}{n} \leq - \frac{(l+2n-2i+1)p}{3n} + \frac{q}{3} \leq \frac{lp}{n}$ to be true we require $\frac{(4l+2(n-i)-2)p}{n} \leq q \leq \frac{(4l+2(n-i)+1)p}{n}$. If $\frac{(4l+2(n-i)-2)p}{n} \geq q$ then the optimum lies on the endpoint $(0,\frac{(l-1)p}{n})$ giving $Area(V^+((0,\frac{(l-1)p}{n})))= \frac{(l+n-i)pq}{2n} - \frac{(6l^2 + 4l(n-i) + 2(n-i)^2 - 7l - 3(n-i) + 1)p^2}{4n^2}$, and if $\frac{(4l+2(n-i)+1)p}{n} \leq q$ then $Area(V^+((0,\frac{lp}{n})))= \frac{(2l+2(n-i) + 1)pq}{4n} - \frac{(6l^2 + 4l(n-i) + 2(n-i)^2 + l + (n-i))p^2}{8n^2}$.
    \item Upon boundary $y=x+(l-1)\frac{p}{n}$ we have $Area(V^+((x,x+\frac{(l-1)p}{n})))= - \frac{x^2}{2} - \frac{(2l-1)p}{2n}x + \frac{(l+n-i)pq}{2n} - \frac{(6l^2 + 4l(n-i) + 2(n-i)^2 - 7l - 3(n-i) + 1)p^2}{8n^2}$, maximised by $x^* = - \frac{(2l-1)p}{2n} < 0$ so the optimum is achieved at $x=0$, the value of which has been found above.% \todo{These did not match originally -- coeff of $l^2$ was $-3$ in $x=0$ calc (changedwithoutrecalculation) so may want to check this}
    \item Upon boundary $y=x+l\frac{p}{n}$ we have $Area(V^+((x,x+\frac{lp}{n})))= - \frac{x^2}{2} - \frac{lp}{n}x + \frac{(2l+2(n-i)+1)pq}{4n} - \frac{(6l^2 + 4l(n-i) + 2(n-i)^2 + l + n - i)p^2}{8n^2}$, maximised by $x^* = - \frac{lp}{n}< 0$ so the optimum is achieved at $x=0$, the value of which has been found above.
%    \item Upon boundary $y=l\frac{p}{n} - x$ we have $Area(V^+((x,\frac{lp}{n}-x)))= - x^2 + (\frac{(2l+2(n-i)+1)p}{2n} - \frac{q}{2})x + \frac{(2l+2(n-i)+1)pq}{4n} - \frac{(6l^2 + 4l(n-i) + 2(n-i)^2 + l + n - i)p^2}{8n^2}$, maximised by $x^*=\frac{(2l+2(n-i)+1)p}{4n} - \frac{q}{4}$ giving $Area(V^+((\frac{(2l+2(n-i)+1)p}{4n} - \frac{q}{4},\frac{(2l-2(n-i)-1))p}{4n} + \frac{q}{4})))=\frac{(2l+2(n-i)+1)pq}{8n} - \frac{(8l^2 - 2l - 2(n-i) - 1)p^2}{16n^2} + \frac{q^2}{16}$. However for $0 \leq \frac{(2l+2(n-i)+1)p}{4n} - \frac{q}{4} \leq \frac{p}{2n}$ to be true we require $\frac{(2l+2(n-i)-1)p}{n} \leq q \leq \frac{(2l+2(n-i)+1)p}{n}$. If $\frac{(2l+2(n-i)-1)p}{n} \geq q$ then the optimum lies on the endpoint $(\frac{p}{2n},\frac{(2l-1)p}{2n})$ which, by the $y=x+(l-1)\frac{p}{n}$ boundary observations, is never optimal. If $\frac{(2l+2(n-i)+1)p}{n} \leq q$ then the optimum lies on the endpoint $(0,\frac{lp}{n})$, the area of which has already been found.
\end{itemize}
%This brings us to a similar situation as we were in when optimising the area of cells within Section $2l$ which intersect the leftmost boundary of $\mathcal{P}$. We know that if $\frac{(2l+2(n-i)-1)p}{n} \geq q$ or $\frac{(2l+2(n-i)+1)p}{n} \leq q$ then the optimum lies on the boundary $x=0$. What remains to be seen is whether it is otherwise optimal to place on the boundary $y=\frac{lp}{n} - x$ or $x=0$ when $\frac{(2l+2(n-i)-1)p}{n} \leq q \leq \frac{(2l+2(n-i)+1)p}{n}$ so again we are required to compare the optimum values found within these boundaries.

%Now the values for this problem of determining the overall optimum require us to follow a different approach to that for Section $2l$ areas intersecting the leftmost boundary of $\mathcal{P}$. If $\frac{(4l+2(n-i)-2)p}{n} \geq q$ then the optimum on $x=0$ lies on $(0,\frac{lp}{n})$ and so will be improved upon by playing WHAT VALUES? if $\frac{(2l+2(n-i)+1)p}{n} \leq q$ then the optimum on $y=\frac{lp}{n}-x$ lies on the endpoint $(0,\frac{lp}{n})$.

Thus the optimum lies on the boundary $x=0$, with maximal areas $Area(V^+((0,\frac{lp}{n})))= \frac{(2l+2(n-i) + 1)pq}{4n} - \frac{(6l^2 + 4l(n-i) + 2(n-i)^2 + l + (n-i))p^2}{8n^2}$ if $\frac{(4l+2(n-i)+1)p}{n} \leq q$ as depicted in Figure \ref{fig:RowOptimalRightTouchingIVc}, $Area(V^+(0, - \frac{(l+2n-2i+1)p}{3n} + \frac{q}{3}))=\frac{(l+2n-2i+1)pq}{6n} - \frac{(2(l-(n-i))^2 - 5l - (n-i) - 1)p^2}{24n^2} + \frac{q^2}{24}$ if $\frac{(4l+2(n-i)-2)p}{n} \leq q \leq \frac{(4l+2(n-i)+1)p}{n}$ as depicted in Figure~\ref{fig:RowOptimalRightTouchingIVb}, and $Area(V^+((0,\frac{(l-1)p}{n}))) = \frac{(l+n-i)pq}{2n} - \\ \frac{(6l^2 + 4l(n-i) + 2(n-i)^2 - 7l - 3(n-i) + 1)p^2}{4n^2}$ if $\frac{(4l+2(n-i)-2)p}{n} \geq q$ as depicted in Figure \ref{fig:RowOptimalRightTouchingIVa}.

\begin{figure}[!ht]\ContinuedFloat
\begin{subfigure}{.5\textwidth}
  \centering
  \includegraphics[width=0.9\textwidth]{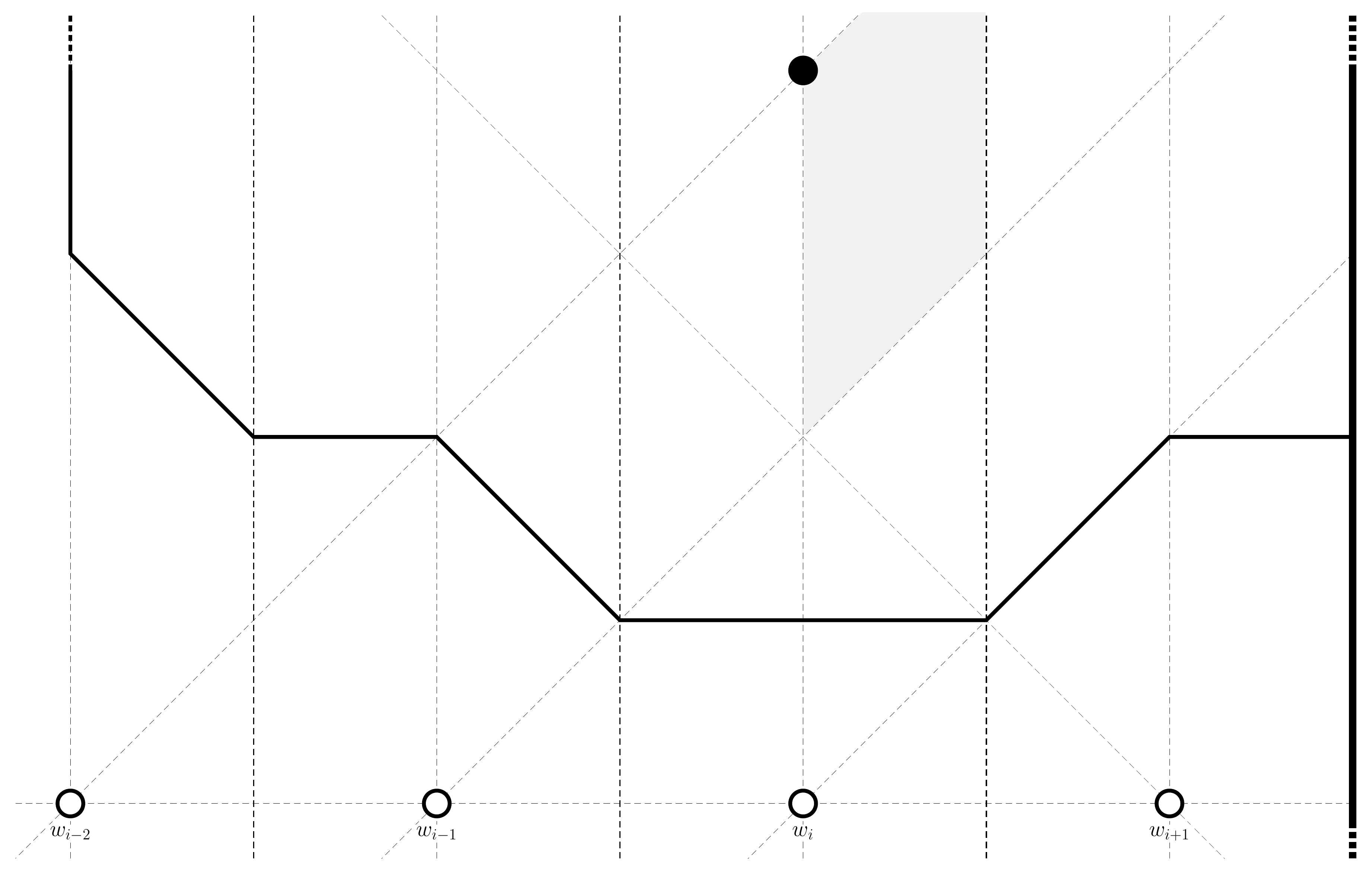}
  \caption{$b_1=(0,\frac{lp}{n})$ only if $\frac{(4l+2(n-i)+1)p}{n} \leq q$. \\ \,}
%  \caption{$Area(V^+((0,\frac{lp}{n})))= \frac{(2l+2(n-i) + 1)pq}{4n} - \frac{(6l^2 + 4l(n-i) + 2(n-i)^2 + l + (n-i))p^2}{8n^2}$ only if $\frac{(4l+2(n-i)+1)p}{n} \leq q$.}
  \label{fig:RowOptimalRightTouchingIVc}
\end{subfigure}%
\begin{subfigure}{.5\textwidth}
  \centering
  \includegraphics[width=0.9\textwidth]{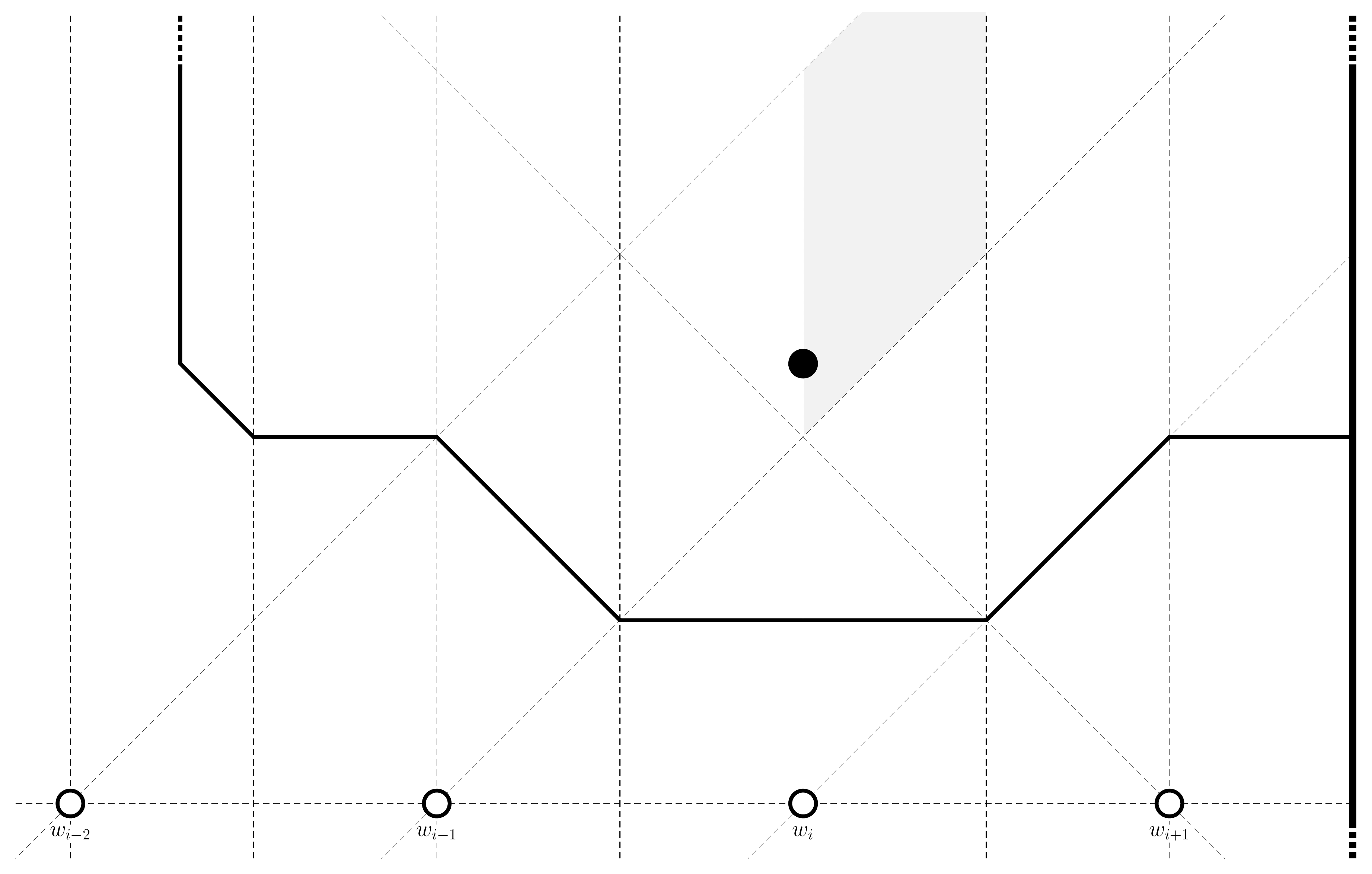}
  \caption{$b_1=(0, - \frac{(l+2n-2i+1)p}{3n} + \frac{q}{3})$ only if $\frac{(4l+2(n-i)-2)p}{n} \leq q \leq \frac{(4l+2(n-i)+1)p}{n}$.}
%  \caption{$Area(V^+(0, - \frac{(l+2n-2i+1)p}{3n} + \frac{q}{3}))=\frac{(l+2n-2i+1)pq}{6n} - \frac{(2(l-(n-i))^2 - 5l - (n-i) - 1)p^2}{24n^2} + \frac{q^2}{24}$ only if $\frac{(4l+2(n-i)-2)p}{n} \leq q \leq \frac{(4l+2(n-i)+1)p}{n}$.}
  \label{fig:RowOptimalRightTouchingIVb}
\end{subfigure}
\end{figure}
\begin{figure}\ContinuedFloat
\begin{subfigure}{1.0\textwidth}
  \centering
  \includegraphics[width=0.45\textwidth]{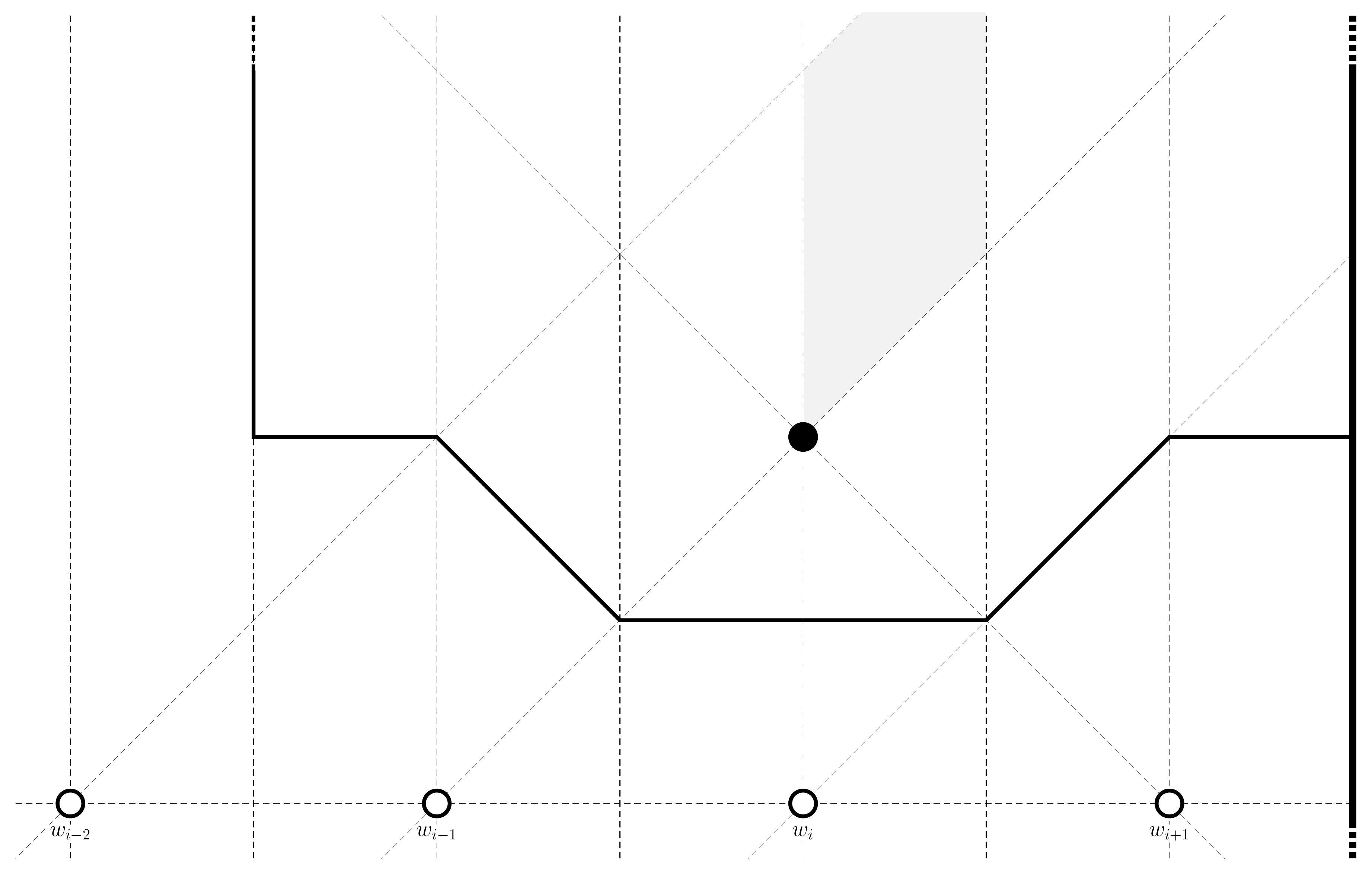}
  \caption{$b_1=(0,\frac{(l-1)p}{n})$ only if $\frac{(4l+2(n-i)-2)p}{n} \geq q$.}
%  \caption{$Area(V^+((0,\frac{(l-1)p}{n})))= \frac{(l+n-i)pq}{2n} - \frac{(6l^2 + 4l(n-i) + 2(n-i)^2 - 7l - 3(n-i) + 1)p^2}{4n^2}$ only if $\frac{(4l+2(n-i)-2)p}{n} \geq q$.}
  \label{fig:RowOptimalRightTouchingIVa}
\end{subfigure}
\caption{Maximal area Voronoi cells $V^+(b_1)$ for $b_1$ within Section $2l$ touching the rightmost vertical edge of $\mathcal{P}$.}
\end{figure}

\pagebreak We should note here that Section $I$ also applies in this case of intersecting the right boundary and not the left boundary of $\mathcal{P}$, though it is plain to see that the optimum for this scenario will lie as close as possible to $(0,0)$ and give an area up to (but not achieving) $\frac{pq}{2n}$.

\subsection{\texorpdfstring{$V^+(b_1)$}{V+(b1)} touching both vertical edges of \texorpdfstring{$\mathcal{P}$}{P}}

Finally we shall investigate the points $b_1$ whose cells $V^+(b_1)$ touch both vertical boundaries of $\mathcal{P}$. These cells are produced for $b_1$ in Section $2i$ and above if $i > \frac{n}{2}$ or Section $2(n-i)+1$ and above if $i \leq \frac{n}{2}$. Importantly, within these sections the structure of $V^+(b_1)$ is identical no matter the section, even or odd. This is because, in actuality, there are no sections beyond Section $\max[2i,2(n-i)+1]$ as it is defined by the edges $x=0$, $x=\frac{p}{2n}$, $\mathcal{CC}^1(w_1)$, and $\mathcal{CC}^3(w_n)$.

Therefore the area for $b_1$ in this region is, for $1 < i < n$,
\begin{equation*}
    \begin{split}
        Area(V^+(b_1)) &= \sum_{j=1}^{i-1}Area(V^+(b_1) \cap V^\circ(w_j)) + Area(V^+(b_1) \cap V^\circ(w_i)) \\ &\qquad+ \sum_{j=i+1}^{n}Area(V^+(b_1) \cap V^\circ(w_j)) \\
        &= \sum_{j=1}^{i-1}(- \frac{p}{2n}x - \frac{p}{2n}y + \frac{pq}{2n} - \frac{(4(i-j)-1)p^2}{8n^2}) - \frac{x^2}{2} - \frac{p}{2n}y + \frac{pq}{2n} \\ &\qquad+ \sum_{j=i+1}^{n}(\frac{p}{2n}x - \frac{p}{2n}y + \frac{pq}{2n} - \frac{(4(j-i)-1)p^2}{8n^2}) \\
        &= (i-1)(- \frac{p}{2n}x - \frac{p}{2n}y + \frac{pq}{2n} + \frac{p^2}{8n^2}) - \frac{p^2}{2n^2}\sum_{j=1}^{i-1}(i-j) - \frac{x^2}{2} - \frac{p}{2n}y + \frac{pq}{2n} \\
        &\qquad+ (n-i)(\frac{p}{2n}x - \frac{p}{2n}y + \frac{pq}{2n} + \frac{p^2}{8n^2}) - \frac{p^2}{2n^2}\sum_{j=i+1}^{n}(j-i) \\
        &= - \frac{x^2}{2} + (n-2i+1)\frac{p}{2n}x - \frac{p}{2}y + \frac{pq}{2} + (n-1)\frac{p^2}{8n^2} - \frac{p^2}{2n^2}\frac{(i-1)i}{2} \\ &\qquad- \frac{p^2}{2n^2}\frac{(n-i)(n-i+1)}{2} \\
        &= - \frac{x^2}{2} + \frac{(n-2i+1)p}{2n}x - \frac{p}{2}y + \frac{pq}{2} - \frac{(2n^2 + 4i^2 - 4in + n - 4i + 1)p^2}{8n^2} \\
    \end{split}
\end{equation*}
or, for $i=1$,
\begin{equation*}
    \begin{split}
        Area(V^+(b_1)) &= Area(V^+(b_1) \cap V^\circ(w_1)) + \sum_{j=2}^{n}Area(V^+(b_1) \cap V^\circ(w_j)) \\
        &= - \frac{x^2}{2} - \frac{p}{2n}y + \frac{pq}{2n} + \sum_{j=2}^{n}(\frac{p}{2n}x - \frac{p}{2n}y + \frac{pq}{2n} - \frac{(4(j-1)-1)p^2}{8n^2}) \\
        &= - \frac{x^2}{2} - \frac{p}{2n}y + \frac{pq}{2n} + (n-1)(\frac{p}{2n}x - \frac{p}{2n}y + \frac{pq}{2n} + \frac{5p^2}{8n^2}) - \frac{p^2}{2n^2}\sum_{j=2}^{n}j \\
        &= - \frac{x^2}{2} + \frac{(n-1)p}{2n}x - \frac{p}{2}y + \frac{pq}{2} - \frac{(2n^2-3n+1)p^2}{8n^2} \\
    \end{split}
\end{equation*}
or, for $i=n$,
\begin{equation*}
    \begin{split}
        Area(V^+(b_1)) &= \sum_{j=1}^{n-1}Area(V^+(b_1) \cap V^\circ(w_j)) + Area(V^+(b_1) \cap V^\circ(w_n)) \\
        &= \sum_{j=1}^{n-1}(- \frac{p}{2n}x - \frac{p}{2n}y + \frac{pq}{2n} - \frac{(4(n-j)-1)p^2}{8n^2}) - \frac{x^2}{2} - \frac{p}{2n}y + \frac{pq}{2n} \\
        &= (n-1)(- \frac{p}{2n}x - \frac{p}{2n}y + \frac{pq}{2n} - \frac{(4n-1)p^2}{8n^2}) + \frac{p^2}{2n^2}\sum_{j=1}^{n-1}{j} - \frac{x^2}{2} - \frac{p}{2n}y + \frac{pq}{2n} \\
        &= - \frac{x^2}{2} - \frac{(n-1)p}{2n}x - \frac{p}{2}y + \frac{pq}{2} - \frac{(2n^2 - 3n + 1)p^2}{8n^2} \, .
    \end{split}
\end{equation*}

All of these areas have partial derivative $\frac{\delta A}{\delta y}= - \frac{p}{2}$ providing, as expected, justification that the area increases as $y$ decreases within the region.

If $1<i<n$ then $$\frac{\delta A}{\delta x}= - x + \frac{(n-2i+1)p}{2n}$$ giving $x^*=\frac{(n-2i+1)p}{2n}$. We have $x^* \geq 0 \Leftrightarrow n-2i+1 \geq 0 \Leftrightarrow \frac{n+1}{2} \geq i$ and $x^* \leq \frac{p}{2n} \Leftrightarrow n-2i+1 \leq 1 \Leftrightarrow \frac{n}{2} \leq i$ so this maximum is only achieved for $i=\ceil{\frac{n}{2}}$. In this case, if $n$ is even then the maximum within Section $n+1$ of $w_{\frac{n}{2}}$ is found at $x^*=\frac{p}{2n}$, and if $n$ is odd then the maximum within Section $n+1$ of $w_{\frac{n+1}{2}}$ is found at $x^*=0$. Before explicitly stating the coordinates of $b^*_1$ for these sections we will explore those values of $i$ which did not satisfy these constraints.

For $1<i<n$ where $i \neq \ceil{\frac{n}{2}}$, $x^*$ is never within this region. Therefore we must explore the boundary of the region; by $\frac{\delta A}{\delta y}$ we need only explore the lower boundary.

Since $x^*<0$ when $i > \frac{n+1}{2}$, for these $i$ the region we are exploring is Section $2i$ and the bottommost point on the lower boundary (satisfying $\frac{\delta A}{\delta y}$) is also the leftmost point (satisfying $\frac{\delta A}{\delta x}$) so this point, $(0,\frac{(i-1)p}{n})$, is our optimum, as well as being the optimum for $i=\frac{n+1}{2}$ when $n$ is odd as found above. This gives $Area(V^+((0,\frac{(i-1)p}{n})))= \frac{pq}{2} - \frac{(2 n^2 + 4 i^2 - 3n - 4 i + 1)p^2}{8n^2}$ as depicted in Figure \ref{fig:RowOptimalBothTouchingIV}.

Since $x^*>\frac{p}{2n}$ when $i<\frac{n}{2}$, for these $i$ the region we are exploring is Section $2(n-i)+1$ and the bottommost point on the lower boundary (satisfying $\frac{\delta A}{\delta y}$) is also the rightmost point (satisfying $\frac{\delta A}{\delta x}$), so this point, $(\frac{p}{2n},\frac{(2(n-i)-1)p}{2n})$, is our optimum, as well as being the optimum for $i=\frac{n}{2}$ when $n$ is even as found above. This gives $Area(V^+((\frac{p}{2n},\frac{(2(n-i)-1)p}{2n})))= \frac{pq}{2} - \frac{(6n^2 + 4i^2 - 8in - 3n)p^2}{8n^2}$ as depicted in Figure \ref{fig:RowOptimalBothTouchingIII}.

\begin{figure}[!ht]\ContinuedFloat
\begin{subfigure}{.5\textwidth}
  \centering
  \includegraphics[width=0.9\textwidth]{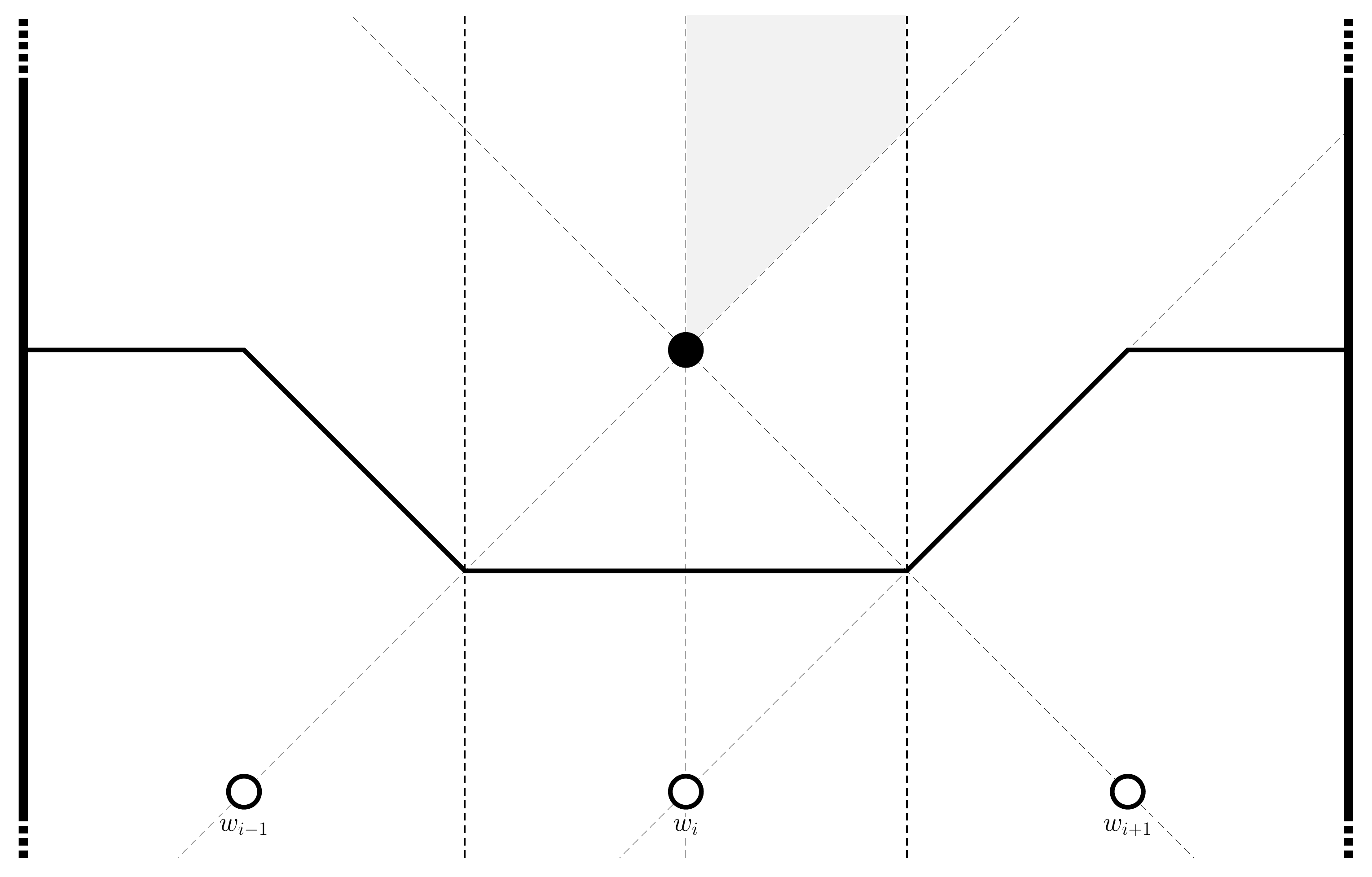}
  \caption{$b_1=(0,\frac{(i-1)p}{n})$ only if $i > \frac{n}{2}$.}
%  \caption{$Area(V^+((0,\frac{(i-1)p}{n})))= \frac{pq}{2} - \frac{(2 n^2 + 4 i^2 - 3n - 4 i + 1)p^2}{8n^2}$ only if $i > \frac{n}{2}$.}
  \label{fig:RowOptimalBothTouchingIV}
\end{subfigure}%
\begin{subfigure}{.5\textwidth}
  \centering
  \includegraphics[width=0.9\textwidth]{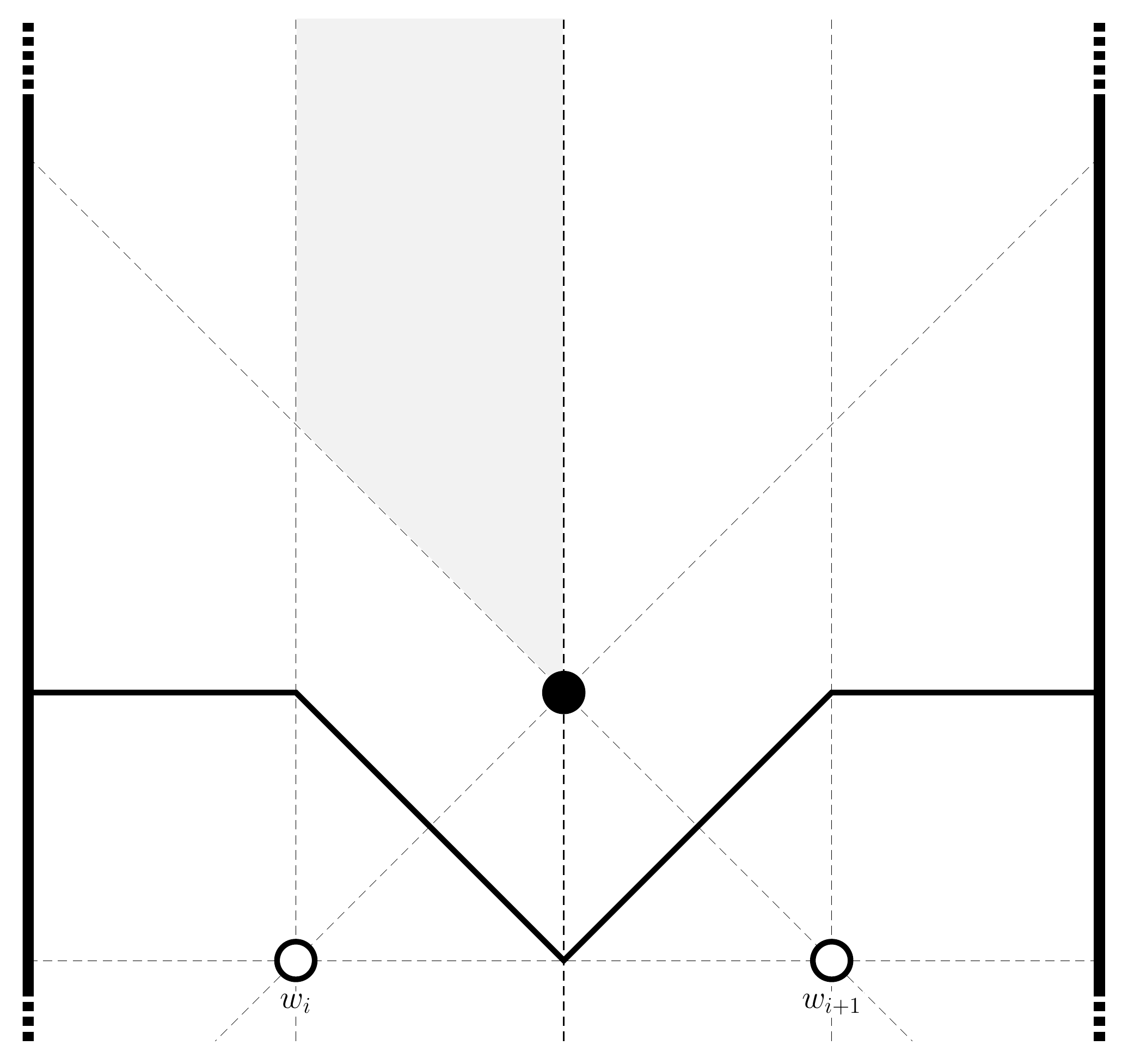}
  \caption{$b_1=(\frac{p}{2n},\frac{(2(n-i)-1)p}{2n})$ only if $i \leq \frac{n}{2}$.}
%  \caption{$Area(V^+((\frac{p}{2n},\frac{(2(n-i)-1)p}{2n})))= \frac{pq}{2} - \frac{(6n^2 + 4i^2 - 8in - 3n)p^2}{8n^2}$ only if $i \leq \frac{n}{2}$.}
  \label{fig:RowOptimalBothTouchingIII}
\end{subfigure}
\caption{Maximal area Voronoi cells $V^+(b_1)$ for $b_1$ within Section $n+1$, touching both vertical edges of $\mathcal{P}$.}
\label{fig:RowOptimals}
\end{figure}

%?\fi

If $i=1$ then $$\frac{\delta A}{\delta x}= - x + \frac{(n-1)p}{2n}$$ giving $x^*=\frac{(n-1)p}{2n} \geq \frac{p}{2n}$. As before, since $i < \frac{n}{2}$ the region is $2(n-1)+1$ so our optimum lies at the bottom rightmost point of Section $2n-1$, $(\frac{p}{2n},\frac{(2n-3)p}{2n})$, giving $Area(V^+((\frac{p}{2n},\frac{(2n-3)p}{2n})))=\frac{pq}{2} - \frac{(6n^2-11n+4)p^2}{8n^2}$ (identical to the above calculation for $i<\frac{n}{2}$ after substituting $i=1$).

Finally, if $i=n$ then $$\frac{\delta A}{\delta x}= - x - \frac{(n-1)p}{2n}$$ giving $x^*=- \frac{(n-1)p}{2n} < 0$. As before, since $i > \frac{n}{2}$ the region is $2n$ so our optimum lies at the bottom leftmost point of Section $2n$, $(0,\frac{(n-1)p}{n})$, giving $Area(V^+((0,\frac{(n-1)p}{n})))=\frac{pq}{2} - \frac{(6 n^2 - 7 n + 1)p^2}{8n^2}$ (identical to the above calculation for $i \geq \frac{n}{2}$ after substituting $i=n$).

This concludes our search for the optimisation of each structure of $V^+(b_1)$ which touches both vertical boundaries of $\mathcal{P}$, and with it our search for the optimisation of every structure of $V^+(b_1)$ given that White plays a $1 \times n$ row.

\section{Black's optimal strategy: White plays a \texorpdfstring{$1 \times n$}{1 x n} row}
\label{sec:BlackRow}

At this stage we have calculated the optimal locations of $b_1$ within every possible partition cell of $\mathcal{P}$ when White plays a $1 \times n$ row. To recap, Figure \ref{fig:RowOptimals} shows all optimal locations of $b_1$ within each section under the certain circumstances discussed above (not depicting the optimum found in Section $I$ since this had location $(0,0)$).

%$b^*_1=(\frac{p}{2n},\frac{q}{3}-\frac{(2l+1)p}{6n})$ and $Area(V^+((\frac{p}{2n},\frac{q}{3}-\frac{(2l+1)p}{6n})))=\frac{(2l+1)pq}{6n} - \frac{(2l^2+2l-1)p^2}{12n^2} + \frac{q^2}{12}$. For $b^*_1$ to lie within Section $2l$ we must have $l\frac{p}{n} - x^* \leq y^* \leq x^* + l\frac{p}{n}$ so we must have $\frac{(4l-1)p}{n} \leq q \leq \frac{(4l+2)p}{n}$.

\subsection{Black's best point}
\label{sec:BlackRowBest}

An obvious question of interest is which point $b_1$ is the best point -- as in, which position of $b_1$ gives the largest area of $V^+(b_1)$? The availability of each section in which to place $b_1$, and the areas of the Voronoi cells $V^+(b_1)$, depend entirely on the relationship between $\frac{p}{n}$ and $q$ so this is not a straightforward question to answer. Nevertheless we shall determine what position $b^*=(x^*,y^*)$ of $b_1$ claims the largest area of $V^+(b_1)$ for which ratios of $\frac{p}{n}$ and $q$.

%Firstly, it is clear from the results of $V^+(b_1)$ touching both vertical edges of $\mathcal{P}$ that if $V^+(b_1)$ touches both vertical edges of $\mathcal{P}$ then the area is maximised by the point
Let us begin by fixing the bottom right corner of $\mathcal{P}$ at $(0,-\frac{q}{2})$ so that $w_i=(\frac{(2i-1)p}{2n},0)$ for $i=1,...,n$. Firstly it is clear to see from Figure~\ref{fig:RowOptimals} that Black's best point $b^*$ will have $x$-coordinate $\frac{kp}{2n}$ for some $k \in \mathbb{N}$. Furthermore, it is never advantageous when seeking to maximise the area of $V^+(b_1)$ to have $V^+(b_1)$ bounded on one side by a vertical edge of $\mathcal{P}$ since this blocks $V^+(b_1)$ from gaining territory on the other side of the boundary of $\mathcal{P}$, which it might be able to do if $b_1$ were moved a horizontal distance of $\frac{p}{n}$ away from this edge. For this reason we can easily claim that the best point $b^*$ has $x$-coordinate $x^* \in \{\frac{p}{2} - \frac{p}{2n}, \frac{p}{2}, \frac{p}{2}+\frac{p}{2n}\}$. By the symmetry of $\mathcal{P}$ and $W$, not only does $x^*=\frac{p}{2}$ produce a Voronoi cell symmetrical about $x=\frac{p}{2}$, but the values $x^*=\frac{p}{2}-\frac{p}{2n}$ and $x^*=\frac{p}{2}+\frac{p}{2n}$ will produce identical Voronoi cells (after reflection in $x^*=\frac{p}{2}$). Therefore we need only consider the location of $b^*$ within the top right quadrant of $V^\circ(w_i)$ for $i = \ceil{\frac{n}{2}}$, though requiring different investigations depending on whether $n$ is even or odd.

Before we delve into the details with respect to the parity of $n$, let us recapitulate the results depicted in Figure~\ref{fig:RowOptimals} in Table~\ref{tab:rowsectionoptimals}.

\begin{table}[!ht]
\centering
\begin{tabular}{c|c|c|c}
Section & Optimum & Area & Condition \\
\hline
$2l$ & $\left(\frac{(2i-1)p}{2n},(l-1)\frac{p}{n}\right)$ & $\frac{(2l-1)pq}{2n} - \frac{3(2l-1)(l-1)p^2}{4n^2}$ & $q \leq \frac{(4l-3)p}{n}$ \\
  & $\left(\frac{(2i-1)p}{2n},\frac{q}{3}-\frac{lp}{3n}\right)$ & $\frac{lpq}{3n} + \frac{(3-2l)lp^2}{12n^2} + \frac{q^2}{12}$ & $\frac{(4l-3)p}{n} \leq q \leq \frac{4lp}{n}$ \\
  & $\left(\frac{(2i-1)p}{2n},l\frac{p}{n}\right)$ & $\frac{lpq}{n} + \frac{(1-6l)lp^2}{4n^2}$ & $\frac{4lp}{n} \leq q$ \\
%  \hline
$2l+1$ & $\left(\frac{ip}{n},\frac{(2l-1)p}{2n}\right)$ & $\frac{lpq}{n} - \frac{(3l-1)lp^2}{2n^2}$ & $q \leq \frac{(4l-1)p}{n}$ \\
  & $\left(\frac{ip}{n},\frac{q}{3}-\frac{(2l+1)p}{6n}\right)$ & $\frac{(2l+1)pq}{6n} - \frac{(2l^2+2l-1)p^2}{12n^2} + \frac{q^2}{12}$ & $\frac{(4l-1)p}{n} \leq q \leq \frac{(4l+2)p}{n}$ \\
  & $\left(\frac{ip}{n},\frac{(2l+1)p}{2n}\right)$ & $\frac{(2l+1)pq}{2n}-\frac{(6l^2+6l+1)p^2}{4n^2}$ & $\frac{(4l+2)p}{n} \leq q$ \\
\end{tabular}
\caption{Optima contained in each section of $V^\circ(w_i)$ assuming that Black's Voronoi cell does not touch either vertical edge of $\mathcal{P}$.}
\label{tab:rowsectionoptimals}
\end{table}
We shall refer to these optima as the \emph{bottom}, \emph{middle}, and \emph{top} optima within each section, listed in the order that they appear in Table~\ref{tab:rowsectionoptimals} with examples depicted in Figures~\ref{fig:RowOptimalNotTouchingIVc} and \ref{fig:RowOptimalNotTouchingIIIc}, Figures~\ref{fig:RowOptimalNotTouchingIVb} and \ref{fig:RowOptimalNotTouchingIIIb}, and Figures~\ref{fig:RowOptimalNotTouchingIVa} and \ref{fig:RowOptimalNotTouchingIIIa} respectively.

We know the optimal positions $b^*_1$ within each of these sections, but we must ask how these optima compare to one another across sections. It is important to realise that some optima within different sections lie on the same point, while capturing different areas (for example the equivalent optima in Figure~\ref{fig:RowOptimalNotTouchingIVc} for Section $VI$ would lie on the same point as shown in Figure~\ref{fig:RowOptimalNotTouchingIVa}). This is due to the fact that many of these positions represent the convergence of $b_1$ to a point, yet these different results are obtained from converging via different paths (i.e. via different sections), choosing different bisectors upon degenerate configuration lines. These are the easiest comparisons to make and can be done by simply referring to graphs of the points as shown, by way of an example, in Figure~\ref{fig:RowOptimalComp}.

\begin{figure}[!ht]
\begin{subfigure}{.5\textwidth}
  \centering
  \includegraphics[width=0.9\textwidth]{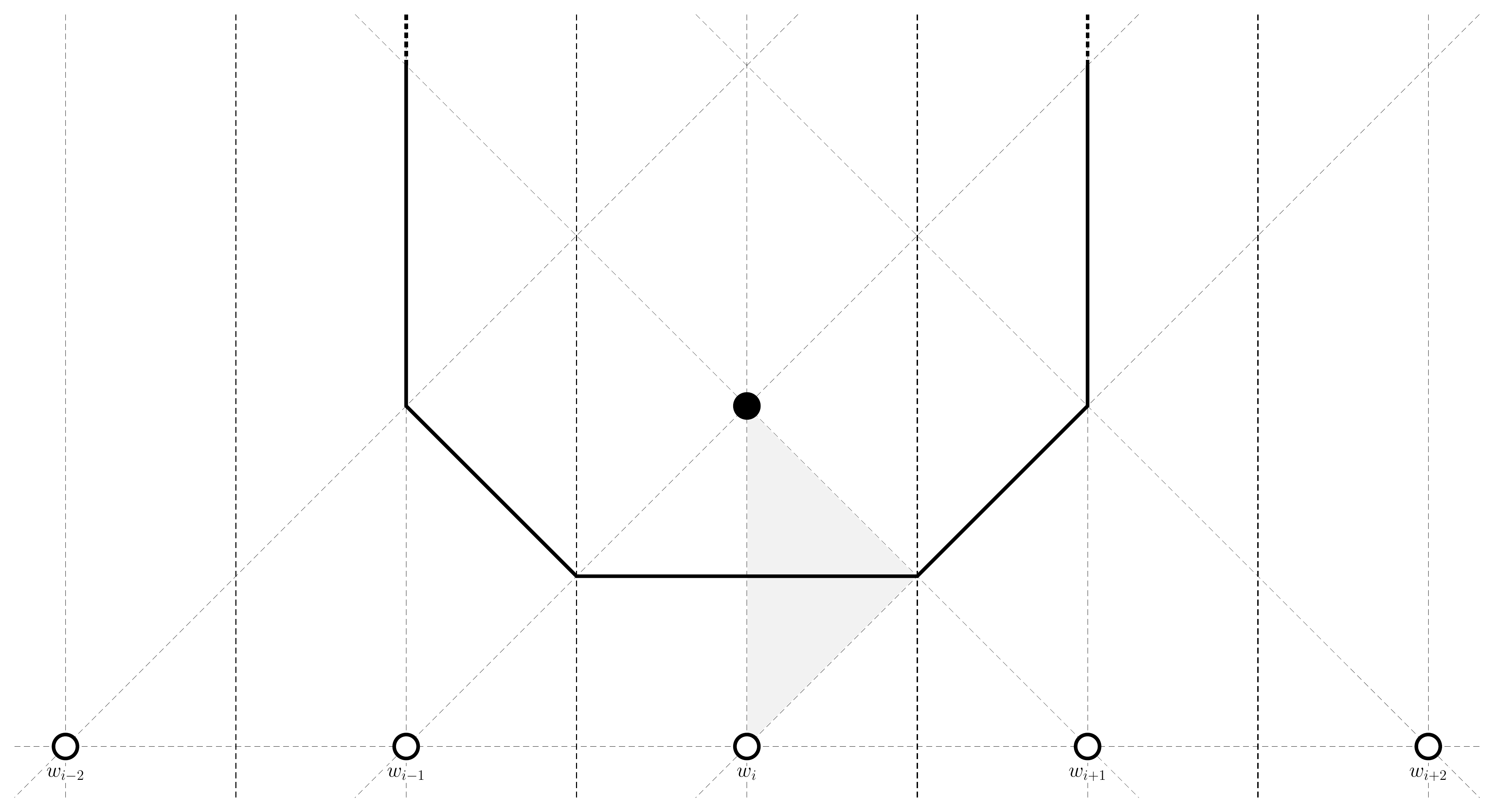}
  \caption{$V^+((\frac{(2i-1)p}{2n},\frac{p}{n}))$ for $(\frac{(2i-1)p}{2n},\frac{p}{n})$ in Section $II$.}
  \label{fig:RowOptimalNotTouchingIIcomp}
\end{subfigure}%
\begin{subfigure}{.5\textwidth}
  \centering
  \includegraphics[width=0.9\textwidth]{VoronoiGameRowPartitionIVOptimalc.pdf}
  \caption{$V^+((\frac{(2i-1)p}{2n},\frac{p}{n}))$ for $(\frac{(2i-1)p}{2n},\frac{p}{n})$ in Section $IV$.}
  \label{fig:RowOptimalNotTouchingIVcomp}
\end{subfigure}
\caption{Comparison of identical optimal positions within different sections (shaded).}
\label{fig:RowOptimalComp}
\end{figure}

From Figure~\ref{fig:RowOptimalComp} it is clear to see that if an optimal point we are comparing is located on the boundary of two sections, the upper section will always outperform the lower section. Therefore the remaining optima to consider are the middle and the bottom optima, as well as the optima for $V^+(b_1)$ touching the appropriate vertical boundaries of $\mathcal{P}$.

Now we ask when, if ever, it is better to locate in Section $k$ as opposed to Section $k+2$ for $k>0$, assuming that the resulting Voronoi cell of Black does not touch either vertical boundary of $\mathcal{P}$.

For even Sections $2l$, we know that the bottom optimum of Section $2(l+1)$ is the optimum over Section $2(l+1)$ if $\frac{(4(l+1)-3)p}{n} = \frac{(4l+1)p}{n} \geq q$, whereas the middle optimum of Section $2l$ is the optimum over Section $2l$ if $\frac{(4l-3)p}{n} \leq q \leq \frac{4lp}{n}$ so we must compare the area that the bottom optimum of Section $2(l+1)$ captures compared to that of the middle optimum of $2l$ when $\frac{(4l-3)p}{n} \leq q \leq \frac{4lp}{n}$:

\begin{equation*}
\begin{split}
    \left( \frac{(2(l+1)-1)pq}{2n} \right. &- \left. \frac{3(2(l+1)-1)((l+1)-1)p^2}{4n^2} \right) - \left(\frac{lpq}{3n} + \frac{(3-2l)lp^2}{12n^2} + \frac{q^2}{12}\right) \\
    &= \frac{(2l+1)pq}{2n} - \frac{3(2l+1)lp^2}{4n^2} - \frac{lpq}{3n} + \frac{(2l-3)lp^2}{12n^2} - \frac{q^2}{12} \\
    &= \frac{(4l+3)pq}{6n} - \frac{(4l+3)lp^2}{3n^2} - \frac{q^2}{12} \geq 0 \\
    &\Leftrightarrow q^2 - \frac{2(4l+3)p}{n}q + \frac{4(4l+3)lp^2}{n^2} \leq 0 \, .
%    &OR
%    &= \frac{(4l+3)pq}{6n} - \frac{(4l+3)lp^2}{3n^2} - \frac{q^2}{12} \\
%    &\geq \frac{(4l+3)pq}{6n} - \frac{(4l+3)lp^2}{3n^2} - \frac{16l^2p^2}{12n^2} \\
%    &= \frac{(4l+3)pq}{6n} - \frac{(8l+3)lp^2}{3n^2} > 0 \\
%    &\Rightarrow q > \frac{2(8l+3)lp}{(4l+3)n} \\
%    &OR \\
%    &= \frac{(4l+3)pq}{6n} - \frac{(4l+3)lp^2}{3n^2} - \frac{q^2}{12} \\
%    &\leq \frac{(4l+3)pq}{6n} - \frac{(4l+3)lp^2}{3n^2} - \frac{q^2}{12} \\
%    &= \frac{(4l+3)p}{3n}(\frac{q}{2} - \frac{lp}{n}) - \frac{q^2}{12} \, .
\end{split}
\end{equation*}
Now it is the case that
\begin{equation*}
\begin{split}
    q^2 - \frac{2(4l+3)p}{n}q + \frac{4(4l+3)lp^2}{n^2} &= (q - \frac{(4l+3)p}{n})^2 + \frac{4(4l+3)lp^2}{n^2} - (\frac{(4l+3)p}{n})^2 \\
    &= (q - \frac{(4l+3)p}{n})^2 - \frac{3(4l+3)p^2}{n^2} \leq 0 \\
    &\Leftrightarrow \frac{(4l+3)p}{n} - \frac{\sqrt{3(4l+3)}p}{n} \leq q \leq \frac{(4l+3)p}{n} + \frac{\sqrt{3(4l+3)}p}{n} \, .
\end{split}
\end{equation*}

Now it remains to find the intersection of $[\frac{(4l-3)p}{n}, \frac{4lp}{n}]$ (the values of $q$ for which the middle optimum is the optimum over Section $2l$) and $[\frac{(4l+3 - \sqrt{3(4l+3)})p}{n}, \frac{(4l+3 + \sqrt{3(4l+3)})p}{n}]$ (the values of $q$ for which the bottom optimum of Section $2(l+1)$ is better than the middle optimum of Section $2l$). It is clear that $\frac{4lp}{n} < \frac{(4l+3 + \sqrt{3(4l+3)})p}{n}$. More involved is the following calculation:
\begin{equation*}
\begin{split}
    \frac{(4l-3)p}{n} - \frac{(4l+3 - \sqrt{3(4l+3)})p}{n} &= \frac{(\sqrt{3(4l+3)} - 6)p}{n} \geq 0 \\
    &\Leftrightarrow \sqrt{3(4l+3)} - 6 \geq 0 \\
    &\Leftrightarrow 3(4l+3) \geq 36 \\
    &\Leftrightarrow l \geq \frac{9}{4} \, .
\end{split}
\end{equation*}

Therefore if $l \leq 2$ then the bottom optimum of Section $2(l+1)$ is better than the middle optimum of Section $2l$ for $\frac{(4l+3 - \sqrt{3(4l+3)})p}{n} \leq q$. Otherwise, if $l \geq 3$, the bottom optimum of Section $2(l+1)$ will always be better than the middle optimum of Section $2l$, and so we must compare the bottom optima of both sections when $\frac{(4l-3)p}{n} \geq q$:
\begin{equation*}
\begin{split}
    \left(\frac{(2(l+1)-1)pq}{2n} \right. &- \left. \frac{3(2(l+1)-1)((l+1)-1)p^2}{4n^2}\right) - \left(\frac{(2l-1)pq}{2n} - \frac{3(2l-1)(l-1)p^2}{4n^2}\right) \\
    &= \frac{(2l+1)pq}{2n} - \frac{3(2l+1)lp^2}{4n^2} - \frac{(2l-1)pq}{2n} + \frac{3(2l-1)(l-1)p^2}{4n^2} \\
    &= \frac{pq}{n} - \frac{3(4l-1)p^2}{4n^2} \geq 0 \\
    &\Leftrightarrow \frac{3(4l-1)p}{4n} \leq q\, .
\end{split}
\end{equation*}
Now $$\frac{3(4l-1)p}{4n} \leq \frac{(4l-3)p}{n} \Leftrightarrow 3(4l-1) \leq 4(4l-3) \Leftrightarrow l \geq \frac{9}{4}$$ so if $l \geq 3$ (the condition which requires us to compare these two local optima) then the bottom optimum of Section $2(l+1)$ is better than the bottom optimum of Section $2l$ for $\frac{3(4l-1)p}{4n} \leq q$.

Let us digest our findings, for which it may be more intuitive to describe the efficacy of each section's optima from Section $II$ upwards. These results are summarised in the following table.

\begin{table}[!ht]
\centering
\begin{tabular}{c|c|c|c}
Section & Optimum & Area & Condition \\
\hline
$II$ & $(x^*,0)$ & $\frac{pq}{2n}$ & $q \leq \frac{p}{n}$ \\
  & $(x^*,\frac{q}{3}-\frac{p}{3n})$ & $\frac{pq}{3n} + \frac{p^2}{12n^2} + \frac{q^2}{12}$ & $\frac{p}{n} \leq q \leq \frac{(7-\sqrt{21})p}{n}$ \\
$IV$ & $(x^*,\frac{p}{n})$ & $\frac{3pq}{2n} - \frac{9p^2}{4n^2}$ & $\frac{(7-\sqrt{21})p}{n} \leq q \leq \frac{5p}{n}$ \\
  & $(x^*,\frac{q}{3}-\frac{2p}{3n})$ & $\frac{2pq}{3n} - \frac{p^2}{6n^2} + \frac{q^2}{12}$ & $\frac{5p}{n} \leq q \leq \frac{(11-\sqrt{33})p}{n}$ \\
$VI$ & $(x^*,\frac{2p}{n})$ & $\frac{5pq}{2n} - \frac{15p^2}{2n^2}$ & $\frac{(11-\sqrt{33})p}{n} \leq q \leq \frac{33p}{4n}$ \\
$2l$ & $(x^*,(l-1)\frac{p}{n})$ & $\frac{(2l-1)pq}{2n} - \frac{3(2l-1)(l-1)p^2}{4n^2}$ & $\frac{3(4l-5)p}{4n} \leq q \leq \frac{3(4l-1)p}{4n}$ \\
% &  &  & $3 < l < \frac{n}{2}$ \\
\end{tabular}
\caption{Optima contained in even sections at $x^*=\frac{(2i-1)p}{2n}$ assuming that Black's Voronoi cell does not touch either vertical edge of $\mathcal{P}$.}
\label{tab:rowevensectionoptimals}
\end{table}

Now we must analogously explore the odd sections $2l+1$. The bottom optimum of Section $2(l+1)+1$ is the optimum of Section $2(l+1)+1$ if $q \leq \frac{(4(l+1)-1)p}{n} = \frac{(4l+3)p}{n}$ whereas the middle optimum of Section $2l+1$ is the optimum over Section $2l+1$ if $\frac{(4l-1)p}{n} \leq q \leq \frac{(4l+2)p}{n}$, so we must compare the area that the bottom optimum of Section $2(l+1)+1$ captures compared to that of the middle optimum of $2l+1$ when $\frac{(4l-1)p}{n} \leq q \leq \frac{(4l+2)p}{n}$:
\begin{equation*}
\begin{split}
    \left( \frac{(l+1)pq}{n} \right. &- \left. \frac{(3(l+1)-1)(l+1)p^2}{2n^2} \right) - \left(\frac{(2l+1)pq}{6n} - \frac{(2l^2+2l-1)p^2}{12n^2} + \frac{q^2}{12}\right) \\
    &= \frac{(l+1)pq}{n} - \frac{(3l+2)(l+1)p^2}{2n^2} - \frac{(2l+1)pq}{6n} + \frac{(2l^2+2l-1)p^2}{12n^2} - \frac{q^2}{12} \\
    &= \frac{(4l+5)pq}{6n} - \frac{(16 l^2 + 28 l + 13)p^2}{12n^2} - \frac{q^2}{12} \geq 0 \\
    &\Leftrightarrow q^2 - \frac{2(4l+5)pq}{n} + \frac{(16 l^2 + 28 l + 13)p^2}{n^2} \leq 0 \, .
\end{split}
\end{equation*}
Now it is the case that
\begin{equation*}
\begin{split}
    q^2 - \frac{2(4l+5)pq}{n} + \frac{(16 l^2 + 28 l + 13)p^2}{n^2} &= \left(q - \frac{(4l+5)p}{n}\right)^2 + \frac{(16 l^2 + 28 l + 13)p^2}{n^2} - \left( \frac{(4l+5)p}{n} \right)^2 \\
    &= \left(q - \frac{(4l+5)p}{n}\right)^2 - \frac{12(l+1)p^2}{n^2} \leq 0 \\
    &\Leftrightarrow \frac{(4l+5)p}{n} - \frac{2\sqrt{3(l+1)}p}{n} \leq q \leq \frac{(4l+5)p}{n} + \frac{2\sqrt{3(l+1)}p}{n} \, .
\end{split}
\end{equation*}

Now it remains to find the intersection of $\left[\frac{(4l-1)p}{n},\frac{(4l+2)p}{n}\right]$ (the values of $q$ for which the middle optimum is the optimum over Section $2l+1$) and $\left[\frac{(4l+5-2\sqrt{3(l+1)})p}{n}, \frac{(4l+5+2\sqrt{3(l+1)})p}{n}\right]$ (the values of $q$ for which the bottom optimum of Section $2(l+1)+1$ is better than the middle optimum of Section $2l+1$). It is clear that $\frac{(4l+2)p}{n} < \frac{(4l+5+2\sqrt{3(l+1)})p}{n}$. More involved is the following calculation:
\begin{equation*}
\begin{split}
    \frac{(4l-1)p}{n} - \frac{(4l+5-2\sqrt{3(l+1)})p}{n} &= \frac{(2\sqrt{3(l+1)}-6)p}{n} \geq 0 \\
    &\Leftrightarrow 2\sqrt{3(l+1)}-6 \geq 0 \\
    &\Leftrightarrow 3(l+1) \geq 9 \\
    &\Leftrightarrow l \geq 2 \, .
\end{split}
\end{equation*}

Therefore if $l = 1$ then the bottom optimum of Section $2(l+1) + 1$ is better than the middle optimum of Section $2l + 1$ for $\frac{(4l+5-2\sqrt{3(l+1)})p}{n} \leq q$. Otherwise, if $l \geq 2$, the bottom optimum of Section $2(l+1)+1$ will always be better than the middle optimum of Section $2l+1$, and so we must compare the bottom optima of both sections when $\frac{(4l-1)p}{n} \geq q$:

\begin{equation*}
\begin{split}
    \left( \frac{(l+1)pq}{n} \right. &- \left. \frac{(3(l+1)-1)(l+1)p^2}{2n^2} \right) - \left( \frac{lpq}{n} - \frac{(3l-1)lp^2}{2n^2} \right) \\
    &= \frac{(l+1)pq}{n} - \frac{(3l+2)(l+1)p^2}{2n^2} - \frac{lpq}{n} + \frac{(3l-1)lp^2}{2n^2} \\
    &= \frac{pq}{n} - \frac{(3l + 1)p^2}{n^2} \geq 0 \\
    &\Leftrightarrow \frac{(3l + 1)p}{n} \leq q\, .
\end{split}
\end{equation*}
Now $$\frac{(3l + 1)p}{n} \leq \frac{(4l-1)p}{n} \Leftrightarrow (3l + 1) \leq (4l-1) \Leftrightarrow l \geq 2$$ so if $l \geq 2$ (the condition which requires us to compare these two local optima) then the bottom optimum of Section $2(l+1)+1$ is better than the bottom optimum of Section $2l+1$ for $\frac{(3l + 1)p}{n} \leq q$.

In contrast to our analysis of the optima in even sections, we must compare Section $III$ with the outlier Section $I$ in order to discern when it is more fruitful to settle with the poor-quality Section $I$ optimum. We can do this simply by comparing the area from the bottom optimum in Section $III$ with the maximum area possible achieved in Section $I$:
$$\left( \frac{pq}{n} - \frac{p^2}{n^2} \right) - \frac{pq}{2n} = \frac{pq}{2n} - \frac{p^2}{n^2} \geq 0 \Leftrightarrow q \geq \frac{2p}{n} \, .$$
Thus, within odd sections we can only do better than $\frac{pq}{2n}$ if $\frac{2p}{n} \leq q$, otherwise it is preferable to locate in Section $I$.

Let us again digest our findings, summarised in Table~\ref{tab:rowoddsectionoptimals}.

\begin{table}[!ht]
\centering
\begin{tabular}{c|c|c|c}
Section & Optimum & Area & Condition \\
\hline
$I$ & $(*,0)$ & $\frac{pq}{2n}$ & $q \leq \frac{2p}{n}$ \\
$III$ & $(x^*,\frac{p}{2n})$ & $\frac{pq}{n} - \frac{p^2}{n^2}$ & $\frac{2p}{n} \leq q \leq \frac{3p}{n}$ \\
  & $(x^*,\frac{q}{3}-\frac{p}{2n})$ & $\frac{pq}{2n} - \frac{p^2}{4n^2} + \frac{q^2}{12}$ & $\frac{3p}{n} \leq q \leq \frac{(9-2\sqrt{6})p}{n}$ \\
$V$ & $(x^*,\frac{3p}{2n})$ & $\frac{2pq}{n} - \frac{5p^2}{n^2}$ & $\frac{(9-2\sqrt{6})p}{n} \leq q \leq \frac{7p}{n}$ \\
$2l+1$ & $(x^*,\frac{(2l-1)p}{2n})$ & $\frac{lpq}{n} - \frac{(3l-1)lp^2}{2n^2}$ & $\frac{(3l - 2)p}{n} \leq q \leq \frac{(3l + 1)p}{n}$ \\
\end{tabular}
\caption{Optima contained in odd sections at $x^*=\frac{ip}{n}$ assuming that Black's Voronoi cell does not touch either vertical edge of $\mathcal{P}$.}
\label{tab:rowoddsectionoptimals}
\end{table}

Having found the optimal locations within the set of even sections and the set of odd sections dependent on the ratio between $\frac{p}{n}$ and $q$, it remains to compare Table~\ref{tab:rowevensectionoptimals} and Table~\ref{tab:rowoddsectionoptimals}. We shall explore the global optima as $q$ increases, starting from the top of the tables and working our way down comparing areas across the tables each time $q$ increases so as to enter a new condition.

Beginning with $q \leq \frac{p}{n}$, both tables give the same maximal area of $\frac{pq}{2n}$ no matter whether locating in Section $I$ and $II$. However, this area can be improved if $\frac{p}{n} \leq q$ by playing the middle optimum of Section $II$ so it is no longer optimal to locate in Section $I$. The next condition occurs when $\frac{2p}{n} \leq q$ so we must compare the middle optimum of Section $II$ with the bottom optimum of Section $III$:

\begin{equation*}
\begin{split}
    \left( \frac{pq}{n} - \frac{p^2}{n^2} \right) - \left( \frac{pq}{3n} + \frac{p^2}{12n^2} + \frac{q^2}{12} \right) &= \frac{pq}{n} - \frac{p^2}{n^2} - \frac{pq}{3n} - \frac{p^2}{12n^2} - \frac{q^2}{12} \\
    &= \frac{2pq}{3n} - \frac{13p^2}{12n^2} - \frac{q^2}{12} \geq 0 \\
    &\Leftrightarrow q^2 - \frac{8pq}{n} + \frac{13p^2}{n^2} \leq 0 \\
%    &\Leftrightarrow \left(q - \frac{4p}{n}\right)^2 + \frac{13p^2}{n^2} - \left( \frac{4p}{n} \right)^2 \leq 0 \\
    &\Leftrightarrow \left(q - \frac{4p}{n}\right)^2 - \frac{3p^2}{n^2} \leq 0 \\
    &\Leftrightarrow \frac{4p}{n} - \frac{\sqrt{3}p}{n} \leq q \leq \frac{4p}{n} + \frac{\sqrt{3}p}{n} \, .
\end{split}
\end{equation*}
Since $\frac{2p}{n} < \frac{(4 - \sqrt{3})p}{n} < \frac{(7 - \sqrt{21})p}{n} < \frac{(4 + \sqrt{3})p}{n}$, the bottom optimum of Section $III$ is better than the middle optimum of Section $II$ for $\frac{(4 - \sqrt{3})p}{n} \leq q$ as long as the middle optimum of Section $II$ is the optimal location within even sections.

The subsequent condition to be met as $q$ increases is when $\frac{(7-\sqrt{21})p}{n} \leq q$ and we must compare the bottom optimum of Section $IV$ to the bottom optimum of Section $III$:
\begin{equation*}
\begin{split}
    \left( \frac{3pq}{2n} - \frac{9p^2}{4n^2} \right) - \left( \frac{pq}{n} - \frac{p^2}{n^2} \right) &= \frac{3pq}{2n} - \frac{9p^2}{4n^2} - \frac{pq}{n} + \frac{p^2}{n^2} \\
    &= \frac{pq}{2n} - \frac{5p^2}{4n^2} \geq 0 \\
    &\Leftrightarrow  q \geq \frac{5p}{2n} \, .
\end{split}
\end{equation*}
The bottom optimum of Section $IV$ is better than the bottom optimum of Section $III$ for $\frac{5p}{2n} \leq q$ and since $\frac{(7-\sqrt{21})p}{n} < \frac{5p}{2n} < \frac{3p}{n}$ it is the global optimum at least until $q = \frac{3p}{n}$.

Subsequently, for $\frac{3p}{n} \leq q \leq \frac{(9-2\sqrt{6})p}{n}$ the optimum in odd sections becomes the middle optimum of Section $III$ so we must compare this to the bottom optimum of Section $IV$:
\begin{equation*}
\begin{split}
    \left( \frac{pq}{2n} - \frac{p^2}{4n^2} + \frac{q^2}{12} \right) - \left( \frac{3pq}{2n} - \frac{9p^2}{4n^2} \right) &= \frac{pq}{2n} - \frac{p^2}{4n^2} + \frac{q^2}{12} - \frac{3pq}{2n} + \frac{9p^2}{4n^2} \\
    &= \frac{q^2}{12} - \frac{pq}{n} + \frac{2p^2}{n^2} \geq 0 \\
    &\Leftrightarrow q^2 - \frac{12pq}{n} + \frac{24p^2}{n^2} \geq 0 \\
    &\Leftrightarrow \left(q - \frac{6p}{n}\right)^2 - \frac{12p^2}{n^2} \geq 0 \\
    &\Leftrightarrow q \leq \frac{6p}{n} - \frac{2\sqrt{3}p}{n} \text{ or } q \geq \frac{6p}{n} + \frac{2\sqrt{3}p}{n} \, .
\end{split}
\end{equation*}
Since $\frac{(6-2\sqrt{3})p}{n} < \frac{3p}{n} < \frac{(9-2\sqrt{6})p}{n} < \frac{(6+2\sqrt{3})p}{n}$ the middle optimum of Section $III$ never beats the bottom optimum of Section $IV$. Given this fact, we know also that the middle optimum of Section $IV$ (which beats the bottom optimum of Section $IV$ at $\frac{5p}{n} \leq q$) beats the middle optimum of Section $III$.

This, when $\frac{(9-2\sqrt{6})p}{n} \leq q$, brings us to the comparison of the bottom optimum of Section $V$ and the middle optimum of Section $IV$:

\begin{equation*}
\begin{split}
    \left( \frac{2pq}{n} - \frac{5p^2}{n^2} \right) - \left( \frac{2pq}{3n} - \frac{p^2}{6n^2} + \frac{q^2}{12} \right) &= \frac{2pq}{n} - \frac{5p^2}{n^2} - \frac{2pq}{3n} + \frac{p^2}{6n^2} - \frac{q^2}{12} \\
    &= \frac{4pq}{3n} - \frac{29p^2}{6n^2} - \frac{q^2}{12} \geq 0 \\
    &\Leftrightarrow q^2 - \frac{16pq}{n} + \frac{58p^2}{n^2} \leq 0 \\
    &\Leftrightarrow \left(q - \frac{8p}{n}\right)^2 - \frac{6p^2}{n^2} \leq 0 \\
    &\Leftrightarrow \frac{8p}{n} - \frac{\sqrt{6}p}{n} \leq q \leq \frac{8p}{n} + \frac{\sqrt{6}p}{n} \, .
\end{split}
\end{equation*}
Since $\frac{(11-\sqrt{33})p}{n}$, the value of $q$ at which the middle optimum of Section $IV$ is no longer optimal for even sections, is less than $\frac{(8-\sqrt{6})p}{n}$, the middle optimum of Section $IV$ is the global optimum for its whole range, and for $\frac{(11-\sqrt{33})p}{n} \leq q$ the bottom optimum of $VI$ (the subsequent optimum in even sections) is the next global optimum. We must therefore compare the bottom optimum of Section $V$ to the bottom optimum of Section $VI$:
\begin{equation*}
\begin{split}
    \left( \frac{2pq}{n} - \frac{5p^2}{n^2} \right) - \left( \frac{5pq}{2n} - \frac{15p^2}{2n^2} \right) &= \frac{2pq}{n} - \frac{5p^2}{n^2} - \frac{5pq}{2n} + \frac{15p^2}{2n^2} \\
    &= \frac{5p^2}{2n^2} - \frac{pq}{2n} \geq 0 \\
    &\Leftrightarrow q \leq \frac{5p}{n} \, .
\end{split}
\end{equation*}
Since $\frac{5p}{n} < \frac{(11-\sqrt{33})p}{n} \leq q$, the bottom optimum of Section $V$ is never better than the bottom optimum of Section $VI$.

At this point, since the condition values are now our general $\frac{3(4l-5)p}{4n}$ and $\frac{(3l - 2)p}{n}$ values (for even and odd sections respectively), we have compared all of the necessary initial areas and can compare the general bottom optima of Section $2l$ and Section $2l+1$ (and of Section $2l+1$ and Section $2(l+1)$) for $l \geq 3$.

At $\frac{(3l - 2)p}{n} \leq q$ we must compare the bottom optima of Section $2l+1$ with that of Section $2l$:
\begin{equation*}
\begin{split}
    \left( \frac{lpq}{n} - \frac{(3l-1)lp^2}{2n^2} \right) &- \left( \frac{(2l-1)pq}{2n} - \frac{3(2l-1)(l-1)p^2}{4n^2} \right) \\
    &= \frac{lpq}{n} - \frac{(3l-1)lp^2}{2n^2} - \frac{(2l-1)pq}{2n} + \frac{3(2l-1)(l-1)p^2}{4n^2} \\
    &= \frac{pq}{2n} - \frac{(7l-3)p^2}{4n^2} \geq 0 \\
    &\Leftrightarrow \frac{(7l-3)p}{2n} \leq q \, .
\end{split}
\end{equation*}
Since $\frac{3(4l-1)p}{4n}$, the value of $q$ at which the bottom optimum of Section $2l$ is bested by the bottom optimum of Section $2(l+1)$, is less than $\frac{(7l-3)p}{2n}$ (because $3(4l-1) \geq 2(7l-3) \Leftrightarrow l \leq \frac{3}{2}$), the bottom optimum of Section $2l+1$ never beats the bottom optimum of Section $2l$ while this is the global optimum, and we must compare the bottom optimum of Section $2l+1$ with the bottom optimum of Section $2(l+1)$ when $\frac{3(4l-1)p}{4n} \leq q$:

\begin{equation*}
\begin{split}
    \left( \frac{lpq}{n} - \frac{(3l-1)lp^2}{2n^2} \right) &- \left( \frac{(2(l+1)-1)pq}{2n} - \frac{3(2(l+1)-1)((l+1)-1)p^2}{4n^2} \right) \\
    &= \frac{lpq}{n} - \frac{(3l-1)lp^2}{2n^2} - \frac{(2l+1)pq}{2n} + \frac{3(2l+1)lp^2}{4n^2} \\
    &= - \frac{pq}{2n} + \frac{5lp^2}{4n^2} \geq 0 \\
    &\Leftrightarrow q \leq \frac{5lp}{2n} \, .
\end{split}
\end{equation*}
However, $\frac{5lp}{2n} < \frac{3(4l-1)p}{4n}$ (because $\frac{5l}{2} \geq \frac{3(4l-1)}{4} \Leftrightarrow l \leq \frac{3}{2}$ so the bottom optimum of Section $2l+1$ is never better than the bottom optimum of Section $2(l+1)$).

Thus we have determined, for every possible proportion of $p$ and $q$, all of the optimal locations of Black's point given that Black's Voronoi cell does not touch either vertical edge of $\mathcal{P}$. These are shown in Table~\ref{tab:rowoptimals}.
\begin{table}[!ht]
\centering
\begin{tabular}{c|c|c|c}
Section & Optimum & Area & Condition \\
\hline
$I$ & $(*,0)$ & $\frac{pq}{2n}$ & $q \leq \frac{p}{n}$ \\
$II$ & $(x^*,\frac{q}{3}-\frac{p}{3n})$ & $\frac{pq}{3n} + \frac{p^2}{12n^2} + \frac{q^2}{12}$ & $\frac{p}{n} \leq q \leq \frac{(4 - \sqrt{3})p}{n}$ \\
$III$ & $(x^*+\frac{p}{2n},\frac{p}{2n})$ & $\frac{pq}{n} - \frac{p^2}{n^2}$ & $\frac{(4 - \sqrt{3})p}{n} \leq q \leq \frac{5p}{2n}$ \\
$IV$ & $(x^*,\frac{p}{n})$ & $\frac{3pq}{2n} - \frac{9p^2}{4n^2}$ & $\frac{5p}{2n} \leq q \leq \frac{5p}{n}$ \\
$IV$ & $(x^*,\frac{q}{3}-\frac{2p}{3n})$ & $\frac{2pq}{3n} - \frac{p^2}{6n^2} + \frac{q^2}{12}$ & $\frac{5p}{n} \leq q \leq \frac{(11-\sqrt{33})p}{n}$ \\
$VI$ & $(x^*,\frac{2p}{n})$ & $\frac{5pq}{2n} - \frac{15p^2}{2n^2}$ & $\frac{(11-\sqrt{33})p}{n} \leq q \leq \frac{33p}{4n}$ \\
$2l$ & $(x^*,(l-1)\frac{p}{n})$ & $\frac{(2l-1)pq}{2n} - \frac{3(2l-1)(l-1)p^2}{4n^2}$ & $\frac{3(4l-5)p}{4n} \leq q \leq \frac{3(4l-1)p}{4n}$ \\
\end{tabular}
\caption{Optima within one quarter cell of $V^\circ(w_i)$ at $x^*=\frac{ip}{2n}$ assuming that Black's Voronoi cell does not touch either vertical edge of $\mathcal{P}$.}
\label{tab:rowoptimals}
\end{table}

Finally we must determine Black's best point in the case that Black's Voronoi cell may touch a vertical edge of $\mathcal{P}$, and for this we will explore the cases of the parity of $n$ separately.

\paragraph{$n$ even}
If $n$ is even then we are investigating the top right quadrant of $V^\circ(w_{\frac{n}{2}})$ and so concern ourselves with $x^* = \frac{(n-1)p}{2n}$ upon which the optima of (all but one) even sections lie, and with $x^* = \frac{p}{2}$ upon which the optima of odd sections lie, and with $y = x - \frac{p}{2n}$ upon which the optima of Section $\frac{n}{2}$ lie (for reference, these are the optima depicted in Figures~\ref{fig:RowOptimalLeftTouchingIVc} to \ref{fig:RowOptimalLeftTouchingIVa}).

On $x^* = \frac{(n-1)p}{2n}$, the Voronoi cells of points in Sections $II$ to $n-2$ will not touch either vertical boundary of $\mathcal{P}$ and Section $n$ will touch the leftmost boundary of $\mathcal{P}$. On $x^* = \frac{p}{2}$, Sections $I$ to $n-1$ will not touch either vertical boundary of $\mathcal{P}$ and Section $n+1$ is the final section, touching both vertical edges of $\mathcal{P}$.

Therefore we need to compare the optima over Section $n$ (shown in Table~\ref{tab:rowsectionnleftoptimals}) and Section $n+1$ (the optimum $(\frac{p}{2n},\frac{(n-1)p}{2n})$ achieves $\frac{pq}{2} - \frac{3(n-1)np^2}{8n^2}$) with appropriate optima in Table~\ref{tab:rowoptimals}.
\begin{table}[!ht]
\centering
\begin{tabular}{c|c|c|c}
Section & Optimum & Area & Condition \\
\hline
$n$ & $(\frac{(n-1)p}{2n},\frac{(n-2)p}{2n})$ & $\frac{(n-1)pq}{2n} - \frac{(3n^2 - 9n + 6)p^2}{8n^2}$ & $q \leq \frac{(2n-3)p}{n}$ \\
 & $(\frac{p}{4n} + \frac{q}{4},\frac{q}{4} - \frac{p}{4n})$ & $\frac{(2n-1)pq}{8n} - \frac{(2n^2-6n+3)p^2}{16n^2} + \frac{q^2}{16}$ & $\frac{(2n-3)p}{n} \leq q \leq \frac{(2n-1)p}{n}$ \\
 & $(\frac{p}{2},\frac{(n-1)p}{2n})$ & $\frac{(2n-1)pq}{4n} - \frac{(3n^2-5n+2)p^2}{8n^2}$ & $\frac{(2n-1)p}{n} \leq q$ \\
\end{tabular}
\caption{Optima within Section $n$ of the top right quadrant of $V^\circ(w_\frac{n}{2})$.}
\label{tab:rowsectionnleftoptimals}
\end{table}

Within the calculations for general Sections $2l$ and $2l+1$ leading to Table~\ref{tab:rowoptimals}, we compared the bottom optima of Section $2l$ to $2l+1$ and of Section $2l+1$ to $2(l+1)$. It is useful to note that while the Voronoi cell of the `bottom' optimum of Section $n$ does share a border with the leftmost boundary of $\mathcal{P}$, this bordering does not remove any area since the boundary of $\mathcal{P}$ is incident on the perimeter of the Voronoi cell (i.e. the point captures the same area as the equivalent point in Section $n$ which does not touch the leftmost boundary of $\mathcal{P}$). Since the `bottom' optimum of Section $n$ acts as if it does not touch either vertical boundary of $\mathcal{P}$, we can use all of the results from Table~\ref{tab:rowoptimals} and check the optima in Section $n$ against that of Section $n+1$ (along with some checks for small $n$).

Firstly comparing the optima of Section $n$ and Section $n+1$, by an identical argument to the one shown in Figure~\ref{fig:RowOptimalComp}, the `top' optimum of Section $n$ is beaten by the optimum in Section $n+1$ (simply because they lie in the same location with Section $n+1$ lying on a more preferential side of $\mathcal{CC}^3(w_n)$). We should, however, compare the optimum of Section $n+1$ with the `middle' optimum of Section $n$:
\begin{equation*}
\begin{split}
    \left( \frac{pq}{2} - \frac{3(n-1)np^2}{8n^2} \right) &- \left( \frac{(2n-1)pq}{8n} - \frac{(2n^2-6n+3)p^2}{16n^2} + \frac{q^2}{16} \right) \\
    &= \frac{pq}{2} - \frac{3(n-1)np^2}{8n^2} - \frac{(2n-1)pq}{8n} + \frac{(2n^2-6n+3)p^2}{16n^2} - \frac{q^2}{16} \\
    &= \frac{(2n+1)pq}{8n} - \frac{(4n^2 - 3)p^2}{16n^2} - \frac{q^2}{16} \geq 0 \\
    &\Leftrightarrow q^2 - \frac{2(2n+1)pq}{n} + \frac{(4n^2 - 3)p^2}{n^2} \leq 0 \\
    &\Leftrightarrow \left(q - \frac{(2n+1)p}{n}\right)^2 - \frac{4(n+1)p^2}{n^2} \leq 0 \\
    &\Leftrightarrow \frac{(2n+1)p}{n} - \frac{2\sqrt{(n+1)}p}{n} \leq q \leq \frac{(2n+1)p}{n} + \frac{2\sqrt{(n+1)}p}{n} \, .
\end{split}
\end{equation*}
Now % \Leftrightarrow 2n-3 < 2n+1 - 2\sqrt{(n+1)}
$$\frac{(2n-3)p}{n} < \frac{(2n+1 - 2\sqrt{(n+1)})p}{n} \Leftrightarrow 2\sqrt{(n+1)} < 4 \Leftrightarrow n < 3$$
so if $n=2$ then the `middle' optimum of Section $n$ is better than the optimum over Section $n+1$ if $q \leq \frac{(2n+1 - 2\sqrt{(n+1)})p}{n}$. In this case, the remaining comparisons are very straightforward as the only sections existing when $n=2$ are Sections $I$, $II$, and $III$ so we can record the optima straightforwardly, as displayed in Table~\ref{tab:rowsectionIIOptimals}.
\begin{table}[!ht]
\centering
\begin{tabular}{c|c|c|c}
Section & $b^*$ & Area & Condition \\
\hline
$I$ & $(*,0)$ & $\frac{pq}{2n}$ & $q \leq \frac{p}{2}$ \\
$II$ & $(\frac{p}{8} + \frac{q}{4},\frac{q}{4} - \frac{p}{8})$ & $\frac{3pq}{16} + \frac{p^2}{64} + \frac{q^2}{16}$ & $\frac{p}{2} \leq q \leq \frac{(5-2\sqrt{3})p}{2}$ \\
$III$ & $(\frac{p}{2},\frac{p}{4})$ & $\frac{pq}{2} - \frac{3p^2}{16}$ & $\frac{(5-2\sqrt{3})p}{2} \leq q$ \\
\end{tabular}
\caption{The best point $b^*$ for $n=2$.}
\label{tab:rowsectionIIOptimals}
\end{table}

If $n \neq 2$ then the optimum in Section $n+1$ is always better than the `middle' optimum of Section $n$. This leads us to the comparison of the `bottom' optimum of Section $n$ and the optimum of Section $n+1$:
\begin{equation*}
\begin{split}
    \left( \frac{pq}{2} - \frac{3(n-1)np^2}{8n^2} \right) &- \left( \frac{(n-1)pq}{2n} - \frac{(3n^2 - 9n + 6)p^2}{8n^2} \right) \\
    &= \frac{pq}{2} - \frac{3(n-1)np^2}{8n^2}-\frac{(n-1)pq}{2n} + \frac{(3n^2 - 9n + 6)p^2}{8n^2} \\
    &= \frac{pq}{2n} - \frac{3(n-1)p^2}{4n^2} \geq 0 \\
    &\Leftrightarrow \frac{3(n-1)p}{2n} \leq q
\end{split}
\end{equation*}
%Now we know from the calculations contributing to Table~\ref{tab:rowoptimals} that the bottom optimum of Section $2l$ becomes preferable to the bottom optimum of Section $2(l-1)$ when $\frac{3(4l-5)p}{4n} \leq q$. Now for $l=\frac{n}{2}$ this is when $\frac{3(2n-5)p}{4n} \leq q$ and $\frac{3(2n-5)p}{4n} < \frac{3(n-1)p}{2n}$ so we need not compare
and, for a sanity check, the comparison of the `bottom' optimum of Section $n$ and the bottom optimum of Section $n-2$ (note that Section $n-2$ may not always exist and we shall discuss these finer details shortly):

\begin{equation*}
\begin{split}
    \left( \frac{(n-1)pq}{2n} \right. &- \left. \frac{(3n^2 - 9n + 6)p^2}{8n^2} \right) - \left( \frac{((n-2)-1)pq}{2n} - \frac{3((n-2)-1)((\frac{n-2}{2}-1)p^2}{4n^2} \right) \\
    &= \frac{(n-1)pq}{2n} - \frac{(3n^2 - 9n + 6)p^2}{8n^2} - \frac{(n-3)pq}{2n} + \frac{3(n-3)(n-4)p^2}{8n^2} \\
    &= \frac{pq}{n} - \frac{3(2 n - 5)p^2}{4n^2} \geq 0 \\
    &\Leftrightarrow \frac{3(2n-5)p}{4n} \leq q
\end{split}
\end{equation*}
(the identical condition for Section $2l$ and $2(l+1)$ where Section $2(l+1)$ does not touch either vertical boundaries of $\mathcal{P}$ as expected). Finally, checking that $\frac{3(2n-5)p}{4n} < \frac{3(n-1)p}{2n}$ confirms that the optimum in Section $n+1$ is better than the `bottom' optimum in Section $n$ when this optimum is better than the bottom optimum of Section $n-2$, so we need not compare the optima of Section $n+1$ and $n-2$.

Now, for $l>1$, the bottom optimum in Section $2l$ was always found to be optimal within some range of $q$ in Table~\ref{tab:rowoptimals} as, since the `bottom' optimum in Section $n$ is identical to the general bottom optimum in Section $2l$, it is simply true that we can use all of the results summarised in Table~\ref{tab:rowoptimals} until Section $n$ at which point we use the results we have most recently found. Hence the best points $b^*$ for every even $n \neq 2$ are recorded in Table~\ref{tab:rowevensectionOptimals}.

\begin{table}[!ht]
\centering
\begin{tabular}{c|c|c|c}
Section & Optimum & Area & Condition \\
\hline
$I$ & $(*,0)$ & $\frac{pq}{2n}$ & $q \leq \frac{p}{n}$ \\
$II$ & $(\frac{(n-1)p}{2n},\frac{q}{3}-\frac{p}{3n})$ & $\frac{pq}{3n} + \frac{p^2}{12n^2} + \frac{q^2}{12}$ & $\frac{p}{n} \leq q \leq \frac{(4 - \sqrt{3})p}{n}$ \\
$III$ & $(\frac{p}{2},\frac{p}{2n})$ & $\frac{pq}{n} - \frac{p^2}{n^2}$ & $\frac{(4 - \sqrt{3})p}{n} \leq q \leq \frac{5p}{2n}$ \\
$IV$ & $(\frac{(n-1)p}{2n},\frac{p}{n})$ & $\frac{3pq}{2n} - \frac{9p^2}{4n^2}$ & $\frac{5p}{2n} \leq q \leq \frac{5p}{n}$ \\
$IV$ & $(\frac{(n-1)p}{2n},\frac{q}{3}-\frac{2p}{3n})$ & $\frac{2pq}{3n} - \frac{p^2}{6n^2} + \frac{q^2}{12}$ & $\frac{5p}{n} \leq q \leq \frac{(11-\sqrt{33})p}{n}$ \\
$VI$ & $(\frac{(n-1)p}{2n},\frac{2p}{n})$ & $\frac{5pq}{2n} - \frac{15p^2}{2n^2}$ & $\frac{(11-\sqrt{33})p}{n} \leq q \leq \frac{33p}{4n}$ \\
$2l$ & $(\frac{(n-1)p}{2n},(l-1)\frac{p}{n})$ & $\frac{(2l-1)pq}{2n} - \frac{3(2l-1)(l-1)p^2}{4n^2}$ & $\frac{3(4l-5)p}{4n} \leq q \leq \frac{3(4l-1)p}{4n}$ \\
$n$ & $(\frac{(n-1)p}{2n},\frac{(n-2)p}{2n})$ & $\frac{(n-1)pq}{2n} - \frac{(3n^2 - 9n + 6)p^2}{8n^2}$ & $\frac{3(2n-5)p}{4n} \leq q \leq \frac{3(n-1)p}{2n}$ \\
$n+1$ & $(\frac{p}{2n},\frac{(n-1)p}{2n})$ & $\frac{pq}{2} - \frac{3(n-1)np^2}{8n^2}$ & $\frac{3(n-1)p}{2n} \leq q$ \\
\end{tabular}
\caption{The best point $b^*$ for even $n \neq 2$.}
\label{tab:rowevensectionOptimals}
\end{table}

\paragraph{$n$ odd}
If $n$ is odd then we are investigating the top right quadrant of $V^\circ(w_{\frac{n+1}{2}})$ and so concern ourselves with $x^* = \frac{p}{2}$ upon which the optima of all even sections lie, and with $x^* = \frac{(n+1)p}{2n}$ upon which the optima of (all but one) odd sections lie, and with $y=\frac{(n-i)p}{n} - x$ upon which the optima of Section $i$ lie (for reference, these are the optima depicted in Figures~\ref{fig:RowOptimalRightTouchingIIIc} to \ref{fig:RowOptimalRightTouchingIIIa}).

On $x^* = \frac{p}{2}$, the Voronoi cells of points in Sections $II$ to $n-1$ will not touch either vertical boundaries of $\mathcal{P}$, and Section $n+1$ is the final section, touching both vertical edges of $\mathcal{P}$. On $x^* = \frac{(n+1)p}{2n}$, the Voronoi cells of points in Sections $I$ to $n-2$ will not touch either vertical boundaries of $\mathcal{P}$ and Section $n$ will touch the rightmost boundary of $\mathcal{P}$.

Therefore, as before, we need to compare the optima over Section $n$ (shown in Table~\ref{tab:rowsectionnrightoptimals}) and Section $n+1$ (the optimum $(\frac{p}{2},\frac{(n-1)p}{2n})$ achieves $\frac{pq}{2} - \frac{3(n-1)np^2}{8n^2}$) with appropriate optima in Table~\ref{tab:rowoptimals}.
\begin{table}[!ht]
\centering
\begin{tabular}{c|c|c|c}
Section & Optimum & Area & Condition \\
\hline
$n$ & $(\frac{(n+1)p}{2n},\frac{(n-2)p}{2n})$ & $\frac{(n-1)pq}{2n} - \frac{3(n-2)(n-1)p^2}{8n^2}$ & $q \leq \frac{(2n-3)p}{n}$ \\
 & $(\frac{(4n-1)p}{4n} - \frac{q}{4},-\frac{p}{4n} + \frac{q}{4})$ & $\frac{(2n-1)pq}{8n} - \frac{(2n^2-6n+3)p^2}{16n^2} + \frac{q^2}{16}$ & $\frac{(2n-3)p}{n} \leq q \leq \frac{(2n-1)p}{n}$ \\
 & $(\frac{p}{2},\frac{(n-1)p}{2n})$ & $\frac{(2n-1)pq}{4n} - \frac{(3n-2)(n-1)p^2}{8n^2}$ & $\frac{(2n-1)p}{n} \leq q$ \\
\end{tabular}
\caption{Optima within Section $n$ of the top right quadrant of $V^\circ(w_\frac{n+1}{2})$.}
\label{tab:rowsectionnrightoptimals}
\end{table}

Now the only optimum within an odd section (ignoring Section $I$) to be a global optimal point for Voronoi cells not touching either vertical boundary of $\mathcal{P}$ is the bottom optimum of Section $III$. Since the areas achieved by the optima in Section $n$ for Voronoi cells touching the rightmost vertical boundary of $\mathcal{P}$ are no greater than the areas achieved by the optima of Section $n$ for Voronoi cells that touch neither vertical boundary of $\mathcal{P}$ and the latter are not global optima unless $n=3$, the optima of our Section $n$ will never be global optima unless $n=3$. Therefore we only need consider the optima in Section $n$ if $n=3$, and by the argument from Figure~\ref{fig:RowOptimalComp} we need never consider the `top' optima of Section $n$. Moreover, we can avoid all further calculations by noting that the bottom optimum of Section $III$ and of Section $IV$ in Table~\ref{tab:rowoptimals} (two consecutive global optima) are, respectively, exactly the same location and achieve exactly the same area as the `bottom' optimum of Section $III$ in the case that the Voronoi cells touch the rightmost boundary of $\mathcal{P}$, and the optimum of Section $IV$ in the case that the Voronoi cells touch both vertical boundaries of $\mathcal{P}$. Therefore we already know our global optima and these are displayed in Table~\ref{tab:rowsectionIIIrightoptimals}.
\begin{table}[!ht]
\centering
\begin{tabular}{c|c|c|c}
Section & Optimum & Area & Condition \\
\hline
$I$ & $(*,0)$ & $\frac{pq}{2n}$ & $q \leq \frac{p}{n}$ \\
$II$ & $(\frac{p}{2},\frac{q}{3}-\frac{p}{3n})$ & $\frac{pq}{3n} + \frac{p^2}{12n^2} + \frac{q^2}{12}$ & $\frac{p}{n} \leq q \leq \frac{(4 - \sqrt{3})p}{n}$ \\
$III$ & $(\frac{(n+1)p}{2n},\frac{p}{2n})$ & $\frac{pq}{n} - \frac{p^2}{n^2}$ & $\frac{(4 - \sqrt{3})p}{n} \leq q \leq \frac{5p}{2n}$ \\
$IV$ & $(\frac{p}{2},\frac{p}{3})$ & $\frac{pq}{2} - \frac{p^2}{4}$ & $\frac{5p}{2n} \leq q$ \\
\end{tabular}
\caption{The best point $b^*$ for $n=3$.}
\label{tab:rowsectionIIIrightoptimals}
\end{table}

\iffalse
Tackling this case first, if $n=3$ then we need to compare the Section $III$ optima (for Voronoi cells touching the rightmost boundary of $\mathcal{P}$) with the optimum for Section $IV$ (for Voronoi cells touching both vertical boundaries of $\mathcal{P}$). Firstly we will look at the optimum of Section $IV$ versus the `middle' optimum of Section $III$:
\begin{equation*}
\begin{split}
    \left( \frac{pq}{2} - \frac{p^2}{4} \right) - \left(\frac{5pq}{24} - \frac{p^2}{48} + \frac{q^2}{16} \right) &= \frac{7pq}{24} - \frac{11p^2}{48} - \frac{q^2}{16} \geq 0 \\
    &\Leftrightarrow q^2 - \frac{14pq}{3} + \frac{11p^2}{3} \leq 0 \\
    &\Leftrightarrow (q - \frac{7p}{3})^2 - \frac{16p^2}{9} \leq 0 \\
    &\Leftrightarrow p \leq q \leq \frac{11p}{3} \, .
\end{split}
\end{equation*}
Now $p = p \left( = \frac{(2n-3)p}{n} \right) < \frac{5p}{3} \left( = \frac{(2n-1)p}{n} \right) < \frac{11p}{3}$ so the optimum of Section $IV$ is preferential to the `middle' optimum of Section $III$ whenever this is the optimum over Section $III$. This brings us to compare the optimum of Section $IV$ with the `bottom' optimum of Section $III$.
\fi

The ideas here can also be transferred to the $n \neq 3$ case. All that remains is to compare the optima found in Table~\ref{tab:rowoptimals} to the optimum of Section $n+1$, yet once we realise that the optimum of Section $n+1$ touching both vertical boundaries of $\mathcal{P}$ is identical in location and area to the bottom optimum of Section $n+1$, touching neither vertical boundary of $\mathcal{P}$ which is the global optimum for cells not touching either vertical boundary, we know that we have already found the hierarchy of optima and we can simply copy the results from Table~\ref{tab:rowoptimals} (and these also hold true for $n=3$). Hence the best points $b^*$ for every odd $n$ are recorded in Table~\ref{tab:rowoddsectionOptimals}.

\begin{table}[!ht]
\centering
\begin{tabular}{c|c|c|c}
Section & Optimum & Area & Condition \\
\hline
$I$ & $(*,0)$ & $\frac{pq}{2n}$ & $q \leq \frac{p}{n}$ \\
$II$ & $(\frac{p}{2},\frac{q}{3}-\frac{p}{3n})$ & $\frac{pq}{3n} + \frac{p^2}{12n^2} + \frac{q^2}{12}$ & $\frac{p}{n} \leq q \leq \frac{(4 - \sqrt{3})p}{n}$ \\
$III$ & $(\frac{(n+1)p}{2n},\frac{p}{2n})$ & $\frac{pq}{n} - \frac{p^2}{n^2}$ & $\frac{(4 - \sqrt{3})p}{n} \leq q \leq \frac{5p}{2n}$ \\
$IV$ & $(\frac{p}{2},\frac{p}{n})$ & $\frac{3pq}{2n} - \frac{9p^2}{4n^2}$ & $\frac{5p}{2n} \leq q \leq \frac{5p}{n}$ \\
$IV$ & $(\frac{p}{2},\frac{q}{3}-\frac{2p}{3n})$ & $\frac{2pq}{3n} - \frac{p^2}{6n^2} + \frac{q^2}{12}$ & $\frac{5p}{n} \leq q \leq \frac{(11-\sqrt{33})p}{n}$ \\
$VI$ & $(\frac{p}{2},\frac{2p}{n})$ & $\frac{5pq}{2n} - \frac{15p^2}{2n^2}$ & $\frac{(11-\sqrt{33})p}{n} \leq q \leq \frac{33p}{4n}$ \\
$2l$ & $(\frac{p}{2},(l-1)\frac{p}{n})$ & $\frac{(2l-1)pq}{2n} - \frac{3(2l-1)(l-1)p^2}{4n^2}$ & $\frac{3(4l-5)p}{4n} \leq q \leq \frac{3(4l-1)p}{4n}$ \\
$n+1$ & $(\frac{p}{2},\frac{(n-1)p}{2n})$ & $\frac{pq}{2} - \frac{3(n-1)np^2}{8n^2}$ & $\frac{3(2n-3)p}{4n} \leq q$ \\
\end{tabular}
\caption{The best point $b^*$ for odd $n$.}
\label{tab:rowoddsectionOptimals}
\end{table}

And thus we have found all of the best points $b^*$ in $\mathcal{P}$ for every combination of $p$, $q$, and $n$.

\subsection{Black's best arrangement}
\label{sec:BlackRowArrangement}

As interesting as Black's best point may be, Black must also consider the placement of their other $n-1$ points, and the best single point may actually be a poor placement when considering where to place Black's remaining points.

On top of the relationship between $\frac{p}{n}$ and $q$ restricting what sections are available within which to place Black's points, the idea of modelling the interaction between different black points and consideration of where a black cell would steal from another black cell, thereby reducing the effectiveness of their placement, fills the writer with fear. However, we can learn something from the optimisation of $b_1$ within every possible partition of $\mathcal{P}$, as depicted in Figure \ref{fig:RowOptimals} and the tables of Black's best single points.

Making use of the work summarised in Table~\ref{tab:rowoptimals} we can achieve crude upper bounds on the area that Black can steal with the na\"{i}ve suggestion that Black manages to steal the area of their best point for each point. This supposition is not so crazy for low values of $q$; it is clear that Black can steal a maximum of $\frac{pq}{2}$ if $\frac{p}{n} \geq q$ by placing their points $b_i$ as close as possible to $w_i$. However, for $\frac{p}{n} \leq q \leq \frac{(4 - \sqrt{3})p}{n}$ for which the middle optimum of Section $II$ is the best point, one can easily see that it is not possible to locate $n$ of these points such that no two of Black's Voronoi cells overlap.

Observe from Figure~\ref{fig:RowCellExamples} that every point $b_1=(x,y)$ placed in Section $III$ and above steals from at least four quarter cells in $\mathcal{VD}(W)$ all of the area from $y$ upwards (or placed in Section $IV$ and above if $b_1$ is placed within a quarter cell sharing a vertical boundary with $\mathcal{P}$). Naturally it would be wasteful if Black were to steal such a portion of a single quarter cell more than once (i.e. by two separate placements $b_1$ and $b_2$). Since there are a total of $4n$ quarter cells to steal from and $n$ Black points to be positioned in order to steal from these quarter cells, an effective position for $b \in B$ would steal as much area as possible from a particular four quarter cells. This suggests that a utilisation of a row of points as depicted in Figure~\ref{fig:RowOptimalNotTouchingIIIc} equally spaced with a horizontal separation of $\frac{2p}{n}$ above and below the white row would work rather effectively. Let us formally describe such an arrangement.

With white points being ordered $w_1$ to $w_n$ left to right where $w_i=(\frac{(2i-1)p}{2n},0)$, this arrangement for Black as described would be $b^u_i=(\frac{ip}{n},\frac{p}{2n})$ and $b^d_i=(\frac{ip}{n},-\frac{p}{2n})$ for $i=1,...,\floor{\frac{n}{2}}$ (being the points played above and below White's row respectively). Of course, if $n$ is odd then we have one remaining point $b_n$ to place, and one Voronoi cell $V^\circ(w_n)$ which remains unchallenged by any of Black's points $b^{\{u,d\}}_i$ so it makes most sense to place $b_n$ as close as possible to $w_n$ in order to steal the most ($\frac{pq}{2n}$) from $V^\circ(w_n)$. These arrangements (for $n$ even and odd) are always possible (i.e. the points $b^u_i$ and $b^d_i$ exist) and are pictured in Figure~\ref{fig:RowOptimalArrangements}.

\begin{figure}[!ht]
\centering
\begin{subfigure}{1.0\textwidth}
  \centering
  \includegraphics[scale=0.17]{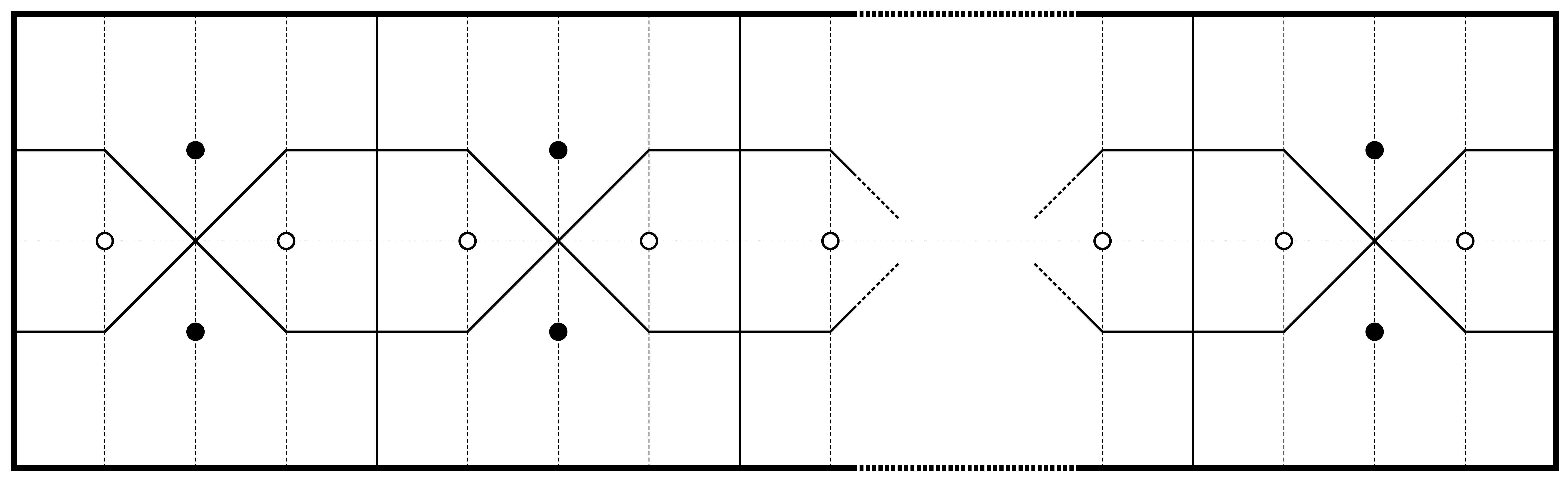}
  \caption{$n$ even.}
  \label{fig:RowOptimalArrangementEven}
\end{subfigure}

\begin{subfigure}{1.0\textwidth}
  \centering
  \includegraphics[scale=0.17]{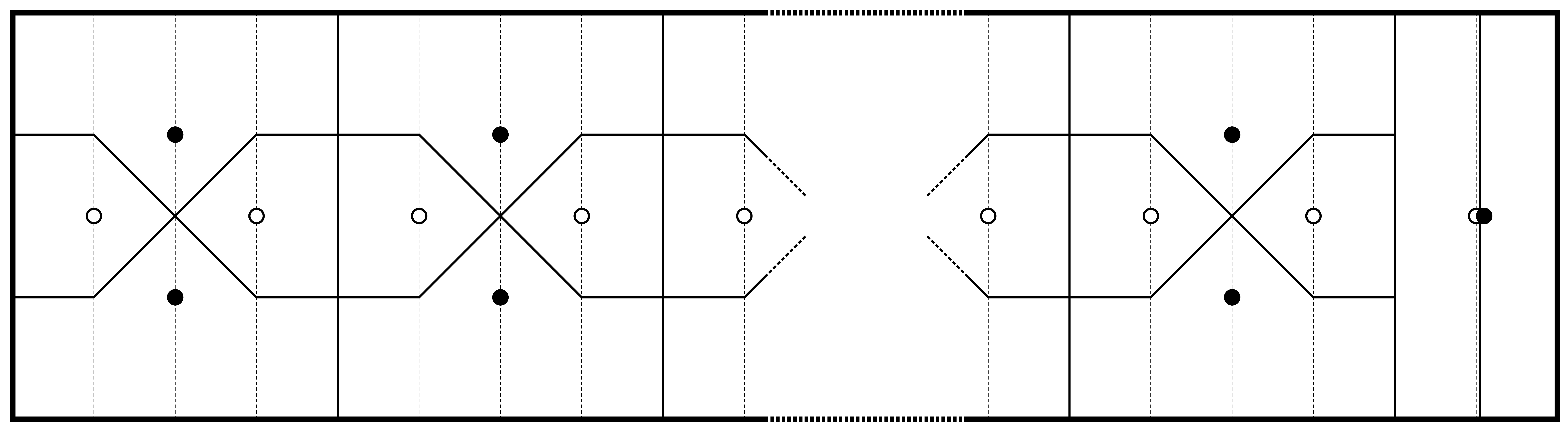}
  \caption{$n$ odd.}
  \label{fig:RowOptimalArrangementOdd}
\end{subfigure}
\caption{Arrangements exploiting the efficacy of the best point in Section $III$.}
\label{fig:RowOptimalArrangements}
\end{figure}

If $n$ is even then this arrangement scores an area of $pq-3n\times\frac{p^2}{4n^2} = pq-\frac{3p^2}{4n}$, capturing all of $\mathcal{P}$ outside $-\frac{p}{2n} < y < \frac{p}{2n}$. We know that this arrangement is optimal for $\frac{(4 - \sqrt{3})p}{n} \leq q \leq \frac{5p}{2n}$ since it is under this condition that the best point $b^*$ is the lower optimum of Section $III$ and so this arrangement is composed entirely of non-overlapping best points $b^*$. 

Furthermore, we hold that this arrangement is optimal for even $n$ even when $\frac{5p}{2n} \leq q$. This is due to the fact that increasing $q$ merely increases the area controlled by Black's points without altering White's area. If $\{b^u_i,b^d_i\}_{i\in[1,\ldots,n]}$ is not the optimal arrangement for Black then there must be another arrangement which controls more area within $-\frac{p}{2n} < y < \frac{p}{2n}$ of $\mathcal{P}$. However, no arrangement containing a point with a $y$-coordinate of absolute value greater than $\frac{5p}{2n}$ can steal more area within $-\frac{p}{2n} < y < \frac{p}{2n}$ of $\mathcal{P}$. Therefore this better arrangement must also exist for $\frac{5p}{2n} \geq q$ and so be the optimal arrangement for some range of $\frac{(4 - \sqrt{3})p}{n} \leq q \leq \frac{5p}{2n}$, providing an obvious contradiction.%\todo{Is this sufficient explanation? I can't think of a strong way to explain this without getting really technical -- which could take another ten pages!}

Thus we have found Black's optimal play for even $n$ and $\frac{(4 - \sqrt{3})p}{n} \leq q$ in response to White playing a row. Results for odd $n$ and $\frac{(4 - \sqrt{3})p}{n} \geq q$ are less obvious, though we suspect that the best point within Section $III$ will be prevalent in optimal arrangements, not least effective arrangements.

%If this arrangement is not optimal for Black then there must exist a point which captures more of the area within $-\frac{p}{2n} < y < \frac{p}{2n}$. From Figure \ref{fig:RowOptimals} we can see that this point must therefore lie in Section $1$ or Section $2$.

%However, for $b_1$ in Section $I$, $V^+(b_1)$ will capture up to $\frac{p^2}{2n^2}$ within $-\frac{p}{2n} < y < \frac{p}{2n}$ ($\frac{p^2}{4n^2}$ more than $b^u_i$ and $b^d_i$) at the expense of only up to $\frac{pq}{2n}-\frac{p^2}{2n^2}$ outside $-\frac{p}{2n} < y < \frac{p}{2n}$ ($\frac{pq}{2n}-\frac{p^2}{2n^2}$ less than $b^u_i$ and $b^d_i$). This is a reduction in captured area by $\frac{pq}{2n}-\frac{p^2}{4n^2}=\frac{p}{2n}(q-\frac{p}{2n})>0$. In order for a Section $1$ point to be a member of the optimal arrangement for Black there must be a position $b_2$ which steals ...

\section{White plays an \texorpdfstring{$a \times b$}{a x b} grid}
\label{sec:WhiteGrid}

Next we shall explore the placement of Black's point $b_1$ within grids with depth greater than one. We assume that the points are positioned in an $a\times b$ grid and without loss of generality let us assume $\frac{p}{a} \geq \frac{q}{b}$. Since White's arrangement is repetitive and has such symmetry, our search for Black's optimal location is greatly simplified as we need only consider the placement of $b_1$ within a small selection of areas of $\mathcal{P}$.

Again we shall investigate the possible Voronoi diagrams $\mathcal{VD}(W\cup b_1)$ (in order to find the placement of $b_1$ so as to maximise $Area(V^+(b_1))$) by partitioning the arena into subsets within which the Voronoi diagram is structurally identical. Since $\mathcal{P}$ is rectangular and all of White's bisectors are vertical and horizontal then, from \citeA{AveBerKalKra15}, the partitioning lines are simply the configuration lines of each of White's points.

In any $a \times b$ grid of points $W$ in $\mathcal{P}$ with $a,b\geq 2$, there exists a $2 \times 2$ subgrid within the arrangement. An example of such a subgrid is shown in Figure~\ref{fig:GridPartitionExample} along with its partition of the space into regions within which the cell $V^+(b_1)$ is structurally identical. This subgrid can be found by choosing any point $w_0 \in W$ and orienting $\mathcal{P}$ so that $w_0$ is the bottom left vertex of a $2 \times 2$ subgrid. Having done this we will label the adjacent point to the right of $w_0$ by $w_{0_{R}}$ and then label every pair of left and right points directly above $w_0$ and $w_{0_{R}}$ by $w_{i_L}$ and $w_{i_{R}}$ respectively and every pair of left and right points directly below $w_0$ and $w_{0_{R}}$ by $w_{{-i}_{L}}$ and $w_{{-i}_{R}}$ respectively, where $i$ marcates these pairs as being the $i$th pairs away from $w_0$ and $w_{0_{R}}$ in their given directions. That is, taking $w_0 = (0,0)$, we have the following expressions: $w_{i_L} = \left( 0 , \frac{iq}{b} \right)$ and $w_{i_{R}} = \left( \frac{p}{a} , \frac{iq}{b} \right)$ for $i \in \mathbb{Z}$ (where $w_{0_L}=w_0$). Similarly define $w_{i_{LL}} = \left( -\frac{p}{a} , \frac{iq}{b} \right)$ for $i \in \mathbb{Z}$ to be the set of points adjacent to the left of $w_{i_L}$. Note that $w_{i_L}$ and $w_{i_{R}}$ may not exist for $i \in \mathbb{Z} \setminus \{0,1\}$, nor may $w_{i_{LL}}$ for any $i \in \mathbb{Z}$, so these are not depicted in Figure~\ref{fig:GridPartitionExample}.

\begin{figure}[!ht]
\centering
\includegraphics[width=0.6\textwidth]{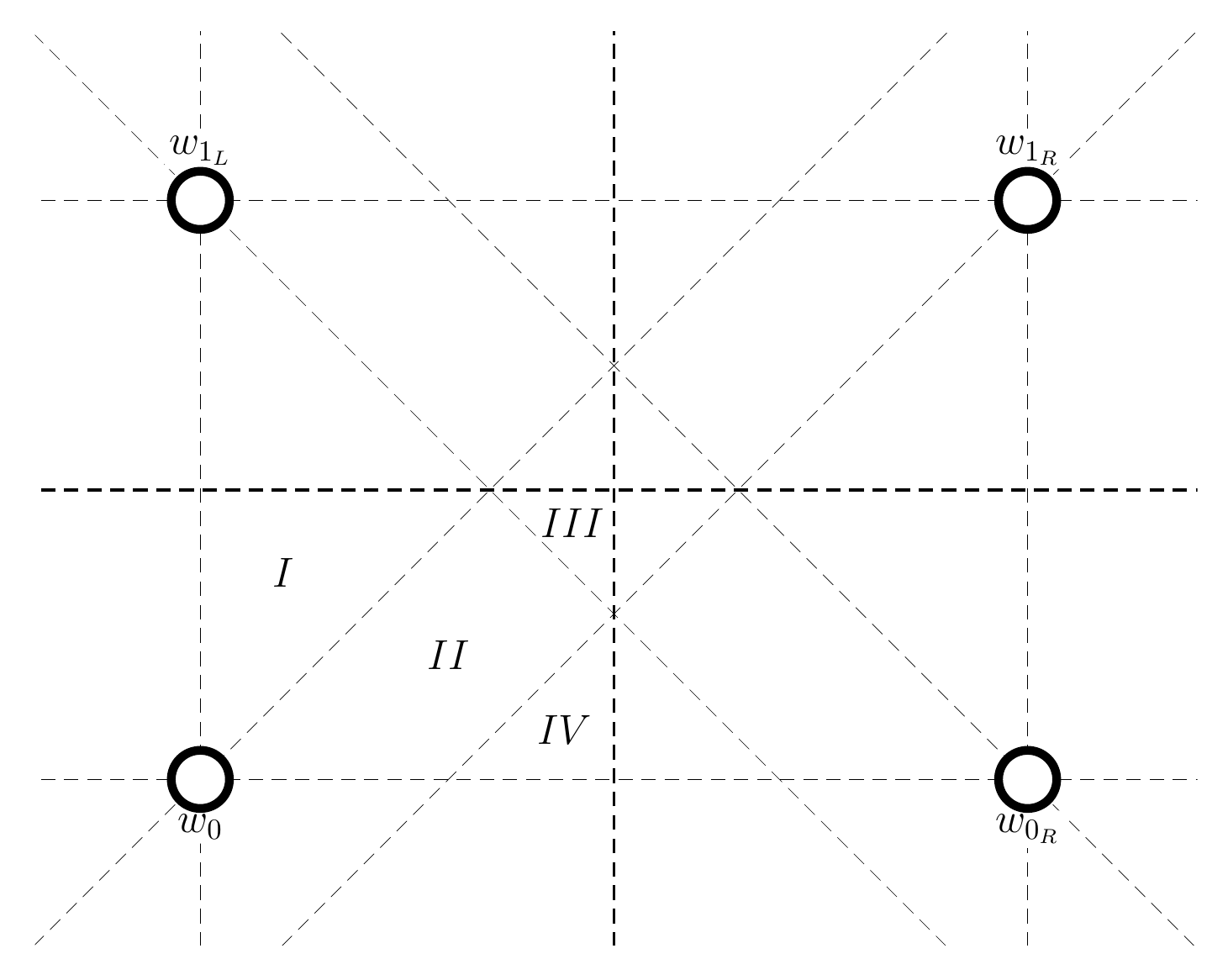}
\caption{The partition of $\mathcal{P}$ for an example $2 \times 2$ subgrid of $W$.}
\label{fig:GridPartitionExample}
\end{figure}

\pagebreak Without loss of generality let us choose to place $b_1$ within the first quadrant of $V^\circ(w_0)$. By the symmetry of this $2 \times 2$ subgrid, every possible cell type of $V^+(b_1)$ can be grown from this placement, adjusting the number of white points with which to generate a bisector outside the $2 \times 2$ subgrid in whichever direction we choose (as will be shown). Note that it is only the points $w_{i_L}$, $w_{i_{R}}$, and $w_{0_{LL}}$ and $w_{1_{LL}}$ that can contribute a bisector part to the perimeter of $V^+(b_1)$ since $\frac{p}{a} \geq \frac{q}{b}$. We can also use this quadrant to investigate the structures that $V^+(b_1)$ can take when placed outside a $2 \times 2$ subgrid (i.e. placed in a quadrant of $V^\circ(w)$ for some $w$ which borders the perimeter of $\mathcal{P}$) by introducing boundaries along the borders of $V^\circ(w_0)$ as required (as will also be shown).

The observant reader may realise that there should perhaps be at least another configuration line contributing to the partition in Figure~\ref{fig:GridPartitionExample}: the potential configuration line $\mathcal{CL}^4(w_{{-1}_{R}})$, for example. While indeed $b_1$ will interact with $w_{{-1}_{R}}$ (if existing) if $b_1 \in \mathcal{CC}^1(w_0)$, this interaction will be identical no matter whether $b_1 \in \mathcal{CC}^3(w_{{-1}_{R}})$ or $b_1 \in \mathcal{CC}^4(w_{{-1}_{R}})$. This is because the only bisector part present in $B(b_1,w_{{-1}_{R}})$ is the diagonal part, identical in its representation for both $\mathcal{CC}^3(w_{{-1}_{R}})$ and $\mathcal{CC}^4(w_{{-1}_{R}})$ bisectors. In order to present a horizontal or vertical bisector part, one of the quadrant lines of $b_1$ must enter the cell $V^\circ(w_{{-1}_{R}})$. Therefore the only configuration lines required are from those points in $W$ lying on the quadrant lines through $w_1$. For that reason we may also ignore the configuration lines $\mathcal{CC}^3(w_{{-i}_{R}})$, $\mathcal{CC}^5(w_{{i}_{R}})$, and $\mathcal{CC}^7(w_{{1}_{LL}})$ for all $i \in \mathbb{N}$. Moreover, since $\frac{p}{a} \geq \frac{q}{b}$, the configuration lines of $w_{0_{R}}$ and $w_{0_{LL}}$ do not enter the first quadrant of $V^\circ(w_0)$, so the only lines contributing to our partition are $\mathcal{CC}^7(w_{{i}_{L}})$ and $\mathcal{CC}^1(w_{{-i}_{L}})$ for $i \in \mathbb{N}$.
%This is due to the fact that if $w_5$ exists, then so will $w_6$, $\frac{q}{b}$ lower than $w_1$ (this is $\frac{p}{a}$ left of $w_5$) and $b_1$ will always lie in $\mathcal{CC}^2(w_6)$.

As $\frac{p}{2a}$ increases in relation to $\frac{q}{2b}$ from the proportions shown in Figure~\ref{fig:GridPartitionExample}, the partitioning lines $\mathcal{CC}^7(w_{i_L})$ and then $\mathcal{CC}^1(w_{{-i}_{L}})$ for $i \in \mathbb{N}$ will begin to contribute to the partition of $\mathcal{P}$. In this way, momentously, the partition confined to the top right quadrant of $V^\circ(w_0)$ is identical to the partition studied for White's row arrangement (shown in Figure~\ref{fig:RowPartition}) but reflected in $y=x$ and with the width $\frac{p}{2n}$ and height $\frac{q}{2}$ of the quadrant being explored replaced by $\frac{q}{2b}$ and $\frac{p}{2a}$ respectively (truncated, instead, by the bisector $B(w_0,w_{0_{R}})$). In this way we have the partition cells as in Figure~\ref{fig:RowPartition} which are Section $I$ ($\mathcal{CC}^2(w_0)$), Section $2l$ ($\mathcal{CC}^1(w_{{1-l}_{L}}) \setminus \mathcal{CC}^8(w_{{l}_{L}})$) for $0 \leq \frac{(l-1)q}{b} \leq \frac{p}{2a}$, and Section $2l+1$ ($\mathcal{CC}^8(w_{{l}_{L}}) \setminus \mathcal{CC}^1(w_{{-l}_{L}})$) for $\frac{q}{2b} \leq \frac{(2l-1)q}{2b} \leq \frac{p}{2a}$, as shown in Figure~\ref{fig:GridPartition}. Note that, since $b$ is finite, Sections $2l$ and $2l+1$ do not always exist. If $w_0$ is on the $i$th row of White points (counting from the bottom of the $a \times b$ grid) then the last possible partition section will be bounded by either $\mathcal{CL}^2(w_{{1-i}_L})$ (so would be Section $2i$) or by $\mathcal{CL}^8(w_{{b-i}_L})$ (so would be Section $2(b-i)+1$) depending on whether $-(1-i) \geq b-i \Leftrightarrow i \geq \frac{b+1}{2}$ or not, respectively.

\begin{figure}[!ht]
\centering
\includegraphics[width=1.0\textwidth]{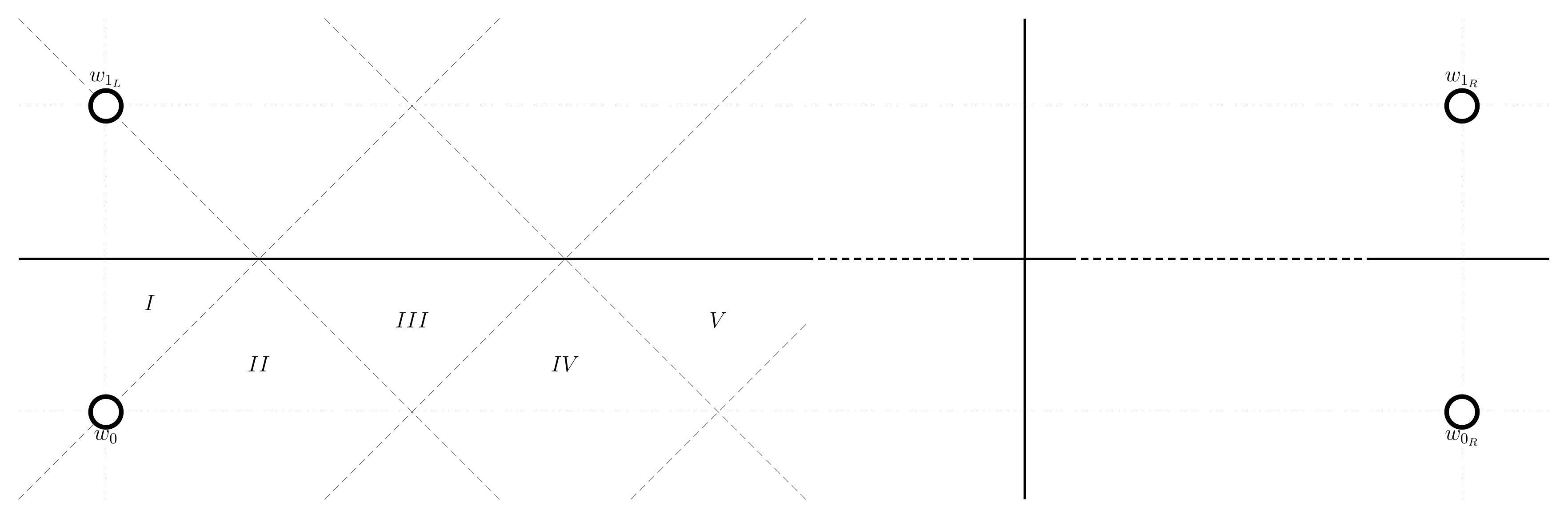}
\caption{The partition of $\mathcal{P}$ for a general $2 \times 2$ subgrid of $W$.}
\label{fig:GridPartition}
\end{figure}

In order to obtain a feel for how $V^+(b_1)$ can appear under White's grid arrangements, in Figure~\ref{fig:GridCellsExamples} we shall draw the first three unique structures that $V^+(b_1)$ can take from the partition displayed in Figure~\ref{fig:GridPartitionExample} (combining $II$ and $IV$ into what we refer to as Section $II$). Whilst they are shown to extend to the furthest that a grid arrangement would allow, one can easily imagine how the cells are truncated if they hit the boundary of $\mathcal{P}$ before closing (for example, if $w_{0_{LL}}$ and $w_{1_{LL}}$ do not exist then $V^+(b_1)$ in Figure~\ref{fig:GridCellI} will be prevented from expanding any further left than the boundary of $\mathcal{P}$ at $x=-\frac{p}{2a}$).

\begin{figure}[!ht]
\centering
\begin{subfigure}{.8\textwidth}
  \centering
  \includegraphics[width=0.9\textwidth]{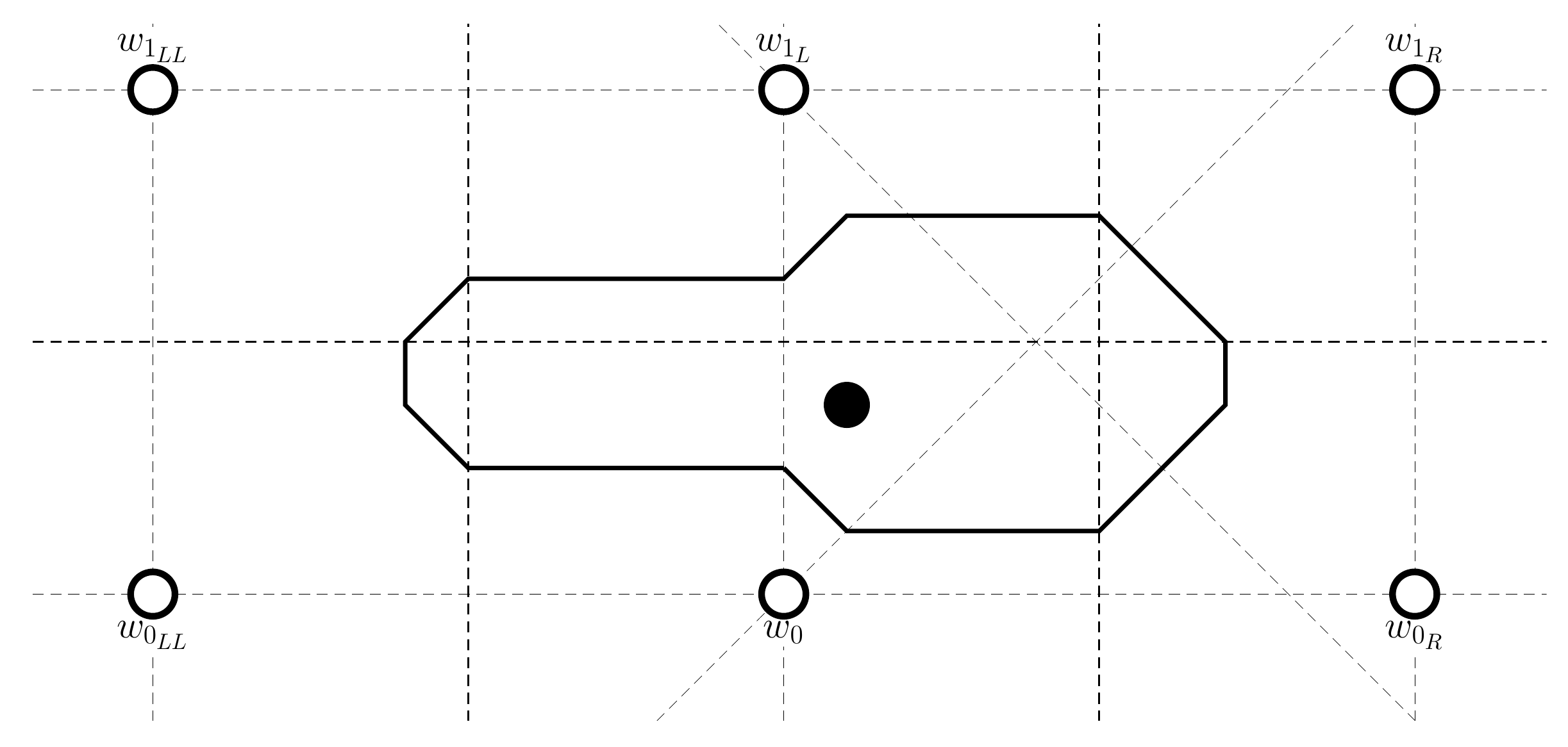}
  \caption{Voronoi cell $V^+(b_1)$ for $b_1$ in Section $I$.}
  \label{fig:GridCellI}
\end{subfigure}

\begin{subfigure}{.5\textwidth}
  \centering
  \includegraphics[width=0.9\textwidth]{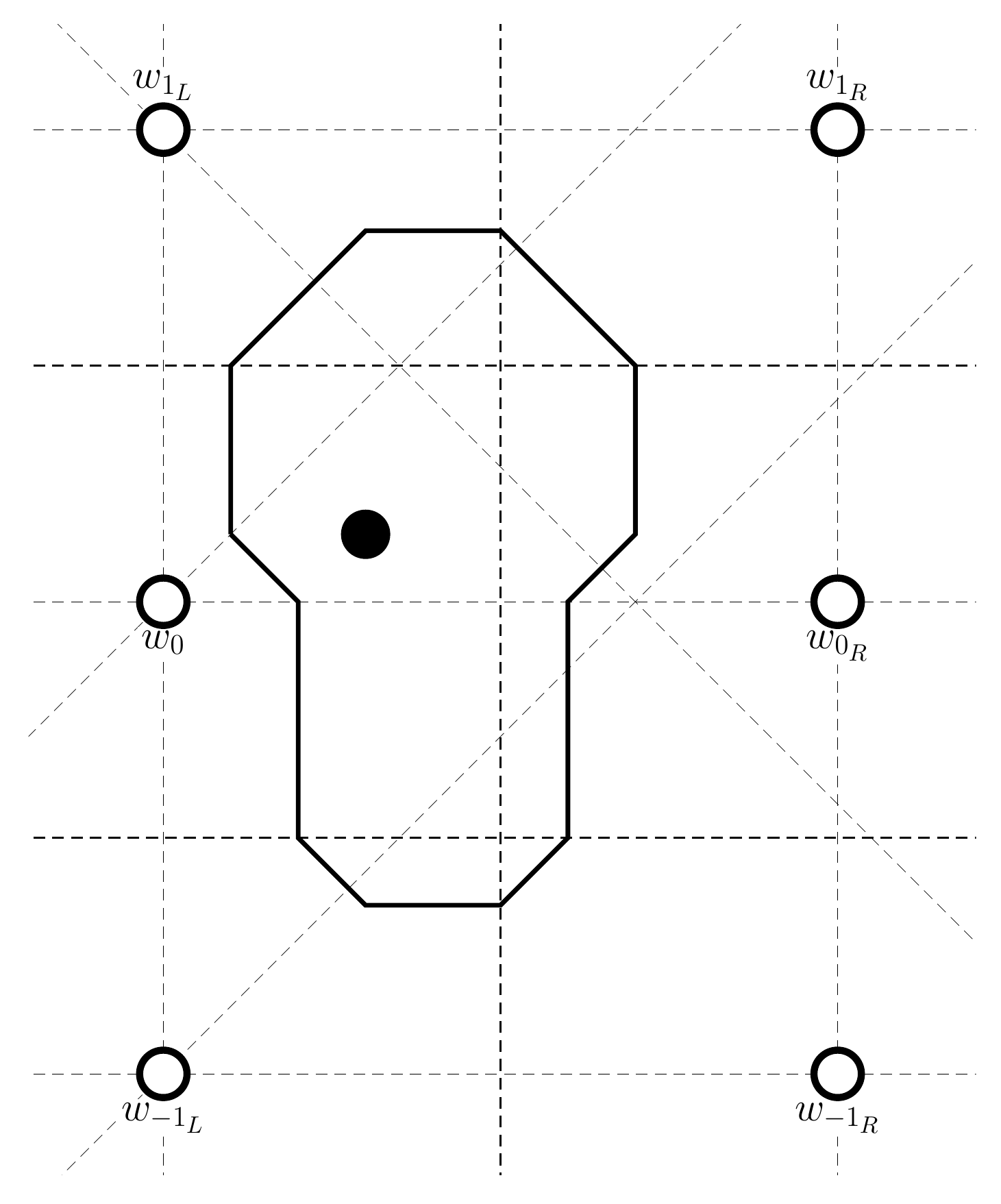}
  \caption{Voronoi cell $V^+(b_1)$ for $b_1$ in Section $II$.}
  \label{fig:GridCellIII}
\end{subfigure}%
\begin{subfigure}{.5\textwidth}
  \centering
  \includegraphics[width=0.9\textwidth]{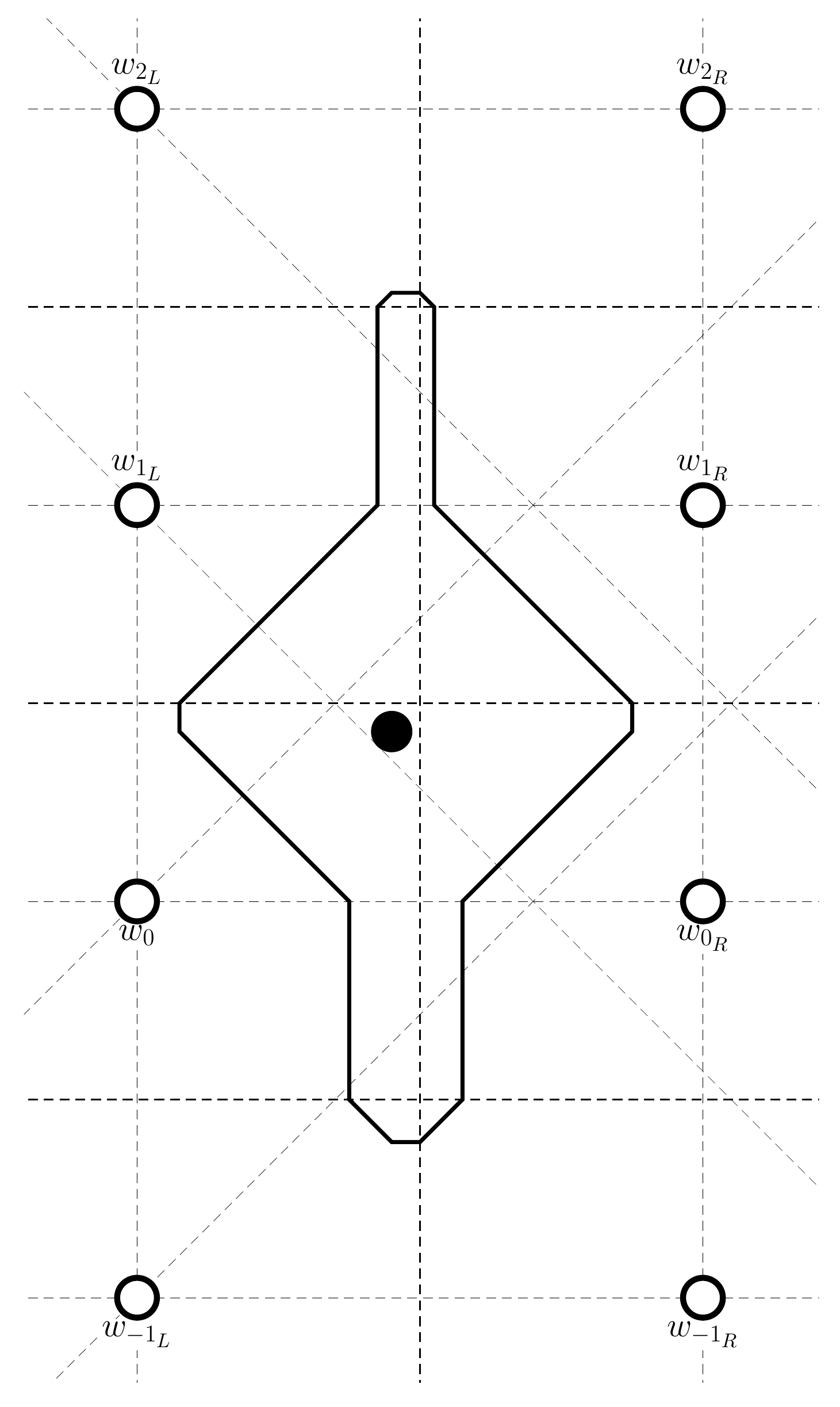}
  \caption{Voronoi cell $V^+(b_1)$ for $b_1$ in Section $III$.}
  \label{fig:GridCellIV}
\end{subfigure}
\caption{Voronoi cells $V^+(b_1)$ for $b_1$ in respective sections according to Figure \ref{fig:GridPartition}.}
\label{fig:GridCellsExamples}
\end{figure}

From these figures we can see that our situation is very similar to the situation we faced in Section~\ref{sec:WhiteRow}. Yet again, if $b_1$ is placed within Section $I$ then $V^+(b_1)$ exhibits a particularly unique structure, whilst Section $2l$ and $2l+1$ Voronoi cells resemble one another fairly closely, with $V^+(b_1)$ entering $V^\circ(w_{{l+1}_L}) \cup V^\circ(w_{{l+1}_R})$ as $b_1$ ventures from Section $2l$ to $2l+1$ and enters into $V^\circ(w_{{-(l+1)}_L}) \cup V^\circ(w_{{-(l+1)}_R})$ as $b_1$ ventures from Section $2l+1$ to $2(l+1)$. Therefore, as before, it will prove useful to analyse the area of $V^+(b_1)$ stolen from each constituent block $V^\circ(w_{{i}_{L}}) \cup V^\circ(w_{{i}_{R}})$ for $i \in \mathbb{Z}$ in the case that $b_1 = (x,y)$ is not in Section $I$.

\paragraph{Theft from $V^\circ(w_{{0}_L}) \cup V^\circ(w_{{0}_R})$}

Firstly we will look at the area of $V^+(b_1)$ intersected with $V^\circ(w_{{0}_L}) \cup V^\circ(w_{{0}_R})$. Since $b_1 \in \mathcal{CC}^1(w_{{0}_L}) \cap \mathcal{CC}^4(w_{{0}_R})$ (as we are not considering $b_1$ in Section $I$) this area always exists, has vertices $(\frac{x-y}{2},\frac{q}{2b})$, $(\frac{x-y}{2},y)$, $(\frac{x+y}{2},0)$, $(\frac{x+y}{2},-\frac{q}{2b})$, $(\frac{p}{2a}+\frac{x-y}{2},-\frac{q}{2b})$, $(\frac{p}{2a}+\frac{x-y}{2},0)$, $(\frac{p}{2a}+\frac{x+y}{2},y)$, and $(\frac{p}{2a}+\frac{x+y}{2},\frac{q}{2b})$, and totals

\begin{equation*}
\begin{split}
    Area(V^+(b_1) \cap (V^\circ(w_{{0}_L}) \cup V^\circ(w_{{0}_R}))) &= (\frac{p}{2a}+\frac{x+y}{2} - \frac{x+y}{2}) \times \frac{q}{b} - y^2 \\
    &= \frac{pq}{2ab} - y^2 \, .
\end{split}
\end{equation*}
It is clear that this area is maximised by $y=0$ and is invariant in the value of $x$.

\paragraph{Theft from $V^\circ(w_{{i}_L}) \cup V^\circ(w_{{i}_R})$ for $b_1 \in \mathcal{CC}^8(w_{{i}_L})$ or $b_1 \in \mathcal{CC}^1(w_{{-i}_L})$}

Now we shall determine the areas of $V^+(b_1)$ intersected with $V^\circ(w_{{i}_L}) \cup V^\circ(w_{{i}_R})$ in which $V^+(b_1)$ steals area at every $y$ value of $V^\circ(w_{{i}_L}) \cup V^\circ(w_{{i}_R})$. That is, if $i>0$, that $V^+(b_1)$ also enters $V^\circ(w_{{i+1}_L}) \cup V^\circ(w_{{i+1}_R})$ (if existing) so $b_1 \in \mathcal{CC}^8(w_{{i}_L})$ (which restricts $b_1$ to also lie within $\mathcal{CC}^5(w_{{i}_R})$) and, if $i<0$, that $V^+(b_1)$ also enters $V^\circ(w_{{i-1}_L}) \cup V^\circ(w_{{i-1}_R})$ (if existing) so $b_1 \in \mathcal{CC}^1(w_{{i}_L})$ (which restricts $b_1$ to also lie within $\mathcal{CC}^4(w_{{i}_R})$). This restricts $b_1$ to lie within Sections $2i+1$ and beyond if $i>0$ or Sections $2(-i+1)$ and beyond if $i<0$. By symmetry these areas (for $i>0$ and $i<0$) have the same structure and, since these structures rely only on the distance between $b_1$ and the generators of the Voronoi cells in question, the representations are nigh identical.

For $i>0$ the area has vertices $(\frac{iq}{2b}+\frac{x-y}{2},\frac{(2i+1)q}{2b})$, $(\frac{iq}{2b}+\frac{x-y}{2},\frac{iq}{b})$, $(\frac{(i-1)q}{2b}+\frac{x-y}{2},\frac{(2i-1)q}{2b})$, $(\frac{p}{2a} - \frac{(i-1)q}{2b} + \frac{x+y}{2}, \frac{(2i-1)q}{2b})$, $(\frac{p}{2a} - \frac{iq}{2b} + \frac{x+y}{2}, \frac{iq}{b})$, and $(\frac{p}{2a} - \frac{iq}{2b} + \frac{x+y}{2}, \frac{(2i+1)q}{2b})$ and totals
\begin{equation*}
\begin{split}
    Area(V^+(b_1) \cap (V^\circ(w_{{i}_L}) \cup V^\circ(w_{{i}_R}))) &= (\frac{p}{2a} - \frac{iq}{2b} + \frac{x+y}{2} - (\frac{iq}{2b}+\frac{x-y}{2})) \times \frac{q}{b} + \left(\frac{q}{2b}\right)^2 \\
    &= (\frac{p}{2a} - \frac{iq}{b} + y) \times \frac{q}{b} + \frac{q^2}{4b^2} \\
    &= \frac{q}{b}y + \frac{pq}{2ab} - \frac{(4i-1)q^2}{4b^2} \, .
\end{split}
\end{equation*}
For $i<0$ the area has vertices $(\frac{x+y}{2}-\frac{(i+1)q}{2b},\frac{(2i+1)q}{2b})$, $(\frac{x+y}{2}-\frac{iq}{2b},\frac{iq}{b})$, $(\frac{x+y}{2}-\frac{iq}{2b},\frac{(2i-1)q}{2b})$, $(\frac{p}{2a} + \frac{iq}{2b} + \frac{x-y}{2}, \frac{(2i-1)q}{2b})$, $(\frac{p}{2a} + \frac{iq}{2b} + \frac{x-y}{2}, \frac{iq}{b})$, and $(\frac{p}{2a} + \frac{(i+1)q}{2b} + \frac{x-y}{2}, \frac{(2i+1)q}{2b})$ and totals
\begin{equation*}
\begin{split}
    Area(V^+(b_1) \cap (V^\circ(w_{{i}_L}) \cup V^\circ(w_{{i}_R}))) &= (\frac{p}{2a} + \frac{iq}{2b} + \frac{x-y}{2} - (\frac{x+y}{2}-\frac{iq}{2b})) \times \frac{q}{b} + \left(\frac{q}{2b}\right)^2 \\
    &= (\frac{p}{2a} + \frac{iq}{b} - y) \times \frac{q}{b} + \frac{q^2}{4b^2} \\
    &= -\frac{q}{b}y + \frac{pq}{2ab} + \frac{(4i+1)q^2}{4b^2} \, .
\end{split}
\end{equation*}
Again, neither area depends on the value of $x$; however, the area is maximised if $i>0$ by $y=\frac{p}{2a}$ and if $i<0$ by $y=0$.

\paragraph{Theft from $V^\circ(w_{{i}_L}) \cup V^\circ(w_{{i}_R})$ for $b_1 \in \mathcal{CC}^7(w_{{i}_L})$ or $b_1 \in \mathcal{CC}^2(w_{{-i}_L})$}

Finally we shall determine the areas of $V^+(b_1)$ intersected with $V^\circ(w_{{i}_L}) \cup V^\circ(w_{{i}_R})$ in which $V^+(b_1)$ steals only at certain values of $y$ in $V^\circ(w_{{i}_L}) \cup V^\circ(w_{{i}_R})$. That is, if $i>0$, that $V^+(b_1)$ does not enter $V^\circ(w_{{i+1}_L}) \cup V^\circ(w_{{i+1}_R})$ (if existing) so $b_1 \in \mathcal{CC}^7(w_{{i}_L})$ (which restricts $b_1$ to also lie within $\mathcal{CC}^6(w_{{i}_R})$) and, if $i<0$, that $V^+(b_1)$ does not enter $V^\circ(w_{{i-1}_L}) \cup V^\circ(w_{{i-1}_R})$ (if existing) so $b_1 \in \mathcal{CC}^2(w_{{i}_L})$ (which restricts $b_1$ to also lie within $\mathcal{CC}^3(w_{{i}_R})$). This restricts $b_1$ to lie within Sections $2i-1$ and $2i$ if $i>0$ (note that this area only holds for $b_1$ within Section $II$ if $i=1$) or Sections $2(-i)$ and $2(-i)+1$ if $i<0$. Analogously to above, both of these areas are the same structures with only subtly different representations.

For $i>0$ the area has vertices $(\frac{p}{2a},\frac{iq}{2b} + \frac{x+y}{2})$, $(x,\frac{iq}{2b} + \frac{x+y}{2})$, $(\frac{(i-1)q}{2b} + \frac{x-y}{2},\frac{(2i-1)q}{2b})$, and $(\frac{p}{2a} - \frac{(i-1)q}{2b} + \frac{x+y}{2},\frac{(2i-1)q}{2b})$ and totals

\begin{equation*}
\begin{split}
    Area(V^+(b_1) \cap (V^\circ(w_{{i}_L}) \cup V^\circ(w_{{i}_R}))) &= \left(\frac{p}{2a} - \left(\frac{(i-1)q}{2b} + \frac{x-y}{2}\right) \right) \times \left(\frac{iq}{2b} + \frac{x+y}{2} - \frac{(2i-1)q}{2b}\right) \\
    &= \left(\frac{p}{2a} - \frac{(i-1)q}{2b} - \frac{x-y}{2} \right) \times \left( - \frac{(i-1)q}{2b} + \frac{x+y}{2}\right) \\
    &= - \frac{x^2}{4} + \frac{y^2}{4} + \frac{p}{4a}x + \left(\frac{p}{4a} - \frac{(i-1)q}{2b}\right)y \\ &\qquad- \frac{(i-1)pq}{4ab} + \frac{(i-1)^2q^2}{4b^2} \, . \\
%    Area(V^+(b_1) \cap (V^\circ(w_{{i}_L}) \cup V^\circ(w_{{i}_R}))) &= \left(\frac{p}{2a} - x\right) \times \left(\frac{iq}{2b} + \frac{x+y}{2} - \frac{(2i-1)q}{2b}\right) \\ &\qquad+ \left(\frac{iq}{2b} + \frac{x+y}{2} - \frac{(2i-1)q}{2b}\right)^2 \\
%    &= \left(\frac{p}{2a} - x\right) \times \left(\frac{x+y}{2} - \frac{(i-1)q}{2b}\right) + \left(\frac{x+y}{2} - \frac{(i-1)q}{2b}\right)^2 \\\
%    &= - \frac{x^2}{4} + \frac{y^2}{4} + \frac{p}{4a}x + (\frac{p}{4a} - \frac{(i-1)q}{2b})y \\ &\qquad- \frac{(i-1)pq}{4ab} + \frac{(i-1)^2q^2}{4b^2} \, .
\end{split}
\end{equation*}
For $i<0$ the area has vertices $(-\frac{(i+1)q}{2b} + \frac{x+y}{2},\frac{(2i+1)q}{2b})$, $(x,\frac{iq}{2b} - \frac{x-y}{2})$, $(\frac{p}{2a},\frac{iq}{2b} - \frac{x-y}{2})$, and $(\frac{p}{2a} + \frac{(i+1)q}{2b} + \frac{x-y}{2}, \frac{(2i+1)q}{2b})$ and totals
\begin{equation*}
\begin{split}
    Area(V^+(b_1) \cap (V^\circ(w_{{i}_L}) \cup V^\circ(w_{{i}_R}))) &= \left(\frac{p}{2a} - \left(-\frac{(i+1)q}{2b} + \frac{x+y}{2}\right) \right) \times \left(\frac{(2i+1)q}{2b} - \left(\frac{iq}{2b} - \frac{x-y}{2}\right)\right) \\
    &= \left(\frac{p}{2a} + \frac{(i+1)q}{2b} - \frac{x+y}{2} \right) \times \left(\frac{(i+1)q}{2b} + \frac{x-y}{2}\right) \\
    &= - \frac{x^2}{4} + \frac{y^2}{4} + \frac{p}{4a}x - \left(\frac{p}{4a} + \frac{(i+1)q}{2b}\right)y \\ &\qquad+ \frac{(i+1)pq}{4ab} + \frac{(i+1)^2q^2}{4b^2} \, .
\end{split}
\end{equation*}
Now both areas are maximised by $x=\frac{p}{2a}$ to give, if $i > 0$, $$Area(V^+(b_1) \cap (V^\circ(w_{{i}_L}) \cup V^\circ(w_{{i}_R}))) = \frac{y^2}{4} + \left(\frac{p}{4a} - \frac{(i-1)q}{2b}\right)y - \frac{(i-1)pq}{4ab} + \frac{p^2}{16a^2} + \frac{(i-1)^2q^2}{4b^2}$$ and, if $i<0$, $$Area(V^+(b_1) \cap (V^\circ(w_{{i}_L}) \cup V^\circ(w_{{i}_R}))) = \frac{y^2}{4} - \left(\frac{p}{4a} + \frac{(i+1)q}{2b}\right)y + \frac{(i+1)pq}{4ab} + \frac{p^2}{16a^2} + \frac{(i+1)^2q^2}{4b^2}$$ (and if this optimum is not achievable then, fixing $y$, the area increases as $x$ moves closer to $\frac{p}{2a}$). If $i>0$ then the maximiser is
%\begin{alignat*}{2}
%    y&=\frac{(i-1)q}{b} - \frac{p}{2a} &&\, \\
%    &=\frac{(2i-3)q}{2b} - \frac{p}{2a} + \frac{q}{2b} \text{ and } &&=\frac{(2i-3)q}{2b} - \frac{p}{2a} + \frac{q}{2b} \\
%    &\leq \frac{q}{2b} &&\leq \frac{q}{2b}\\
%\end{alignat*}
\begin{equation*}
\begin{split}
    y&=\frac{(i-1)q}{b} - \frac{p}{2a} \\
    &=\frac{(2i-3)q}{2b} - \frac{p}{2a} + \frac{q}{2b} \\
    &\leq \frac{q}{2b} \\
\end{split}
\end{equation*}
relying on the fact that, since Section $2i-1$ must exist in some form (for $i \neq 1$) for this area to be formed, it must be the case that $\frac{(2i-3)q}{2b} \leq \frac{p}{2a}$. On the other hand, if $i<0$ then the maximiser is
\begin{equation*}
\begin{split}
    y&=\frac{p}{2a} + \frac{(i+1)q}{b} \\
    &\geq 0 \\
\end{split}
\end{equation*}
relying on the fact that, since Section $2(-i)$ must exist in some form for this area to be formed, it must be the case that $\frac{(-i-1)q}{b} \leq \frac{p}{2a}$.

\medskip

Not only are these calculations useful for formulating the representation of the different areas of $V^+(b_1)$ for $b_1$ contained in different sections within the first quadrant of $V^\circ(w_0)$, but they provide a very strong clue to where we will find the optimum to maximise $Area(V^+(b_1))$. Recall from Section~\ref{sec:BlackRow} (in particular the discussion surrounding Figure~\ref{fig:RowOptimalComp}) that Black will always improve upon their area, when locating upon the boundary of two sections, by choosing to locate in the higher section -- or in our case, the rightmost section. Additionally we have found that the $x$-coordinate of $b_1$ does not affect the area of $V^+(b_1)$ within most Voronoi cells in $\mathcal{VD}(W)$ and, when the value of $x$ does contribute to the representation of $Area(V^+(b_1))$, the optimal direction of movement is rightwards within the first quadrant of $V^\circ(w_0)$. Combining these two properties, we can say that the optimum $b_1$ within Section $II$ and beyond lies on the line $x^*=\frac{p}{2a}$; for any fixed $y$, $Area(V^+(b_1))$ increases as $x$ increases within a section, and will increase upon crossing a configuration line into a section of greater value (increasing $x$) so the best point for fixed $y$ lies at $x=\frac{p}{2a}$.

This remains true no matter whether $V^+(b_1)$ intersects the top or bottom perimeter of $\mathcal{P}$. Therefore, since our interest is Black's best point, in contrast to Section~\ref{sec:WhiteRow} we need not calculate the areas of every possible $V^+(b_1)$ structure and optimise this area over the partition within which this structure is maintained. Now we need only explore Section $I$ and the line $x^*=\frac{p}{2a}$, taking care to remember to check for special cases if $V^+(b_1)$ interacts with the boundary of $\mathcal{P}$.

\section{Black's optimal strategy: White plays an \texorpdfstring{$a \times b$}{a x b} grid}
\label{sec:BlackGrid}

\subsection{Black's best point}
\label{sec:BlackGridBest}

Many ideas from our discussion in Section~\ref{sec:BlackRow} carry over to the case where White plays a grid. We will limit our exploration of Black's best point $b^*$ to \emph{core quadrants}. We will call the first quadrant of $V^\circ(w_0)$ a \emph{core} quadrant if it borders only other Voronoi cells (and not the boundary of $\mathcal{P}$); that is, if $w_{0_R}$ and $w_{1_L}$ both exist. It is only these quadrants that we are interested in because, as explained in Section~\ref{sec:BlackRow}, Black's best point will never be contained in a quadrant bordering $\mathcal{P}$ as long as core quadrants exist. As before, Black's best point will never be located next to the boundary of $\mathcal{P}$. This is simply because, if $V^+(b_1)$ touches one boundary of $\mathcal{P}$, translating the point a distance $\frac{p}{a}$ or $\frac{q}{b}$ perpendicularly away from a vertical or horizontal boundary of $\mathcal{P}$ respectively will allow $V^+(b_1)$ to enter a new uncharted pair of Voronoi cells $V^\circ(w_{{i}_L}) \cup V^\circ(w_{{i}_R})$ (up to orientation), and if $V^+(b_1)$ touches opposite boundaries of $\mathcal{P}$ then, since $b_1$ is more effective at stealing area from Voronoi cells closest to $b_1$, $b_1$ does better when equally distant from both boundaries. This will be explained in greater detail later within this section.

As in Section~\ref{sec:BlackRow}, Figure \ref{fig:GridOptimals} will depict all optimal locations of $b_1$ within each required section under the particular circumstances we will discuss below; again, Section $IV$ and Section $III$ are depicted as the poster children for the general Section $2l$ and Section $2l+1$ results respectively and for clarity these respective sections will be shaded in each figure.

As described above there are two areas of interest within these core quadrants: Section $I$, and the line $x^*=\frac{p}{2a}$.

\paragraph{Section $I$} First we explore $V^+(b_1)$ for $b_1$ in Section $I$. It has vertices, tracing the perimeter clockwise, $(0,\frac{x+y}{2})$, $(x,\frac{y-x}{2})$, $(\frac{p}{2a},\frac{y-x}{2})$, $(\frac{p}{2a}+\frac{x+y}{2},y)$, $(\frac{p}{2a}+\frac{x+y}{2},\frac{q}{2b})$, $(\frac{p}{2a},\frac{q}{2b}+\frac{x+y}{2})$, $(x,\frac{q}{2b}+\frac{x+y}{2})$, $(0,\frac{q}{2b}+\frac{y-x}{2})$, $(-\frac{p}{2a},\frac{q}{2b}+\frac{y-x}{2})$, $(-\frac{p}{2a}-\frac{y-x}{2},\frac{q}{2b})$, $(-\frac{p}{2a}-\frac{y-x}{2},y)$, and $(-\frac{p}{2a},\frac{x+y}{2})$, giving an area
\begin{equation*}
    \begin{split}
        Area(V^+(b_1)) &= (\frac{p}{2a}+\frac{x+y}{2}) \times (\frac{q}{2b}+\frac{x+y}{2}-\frac{y-x}{2}) - x^2 - (\frac{x+y}{2})^2 \\
        &\qquad + (\frac{p}{2a}+\frac{y-x}{2}) \times (\frac{q}{2b}+\frac{y-x}{2} - \frac{x+y}{2}) - (\frac{y-x}{2})^2 \\
        &= \frac{q}{2b} (\frac{p}{a}+y) + x (x) - \frac{3x^2+y^2}{2} \\
        &= - \frac{x^2}{2} - \frac{y^2}{2} + \frac{q}{2b}y +  \frac{pq}{2ab}\\
    \end{split}
\end{equation*}
or, if points $w_{{0}_{LL}}$ and $w_{{1}_{LL}}$ do not exist (i.e. $V^\circ(w_0)$ borders the perimeter of $\mathcal{P}$),

\begin{equation*}
    \begin{split}
        Area(V^+(b_1)) &= (\frac{p}{2a}+\frac{x+y}{2}) \times (\frac{q}{2b}+\frac{x+y}{2}-\frac{y-x}{2}) - x^2 - (\frac{x+y}{2})^2 \\
        &\qquad+ \frac{p}{2a} \times (\frac{q}{2b}+\frac{y-x}{2} - \frac{x+y}{2}) \\
        &= \frac{q}{2b}(\frac{p}{a}+\frac{x+y}{2}) + x (\frac{x+y}{2}) - \frac{5x^2+2xy+y^2}{4} \\
        &= - \frac{3x^2}{4} - \frac{y^2}{4} + \frac{q}{4b}x + \frac{q}{4b}y + \frac{pq}{2ab} \, .
    \end{split}
\end{equation*}

Assuming that points $w_{{0}_{LL}}$ and $w_{{1}_{LL}}$ do exist, to find the maximum of this over Section $I$ we first use gradient methods
\begin{equation*}
\begin{split}
\frac{\delta A}{\delta x}&=- x \\
\frac{\delta A}{\delta y}&=- y + \frac{q}{2b} \\
\end{split}
\end{equation*}
to ascertain that the maximum of $Area(V^+(b_1))$ is found at $b^*_1=(0,\frac{q}{2b})$ (still contained in Section~$I$), giving $Area(V^+(b^*_1))= \frac{pq}{2ab} + \frac{q^2}{8b^2}$. This is depicted in Figure \ref{fig:GridOptimalI}.

Alternatively, if the points $w_{{0}_{LL}}$ and $w_{{1}_{LL}}$ do not exist, we have gradients
\begin{equation*}
\begin{split}
\frac{\delta A}{\delta x}&=- \frac{3x}{2} + \frac{q}{4b} \\
\frac{\delta A}{\delta y}&=- \frac{y}{2} + \frac{q}{4b} \\
\end{split}
\end{equation*}
so the area reaches its maximum at $b^*_1=(\frac{q}{6b},\frac{q}{2b})$ (still contained in Section $I$), giving $Area(V^+(b^*_1))= \frac{pq}{2ab}+\frac{q^2}{12b^2}$. This is depicted in Figure \ref{fig:GridOptimalIRes}.

\begin{figure}[!ht]
\begin{subfigure}{.59\textwidth}
  \centering
  \includegraphics[width=0.9\textwidth]{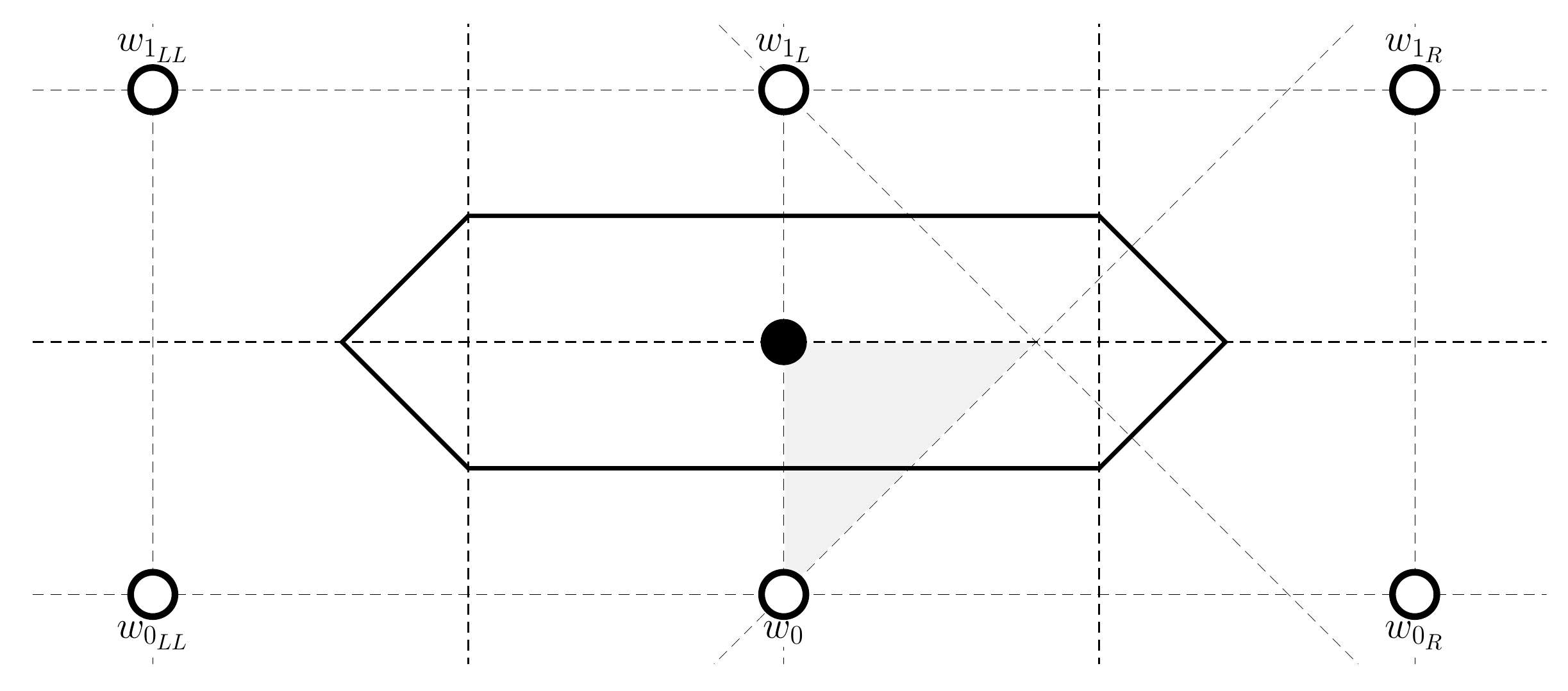}
  \caption{$b_1=(0,\frac{q}{2b})$.}
%  \caption{$Area(V^+((0,\frac{q}{2b})))=\frac{pq}{2ab} + \frac{q^2}{8b^2}$.}
  \label{fig:GridOptimalI}
\end{subfigure}%
\begin{subfigure}{.41\textwidth}
  \centering
  \includegraphics[width=0.9\textwidth]{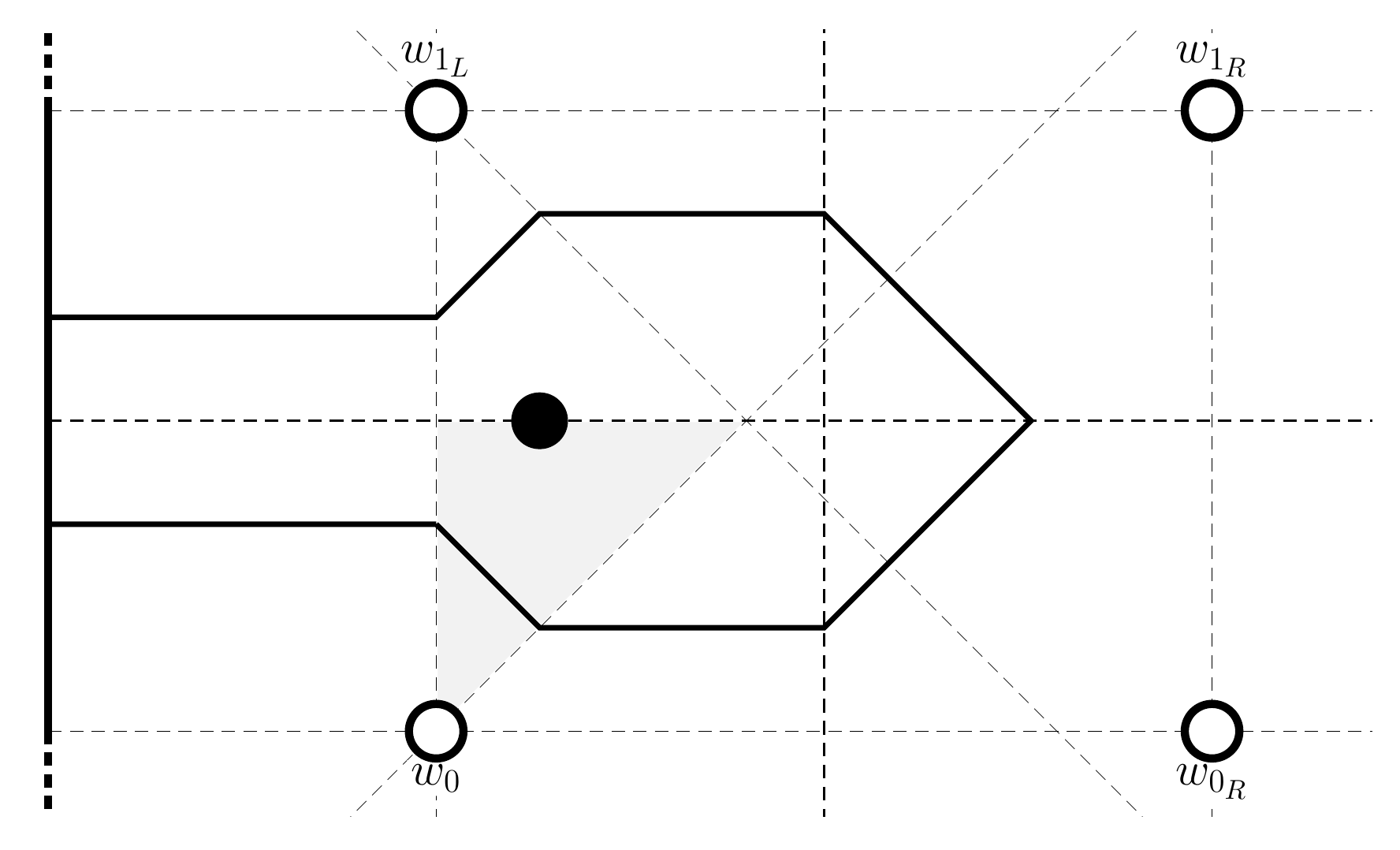}
  \caption{$b_1=(\frac{q}{6b},\frac{q}{2b})$.}
%  \caption{$Area(V^+((\frac{q}{6b},\frac{q}{2b})))=\frac{pq}{2ab}+\frac{q^2}{12b^2}$.}
  \label{fig:GridOptimalIRes}
\end{subfigure}
\caption{Maximal area Voronoi cells $V^+(b_1)$ for $b_1$ within Section $I$.}
\end{figure}

\paragraph{Section $2l$ upon $x^* = \frac{p}{2a}$}
Within the first quadrant of $V^\circ(w_0)$, the line $x^* = \frac{p}{2a}$ can be in an even section, an odd section, or both (entering into Section $2l+1$ from Section $2l$ as $y$ increases from $0$). Therefore we must explore the area formulae for $b_1$ in each section separately. Beginning with the placement of $b_1$ inside the even Section $2l$ (where $\frac{(l-1)q}{b} \leq \frac{p}{2a} \leq \frac{lq}{b} \Rightarrow l = \ceil{\frac{pb}{2qa}}$), $V^+(b_1)$ will extend into Voronoi cells $V^\circ(w_{{i}_L}) \cup V^\circ(w_{{i}_R})$ for $i \in \{ -l,\ldots, l\}$ (if existing).

Now the effect caused by the non-existence of these faraway Voronoi cells (i.e. the area that the boundary of $\mathcal{P}$ cuts off) can greatly diminish the suitability of the placement of $b_1$ if we are in search of Black's best point. It is clear that, if $w_{l_L}$ does not exist while $w_{{-l-1}_L}$ does, simply choosing $w_0$ to be the point directly below it ($w_{{-1}_L}$) will be beneficial to increasing the maximum area that $V^+(b_1)$ can take when locating within the first quadrant of $V^\circ(w_0)$, simply because this translation of our point of reference $\frac{q}{b}$ lower will allow $V^+(b_1)$ not only to keep exactly the same area as before but also to enter another previously untapped Voronoi cell of White's. The same is clearly true for the analogous case where it is $w_{{-l}_L}$ that does not exist while $w_{{l+1}_L}$ does, and these alterations of $w_0$ can of course be repeated until all Voronoi cells $V^\circ(w_{{i}_L}) \cup V^\circ(w_{{i}_R})$ for $i \in \{ -l,\ldots, l\}$ exist (in which case $b > 2l$ and $V^+(b_1)$ does not interact with $\mathcal{P}$, and we may not necessarily have a unique best point of reference for $w_0$), or until both $w_{l_L}$ and $w_{{-l-1}_L}$ or $w_{{-l}_L}$ and $w_{{l+1}_L}$ do not exist.

If both $w_{l_L}$ and $w_{{-l-1}_L}$ or $w_{{-l}_L}$ and $w_{{l+1}_L}$ do not exist while $w_{i_L}$ does for $i \in \{-l, \ldots, l-1\}$ or for $i \in \{-l+1, \ldots, l\}$ respectively then it is the case that $b = 2l$ and $V^+(b_1)$ touches only one bounding edge of $\mathcal{P}$. In this scenario we must decide which area we would prefer: that stolen from $V^\circ(w_{{l}_L}) \cup V^\circ(w_{{l}_R})$ or from $V^\circ(w_{{-l}_L}) \cup V^\circ(w_{{-l}_R})$. Since $b_1$ is being located in the first quadrant of $V^\circ(w_0)$, lying closer to $w_{l_L}$ than $w_{{-l}_L}$, it can steal a larger area from $V^\circ(w_{{l}_L}) \cup V^\circ(w_{{l}_R})$ than it could from $V^\circ(w_{{-l}_L}) \cup V^\circ(w_{{-l}_R})$, so it is favourable for $w_0$ to be chosen to be on the $l$th row of points in $W$ (counting from the bottom of the grid) so that $V^+(b_1)$ consists of areas stolen from $V^\circ(w_{{i}_L}) \cup V^\circ(w_{{i}_R})$ for all $i \in \{ -l+1,\ldots, l\}$ (as opposed to for all $i \in \{ -l,\ldots, l-1\}$).

This idea also applies to areas $V^+(b_1)$ that touch both horizontal bounding edges of $\mathcal{P}$ (so both $w_{l_L}$ and $w_{{-l}_L}$ do not exist, meaning that $b < 2l$) since, for $i \in \mathbb{Z}^+$, the area stolen from $V^\circ(w_{{\pm i}_L}) \cup V^\circ(w_{{\pm i}_R})$ will always be greater than that stolen from $V^\circ(w_{{\pm (i+1)}_L}) \cup V^\circ(w_{{\pm (i+1)}_R})$, and also the area stolen from $V^\circ(w_{{i}_L}) \cup V^\circ(w_{{i}_R})$ will always be greater than or equal to the area stolen from $V^\circ(w_{{-i}_L}) \cup V^\circ(w_{{-i}_R})$. Therefore it is still optimal to choose $w_0$ to be on the $\ceil{\frac{b}{2}}$th row of points in $W$ in order to steal from $V^\circ(w_{{i}_L}) \cup V^\circ(w_{{i}_R})$ for all $i \in \{-\ceil{\frac{b}{2}}+1,\ldots,\floor{\frac{b}{2}}\}$. Note though that, as described in the work preceding Figure~\ref{fig:GridPartition}, the final section possible in the top right quadrant of $V^\circ(w_0)$, where $w_0$ is on the $\ceil{\frac{b}{2}}$th row of points in $W$, is Section $b+1$, no matter whether $b$ is even or odd. Therefore there is only one section within which $V^\circ(w_0)$ touches both horizontal edges of $\mathcal{P}$ and this is Section $b+1$. We will explore this section separately to this investigation, after the Section $2l+1$ material is presented. Thus we shall only consider $b \geq 2l$ here.
%To avoid repeating calculations when exploring Section $2l+1$ cell, we will explore this Section $b+1$ in this  \todo{Do we need to treat this specially?}

Now that we have chosen the optimal $w_0$ and recorded which Voronoi cells $V^\circ(w_{{i}_L}) \cup V^\circ(w_{{i}_R})$ will be entered, we can calculate the areas of the Voronoi cell $V^+(b_1)$ for different values of $b$ and optimise the location of $b_1$ upon $x^*=\frac{p}{2a}$ within Section $2l$. If $b > 2l \left(= 2\ceil{\frac{pb}{2qa}}\right)$ then
\begin{equation*}
\begin{split}
    Area(V^+(b_1))&= Area(V^+(b_1) \cap (V^\circ(w_{{-l}_L}) \cup V^\circ(w_{{-l}_R}))) \\ &\qquad+ \sum_{i=-(l-1)}^{-1}{Area(V^+(b_1) \cap (V^\circ(w_{{i}_L}) \cup V^\circ(w_{{i}_R})))} \\ &\qquad+ Area(V^+(b_1) \cap (V^\circ(w_{{0}_L}) \cup V^\circ(w_{{0}_R}))) \\ &\qquad+ \sum_{i=1}^{l-1}{Area(V^+(b_1) \cap (V^\circ(w_{{i}_L}) \cup V^\circ(w_{{i}_R})))} \\ &\qquad+ Area(V^+(b_1) \cap (V^\circ(w_{{l}_L}) \cup V^\circ(w_{{l}_R}))) \\
    &= \frac{y^2}{4} - \left(\frac{p}{4a} + \frac{(-l+1)q}{2b}\right)y + \frac{(-l+1)pq}{4ab} + \frac{p^2}{16a^2} + \frac{(-l+1)^2q^2}{4b^2} \\ &\qquad+ \sum_{i=-l+1}^{-1}{\left( -\frac{q}{b}y + \frac{pq}{2ab} + \frac{(4i+1)q^2}{4b^2} \right)} \\ &\qquad+ \frac{pq}{2ab} - y^2 \\ &\qquad+ \sum_{i=1}^{l-1}{\left(\frac{q}{b}y + \frac{pq}{2ab} - \frac{(4i-1)q^2}{4b^2} \right)} \\ &\qquad+ \frac{y^2}{4} + (\frac{p}{4a} - \frac{(l-1)q}{2b})y - \frac{(l-1)pq}{4ab} + \frac{p^2}{16a^2} + \frac{(l-1)^2q^2}{4b^2} \\
    &= - \frac{y^2}{2} - \frac{(l-2)pq}{2ab} + \frac{p^2}{8a^2} + \frac{(l-1)^2q^2}{2b^2} + 2\sum_{i=1}^{l-1}{\left(\frac{pq}{2ab} - \frac{(4i-1)q^2}{4b^2} \right)} \\
    &= - \frac{y^2}{2} + \frac{lpq}{2ab} + \frac{p^2}{8a^2} - \frac{(l-1)lq^2}{2b^2} %\, , \\
\end{split}
\end{equation*}
and if $b=2l$ then, adapting this formula,
\begin{equation*}
\begin{split}
    Area(V^+(b_1))&= - \frac{y^2}{2} + \frac{lpq}{2ab} + \frac{p^2}{8a^2} - \frac{(l-1)lq^2}{2b^2} \\
    &\qquad- ``Area(V^+(b_1) \cap (V^\circ(w_{{-l}_L}) \cup V^\circ(w_{{-l}_R})))" \\
    &= - \frac{y^2}{2} + \frac{lpq}{2ab} + \frac{p^2}{8a^2} - \frac{(l-1)lq^2}{2b^2} \\
    &\qquad- \left(\frac{y^2}{4} - \left(\frac{p}{4a} - \frac{(l-1)q}{2b}\right)y - \frac{(l-1)pq}{4ab} + \frac{p^2}{16a^2} + \frac{(l-1)^2q^2}{4b^2}\right) \\
    &= - \frac{3y^2}{4} + \left(\frac{p}{4a} - \frac{(l-1)q}{2b}\right)y + \frac{(3l-1)pq}{4ab} + \frac{p^2}{16a^2} - \frac{(l-1)(3l-1)q^2}{4b^2} %\, . \\
\end{split}
\end{equation*}

It is straightforward to see that $(\frac{p}{2a},0)$ is the optimum if $b>2l$ giving $Area(V^+((\frac{p}{2a},0)))= \frac{lpq}{2ab} + \frac{p^2}{8a^2} - \frac{(l-1)lq^2}{2b^2}$. This is depicted in Figure~\ref{fig:GridOptimalEven}.

For $b=2l$ we have derivative
$$\frac{\delta A}{\delta y} = - \frac{3y}{2} + \frac{p}{4a} - \frac{(l-1)q}{2b}$$
which gives our optimum to be at $y^* = \frac{p}{6a} - \frac{(l-1)q}{3b} = \frac{p}{6a} - \frac{(b-2)q}{6b}$. However, in order for $b_1=\left(\frac{p}{2a},\frac{p}{6a} - \frac{(b-2)q}{6b}\right)$ to lie within Section $b$ it must be the case that, if $\frac{(b-2)q}{2b} \leq \frac{p}{2a} \leq \frac{(b-1)q}{2b}$, $b_1$ lies below $\mathcal{CL}^2(w_{{-\frac{b-2}{2}}_L})$ and, if $\frac{(b-1)q}{2b} \leq \frac{p}{2a} \leq \frac{q}{2}$, $b_1$ lies below $\mathcal{CL}^8(w_{{\frac{b}{2}}_L})$.

Therefore if $\frac{(b-2)q}{2b} \leq \frac{p}{2a} \leq \frac{(b-1)q}{2b}$ then it must be the case that $y \leq x - \frac{(b-2)q}{2b}$ so we require
$$\frac{p}{6a} - \frac{(b-2)q}{6b} \leq \frac{p}{2a} - \frac{(b-2)q}{2b} \Leftrightarrow \frac{(b-2)q}{3b} \leq \frac{2p}{3a} \Leftrightarrow \frac{(b-2)q}{4b} \leq \frac{p}{2a} \, .$$
Hence $b_1$ is the optimum in Section $b$ for all values $\frac{(b-2)q}{2b} \leq \frac{p}{2a} \leq \frac{(b-1)q}{2b}$.

Otherwise, if $\frac{(b-1)q}{2b} \leq \frac{p}{2a} \leq \frac{q}{2}$ then it must be the case that $y \leq \frac{q}{2} - x$ so we require
$$\frac{p}{6a} - \frac{(b-2)q}{6b} \leq \frac{q}{2} - \frac{p}{2a} \Leftrightarrow \frac{2p}{3a} \leq \frac{(2b-1)q}{3b} \Leftrightarrow \frac{p}{2a} \leq \frac{(2b-1)q}{4b} \, .$$
Therefore if $\frac{(b-1)q}{2b} \leq \frac{p}{2a} \leq \frac{(2b-1)q}{4b}$ then $b_1$ is the optimum in Section $b$. Otherwise, if $\frac{(2b-1)q}{4b}\leq \frac{p}{2a} \leq \frac{q}{2}$ then $b_1$ will lie above Section $b$. If this is the case then the optimum over Section $b$ must lie on the boundary between Section $b$ and $b+1$. However, as we saw when exploring Black's best point when White plays a row (see Figure~\ref{fig:RowOptimalComp} for example), any point lying in Section $b$ on the boundary with Section $b+1$ will be dominated by the identical point within Section $b+1$. Therefore Black's best point will not lie in Section $b$ if $\frac{(2b-1)q}{4b}\leq \frac{p}{2a}$.

To summarise, if $b$ is even then: if $\frac{(b-2)q}{2b} \leq \frac{p}{2a} \leq \frac{(2b-1)q}{4b}$ then the optimum in Section $b$ is $(\frac{p}{2a},\frac{p}{6a} - \frac{(b-2)q}{6b})$ giving $Area(V^+((\frac{p}{2a},\frac{p}{6a} - \frac{(b-2)q}{6b}))) = \frac{(2b-1)pq}{6ab} + \frac{p^2}{12a^2} - \frac{(b-2)(4b-1)q^2}{12b^2}$ (depicted in Figure~\ref{fig:GridOptimalEvenRes}), otherwise if $\frac{(2b-1)q}{4b}\leq \frac{p}{2a} \leq \frac{q}{2}$ then the optimum lies on the boundary with Section $b+1$ and is not Black's best point (and so is not drawn).

\begin{figure}[!ht]\ContinuedFloat
\begin{subfigure}{.52\textwidth}
  \centering
  \includegraphics[width=0.9\textwidth]{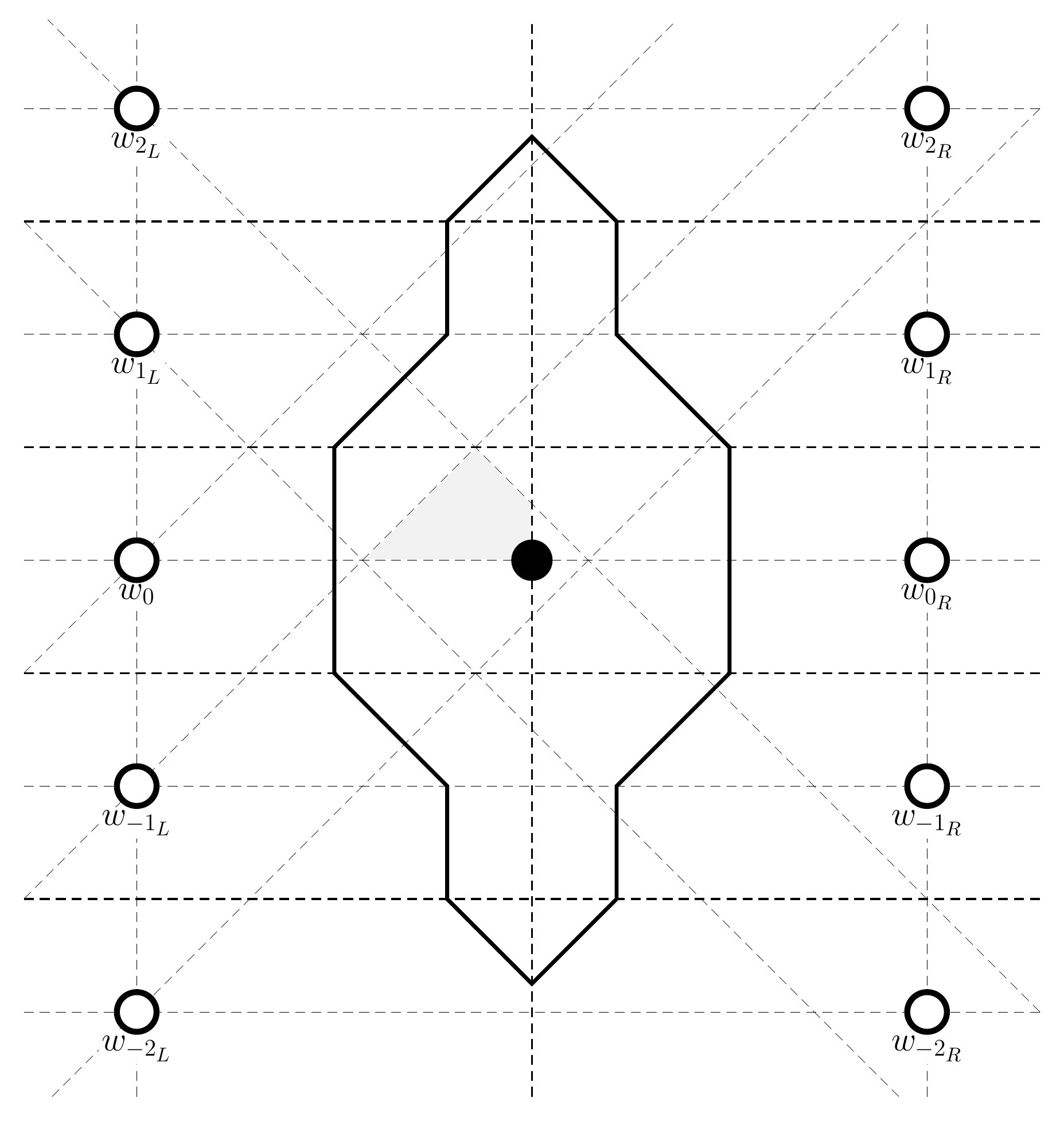}
  \caption{$b_1=(\frac{p}{2a},0)$.\\ \,}
  \label{fig:GridOptimalEven}
\end{subfigure}%
\begin{subfigure}{.48\textwidth}
  \centering
  \includegraphics[width=0.9\textwidth]{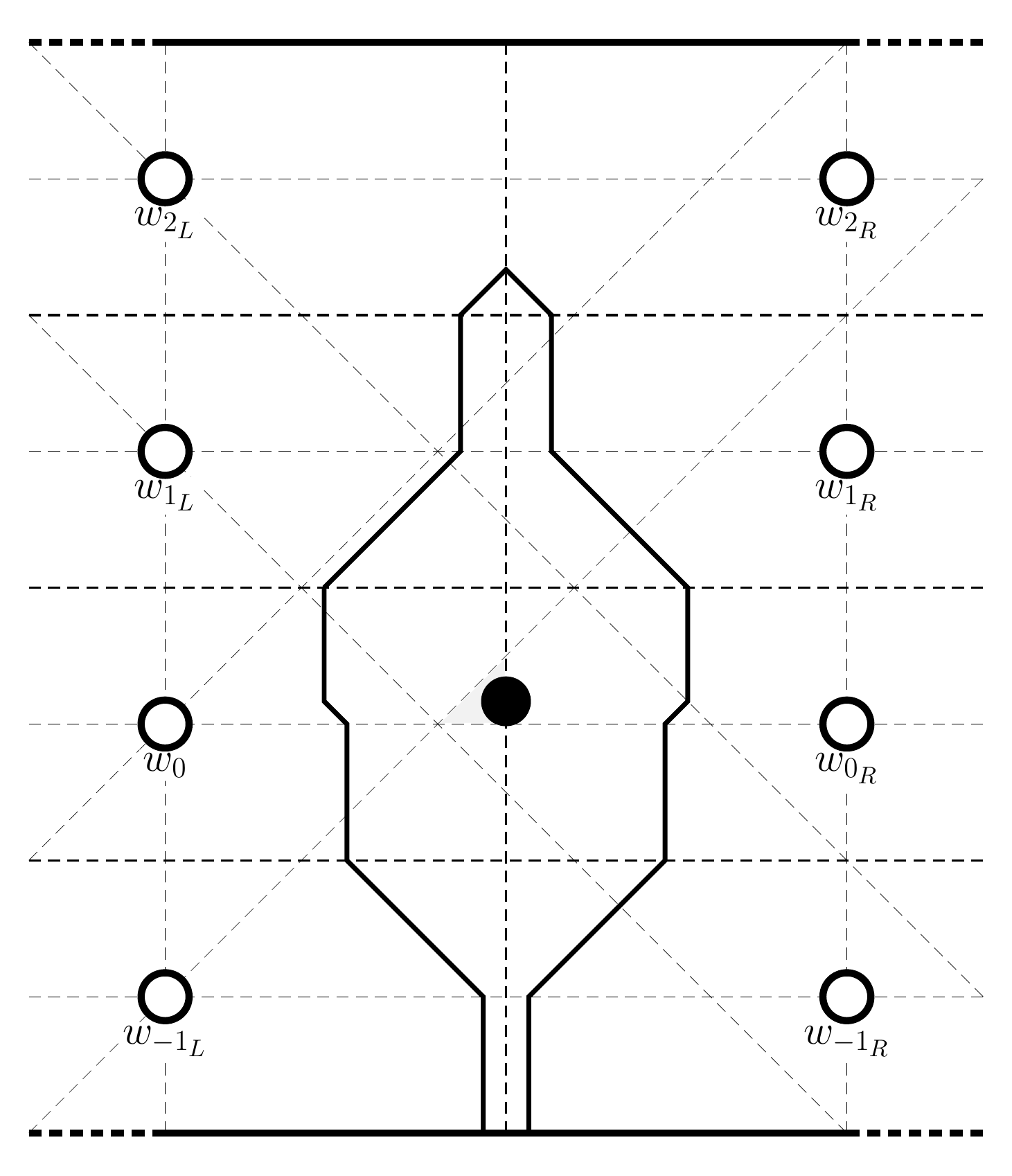}
  \captionsetup{justification=centering}
  \caption{$b_1=(\frac{p}{2a},\frac{p}{6a} - \frac{(b-2)q}{6b})$ only if\\ $\frac{(b-2)q}{2b} \leq \frac{p}{2a} \leq \frac{(2b-1)q}{4b}$ and $b=2l$.}
  \label{fig:GridOptimalEvenRes}
\end{subfigure}
\caption{Maximal area Voronoi cells $V^+(b_1)$ for $b_1$ within Section $2l$ upon $x^*=\frac{p}{2a}$.}
\end{figure}

%%%

\paragraph{Section $2l+1$ upon $x^* = \frac{p}{2a}$}
We now consider the placement of $b_1$ on $x^*=\frac{p}{2a}$ inside the odd Section $2l+1$ (where $\frac{(2l-1)q}{2b} \leq \frac{p}{2a} \leq \frac{(2l+1)q}{2b} \Rightarrow l = \ceil{\frac{pb-qa}{2qa}}$) where $V^+(b_1)$ will extend into Voronoi cells $V^\circ(w_{{i}_L}) \cup V^\circ(w_{{i}_R})$ for $i \in \{ -l,\ldots, l+1\}$ (if existing).

Now, as before, we will explore the effect caused by the non-existence of these faraway Voronoi cells (i.e. what area the boundary of $\mathcal{P}$ cuts off). We can use identical processes to those described in Section $2l$ in order to find the best point in $W$ to assign to be $w_0$. If $b>2l+1$ then we can choose $w_0$ such that all Voronoi cells $V^\circ(w_{{i}_L}) \cup V^\circ(w_{{i}_R})$ for $i \in \{ -l,\ldots, l+1\}$ exist and $V^+(b_1)$ does not interact with $\mathcal{P}$ (again we will not have a unique best point of reference for $w_0$ if $b \neq 2l+2$).

\pagebreak If both $w_{{l+1}_L}$ and $w_{{-l-1}_L}$ or $w_{{-l}_L}$ and $w_{{l+2}_L}$ do not exist while $w_{i_L}$ does for $i \in \{-l, \ldots, l\}$ or for $i \in \{-l+1, \ldots, l+1\}$ respectively then it is the case that $b = 2l+1$ and $V^+(b_1)$ touches only one bounding edge of $\mathcal{P}$. In this scenario we must again decide which area we would prefer: that stolen from $V^\circ(w_{{l+1}_L}) \cup V^\circ(w_{{l+1}_R})$ or from $V^\circ(w_{{-l}_L}) \cup V^\circ(w_{{-l}_R})$. Since $b_1$ is being located in the first quadrant of $V^\circ(w_0)$, lying closer to $w_{{-l}_L}$ than $w_{{l+1}_L}$, it can steal a larger area from $V^\circ(w_{{-l}_L}) \cup V^\circ(w_{{-l}_R})$ than it could from $V^\circ(w_{{l+1}_L}) \cup V^\circ(w_{{l+1}_R})$, so it is favourable for $w_0$ to be chosen to be on the $l+1$th row of points in $W$ (counting from the bottom of the grid) so that $V^+(b_1)$ consists of areas stolen from $V^\circ(w_{{i}_L}) \cup V^\circ(w_{{i}_R})$ for all $i \in \{ -l,\ldots, l\}$ (as opposed to for all $i \in \{ -l+1,\ldots, l+1\}$).

Using an identical argument to that for even sections, for areas $V^+(b_1)$ that touch both horizontal bounding edges of $\mathcal{P}$ (so both $w_{{l+1}_L}$ and $w_{{-l}_L}$ do not exist, meaning that $b < 2l+1$) it is still optimal to choose $w_0$ to be on the $\ceil{\frac{b}{2}}$th row of points in $W$ in order to steal from $V^\circ(w_{{i}_L}) \cup V^\circ(w_{{i}_R})$ for all $i \in \{-\ceil{\frac{b}{2}}+1,\ldots,\floor{\frac{b-1}{2}}\}$. As justified in our analysis of Sections $2l$ it is only within this final Section $b+1$ that both horizontal edges of $\mathcal{P}$ are touched, and we shall explore this section after finishing a full investigation of Sections $2l+1$ for $b \geq 2l+1$.

Now that we have chosen the optimal $w_0$ and recorded which Voronoi cells $V^\circ(w_{{i}_L}) \cup V^\circ(w_{{i}_R})$ will be entered, we can calculate the areas of the Voronoi cell $V^+(b_1)$ for different values of $b$ and optimise the location of $b_1$ upon $x^*=\frac{p}{2a}$ within Section $2l+1$. We calculate these areas by taking the area found in Section $2l$ and adapting it for Section $2l+1$ (noting that a move from Section $2l$ to $2l+1$ means that $V^+(b_1)$ enters $V^\circ(w_{{l+1}_L}) \cup V^\circ(w_{{l+1}_R})$ for the first time). If $b > 2l+1 \left(= 2\ceil{\frac{pb-qa}{2qa}}+1\right)$ then
\begin{equation*}
\begin{split}
    Area(V^+(b_1))&= - \frac{y^2}{2} + \frac{lpq}{2ab} + \frac{p^2}{8a^2} - \frac{(l-1)lq^2}{2b^2} \\ &\qquad- ``Area(V^+(b_1) \cap (V^\circ(w_{{l}_L}) \cup V^\circ(w_{{l}_R})))" \\ &\qquad+ Area(V^+(b_1) \cap (V^\circ(w_{{l}_L}) \cup V^\circ(w_{{l}_R}))) \\ &\qquad+ Area(V^+(b_1) \cap (V^\circ(w_{{l+1}_L}) \cup V^\circ(w_{{l+1}_R}))) \qquad \qquad \\
\end{split}
\end{equation*}
\begin{equation*}
\begin{split}
    \qquad \qquad&= - \frac{y^2}{2} + \frac{lpq}{2ab} + \frac{p^2}{8a^2} - \frac{(l-1)lq^2}{2b^2} \\ &\qquad- \left(\frac{y^2}{4} + (\frac{p}{4a} - \frac{(l-1)q}{2b})y - \frac{(l-1)pq}{4ab} + \frac{p^2}{16a^2} + \frac{(l-1)^2q^2}{4b^2}\right) \\ 
    &\qquad+ \frac{q}{b}y + \frac{pq}{2ab} - \frac{(4l-1)q^2}{4b^2} \\
    &\qquad+ \frac{y^2}{4} + \left(\frac{p}{4a} - \frac{((l+1)-1)q}{2b}\right)y - \frac{((l+1)-1)pq}{4ab} + \frac{p^2}{16a^2} + \frac{((l+1)-1)^2q^2}{4b^2} \\
    &= - \frac{y^2}{2} + \frac{q}{2b}y + \frac{(l+1)pq}{4ab}  + \frac{p^2}{8a^2} - \frac{l^2q^2}{2b^2} \\
\end{split}
\end{equation*}
and if $b=2l+1$ then, adapting this formula,
\begin{equation*}
\begin{split}
    Area(V^+(b_1))&= - \frac{y^2}{2} + \frac{q}{2b}y + \frac{(l+1)pq}{4ab}  + \frac{p^2}{8a^2} - \frac{l^2q^2}{2b^2} \\
    &\qquad- ``Area(V^+(b_1) \cap (V^\circ(w_{{l+1}_L}) \cup V^\circ(w_{{l+1}_R})))" \\
    &= - \frac{y^2}{2} + \frac{q}{2b}y + \frac{(l+1)pq}{4ab}  + \frac{p^2}{8a^2} - \frac{l^2q^2}{2b^2} \\
    &\qquad- \left(\frac{y^2}{4} + \left(\frac{p}{4a} - \frac{lq}{2b}\right)y - \frac{lpq}{4ab} + \frac{p^2}{16a^2} + \frac{l^2q^2}{4b^2}\right) \\
    &= - \frac{3y^2}{4} - \left(\frac{p}{4a} - \frac{(l+1)q}{2b}\right)y + \frac{(2l+1)pq}{4ab}  + \frac{p^2}{16a^2} - \frac{3l^2q^2}{4b^2} \, .
\end{split}
\end{equation*}

Clearly $(\frac{p}{2a},\frac{q}{2b})$ is the optimum if $b > 2l+1$ giving $Area(V^+((\frac{p}{2a},\frac{q}{2b})))=\frac{(4l-1)pq}{4ab} + \frac{p^2}{8a^2} - \frac{(4l^2-1)q^2}{8b^2}$ as depicted in Figure~\ref{fig:GridOptimalOdd}.

For $b=2l+1$ we have derivative
$$\frac{\delta A}{\delta y} = - \frac{3y}{2} - \frac{p}{4a} + \frac{(l+1)q}{2b}$$
which gives our optimum to be at $y^* = - \frac{p}{6a} + \frac{(l+1)q}{3b} = - \frac{p}{6a} + \frac{(b+1)q}{6b}$. However, in order for $b_1=(\frac{p}{2a}, \frac{(b+1)q}{6b} - \frac{p}{6a})$ to lie within Section $b+1$ it must be the case that, if $\frac{(b-2)q}{2b} \leq \frac{p}{2a} \leq \frac{(b-1)q}{2b}$, $b_1$ lies above $\mathcal{CL}^8(w_{{\frac{b-1}{2}}_L})$ and, if $\frac{(b-1)q}{2b} \leq \frac{p}{2a} \leq \frac{q}{2}$, $b_1$ lies above $\mathcal{CL}^2(w_{{-\frac{b-1}{2}}_L})$.

Therefore, if $\frac{(b-2)q}{2b} \leq \frac{p}{2a} \leq \frac{(b-1)q}{2b}$ then it must be the case that $\frac{(b-1)q}{2b} - x \leq y$ so we require
$$\frac{(b-1)q}{2b} - \frac{p}{2a} \leq \frac{(b+1)q}{6b} - \frac{p}{6a} \Leftrightarrow \frac{(b-2)q}{3b} \leq \frac{p}{3a} \Leftrightarrow \frac{(b-2)q}{2b} \leq \frac{p}{2a} \, .$$
Hence $b_1$ is the optimum in Section $b$ for all values $\frac{(b-2)q}{2b} \leq \frac{p}{2a} \leq \frac{(b-1)q}{2b}$.

Otherwise, if $\frac{(b-1)q}{2b} \leq \frac{p}{2a} \leq \frac{q}{2}$ then it must be the case that $x - \frac{(b-1)q}{2b} \leq y$ so we require
$$\frac{p}{2a} - \frac{(b-1)q}{2b} \leq \frac{(b+1)q}{6b} - \frac{p}{6a} \Leftrightarrow \frac{2p}{3a} \leq \frac{(2b-1)q}{3b} \Leftrightarrow \frac{p}{2a} \leq \frac{(2b-1)q}{4b} \, .$$
Therefore, if $\frac{(b-1)q}{2b} \leq \frac{p}{2a} \leq \frac{(2b-1)q}{4b}$ then $b_1$ is the optimum in Section $b$. Otherwise, if $\frac{(2b-1)q}{4b}\leq \frac{p}{2a} \leq \frac{q}{2}$ then $b_1$ will lie below Section $b$. If this is the case then, following identical working as that for even $b$, the optimum over Section $b$ must lie on the boundary between Section $b$ and $b+1$ whereupon it will be dominated by the identical point within Section $b+1$. Therefore Black's best point will not lie in Section $b$ if $\frac{(2b-1)q}{4b}\leq \frac{p}{2a}$.

To summarise, if $b$ is odd then: if $\frac{(b-2)q}{2b} \leq \frac{p}{2a} \leq \frac{(2b-1)q}{4b}$ then the optimum in Section $b$ is $(\frac{p}{2a}, \frac{(b+1)q}{6b} - \frac{p}{6a})$ giving $Area(V^+((\frac{p}{2a}, \frac{(b+1)q}{6b} - \frac{p}{6a}))) = \frac{(5b-1)pq}{24ab} + \frac{p^2}{12a^2} - \frac{(b-2)(2b-1)q^2}{12b^2}$ (depicted in Figure~\ref{fig:GridOptimalOddRes}), otherwise if $\frac{(2b-1)q}{4b}\leq \frac{p}{2a} \leq \frac{q}{2}$ then the optimum lies on the boundary with Section $b+1$ and is not Black's best point (and so is not drawn).

\begin{figure}[!ht]\ContinuedFloat
\begin{subfigure}{.53\textwidth}
  \centering
  \includegraphics[width=0.9\textwidth]{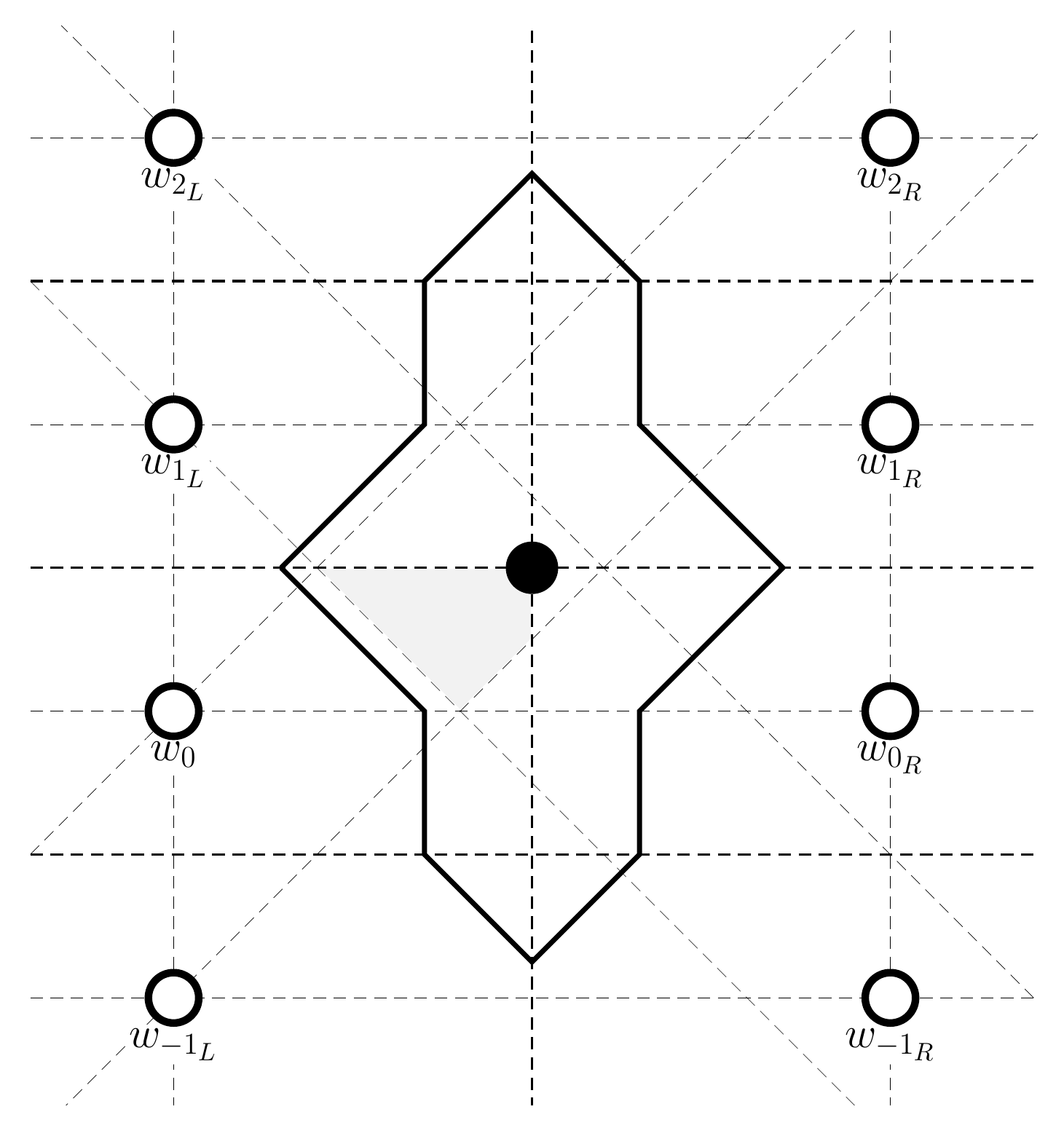}
  \caption{$b_1=(\frac{p}{2a},\frac{q}{2b})$.\\ \,}
  \label{fig:GridOptimalOdd}
\end{subfigure}%
\begin{subfigure}{.47\textwidth}
  \centering
  \includegraphics[width=0.9\textwidth]{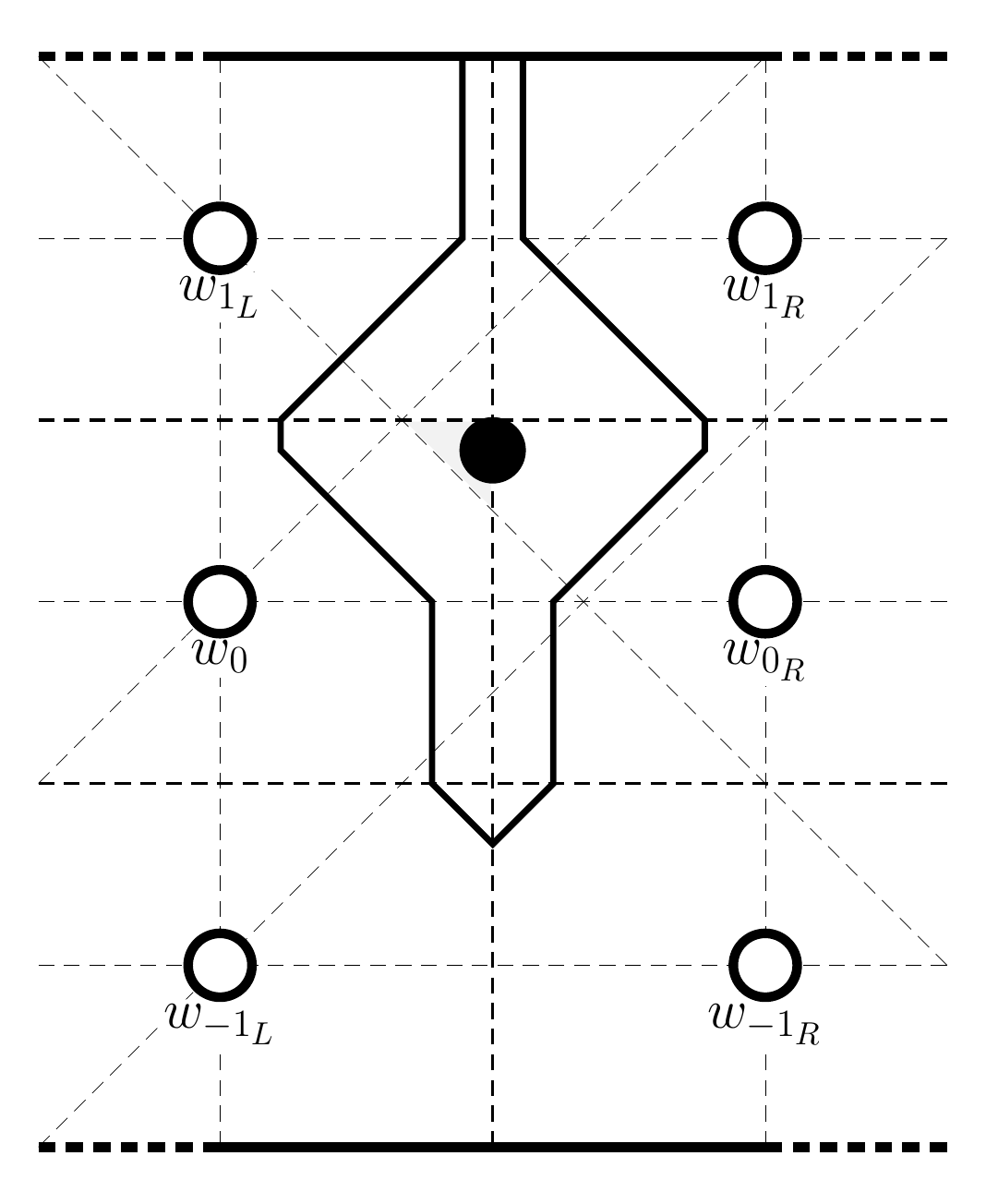}
  \captionsetup{justification=centering}
  \caption{$b_1 = (\frac{p}{2a}, \frac{(b+1)q}{6b} - \frac{p}{6a})$ only if \, \, \, \,\, \\ $\frac{(b-2)q}{2b} \leq \frac{p}{2a} \leq \frac{(2b-1)q}{4b}$ and $b=2l+1$.}
  \label{fig:GridOptimalOddRes}
\end{subfigure}
\caption{Maximal area Voronoi cells $V^+(b_1)$ for $b_1$ within Section $2l+1$ upon $x^*=\frac{p}{2a}$.}
\end{figure}

\pagebreak \paragraph{Section $b+1$ upon $x^* = \frac{p}{2a}$}

Finally we explore Section $b+1$, the last possible section, where $w_0$ is chosen to be on the $\ceil{\frac{b}{2}}$th row. If $b_1$ is placed within this section then $V^+(b_1)$ touches both horizontal boundaries of $\mathcal{P}$. This simply has the areas, if $b$ is even,

\begin{equation*}
\begin{split}
    Area(V^+(b_1))&= \sum_{i=-\frac{b-2}{2}}^{-1}{Area(V^+(b_1) \cap (V^\circ(w_{{i}_L}) \cup V^\circ(w_{{i}_R})))} \\ &\qquad+ Area(V^+(b_1) \cap (V^\circ(w_{{0}_L}) \cup V^\circ(w_{{0}_R}))) \\ &\qquad+ \sum_{i=1}^{\frac{b}{2}}{Area(V^+(b_1) \cap (V^\circ(w_{{i}_L}) \cup V^\circ(w_{{i}_R})))} \\
    &= \sum_{i=-\frac{b-2}{2}}^{-1}{\left( -\frac{q}{b}y + \frac{pq}{2ab} + \frac{(4i+1)q^2}{4b^2} \right)} + \frac{pq}{2ab} - y^2 \\ &\qquad+ \sum_{i=1}^{\frac{b}{2}}{\left(\frac{q}{b}y + \frac{pq}{2ab} - \frac{(4i-1)q^2}{4b^2} \right)} \\
    &= - y^2 + \frac{q}{b}y + \frac{pq}{2a} - \frac{(b^2 - b + 1)q^2}{4b^2} \\
\end{split}
\end{equation*}
and, if $b$ is odd,
\begin{equation*}
\begin{split}
    Area(V^+(b_1))&= \sum_{i=-\frac{b-1}{2}}^{-1}{Area(V^+(b_1) \cap (V^\circ(w_{{i}_L}) \cup V^\circ(w_{{i}_R})))} \\ &\qquad+ Area(V^+(b_1) \cap (V^\circ(w_{{0}_L}) \cup V^\circ(w_{{0}_R}))) \\ &\qquad+ \sum_{i=1}^{\frac{b-1}{2}}{Area(V^+(b_1) \cap (V^\circ(w_{{i}_L}) \cup V^\circ(w_{{i}_R})))} \\
\end{split}
\end{equation*}
\begin{equation*}
\begin{split}
    \qquad \qquad&= \sum_{i=-\frac{b-1}{2}}^{-1}{\left( -\frac{q}{b}y + \frac{pq}{2ab} + \frac{(4i+1)q^2}{4b^2} \right)} + \frac{pq}{2ab} - y^2 \\ &\qquad+ \sum_{i=1}^{\frac{b-1}{2}}{\left(\frac{q}{b}y + \frac{pq}{2ab} - \frac{(4i-1)q^2}{4b^2} \right)} \\
    &= - y^2 + \frac{pq}{2a} - \frac{(b-1)q^2}{4b} \, . \\
\end{split}
\end{equation*}

It is clear that $(\frac{p}{2a},\frac{q}{2b})$ and $(\frac{p}{2a},0)$ are the optima for $b$ is even and $b$ is odd, and these are both in Section $b+1$ for $b$ even and odd respectively. We are certainly pleased to see this result since, as we might expect, both of these points lie on the horizontal line of symmetry of $\mathcal{P}$ and can be considered to be the centre of $\mathcal{P}$ which we would presume to be an effective placement. This gives us areas $Area(V^+((\frac{p}{2a},\frac{q}{2b})))=\frac{pq}{2a} - \frac{(b-1)q^2}{4b}$ and $Area(V^+((\frac{p}{2a},0)))=\frac{pq}{2a} - \frac{(b-1)q^2}{4b}$ (interestingly, identical to each other) as depicted in Figures~\ref{fig:GridOptimalbeven} and \ref{fig:GridOptimalbodd} respectively.

\begin{figure}[!ht]\ContinuedFloat
\begin{subfigure}{.5\textwidth}
  \centering
  \includegraphics[width=0.9\textwidth]{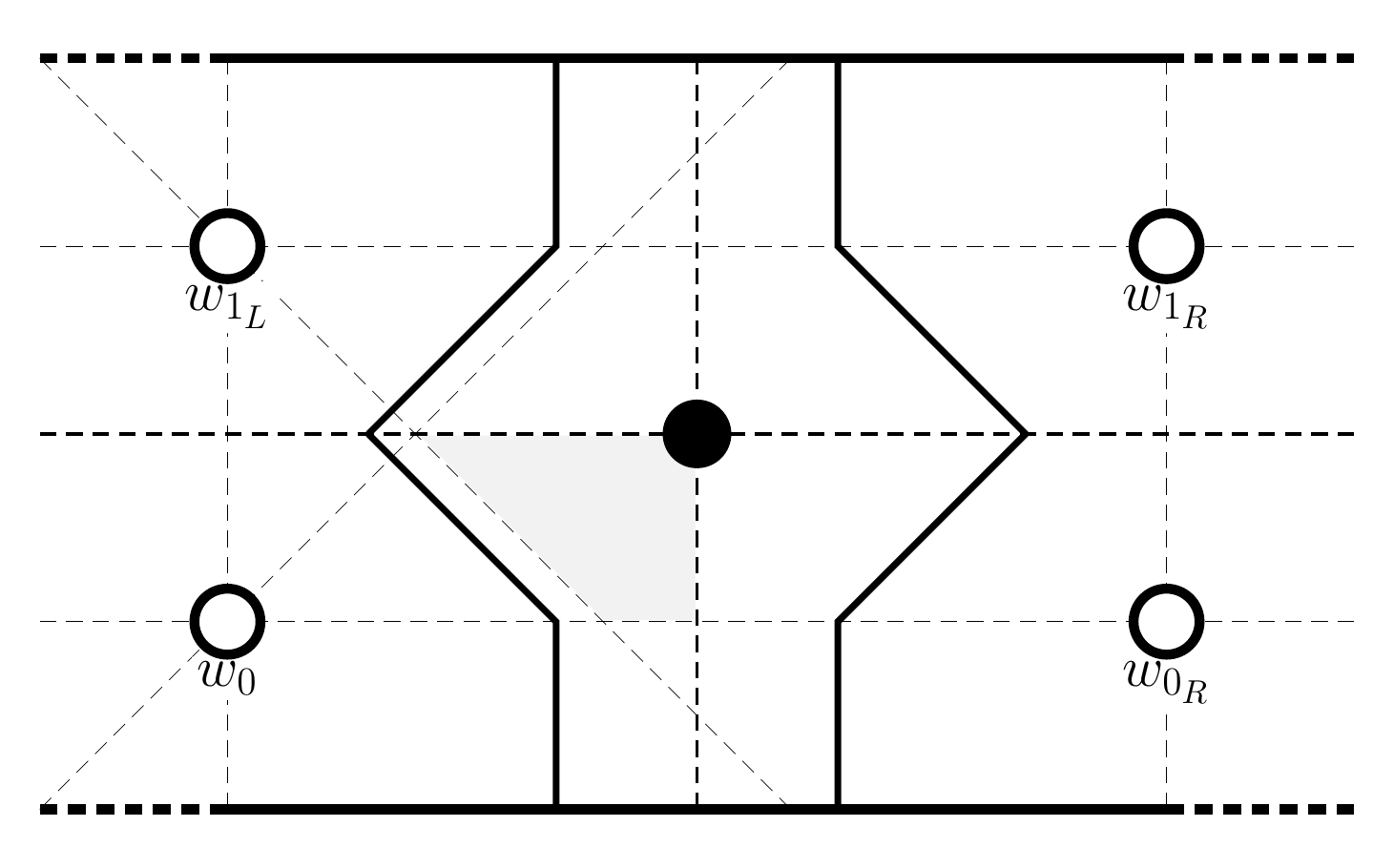}
  \caption{$b_1=(\frac{p}{2a},\frac{q}{2b})$.}
  \label{fig:GridOptimalbeven}
\end{subfigure}%
\begin{subfigure}{.5\textwidth}
  \centering
  \includegraphics[width=0.9\textwidth]{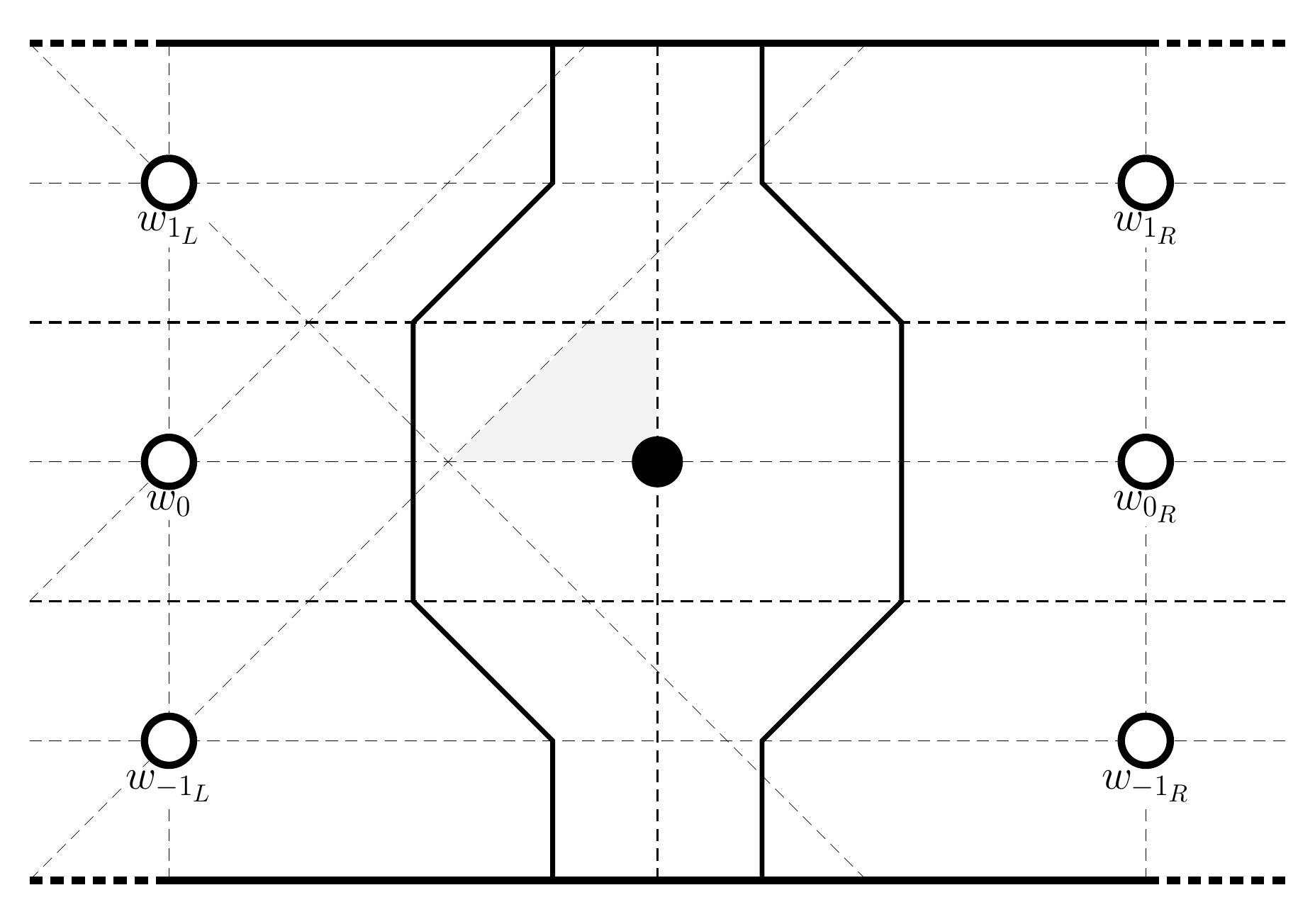}
  \caption{$b_1 = (\frac{p}{2a},0)$.}
  \label{fig:GridOptimalbodd}
\end{subfigure}
\caption{Maximal area Voronoi cells $V^+(b_1)$ for $b_1$ within Section $b+1$ upon $x^*=\frac{p}{2a}$.}
\label{fig:GridOptimals}
\end{figure}

%Finally, if $b<2l$ and $b$ is even then we have $y^*=\frac{q}{2b}$. If $(\frac{p}{2a},\frac{q}{2b})$ does not lie in Section $2l$ then, as explained above, the local optimum will be dominated by the identical point within Section $2l+1$ and  $(\frac{p}{2a},\frac{q}{2b})$ will only be contained in Section $2l$ if $\frac{p}{2a} = \frac{(2l-1)q}{2b}$. However, even this does not give the point a fighting to be the best point as this vertex of Section $2l$ also lies on the configuration line delineating Section $2l$ from Section $2l+1$, so this will never be the best point $b^*$.
\bigskip

And thus we have found every optimal location within every possible partition of $\mathcal{P}$ that is a candidate for Black's best point $b^*$. To recap, Figure~\ref{fig:GridOptimals} shows all of the potential candidates for $b^*$ within each appropriate section. Following our discussion of the choice of $w_0$ (and the fact that the best choice of $w_0$ for $b_1$ in Section $I$ only requires that the first quadrant of $V^+(b_1)$ is a core quadrant and, if possible, that $w_{{-1}_L}$ exists) we can say with confidence that, without loss of generality, the best point $b^*$ lies in the first quadrant of the $(\ceil{\frac{a}{2}},\ceil{\frac{b}{2}})$th point in $W$ (where the $(i,j)$th point in $W$ is the point $w \in W$ which is in the $i$th column (counting from the left) and $j$th row (counting from the bottom)). This quadrant is the unique (or one of two or four identical) most central quadrant in $\mathcal{P}$, thus furthest from the boundaries of $\mathcal{P}$. Hence the best point $b^*$ will lie in this quadrant and in Section $I$ or on the line $x^*=\frac{p}{2a}$ and we must determine which optimum within which of these areas gives the best point depending on the relationship between $p$, $q$, $a$, and $b$.

Fortunately the nature of our investigation into $x^*=\frac{p}{2a}$ allows us to fairly easily compare optima upon this line where we have relatively restrictive conditions on which sections contain $x^*=\frac{p}{2a}$. For $\frac{(2l-1)q}{2b} \leq \frac{p}{2a} \leq \frac{lq}{b}$ and assuming that $2l+1<b$ (we will assess $2l+1 \geq b$ later), $x^*=\frac{p}{2a}$ enters Sections $2l$ and $2l+1$ so we shall compare the optima within these sections for this condition. In Section $2l$, $Area(V^+((\frac{p}{2a},0)))= \frac{lpq}{2ab} + \frac{p^2}{8a^2} - \frac{(l-1)lq^2}{2b^2}$ and in Section $2l+1$, $Area(V^+((\frac{p}{2a},\frac{q}{2b})))=\frac{(4l-1)pq}{4ab} + \frac{p^2}{8a^2} - \frac{(4l^2-1)q^2}{8b^2}$. The optimum in Section $2l+1$ is better than that in Section $2l$ if

\begin{equation*}
\begin{split}
    \left(\frac{(4l-1)pq}{4ab} + \frac{p^2}{8a^2} \right. &- \left. \frac{(4l^2-1)q^2}{8b^2}\right) - \left(\frac{lpq}{2ab} + \frac{p^2}{8a^2} - \frac{(l-1)lq^2}{2b^2}\right) \\
    &= \frac{(4l-1)pq}{4ab} + \frac{p^2}{8a^2} - \frac{(4l^2-1)q^2}{8b^2} - \frac{lpq}{2ab} - \frac{p^2}{8a^2} + \frac{(l-1)lq^2}{2b^2} \\
    &= \frac{(2l-1)pq}{4ab} - \frac{(4l-1)q^2}{8b^2} \geq 0 \\
    &\Leftrightarrow \frac{p}{2a} \geq \frac{(4l-1)q}{4(2l-1)b} \, .\\
\end{split}
\end{equation*}
Now $$\frac{(2l-1)q}{2b} \leq \frac{(4l-1)q}{4(2l-1)b} \, \Leftrightarrow \, 8l^2 - 12l + 3 \leq 0 \, \Leftrightarrow \, \frac{3 - \sqrt{3}}{4} \leq l \leq \frac{3 + \sqrt{3}}{4}$$ so if $l>1$ then the optimum in Section $2l+1$ is always better than the optimum in Section $2l$. Otherwise, if $l=1$ then Section $2l$ ($II$) is better than Section $2l+1$ ($III$) for $\left( \frac{q}{2b} =\right) \frac{(2l-1)q}{2b} \leq \frac{p}{2a} \leq \frac{(4l-1)q}{4(2l-1)b} \left(= \frac{3q}{4b}\right)$. %\qquad\qquad (2l-1)^2 \leq \frac{3}{2} \Leftrightarrow \frac{2-\sqrt{6}}{4} \leq l \leq \frac{2+\sqrt{6}}{4}

For $\frac{lq}{b} \leq \frac{p}{2a} \leq \frac{(2l+1)q}{2b}$ and assuming that $2(l+1)<b$ (we will assess $2(l+1) \geq b$ soon), $x^*=\frac{p}{2a}$ enters Sections $2l+1$ and $2(l+1)$ (note that $x^*=\frac{p}{2a}$ will never enter Section $I$ because $\frac{p}{2a} \geq \frac{q}{2b}$). In Section $2l+1$, $Area(V^+((\frac{p}{2a},\frac{q}{2b})))=\frac{(4l-1)pq}{4ab} + \frac{p^2}{8a^2} - \frac{(4l^2-1)q^2}{8b^2}$ and in Section $2(l+1)$, $Area(V^+((\frac{p}{2a},0)))= \frac{(l+1)pq}{2ab} + \frac{p^2}{8a^2} - \frac{(l+1)lq^2}{2b^2}$ so the optimum in Section $2(l+1)$ is better than that in Section $2l+1$ if
\begin{equation*}
\begin{split}
    \left(\frac{(l+1)pq}{2ab} + \frac{p^2}{8a^2} \right. &- \left. \frac{(l+1)lq^2}{2b^2}\right) - \left(\frac{(4l-1)pq}{4ab} + \frac{p^2}{8a^2} - \frac{(4l^2-1)q^2}{8b^2}\right) \\
%    &= \frac{(l+1)pq}{2ab} + \frac{p^2}{8a^2} - \frac{(l+1)lq^2}{2b^2} - \frac{(4l-1)pq}{4ab} - \frac{p^2}{8a^2} + \frac{(4l^2-1)q^2}{8b^2} \\
    &= - \frac{(2l-3)pq}{4ab} - \frac{(4l+1)q^2}{8b^2} \geq 0 \\
    &\Leftrightarrow \frac{(3-2l)p}{2a} \geq \frac{(4l+1)q}{4b} \, . \\
%    &\Leftrightarrow  \frac{p}{2a} \geq \frac{(4l+1)q}{4(3-2l)b} \, .\\
\end{split}
\end{equation*}
Now if $(3-2l)\leq 0$ (i.e. $l \geq \frac{3}{2}$) then $0 \geq \frac{(3-2l)p}{2a} \geq \frac{(4l+1)q}{4b}$ so the optimum in Section $2(l+1)$ is never better than the optimum in Section $2l+1$. Otherwise if $l=1$ then $\frac{p}{2a} \geq \frac{(4l+1)q}{4(3-2l)b} = \frac{5q}{4b} > \frac{q}{b} = \frac{lq}{b}$, so the optimum in Section $2(l+1)$ ($IV$) is better than the optimum in Section $2l+1$ ($III$) for $\frac{5q}{4b} \leq \frac{p}{2a} \leq \frac{3q}{2b} = \frac{(2l+1)q}{2b}$.

Upon $x^*=\frac{p}{2a}$ we have the possibility of two special cases with regard to the area of $V^+(b_1)$ to which we must give careful consideration: Section $b$ (within which $b_1$ produces a Voronoi cell touching exactly one horizontal boundary of $\mathcal{P}$) and Section $b+1$ (within which $b_1$ produces a Voronoi cell touching both horizontal boundaries). Therefore we must compare the areas of Section $b-1$ with Section $b$ as well as the areas of Section $b$ with Section $b+1$.

Firstly, suppose $b$ is even. For $\frac{(b-2)q}{2b} \leq \frac{p}{2a} \leq \frac{(b-1)q}{2b}$ and assuming $b>2$ (since $\frac{p}{2a} \geq \frac{q}{2b}$), $x^*=\frac{p}{2a}$ enters Sections $b-1$ and $b$ whose maximal areas are $Area(V^+((\frac{p}{2a},\frac{q}{2b})))=\frac{(2b-5)pq}{4ab} + \frac{p^2}{8a^2} - \frac{(b^2-4b+3)q^2}{8b^2}$ and $Area(V^+((\frac{p}{2a},\frac{p}{6a} - \frac{(b-2)q}{6b}))) = \frac{(2b-1)pq}{6ab} + \frac{p^2}{12a^2} - \frac{(b-2)(4b-1)q^2}{12b^2}$ respectively. The optimum in Section $b$ is better than the optimum in Section $b-1$ if
\begin{equation*}
\begin{split}
    \left(\frac{(2b-1)pq}{6ab} \right. &+ \left. \frac{p^2}{12a^2} - \frac{(b-2)(4b-1)q^2}{12b^2}\right) - \left(\frac{(2b-5)pq}{4ab} + \frac{p^2}{8a^2} - \frac{(b^2-4b+3)q^2}{8b^2}\right) \\
%    &= \frac{(2b-1)pq}{6ab} + \frac{p^2}{12a^2} - \frac{(b-2)(4b-1)q^2}{12b^2} - \frac{(2b-5)pq}{4ab} - \frac{p^2}{8a^2} + \frac{(b^2-4b+3)q^2}{8b^2} \\
    &= - \frac{(2b-13)pq}{12ab} - \frac{p^2}{24a^2} - \frac{(5b^2-6b-5)q^2}{24b^2} \geq 0 \\
    &\Leftrightarrow \frac{p^2}{a^2} + \frac{2(2b-13)pq}{ab} + \frac{(5b^2-6b-5)q^2}{b^2} \leq 0 \\
%    &\Leftrightarrow \left(\frac{p}{a} + \frac{(2b-13)q}{b}\right)^2 - \frac{(2b-13)^2q^2}{b^2} + \frac{(5b^2-6b-5)q^2}{b^2} \leq 0 \\
\end{split}
\end{equation*}
\begin{equation*}
\begin{split}
    \qquad \qquad&\Leftrightarrow \left(\frac{p}{a} + \frac{(2b-13)q}{b}\right)^2 + \frac{(b^2+46b-174)q^2}{b^2} \leq 0 \\
    &\Leftrightarrow \frac{(13-2b-\sqrt{-(b^2+46b-174)})q}{2b} \leq \frac{p}{2a} \leq \frac{(13-2b+\sqrt{-(b^2+46b-174)})q}{2b} \, . \\
%    &\Leftrightarrow  \frac{p}{2a} \geq \frac{(4l+1)q}{4(3-2l)b} \, .\\
\end{split}
\end{equation*}
Now this condition only holds if $b^2+46b-174 \leq 0 \Leftrightarrow -23-\sqrt{703} \leq b \leq -23 + \sqrt{703} \approx 3.5$ so $b$ must be either $2$ or $3$, and since $b$ is even it must be the case that $b=2$ which contradicts our assumption. So the optimum in Section $b-1$ is better than the optimum in Section $b$ for all $b \neq 2$. This result is actually as expected: the optimum in Section $2(l+1)$ is dominated by the optimum in Section $2l+1$ and the structure in Section $b$ has a lesser area than the structure in a general Section $2(l+1)$ would have for a value of $l=\frac{b-2}{2}$; and this paragraph simply serves as a sanity check.

For $\frac{(b-1)q}{2b} \leq \frac{p}{2a} \leq \frac{q}{2}$, $x^*=\frac{p}{2a}$ enters Sections $b$ and $b+1$ whose maximal areas are $Area(V^+((\frac{p}{2a},\frac{p}{6a} - \frac{(b-2)q}{6b}))) = \frac{(2b-1)pq}{6ab} + \frac{p^2}{12a^2} - \frac{(b-2)(4b-1)q^2}{12b^2}$ (only for $\frac{(b-2)q}{2b} \leq \frac{p}{2a} \leq \frac{(2b-1)q}{4b}$ so we now only consider the interval $\frac{(b-1)q}{2b} \leq \frac{p}{2a} \leq \frac{(2b-1)q}{4b}$) and $Area(V^+((\frac{p}{2a},\frac{q}{2b})))=\frac{pq}{2a} - \frac{(b-1)q^2}{4b}$ respectively. The optimum in Section $b+1$ is better than the optimum in Section $b$ if
\begin{equation*}
\begin{split}
    \left(\frac{pq}{2a} \right. &- \left. \frac{(b-1)q^2}{4b}\right) - \left(\frac{(2b-1)pq}{6ab} + \frac{p^2}{12a^2} - \frac{(b-2)(4b-1)q^2}{12b^2}\right) \\
%    &=\frac{pq}{2a} - \frac{(b-1)q^2}{4b} - \frac{(2b-1)pq}{6ab} - \frac{p^2}{12a^2} + \frac{(b-2)(4b-1)q^2}{12b^2} \\
    &=\frac{(b+1)pq}{6ab} - \frac{p^2}{12a^2} + \frac{(b^2-6b+2)q^2}{12b^2} \geq 0 \\
    &\Leftrightarrow \frac{p^2}{a^2} - \frac{2(b+1)pq}{ab} - \frac{(b^2-6b+2)q^2}{b^2} \leq 0 \\
    &\Leftrightarrow \left(\frac{p}{a} - \frac{(b+1)q}{b}\right)^2 - \frac{(2b^2-4b+3)q^2}{b^2} \leq 0 \\
    &\Leftrightarrow \frac{(b+1-\sqrt{2b^2-4b+3})q}{2b} \leq \frac{p}{2a} \leq \frac{(b+1+\sqrt{2b^2-4b+3})q}{2b} \, . \\
\end{split}
\end{equation*}
Now $2b^2-4b+3 = 2(b-1)^2 + 1 > 0$ so the condition holds for all $b$. Comparing the limits of both conditions,
\begin{equation*}
\begin{split}
    \frac{(b-1)q}{2b} - \frac{(b+1-\sqrt{2b^2-4b+3})q}{2b} &= \frac{(-2+\sqrt{2(b-1)^2 + 1}))q}{2b} \geq 0 \\
    &\Leftrightarrow \sqrt{2(b-1)^2 + 1} \geq 2 \\
    &\Leftrightarrow 2(b-1)^2 \geq 3 \\
    &\Leftrightarrow b \leq \frac{2-\sqrt{6}}{2} \text{ or } b \geq \frac{2+\sqrt{6}}{2} \\
\end{split}
\end{equation*}
and
\begin{equation*}
    \frac{(b+1+\sqrt{2b^2-4b+3})q}{2b} - \frac{(2b-1)q}{4b} = \frac{(3+2\sqrt{2(b-1)^2 + 1})q}{4b} \geq 0 \, .
\end{equation*}
Hence the optimum in Section $b+1$ is better than the optimum in Section $b$ unless $b=2$ (since $b \geq \frac{2+\sqrt{6}}{2}$) in which case the optimum in Section $b$ ($II$) is better than the optimum in Section $b+1$ ($III$) for $\left(\frac{(b-1)q}{2b} =\right) \frac{q}{4} \leq \frac{p}{2a} \leq \frac{(3+\sqrt{3})q}{4} \left(= \frac{(b+1+\sqrt{2b^2-4b+3})q}{2b}\right)$.

Finally, suppose instead that $b$ is odd. For $\frac{(b-2)q}{2b} \leq \frac{p}{2a} \leq \frac{(b-1)q}{2b}$, $x^*=\frac{p}{2a}$ enters Sections $b-1$ and $b$ whose maximal areas are $Area(V^+((\frac{p}{2a},0)))=\frac{(b-1)pq}{4ab} + \frac{p^2}{8a^2} - \frac{(b-3)(b-1)q^2}{8b^2}$ and $Area(V^+((\frac{p}{2a}, \frac{(b+1)q}{6b} - \frac{p}{6a}))) = \frac{(5b-1)pq}{24ab} + \frac{p^2}{12a^2} - \frac{(b-2)(2b-1)q^2}{12b^2}$ respectively. The optimum in Section $b$ is better than the optimum in Section $b-1$ if
\begin{equation*}
\begin{split}
    \left(\frac{(5b-1)pq}{24ab} \right. &+ \left. \frac{p^2}{12a^2} - \frac{(b-2)(2b-1)q^2}{12b^2}\right) - \left( \frac{(b-1)pq}{4ab} + \frac{p^2}{8a^2} - \frac{(b-3)(b-1)q^2}{8b^2} \right) \\
    &= - \frac{(b-5)pq}{24ab} - \frac{p^2}{24a^2} - \frac{(b^2+2b-5)q^2}{24b^2} \geq 0 \\
\end{split}
\end{equation*}
\begin{equation*}
\begin{split}
    \qquad \qquad&= \frac{p^2}{a^2} + \frac{(b-5)pq}{ab} + \frac{(b^2+2b-5)q^2}{b^2} \leq 0 \\
%    &= (\frac{p}{a} + \frac{(b-5)q}{2b})^2 - \frac{(b-5)^2q^2}{4b^2} + \frac{(b^2+2b-5)q^2}{b^2} \leq 0 \\
    &= \left(\frac{p}{a} + \frac{(b-5)q}{2b}\right)^2 + \frac{(3b^2+18b-45)q^2}{4b^2} \leq 0 \\
    &= \frac{(b-5-\sqrt{-3(b^2+6b-15)})q}{4b} \leq \frac{p}{2a} \leq \frac{(b-5+\sqrt{-3(b^2+6b-15)})q}{4b} \, . \\
\end{split}
\end{equation*}
Now this condition only holds if $b^2+6b-15 = (b+3)^2 - 24 \leq 0 \Leftrightarrow -3-\sqrt{24} \leq b \leq -3+2\sqrt{6} \approx 1.9$ so it holds for no value of $b$. Hence the optimum in Section $b-1$ is better than the optimum in Section $b$.

For $\frac{(b-1)q}{2b} \leq \frac{p}{2a} \leq \frac{q}{2}$, $x^*=\frac{p}{2a}$ enters Sections $b$ and $b+1$ whose maximal areas are $Area(V^+((\frac{p}{2a}, \frac{(b+1)q}{6b} - \frac{p}{6a}))) = \frac{(5b-1)pq}{24ab} + \frac{p^2}{12a^2} - \frac{(b-2)(2b-1)q^2}{12b^2}$ (only for $\frac{(b-2)q}{2b} \leq \frac{p}{2a} \leq \frac{(2b-1)q}{4b}$ so we now only consider the interval $\frac{(b-1)q}{2b} \leq \frac{p}{2a} \leq \frac{(2b-1)q}{4b}$ as before) and $Area(V^+((\frac{p}{2a},0)))=\frac{pq}{2a} - \frac{(b-1)q^2}{4b}$ respectively. The optimum in Section $b+1$ is better than the optimum in Section $b$ if
\begin{equation*}
\begin{split}
    \left( \frac{pq}{2a} \right. &- \left. \frac{(b-1)q^2}{4b}\right) - \left(\frac{(5b-1)pq}{24ab} + \frac{p^2}{12a^2} - \frac{(b-2)(2b-1)q^2}{12b^2}\right) \\
%    &=\frac{pq}{2a} - \frac{(b-1)q^2}{4b} - \frac{(5b-1)pq}{24ab} - \frac{p^2}{12a^2} + \frac{(b-2)(2b-1)q^2}{12b^2} \\
    &=\frac{(7b+1)pq}{24ab} - \frac{p^2}{12a^2} - \frac{(b^2 + 2 b - 2)q^2}{12b^2} \geq 0 \\
    &\Leftrightarrow \frac{p^2}{a^2} - \frac{(7b+1)pq}{2ab} + \frac{(b^2+2b-2)q^2}{b^2} \leq 0 \\
    &\Leftrightarrow \left(\frac{p}{a} - \frac{(7b+1)q}{4b}\right)^2 - \frac{(33b^2-18b+33)q^2}{16b^2} \leq 0 \\
    &\Leftrightarrow \frac{(7b+1-\sqrt{3(11b^2-6b+11)})q}{8b} \leq \frac{p}{2a} \leq \frac{(7b+1+\sqrt{3(11b^2-6b+11)})q}{8b} \, . \\
\end{split}
\end{equation*}
Comparing the limits of both conditions,
\begin{equation*}
\begin{split}
    \frac{(7b+1-\sqrt{3(11b^2-6b+11)})q}{8b}-\frac{(b-1)q}{2b} &= \frac{(3b+5-\sqrt{3(11b^2-6b+11)})q}{8b} \geq 0 \\
    &\Leftrightarrow 3b+5 \geq \sqrt{3(11b^2-6b+11)} \\
%    &\Leftrightarrow 9b^2+30b+25 \geq 33b^2-18b+33 \\
    &\Leftrightarrow 24b^2-48b+8 = 8( 3b^2-6b+1 ) \leq 0 \\
    &\Leftrightarrow \frac{3-\sqrt{6}}{3} \leq b \leq \frac{3+\sqrt{6}}{3} \approx 1.8 \\
\end{split}
\end{equation*}
and $$\frac{(7b+1+\sqrt{3(11b^2-6b+11)})q}{8b}-\frac{(2b-1)q}{4b} = \frac{(3b+3+\sqrt{3(11(b-\frac{3}{11})^2+\frac{112}{11})}))q}{8b} \geq 0 \, .$$
Hence the optimum in Section $b+1$ is better than the optimum in Section $b$.

Thus we have analysed all possible solutions upon $x^*=\frac{p}{2a}$ and discerned the best possible location on $x^*=\frac{p}{2a}$ for every combination of $p$, $q$, $a$, and $b$. These are summarised in Table~\ref{tab:gridedgeoptimals}.

%%%%%%%%%%%%%%%%%%%%%%%%%%%%%%%%%%%%%%%%%%%%%%%%%%%%%%%%%%%%%%%%%%%%%%%%%%%%%%%%%%%%%%%%%%%%%%%%

\begin{table}[!ht]
\centering
\begin{tabular}{c|c|c|c}
Section & Optimum & Area & Condition \\
\hline
%$I$ & $(*,0)$ & $\frac{pq}{2n}$ & $q \leq \frac{p}{n}$ \\
$II$ & $(\frac{p}{2a},0)$ & $\frac{pq}{2ab} + \frac{p^2}{8a^2}$ & $\frac{q}{2b} \leq \frac{p}{2a} \leq \frac{3q}{4b}$ \\ %$Area(V^+((\frac{p}{2a},0)))= \frac{lpq}{2ab} + \frac{p^2}{8a^2} - \frac{(l-1)lq^2}{2b^2}$
$III$ & $(\frac{p}{2a},\frac{q}{2b})$ & $\frac{3pq}{4ab} + \frac{p^2}{8a^2} - \frac{3q^2}{8b^2}$ & $\frac{3q}{4b} \leq \frac{p}{2a} \leq \frac{5q}{4b}$ \\
$IV$ & $(\frac{p}{2a},0)$ & $\frac{pq}{ab} + \frac{p^2}{8a^2} - \frac{q^2}{b^2}$ & $\frac{5q}{4b} \leq \frac{p}{2a} \leq \frac{3q}{2b}$ \\
%%%%%%
$2l+1$ & $(\frac{p}{2a},\frac{q}{2b})$ & $\frac{(4l-1)pq}{4ab} + \frac{p^2}{8a^2} - \frac{(4l^2-1)q^2}{8b^2}$ & $\frac{(2l-1)q}{2b} \leq \frac{p}{2a} \leq \frac{(2l+1)q}{2b}$ \\
%%%%%%
$b = 2$ & $(\frac{p}{2a},\frac{p}{6a})$ & $\frac{pq}{4a} + \frac{p^2}{12a^2}$ & $\frac{q}{4} \leq \frac{p}{2a} \leq \frac{(3+\sqrt{3})q}{4}$ \\ %$Area(V^+((\frac{p}{2a},\frac{p}{6a} - \frac{(b-2)q}{6b}))) = \frac{(2b-1)pq}{6ab} + \frac{p^2}{12a^2} - \frac{(b-2)(4b-1)q^2}{12b^2}$
$b + 1 = 3$ & $(\frac{p}{2a},\frac{q}{4})$ & $\frac{pq}{2a} - \frac{q^2}{8}$ & $\frac{(3+\sqrt{3})q}{4} \leq \frac{p}{2a}$ \\ %$Area(V^+((\frac{p}{2a},\frac{q}{2b})))=\frac{pq}{2a} - \frac{(b-1)q^2}{4b}$
%%%%%%
$b - 1$ even & $(\frac{p}{2a},0)$ & $\frac{(b-1)pq}{4ab} + \frac{p^2}{8a^2} - \frac{(b-3)(b-1)q^2}{8b^2}$ & $\frac{(b-2)q}{2b} \leq \frac{p}{2a} \leq \frac{(b-1)q}{2b}$ \\
$b+1$ even & $(\frac{p}{2a},0)$ & $\frac{pq}{2a} - \frac{(b-1)q^2}{4b}$ & $\frac{(b-1)q}{2b} \leq \frac{p}{2a}$ \\
%%%%%%
$b + 1$ odd & $(\frac{p}{2a},\frac{q}{2b})$ & $\frac{pq}{2a} - \frac{(b-1)q^2}{4b}$ & $\frac{(b-1)q}{2b} \leq \frac{p}{2a}$ \\
\end{tabular}
\caption{Optima upon $x^* = \frac{p}{2a}$.}
\label{tab:gridedgeoptimals}
\end{table}

Now all that remains is to compare the optima upon $x^*=\frac{p}{2a}$ with the optima in Section $I$ according to the conditions in Table~\ref{tab:gridedgeoptimals} as well as the existence of $w_{{0}_{LL}}$. Since we have chosen $w_0$ to be the $(\ceil{\frac{a}{2}},\ceil{\frac{b}{2}})$th point in $W$, the condition that $w_{{0}_{LL}}$ does not exist amounts to $a=2$ (importantly, this condition has no effect upon the optima upon $x^*=\frac{p}{2a}$).

Recalling our earlier results, the maximal area in Section $I$ is $Area(V^+((0,\frac{q}{2b})))= \frac{pq}{2ab} + \frac{q^2}{8b^2}$ unless $a=2$ in which case the maximal area is $Area(V^+((\frac{q}{6b},\frac{q}{2b})))= \frac{pq}{2ab}+\frac{q^2}{12b^2}$. Studying the areas claimed by the optima within Table~\ref{tab:gridedgeoptimals} we can see that, if all other variables are fixed, the value of every optimal area increases with $p$ at a rate of at least $\frac{q}{2ab}$ (which is the rate of increase of the optimal values for Section $I$). Moreover, the value of the optimum upon $x^*=\frac{p}{2a}$ further increases with $\frac{p}{2a}$ whenever the optimal solution is replaced by the next (better) solution. Therefore, for $b \neq 2$, if the optimum in Section $I$ is better than the optimum upon $x^*=\frac{p}{2a}$ at $\frac{p}{2a}=X$ then the optimum in Section $I$ is better than the optimum upon $x^*=\frac{p}{2a}$ for all $\frac{p}{2a} \leq X$, and similarly if the optimum upon $x^*=\frac{p}{2a}$ is better than the optimum in Section $I$ at $\frac{p}{2a}=X$ then the optimum upon $x^*=\frac{p}{2a}$ is better than the optimum in Section $I$ for all $\frac{p}{2a} \geq X$. Hence there exists a value $X$ of $\frac{p}{2a}$ at which point the values of the optima in Section $I$ and upon $x^*=\frac{p}{2a}$ are equal, and this determines our best point $b^*$. Additionally, since the value of the optimum in Section $I$ is reduced if $a=2$, the value $X$ for $a > 2$ will be more than the analogous value $X'$ for $a=2$.

If $a > 2$ and $b > 2$ then we can simply compare the area within Section $I$ with the smallest possible maximal area upon $x^*=\frac{p}{2a}$ (Section $II$):
$$\left(\frac{pq}{2ab} + \frac{q^2}{8b^2}\right) - \left(\frac{pq}{2ab} + \frac{p^2}{8a^2}\right) = \frac{q^2}{8b^2} - \frac{p^2}{8a^2} \geq 0 \Leftrightarrow \frac{q}{b} \geq \frac{p}{a}$$ but $\frac{p}{a} \geq \frac{q}{b}$ so if $a > 2$ and $b > 2$ then the optimum lies upon $x^*=\frac{p}{2a}$. The same is indeed true for $a=2$ as described above. Therefore we have the best points $b^*$ as outlined in Table~\ref{tab:gridOptimals}.

\begin{table}[!ht]
\centering
\begin{tabular}{c|c|c|c}
Section & Optimum & Area & Condition \\
\hline
%$I$ & $(*,0)$ & $\frac{pq}{2n}$ & $q \leq \frac{p}{n}$ \\
$II$ & $(\frac{p}{2a},0)$ & $\frac{pq}{2ab} + \frac{p^2}{8a^2}$ & $\frac{q}{2b} \leq \frac{p}{2a} \leq \frac{3q}{4b}$ \\ %$Area(V^+((\frac{p}{2a},0)))= \frac{lpq}{2ab} + \frac{p^2}{8a^2} - \frac{(l-1)lq^2}{2b^2}$
$III$ & $(\frac{p}{2a},\frac{q}{2b})$ & $\frac{3pq}{4ab} + \frac{p^2}{8a^2} - \frac{3q^2}{8b^2}$ & $\frac{3q}{4b} \leq \frac{p}{2a} \leq \frac{5q}{4b}$ \\
$IV$ & $(\frac{p}{2a},0)$ & $\frac{pq}{ab} + \frac{p^2}{8a^2} - \frac{q^2}{b^2}$ & $\frac{5q}{4b} \leq \frac{p}{2a} \leq \frac{3q}{2b}$ \\
%%%%%%
$2l+1$ & $(\frac{p}{2a},\frac{q}{2b})$ & $\frac{(4l-1)pq}{4ab} + \frac{p^2}{8a^2} - \frac{(4l^2-1)q^2}{8b^2}$ & $\frac{(2l-1)q}{2b} \leq \frac{p}{2a} \leq \frac{(2l+1)q}{2b}$ \\
%%%%%%
$b - 1$ even & $(\frac{p}{2a},0)$ & $\frac{(b-1)pq}{4ab} + \frac{p^2}{8a^2} - \frac{(b-3)(b-1)q^2}{8b^2}$ & $\frac{(b-2)q}{2b} \leq \frac{p}{2a} \leq \frac{(b-1)q}{2b}$ \\
$b+1$ even & $(\frac{p}{2a},0)$ & $\frac{pq}{2a} - \frac{(b-1)q^2}{4b}$ & $\frac{(b-1)q}{2b} \leq \frac{p}{2a}$ \\
%%%%%%
$b + 1$ odd & $(\frac{p}{2a},\frac{q}{2b})$ & $\frac{pq}{2a} - \frac{(b-1)q^2}{4b}$ & $\frac{(b-1)q}{2b} \leq \frac{p}{2a}$ \\
\end{tabular}
\caption{The best point $b^*$ for $b \neq 2$.}
\label{tab:gridOptimals}
\end{table}

Alternatively, if $a > 2$ and $b=2$ then we must compare the maximal areas within Section $I$ and Section $b$:
$$\left(\frac{pq}{4a} + \frac{q^2}{32}\right) - \left(\frac{pq}{4a} + \frac{p^2}{12a^2}\right) = \frac{q^2}{32} - \frac{p^2}{12a^2} \geq 0 \Leftrightarrow \frac{p}{2a} \leq \frac{\sqrt{6}q}{8} \, .$$
The optimum in Section $b$ is valid for $\frac{q}{4} \leq \frac{p}{2a} \leq \frac{(3+\sqrt{3})q}{4}$ so the optimum within Section $I$ is $b^*$ for $\frac{q}{4} \leq \frac{p}{2a} \leq \frac{\sqrt{6}q}{8}$ but we need not compare the maximal area in Section $I$ any further. This gives the best points $b^*$ as displayed in Table~\ref{tab:gridOptimalsb=2}.

\begin{table}[!ht]
\centering
\begin{tabular}{c|c|c|c}
Section & Optimum & Area & Condition \\
\hline
$I$ & $(0,\frac{q}{2b})$ & $\frac{pq}{4a} + \frac{q^2}{32}$ & $\frac{q}{4} \leq \frac{p}{2a} \leq \frac{\sqrt{6}q}{8}$ \\ 
$b$ & $(\frac{p}{2a},\frac{p}{6a})$ & $\frac{pq}{4a} + \frac{p^2}{12a^2}$ & $\frac{\sqrt{6}q}{8} \leq \frac{p}{2a} \leq \frac{(3+\sqrt{3})q}{4}$ \\ 
$b + 1$ & $(\frac{p}{2a},\frac{q}{4})$ & $\frac{pq}{2a} - \frac{q^2}{8}$ & $\frac{(3+\sqrt{3})q}{4} \leq \frac{p}{2a}$ \\
\end{tabular}
\caption{The best point $b^*$ for $b = 2$ and $a \neq 2$.}
\label{tab:gridOptimalsb=2}
\end{table}

\pagebreak Finally if $a=2$ and $b=2$ then, comparing the maximal areas within Section $I$ and Section $b$,
$$\left(\frac{pq}{8}+\frac{q^2}{48}\right) - \left(\frac{pq}{8} + \frac{p^2}{48}\right) = \frac{q^2}{48} - \frac{p^2}{48} \geq 0 \Leftrightarrow \frac{q}{b} \geq \frac{p}{a} \, .$$ Therefore, since $\frac{p}{a} \geq \frac{q}{b}$, the optimum in Section $b$ is always better than the optimum in Section $I$. This gives the best points $b^*$ as displayed in Table~\ref{tab:gridOptimalsb=2a=2}.

\begin{table}[!ht]
\centering
\begin{tabular}{c|c|c|c}
Section & Optimum & Area & Condition \\
\hline
$b$ & $(\frac{p}{2a},\frac{p}{6a})$ & $\frac{pq}{4a} + \frac{p^2}{12a^2}$ & $\frac{q}{4} \leq \frac{p}{2a} \leq \frac{(3+\sqrt{3})q}{4}$ \\
$b + 1$ & $(\frac{p}{2a},\frac{q}{4})$ & $\frac{pq}{2a} - \frac{q^2}{8}$ & $\frac{(3+\sqrt{3})q}{4} \leq \frac{p}{2a}$ \\
\end{tabular}
\caption{The best point $b^*$ for $b = 2$ and $a = 2$.}
\label{tab:gridOptimalsb=2a=2}
\end{table}

And thus, we have found the best points $b^*$ in response to White playing an $a \times b$ grid.

\bigskip

\subsection{Black's best arrangement}
\label{sec:BlackGridArrangement}

We have found Black's best point $b^*$ but, as we have seen in Section~\ref{sec:BlackRow}, these are often not useful points to play when considering a whole arrangement. As we saw towards the end of Section~\ref{sec:BlackRow}, a good point for Black to play within an arrangement steals the best proportion of two halves of White's cells in $\mathcal{VD}(W)$. However, as Black's points venture further away from $w_0$ they steal less and less from the the two Voronoi cells they steal the most from, sacrificing this area in order to steal more area from a greater number of White's Voronoi cells. Therefore it may be useful to explore how Black performs playing closer to White's points, and we shall investigate Black's possible placements within Sections $I$, $II$, and $III$.

\subsubsection{Core quadrants}

We have already investigated the core quadrants in our search for Black's best point so there is little extra work we need do within these quadrants.

Section $I$ was given a full exploration in our search for $b^*$ so we will simply refer the reader to the results summarised in Figures~\ref{fig:GridOptimalI} and \ref{fig:GridOptimalIRes}. On the other hand, only the areas upon $x^*=\frac{p}{2a}$ were optimised within Sections $II$ and $III$. It requires little effort, however, to extend the results already investigated to cases where $x^*=\frac{p}{2a}$ does not lie within the section in question.

Each constituent area component $V^+(b_1) \cap (V^\circ(w_{i_L}) \cup V^\circ(w_{i_R}))$ calculated in Section~\ref{sec:WhiteGrid} was found to be either independent of the value of $x$, or maximised by choosing $x$ as close as possible to $\frac{p}{2a}$ (which leads us to the result that $b^*$ must lie on $x^*=\frac{p}{2a}$ or Section $I$). In Sections $II$ and $IV$, these maximum values of $x$ are $\frac{q}{b}$ and $\frac{3q}{2b}$ (at $y=0$ and $y=\frac{q}{2b}$), so we must consider the optimal solutions within these sections if $\frac{p}{2a} > \frac{q}{b}$ and $\frac{p}{2a} > \frac{3q}{2b}$ respectively. Therefore, in order to confirm that the maximisation of $y$ in our previous work does not conflict with this maximisation of $x$ that we must now consider, we need only to check that the optima obtained through our previous discoveries are $\left(\frac{q}{b},0\right)$ and $\left(\frac{3q}{2b},\frac{q}{2b}\right)$ respectively (i.e. $x$ is maximal at $\frac{p}{2a} = \frac{q}{b}$ and $\frac{p}{2a} = \frac{3q}{2b}$).

\pagebreak \paragraph{Section $II$} For Section $II$, assuming $V^+(b_1)$ does not touch the boundary of $\mathcal{P}$, the maximal area when $\frac{p}{2a} \leq \frac{q}{b}$ is $Area(V^+((\frac{p}{2a},0)))= \frac{pq}{2ab} + \frac{p^2}{8a^2}$ as depicted in Figure~\ref{fig:GridOptimalII}. If $\frac{p}{2a} > \frac{q}{b}$ then, since the optimum at $\frac{p}{2a} = \frac{q}{b}$ is $(\frac{q}{b},0)$ as required, our previous results give that the optimum is at $b_1=\left(\frac{q}{b},0\right)$ with
\begin{equation*}
\begin{split}
    Area(V^+(b_1)) &= Area(V^+(b_1) \cap (V^\circ(w_{1_L}) \cup V^\circ(w_{1_R}))) \\ &\qquad+ Area(V^+(b_1) \cap (V^\circ(w_{0_L}) \cup V^\circ(w_{0_R}))) \\ &\qquad+ Area(V^+(b_1) \cap (V^\circ(w_{{-1}_L}) \cup V^\circ(w_{{-1}_R}))) \\
    &= \left(- \frac{1}{4}\left(\frac{q}{b}\right)^2 + \frac{0^2}{4} + \frac{p}{4a}\frac{q}{b} + \left(\frac{p}{4a} - \frac{(1-1)q}{2b}\right)\times 0 - \frac{(1-1)pq}{4ab} + \frac{(1-1)^2q^2}{4b^2}\right) \\ &\qquad+ \left(\frac{pq}{2ab} - 0^2\right) + \left(- \frac{1}{4}\left(\frac{q}{b}\right)^2 + \frac{0^2}{4} + \frac{p}{4a}\frac{q}{b} - \left(\frac{p}{4a} + \frac{(-1+1)q}{2b}\right) \times 0 \right. \\  &\left.\qquad+ \frac{(-1+1)pq}{4ab} + \frac{(-1+1)^2q^2}{4b^2}\right) \\
    &= \frac{pq}{ab} - \frac{q^2}{2b^2}  \\
\end{split}
\end{equation*}
as depicted in Figure~\ref{fig:GridOptimalIILong}.

\begin{figure}[!ht]
\begin{subfigure}{.414\textwidth}
  \centering
  \includegraphics[width=0.99\textwidth]{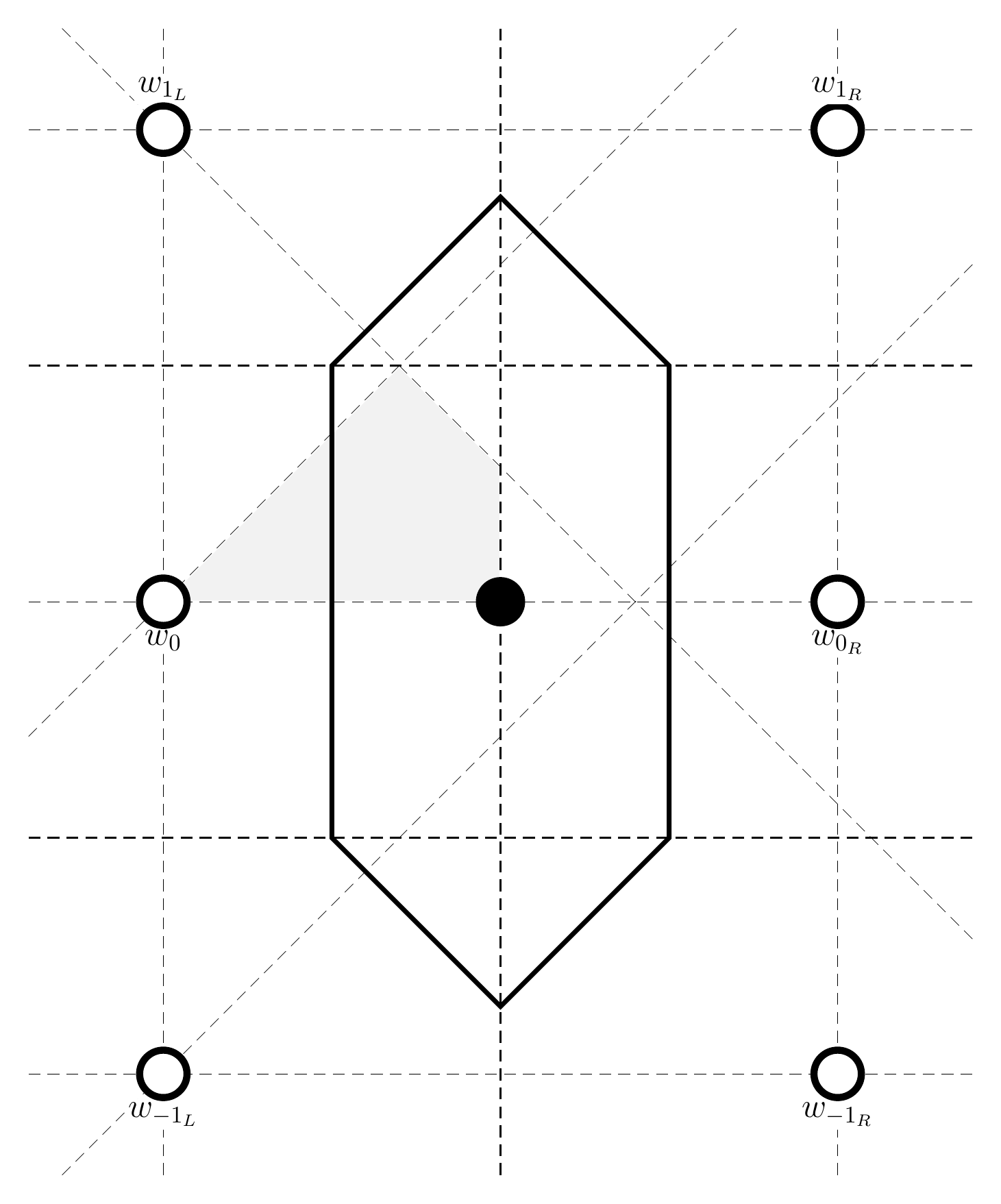}
  \caption{$b_1=(\frac{p}{2a},0)$ only if $\frac{p}{2a} \leq \frac{q}{b}$.}
  \label{fig:GridOptimalII}
\end{subfigure}%
\begin{subfigure}{.586\textwidth}
  \centering
  \includegraphics[width=0.99\textwidth]{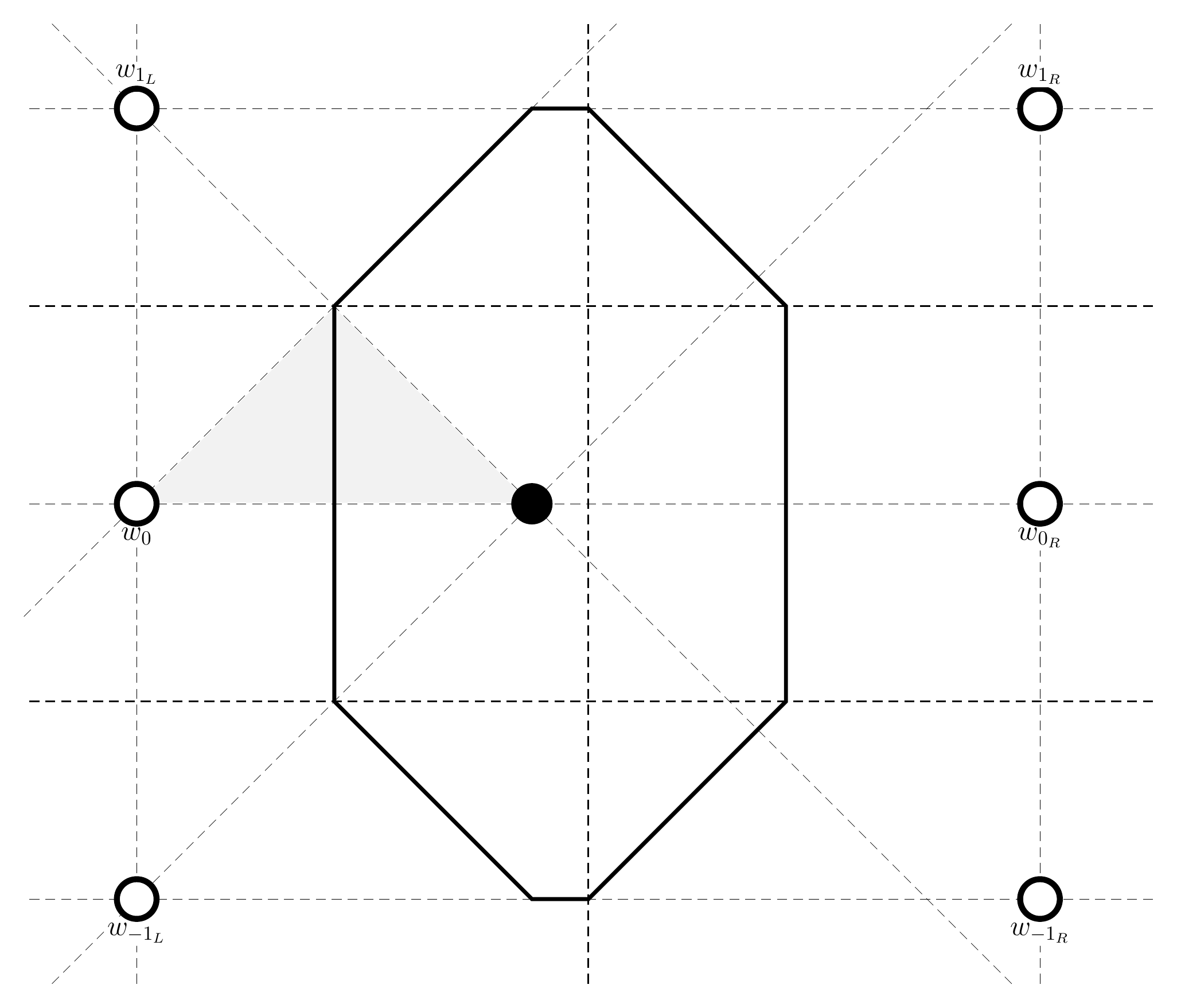}
  \caption{$b_1=(\frac{q}{b},0)$ only if $\frac{p}{2a} \geq \frac{q}{b}$.}
  \label{fig:GridOptimalIILong}
\end{subfigure}
\caption{Maximal area Voronoi cells $V^+(b_1)$ for $b_1$ within Section $II$ not touching the horizontal edges of $\mathcal{P}$.}
\end{figure}

Otherwise if $w_{{-1}_L}$ does not exist then, using the results of Section $b$ from above: if $\frac{p}{2a} \leq \frac{3q}{4b}$ then the maximal area is $Area(V^+((\frac{p}{2a},\frac{p}{6a}))) = \frac{pq}{2ab} + \frac{p^2}{12a^2}$ as depicted in Figure~\ref{fig:GridOptimalIIResa}; if $\frac{3q}{4b} \leq \frac{p}{2a} \leq \frac{q}{b}$ then the optimum is $(\frac{p}{2a}, \frac{q}{b} - \frac{p}{2a})$ (upon the boundary between Section $II$ and Section $III$) giving
\begin{equation*}
\begin{split}
     Area(V^+(b_1)) &= - \frac{3}{4}\left(\frac{q}{b} - \frac{p}{2a}\right)^2 + \left(\frac{p}{4a} - \frac{(1-1)q}{2b}\right)\left(\frac{q}{b} - \frac{p}{2a}\right) \\ &\qquad+ \frac{(3(1)-1)pq}{4ab} + \frac{p^2}{16a^2} - \frac{(1-1)(3(1)-1)q^2}{4b^2} \\
     &= \frac{3pq}{2ab} - \frac{p^2}{4a^2} - \frac{3q^2}{4b^2} \\
\end{split}
\end{equation*}
as depicted in Figure~\ref{fig:GridOptimalIIResb}; and if $\frac{p}{2a} \geq \frac{q}{b}$, since the optimum at $\frac{p}{2a} = \frac{q}{b}$ is $(\frac{q}{b},0)$, our previous results give that the optimum is at $b_1=(\frac{q}{b},0)$ with
\begin{equation*}
\begin{split}
    Area(V^+(b_1)) &= Area(V^+(b_1) \cap (V^\circ(w_{1_L}) \cup V^\circ(w_{1_R}))) \\ &\qquad+ Area(V^+(b_1) \cap (V^\circ(w_{0_L}) \cup V^\circ(w_{0_R}))) \\
    &= - \frac{1}{4}\left(\frac{q}{b}\right)^2 + \frac{0^2}{4} + \frac{p}{4a}\frac{q}{b} + \left(\frac{p}{4a} - \frac{(1-1)q}{2b}\right)\times 0 - \frac{(1-1)pq}{4ab} + \frac{(1-1)^2q^2}{4b^2} \\ &\qquad+ \frac{pq}{2ab} - 0^2 \\
    &= \frac{3pq}{4ab} - \frac{q^2}{4b^2} \\
\end{split}
\end{equation*}
as depicted in Figure~\ref{fig:GridOptimalIIResLong}.

\begin{figure}[!ht]
\begin{subfigure}{.469\textwidth}
  \centering
  \includegraphics[width=0.99\textwidth]{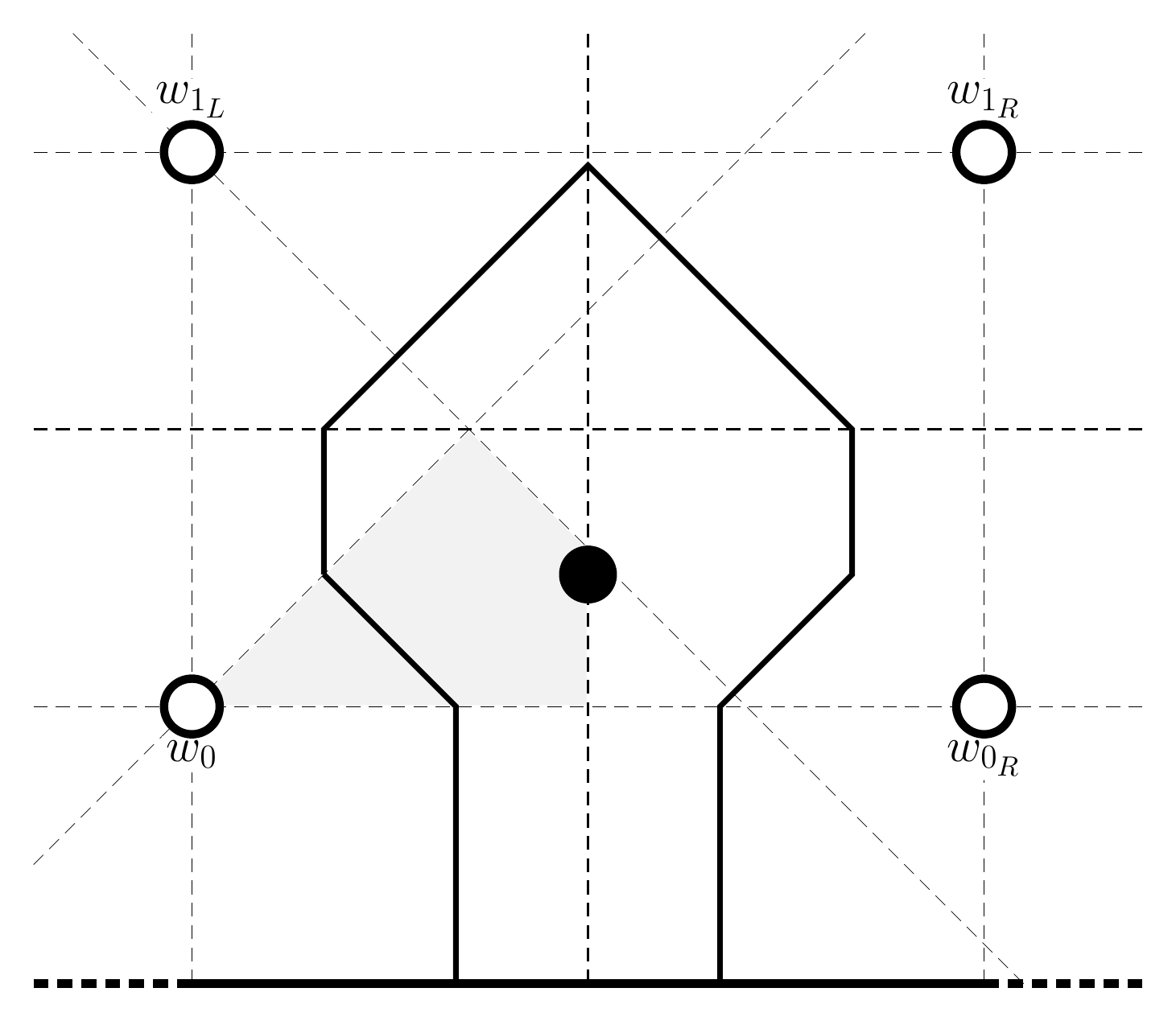}
  \caption{$b_1=(\frac{p}{2a},\frac{p}{6a})$ only if $\frac{p}{2a} \leq \frac{3q}{4b}$.}
  \label{fig:GridOptimalIIResa}
\end{subfigure}%
\begin{subfigure}{.531\textwidth}
  \centering
  \includegraphics[width=0.99\textwidth]{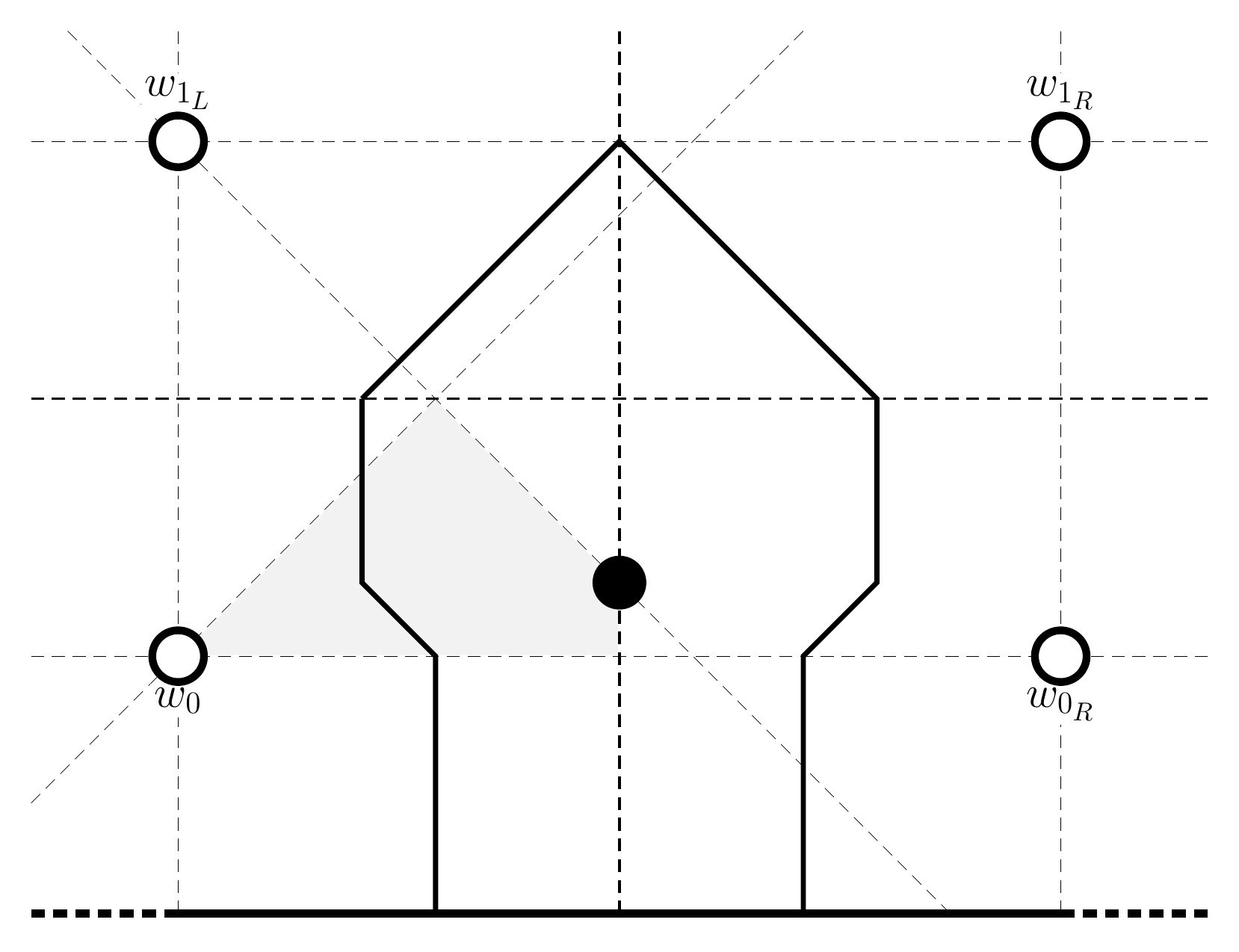}
  \caption{$b_1=(\frac{p}{2a}, \frac{q}{b} - \frac{p}{2a})$ only if $\frac{3q}{4b} \leq \frac{p}{2a} \leq \frac{q}{b}$.}
  \label{fig:GridOptimalIIResb}
\end{subfigure}

\centering
\begin{subfigure}{.65\textwidth}
  \centering
  \includegraphics[width=0.99\textwidth]{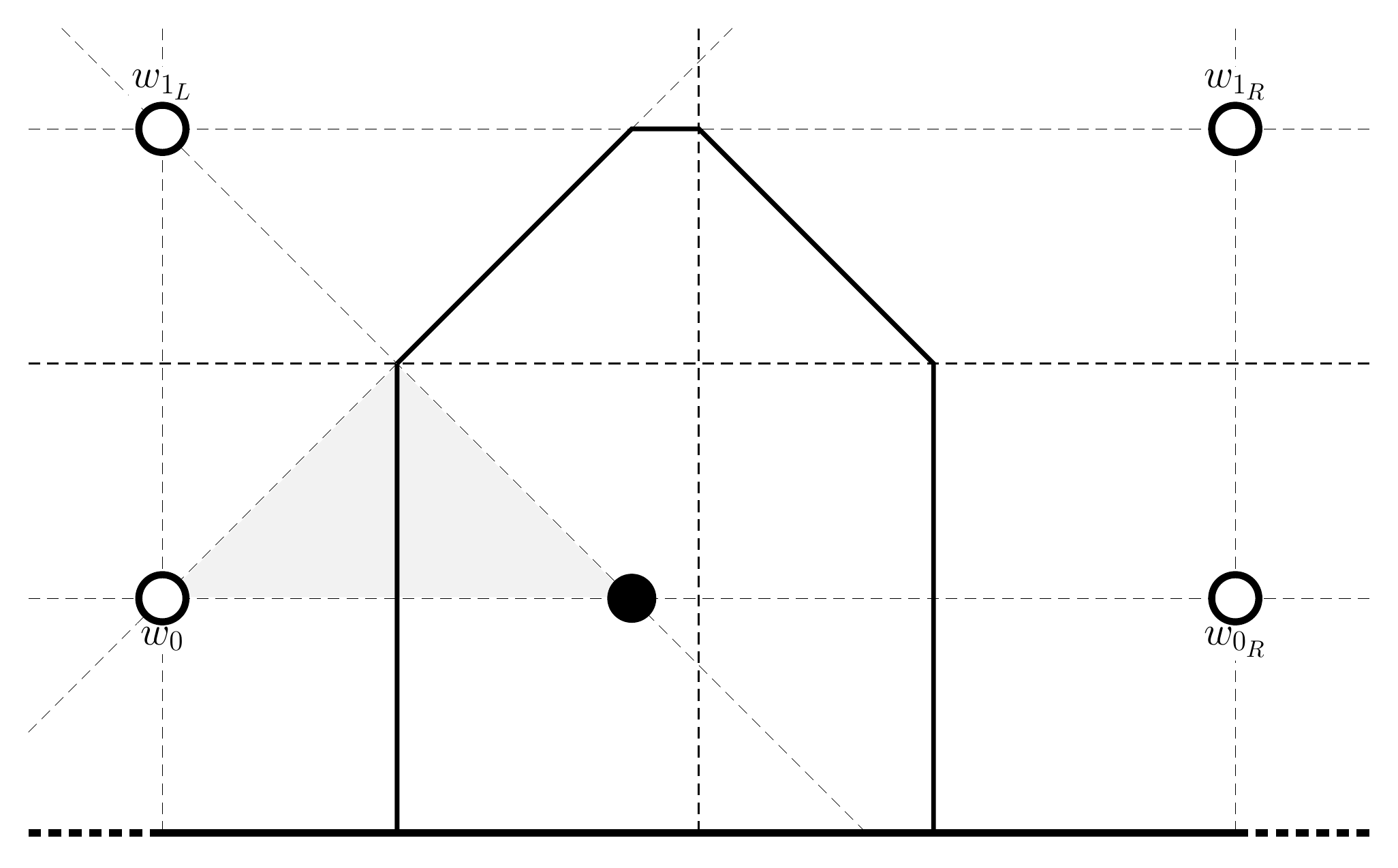}
  \caption{$b_1=(\frac{q}{b},0)$ only if $\frac{p}{2a} \geq \frac{q}{b}$.}
  \label{fig:GridOptimalIIResLong}
\end{subfigure}
\caption{Maximal area Voronoi cells $V^+(b_1)$ for $b_1$ within Section $II$ touching the bottommost horizontal edge of $\mathcal{P}$.}
\end{figure}

\paragraph{Section $III$} For Section $III$, assuming $V^+(b_1)$ does not touch the boundary of $\mathcal{P}$, the maximal area when $\frac{p}{2a} \leq \frac{3q}{2b}$ is $Area(V^+((\frac{p}{2a},\frac{q}{2b})))= \frac{3pq}{4ab} + \frac{p^2}{8a^2} - \frac{3q^2}{8b^2}$ as depicted in Figure~\ref{fig:GridOptimalIII}. If $\frac{p}{2a} \geq \frac{3q}{2b}$ then, since the optimum at $\frac{p}{2a} = \frac{3q}{2b}$ is $\left(\frac{3q}{2b},\frac{q}{2b}\right)$ as required, our previous results give that the optimum is at $b_1=\left(\frac{3q}{2b},\frac{q}{2b}\right)$ with

\begin{equation*}
\begin{split}
    Area(V^+(b_1)) &= Area(V^+(b_1) \cap (V^\circ(w_{2_L}) \cup V^\circ(w_{2_R})) \\ &\qquad+ Area(V^+(b_1) \cap (V^\circ(w_{1_L}) \cup V^\circ(w_{1_R})) \\ &\qquad+ Area(V^+(b_1) \cap (V^\circ(w_{0_L}) \cup V^\circ(w_{0_R})) \\ &\qquad+ Area(V^+(b_1) \cap (V^\circ(w_{{-1}_L}) \cup V^\circ(w_{{-1}_R})) \\
    &= \left(- \frac{1}{4}\left(\frac{3q}{2b}\right)^2 + \frac{1}{4}\left(\frac{q}{2b}\right)^2 + \frac{p}{4a}\frac{3q}{2b} + \left(\frac{p}{4a} - \frac{(2-1)q}{2b}\right)\frac{q}{2b} - \frac{(2-1)pq}{4ab} + \frac{(2-1)^2q^2}{4b^2}\right) \\
    &\qquad+ \left(\frac{q}{b}\frac{q}{2b} + \frac{pq}{2ab} - \frac{(4(1)-1)q^2}{4b^2}\right) + \left(\frac{pq}{2ab} - \left(\frac{q}{2b}\right)^2\right) + \left(- \frac{1}{4}\left(\frac{3q}{2b}\right)^2 + \frac{1}{4}\left(\frac{q}{2b}\right)^2 \right. \\ &\left.\qquad+ \frac{p}{4a}\frac{3q}{2b} - \left(\frac{p}{4a} + \frac{(-1+1)q}{2b}\right)\frac{q}{2b} + \frac{(-1+1)pq}{4ab} + \frac{(-1+1)^2q^2}{4b^2}\right) \\
    &= \frac{3pq}{2ab} - \frac{3q^2}{2b^2} \\
    %&= - \frac{x^2}{4} + \frac{y^2}{4} + \frac{p}{4a}x + \left(\frac{p}{4a} - \frac{(i-1)q}{2b}\right)y - \frac{(i-1)pq}{4ab} + \frac{(i-1)^2q^2}{4b^2} \\
    %&\qquad+ \frac{q}{b}y + \frac{pq}{2ab} - \frac{(4i-1)q^2}{4b^2} \\
    %&\qquad+ \frac{pq}{2ab} - y^2 \\
    %&\qquad - \frac{x^2}{4} + \frac{y^2}{4} + \frac{p}{4a}x - \left(\frac{p}{4a} + \frac{(i+1)q}{2b}\right)y + \frac{(i+1)pq}{4ab} + \frac{(i+1)^2q^2}{4b^2} \\
\end{split}
\end{equation*} %\frac{q}{b}y + \frac{pq}{2ab} - \frac{(4i-1)q^2}{4b^2}
as depicted in Figure~\ref{fig:GridOptimalIIILong}.

\begin{figure}[!ht]\ContinuedFloat
\begin{subfigure}{.357\textwidth}
  \centering
  \includegraphics[width=0.99\textwidth]{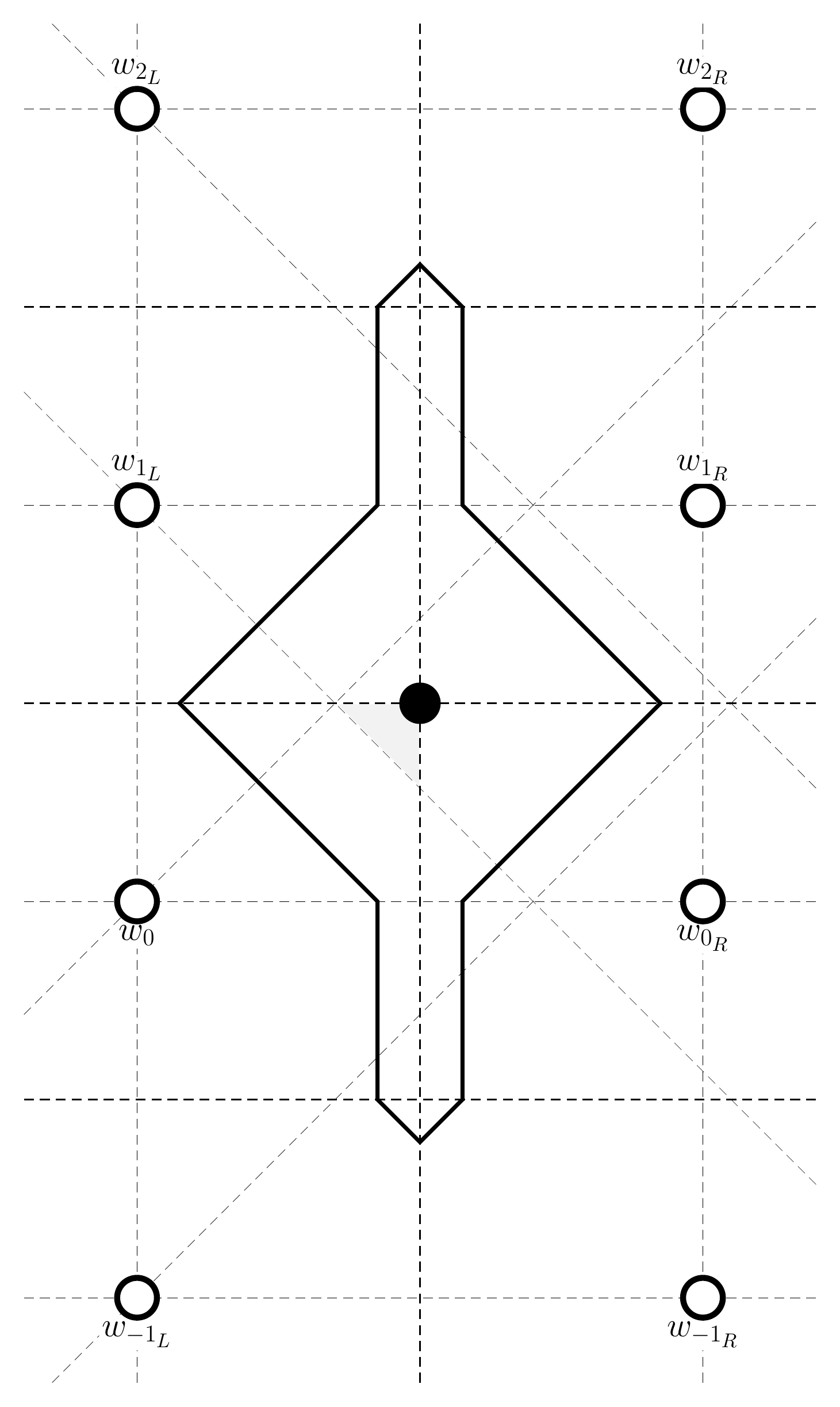}
  \caption{$b_1=(\frac{p}{2a},\frac{q}{2b})$ only if $\frac{p}{2a} \leq \frac{3q}{2b}$.}
  \label{fig:GridOptimalIII}
\end{subfigure}%
\begin{subfigure}{.643\textwidth}
  \centering
  \includegraphics[width=0.99\textwidth]{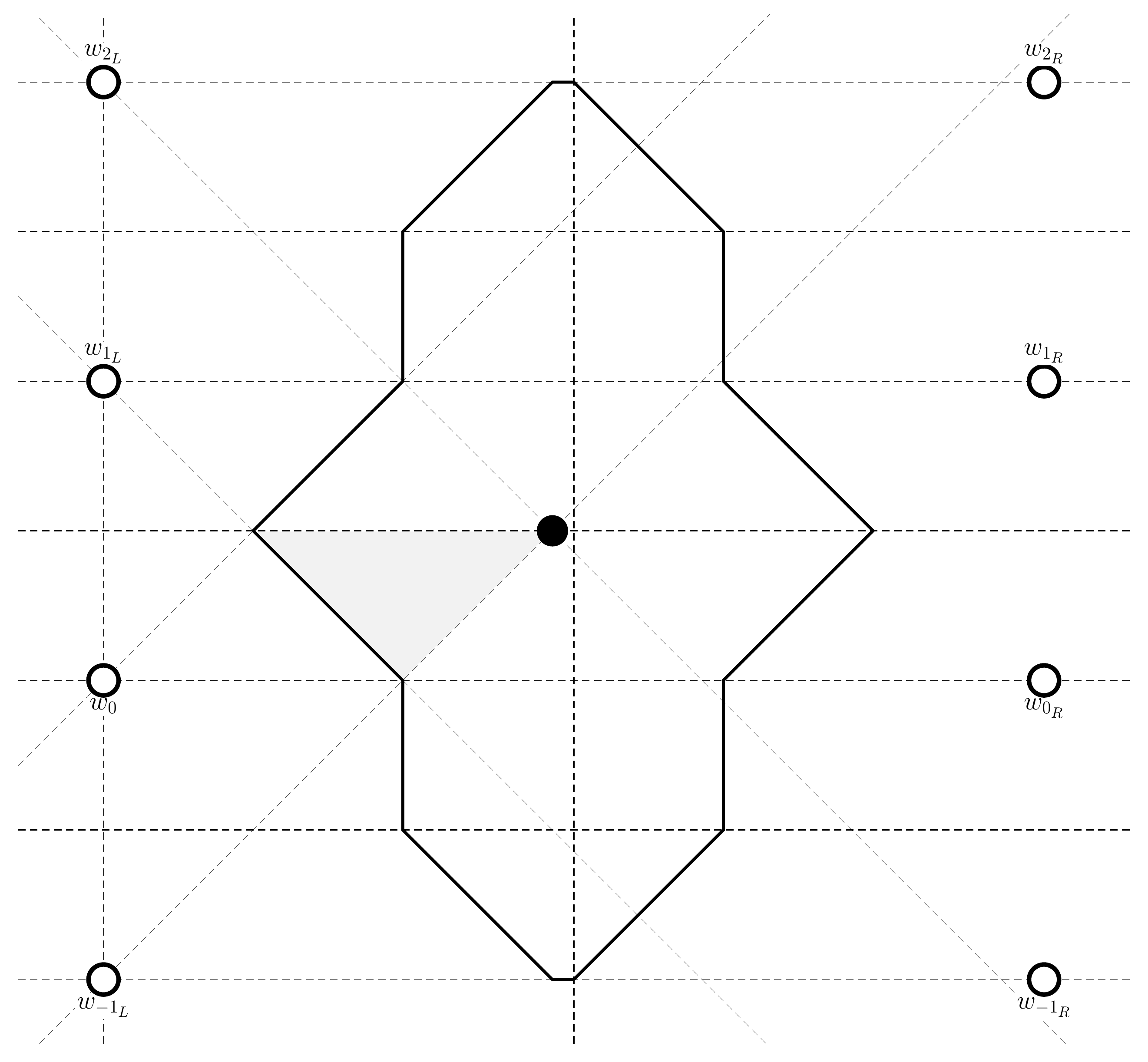}
  \caption{$b_1=(\frac{3q}{2b},\frac{q}{2b})$ only if $\frac{p}{2a} \geq \frac{3q}{2b}$.}
  \label{fig:GridOptimalIIILong}
\end{subfigure}
\caption{Maximal area Voronoi cells $V^+(b_1)$ for $b_1$ within Section $III$ not touching the horizontal edges of $\mathcal{P}$.}
\end{figure}

Otherwise, if $w_{{-1}_L}$ does not exist (a situation previously not necessary to study) then
\begin{equation*}
\begin{split}
    Area(V^+(b_1)) &= Area(V^+(b_1) \cap (V^\circ(w_{2_L}) \cup V^\circ(w_{2_R})) \\ &\qquad+ Area(V^+(b_1) \cap (V^\circ(w_{1_L}) \cup V^\circ(w_{1_R})) \\ &\qquad+ Area(V^+(b_1) \cap (V^\circ(w_{0_L}) \cup V^\circ(w_{0_R})) \\
    &= \left(- \frac{x^2}{4} + \frac{y^2}{4} + \frac{p}{4a}x + \left(\frac{p}{4a} - \frac{(2-1)q}{2b}\right)y - \frac{(2-1)pq}{4ab} + \frac{(2-1)^2q^2}{4b^2}\right) \\
    &\qquad+ \left(\frac{q}{b}y + \frac{pq}{2ab} - \frac{(4(1)-1)q^2}{4b^2}\right) + \left(\frac{pq}{2ab} - y^2\right) \\
\end{split}
\end{equation*}
\begin{equation*}
\begin{split}
    &= - \frac{x^2}{4} - \frac{3y^2}{4} + \frac{p}{4a}x + \left(\frac{p}{4a} + \frac{q}{2b}\right)y + \frac{3pq}{4ab} - \frac{q^2}{2b^2} \qquad \, \\
\end{split}
\end{equation*}
gives partial derivatives
\begin{equation*}
\begin{split}
\frac{\delta A}{\delta x}&= - \frac{x}{2} + \frac{p}{4a} \\
\frac{\delta A}{\delta y}&= - \frac{3y}{2} + \frac{p}{4a}+\frac{q}{2b} \\
\end{split}
\end{equation*}
which gives optimal $y^*=\frac{p}{6a}+\frac{q}{3b} \geq \frac{q}{6b}+\frac{q}{3b} = \frac{q}{2b}$. Therefore, since the area is maximised by choosing $x$ as close as possible to $\frac{p}{2a}$ and choosing $y$ as close as possible to $y^* \geq \frac{q}{2b}$, the maximal areas within Section $III$ are, for $\frac{p}{2a} \leq \frac{3q}{2b}$, $Area(V^+((\frac{p}{2a},\frac{q}{2b})))= \frac{7pq}{8ab} + \frac{p^2}{16a^2} - \frac{7q^2}{16b^2}$ as depicted in Figure~\ref{fig:GridOptimalIIIRes1} and, for $\frac{p}{2a} \geq \frac{3q}{2b}$, $Area(V^+(((\frac{3q}{2b},\frac{q}{2b})))=\frac{5pq}{4ab} - \frac{q^2}{b^2}$ as depicted in Figure~\ref{fig:GridOptimalIIIRes1Long}.

\begin{figure}[!ht]\ContinuedFloat
\begin{subfigure}{.358\textwidth}
  \centering
  \includegraphics[width=0.99\textwidth]{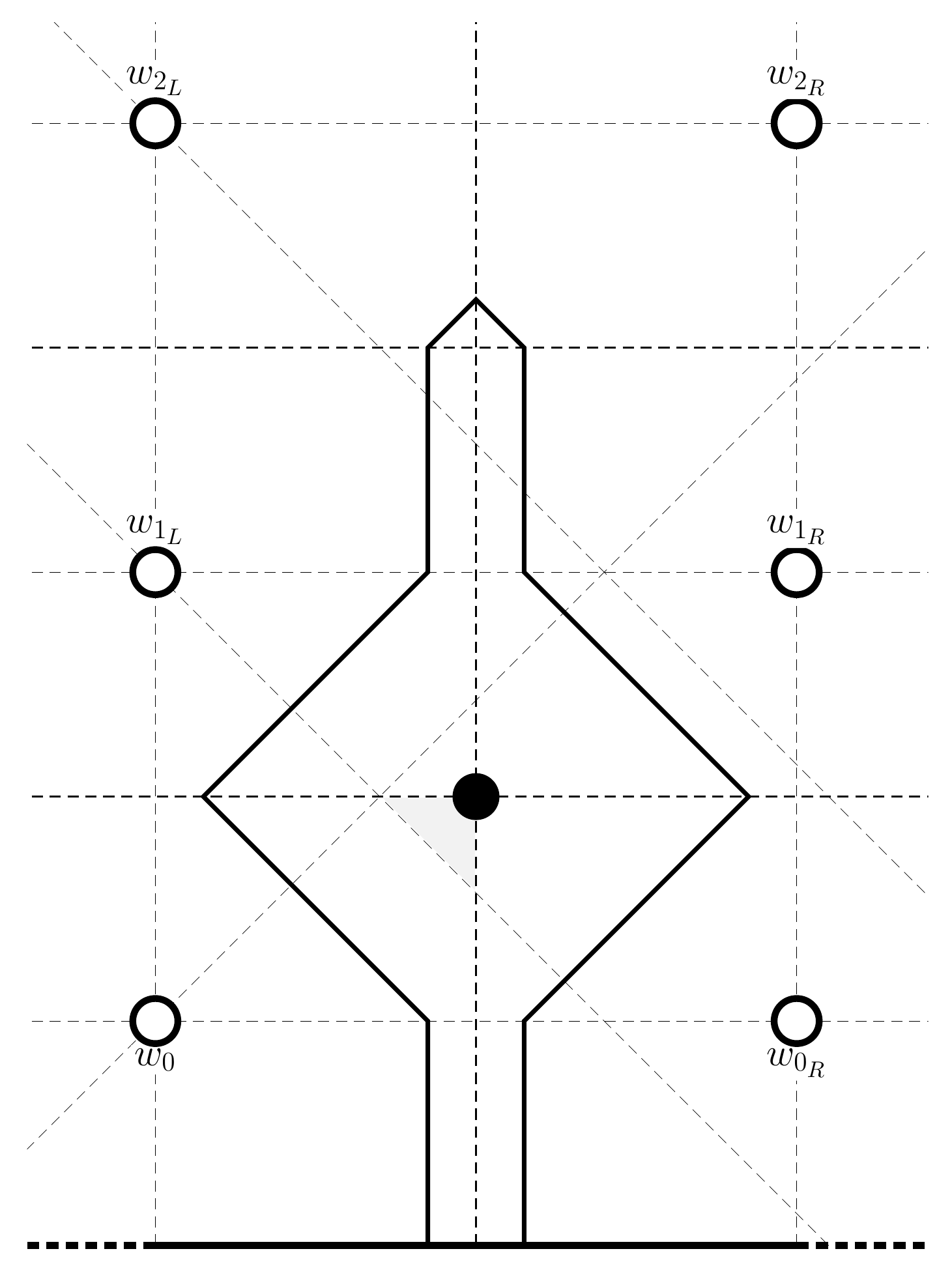}
  \caption{$b_1=(\frac{p}{2a},\frac{q}{2b})$ only if $\frac{p}{2a} \leq \frac{3q}{2b}$.}
  \label{fig:GridOptimalIIIRes1}
\end{subfigure}%
\begin{subfigure}{.642\textwidth}
  \centering
  \includegraphics[width=0.99\textwidth]{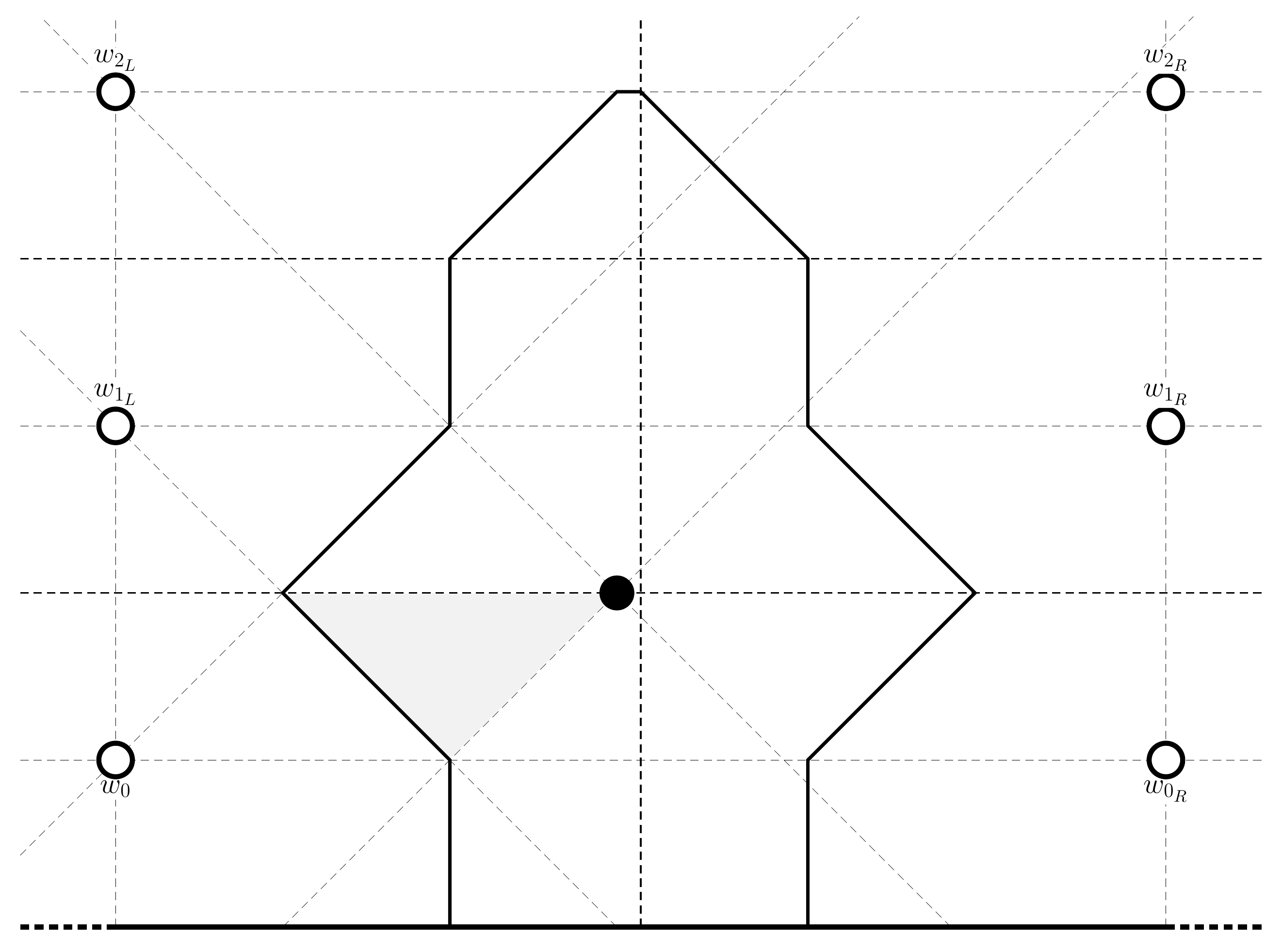}
  \caption{$b_1=(\frac{3q}{2b},\frac{q}{2b})$ only if $\frac{p}{2a} \geq \frac{3q}{2b}$.}
  \label{fig:GridOptimalIIIRes1Long}
\end{subfigure}
\caption{Maximal area Voronoi cells $V^+(b_1)$ for $b_1$ within Section $III$ touching the bottommost horizontal edge of $\mathcal{P}$.}
\end{figure}

Otherwise, if $w_{{2}_L}$ does not exist then, using the results of Section $b$ from above: if $\frac{p}{2a} \leq \frac{5q}{4b}$ then the maximal area is $Area(V^+((\frac{p}{2a}, \frac{2q}{3b} - \frac{p}{6a}))) = \frac{7pq}{12ab} + \frac{p^2}{12a^2} - \frac{5q^2}{12b^2}$ as depicted in Figure~\ref{fig:GridOptimalIIIRes2a}; if $\frac{5q}{4b} \leq \frac{p}{2a} \leq \frac{3q}{2b}$ then the optimum is $(\frac{p}{2a}, \frac{p}{2a} - \frac{q}{b})$ (upon the boundary between Section $III$ and Section $IV$) giving
\begin{equation*}
\begin{split}
    Area(V^+(b_1))&= - \frac{3}{4}\left(\frac{p}{2a} - \frac{q}{b}\right)^2 - \left(\frac{p}{4a} - \frac{(1+1)q}{2b}\right)\left(\frac{p}{2a} - \frac{q}{b}\right) \\ &\qquad+ \frac{(2(1)+1)pq}{4ab}  + \frac{p^2}{16a^2} - \frac{3(1)^2q^2}{4b^2} \\
%    &= - \frac{3}{4}\left(\frac{p}{2a} - \frac{q}{b}\right)^2 - \left(\frac{p}{4a} - \frac{q}{b}\right)\left(\frac{p}{2a} - \frac{q}{b}\right) + \frac{3pq}{4ab}  + \frac{p^2}{16a^2} - \frac{3q^2}{4b^2} \\
    &= \frac{9pq}{4ab} - \frac{p^2}{4a^2} - \frac{5q^2}{2b^2}
\end{split}
\end{equation*}
as depicted in Figure~\ref{fig:GridOptimalIIIRes2b}; and if $\frac{p}{2a} \geq \frac{3q}{2b}$, since the optimum at $\frac{p}{2a}=\frac{3q}{2b}$ is $(\frac{3q}{2b},\frac{q}{2b})$, our previous results give that the optimum is at $b_1=(\frac{3q}{2b},\frac{q}{2b})$ with

\begin{equation*}
\begin{split}
    Area(V^+(b_1)) &= Area(V^+(b_1) \cap (V^\circ(w_{1_L}) \cup V^\circ(w_{1_R}))) + Area(V^+(b_1) \cap (V^\circ(w_{0_L}) \cup V^\circ(w_{0_R}))) \\ &\qquad+ Area(V^+(b_1) \cap (V^\circ(w_{{-1}_L}) \cup V^\circ(w_{{-1}_R}))) \\
    &= \left(\frac{q}{b}\frac{q}{2b} + \frac{pq}{2ab} - \frac{(4(1)-1)q^2}{4b^2}\right) + \left(\frac{pq}{2ab} - \left(\frac{q}{2b}\right)^2\right) + \left(- \frac{1}{4}\left(\frac{3q}{2b}\right)^2 + \frac{1}{4}\left(\frac{q}{2b}\right)^2 \right. \\ &\left.\qquad+ \frac{p}{4a}\frac{3q}{2b} - \left(\frac{p}{4a} + \frac{(-1+1)q}{2b}\right)\frac{q}{2b} + \frac{(-1+1)pq}{4ab} + \frac{(-1+1)^2q^2}{4b^2}\right) \\
    &= \frac{5pq}{4ab} - \frac{q^2}{b^2}
\end{split}
\end{equation*}
as depicted in Figure~\ref{fig:GridOptimalIIIRes2Long}.

\begin{figure}[!ht]\ContinuedFloat
\begin{subfigure}{.394\textwidth}
  \centering
  \includegraphics[width=0.99\textwidth]{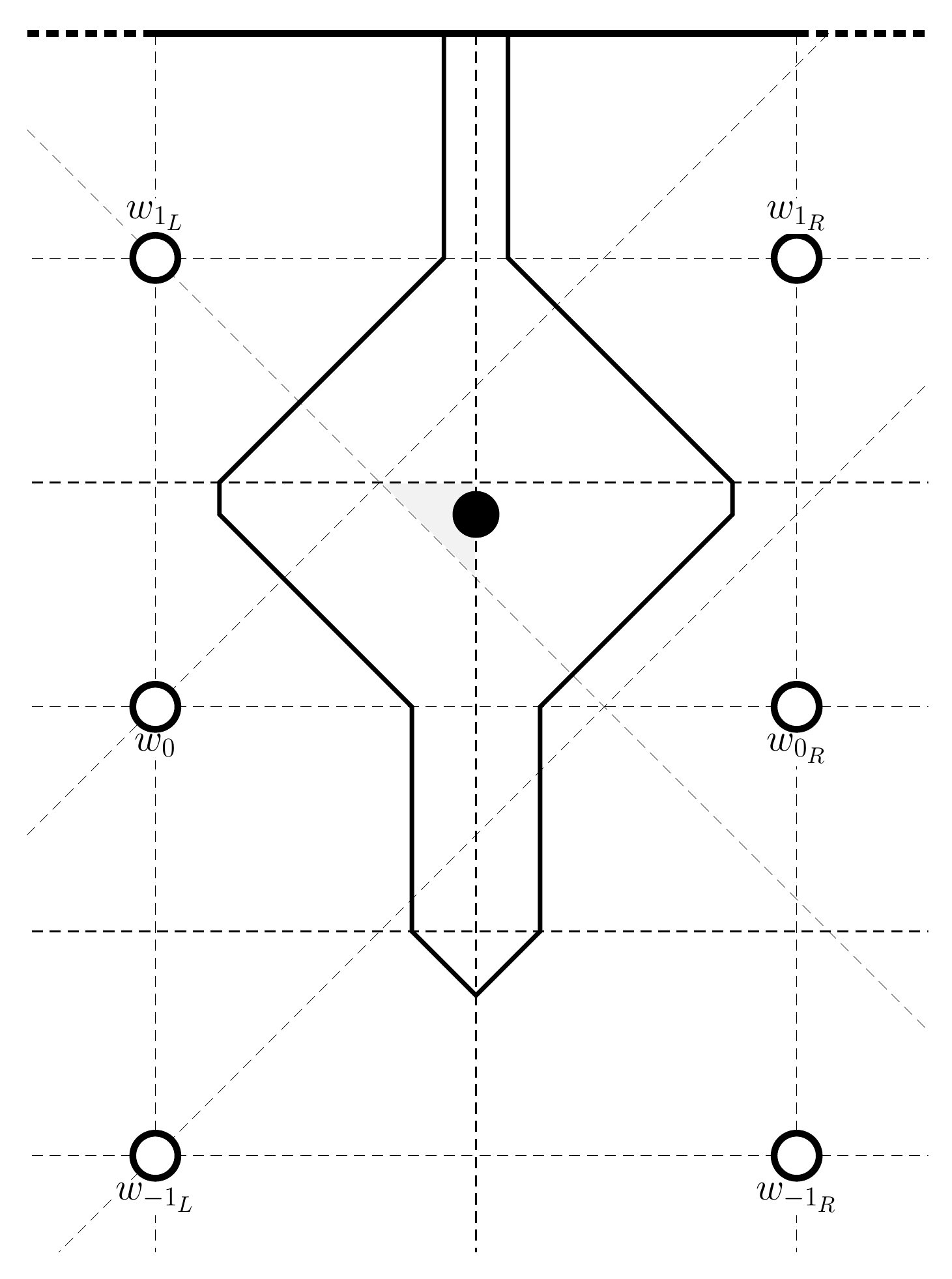}
  \caption{$b_1=(\frac{p}{2a}, \frac{2q}{3b} - \frac{p}{6a})$ only if $\frac{p}{2a} \leq \frac{5q}{4b}$.}
  \label{fig:GridOptimalIIIRes2a}
\end{subfigure}%
\begin{subfigure}{.606\textwidth}
  \centering
  \includegraphics[width=0.99\textwidth]{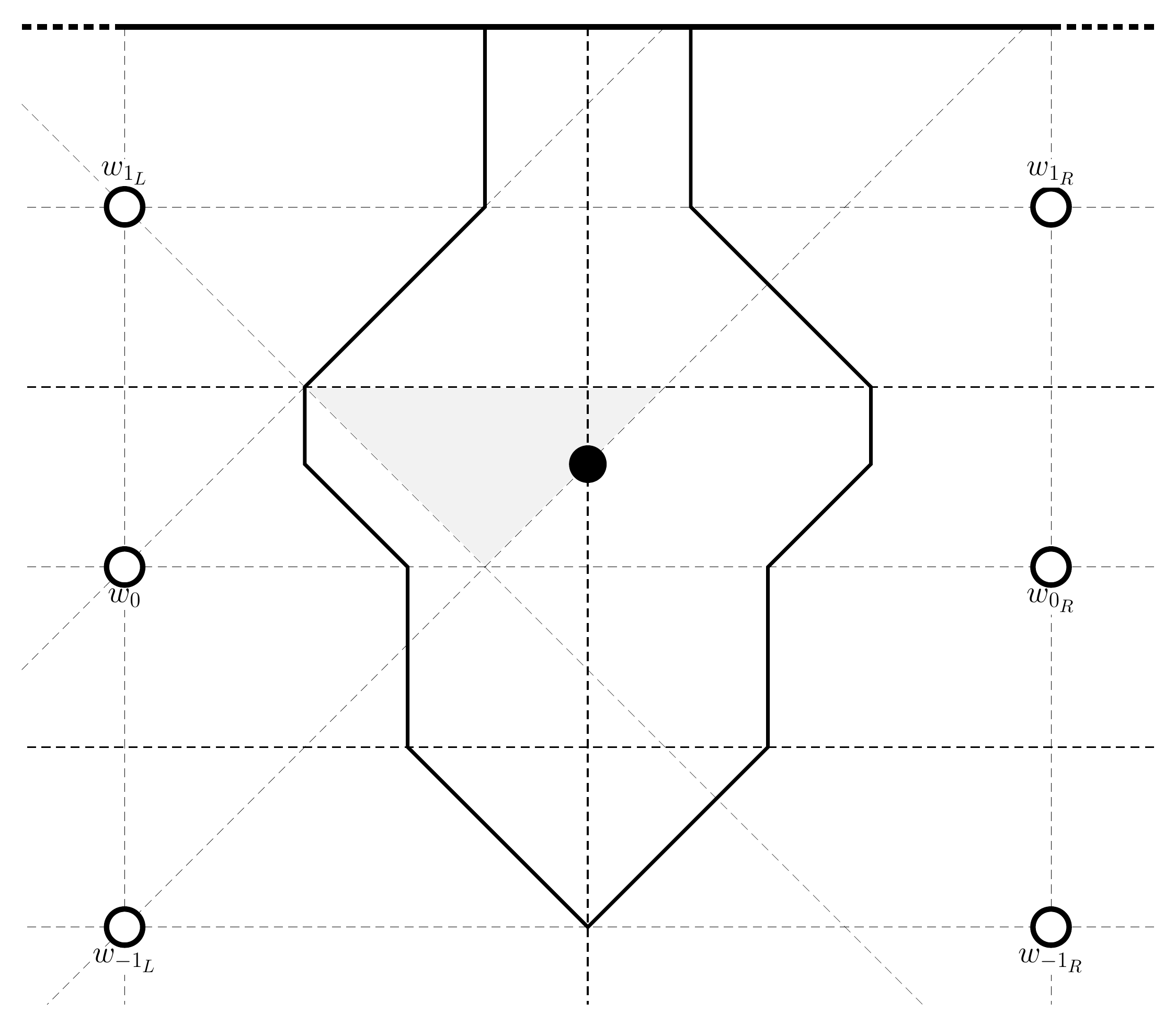}
  \caption{$b_1=(\frac{p}{2a}, \frac{p}{2a} - \frac{q}{b})$ only if $\frac{5q}{4b} \leq \frac{p}{2a} \leq \frac{3q}{2b}$.}
  \label{fig:GridOptimalIIIRes2b}
\end{subfigure}

\centering
\begin{subfigure}{.77\textwidth}
  \centering
  \includegraphics[width=0.99\textwidth]{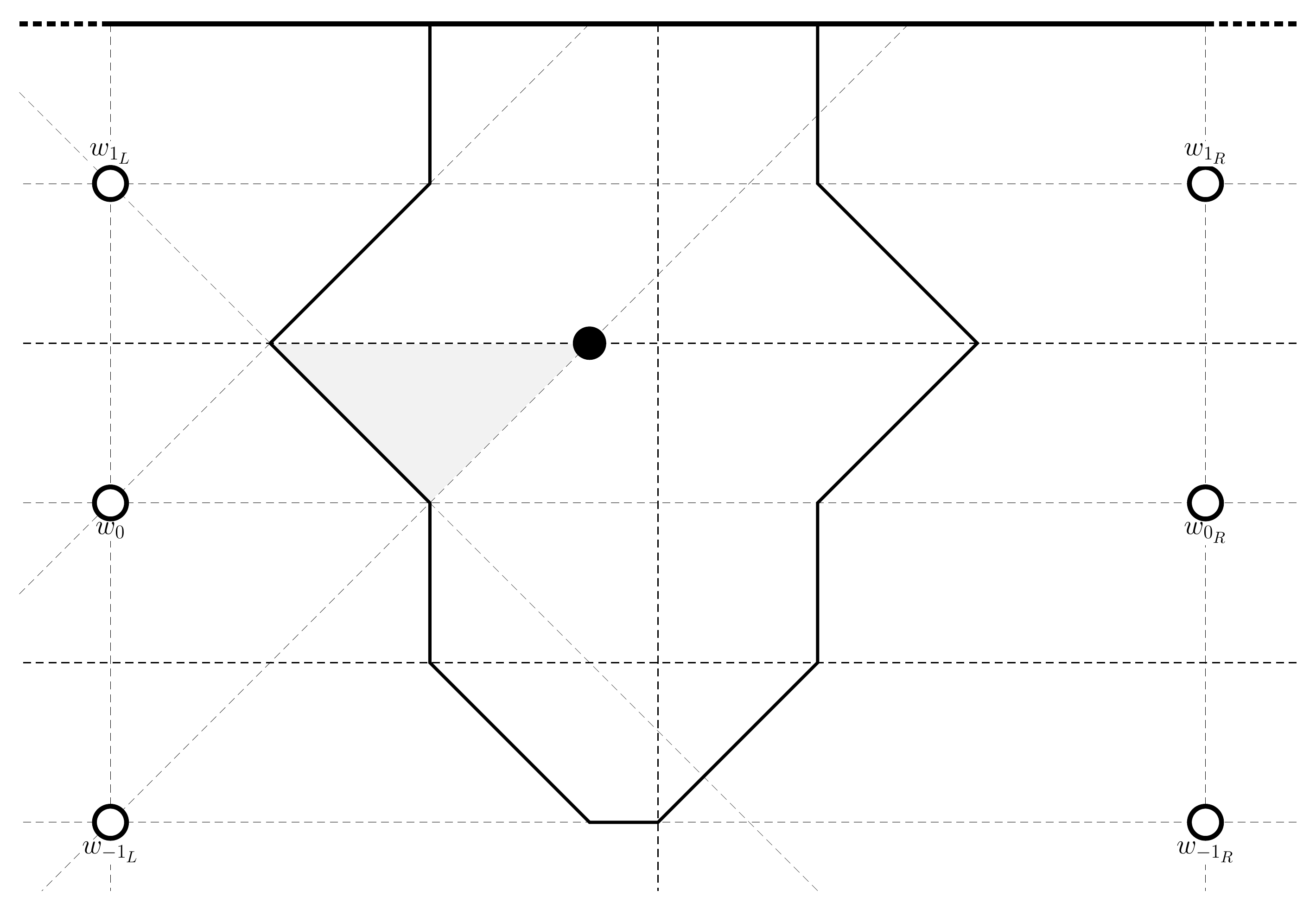}
  \caption{$b_1=(\frac{3q}{2b},\frac{q}{2b})$ only if $\frac{p}{2a} \geq \frac{3q}{2b}$.}
  \label{fig:GridOptimalIIIRes2Long}
\end{subfigure}
\caption{Maximal area Voronoi cells $V^+(b_1)$ for $b_1$ within Section $III$ touching the topmost horizontal edge of $\mathcal{P}$.}
\end{figure}

% then the optimum in Section $b$ is $(\frac{p}{2a}, \frac{(b+1)q}{6b} - \frac{p}{6a})$ giving $Area(V^+((\frac{p}{2a}, \frac{(b+1)q}{6b} - \frac{p}{6a}))) = \frac{(5b-1)pq}{24ab} + \frac{p^2}{12a^2} - \frac{(b-2)(2b-1)q^2}{12b^2}$ (depicted in Figure~\ref{fig:GridOptimalOddRes}), otherwise if $\frac{(2b-1)q}{4b}\leq \frac{p}{2a} \leq \frac{q}{2}$ then the optimum lies on the boundary with Section $b+1$ and is not Black's best point (and so is not drawn).

Finally, if neither $w_{{-1}_L}$ nor $w_{2_L}$ exists then
\begin{equation*}
    \begin{split}
        Area(V^+(b_1)) &= Area(V^+(b_1) \cap (V^\circ(w_{1_L}) \cup V^\circ(w_{1_R}))) \\ &\qquad+ Area(V^+(b_1) \cap (V^\circ(w_{0_L}) \cup V^\circ(w_{0_R}))) \\
        &= - \frac{x^2}{4} - \frac{3y^2}{4} + \frac{p}{4a}x + ( - \frac{p}{4a} + \frac{q}{b})y  + \frac{pq}{ab} - \frac{3q^2}{4b^2} - (\frac{p}{2a} - \frac{x+y}{2}) \times \frac{x-y}{2} \\
        &= - y^2 + \frac{q}{b}y  + \frac{pq}{ab} - \frac{3q^2}{4b^2}
    \end{split}
\end{equation*}
is maximised by $y=\frac{q}{2b}$, irrespective of the value of $x$. Therefore the optimum is $(x,\frac{q}{2b})$ as depicted in Figure~\ref{fig:GridOptimalIIIRes12}.

\begin{figure}[!ht]\ContinuedFloat
\centering
\begin{subfigure}{.55\textwidth}
  \centering
  \includegraphics[width=0.99\textwidth]{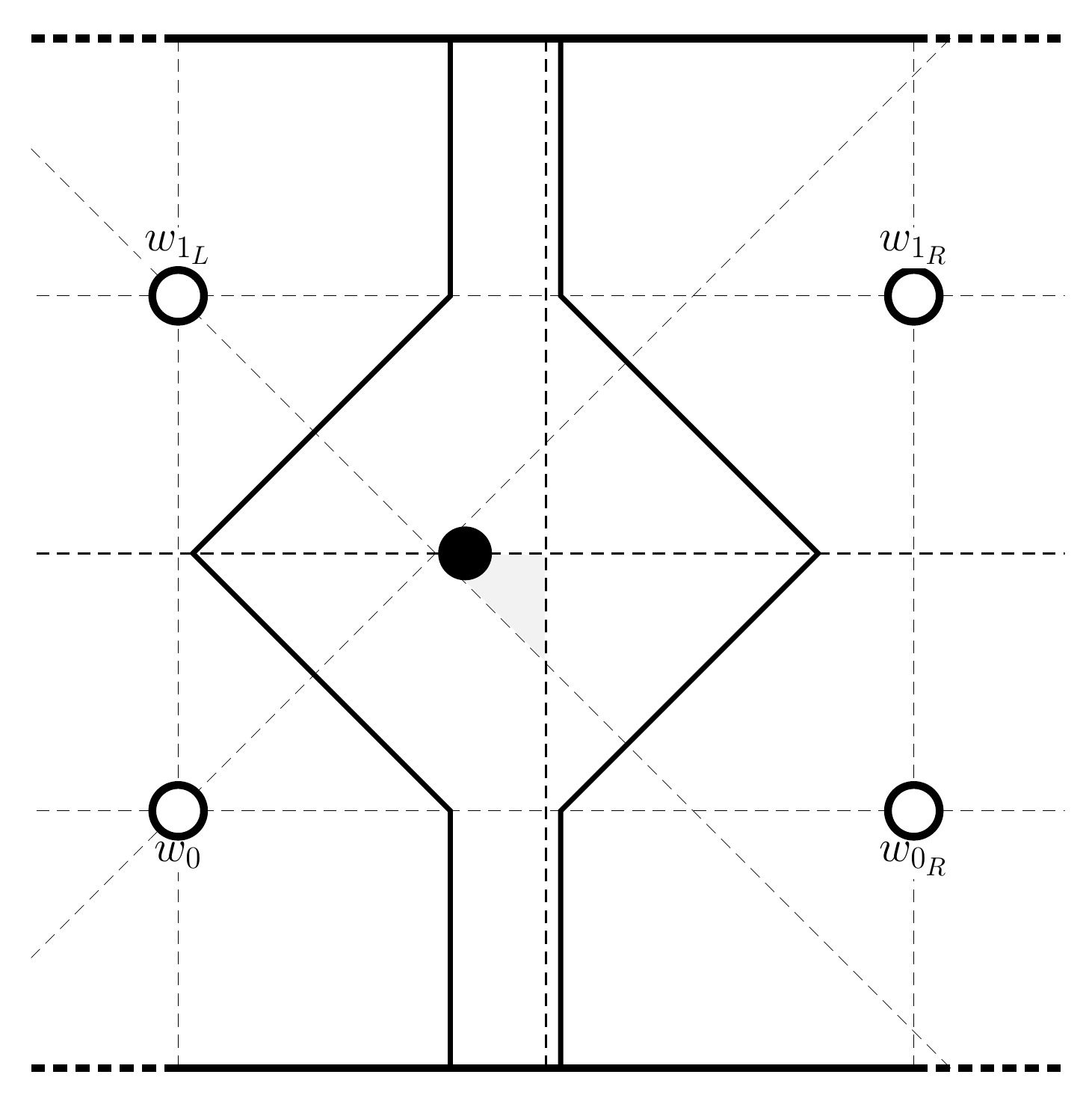}
  \caption{$b_1=(x,\frac{q}{2b})$.}
  \label{fig:GridOptimalIIIRes12}
\end{subfigure}
\caption{Maximal area Voronoi cells $V^+(b_1)$ for $b_1$ within Section $III$ touching both horizontal edges of $\mathcal{P}$.}
\end{figure}

\subsubsection{Edge quadrants}

Now, to consider the placement of $b_1$ in a quadrant of $V^\circ(w)$ which borders the perimeter of $\mathcal{P}$, we shall use the structures explored above and determine all possible cells $V^\circ(b_1)$ in the presence of one boundary of $\mathcal{P}$.

Firstly let us imagine that the quadrant of the cell containing $b_1$ touches $\mathcal{P}$ but does not contain a corner of $\mathcal{P}$ -- i.e. it borders exactly one of the edges of $\mathcal{P}$, so exactly one of $w_{{1}_{L}}$ or $w_{{0}_{R}}$ does not exist. We shall refer to this type of quadrant as an \emph{edge quadrant}. In Figure \ref{fig:GridCellsExamples} this would amount to discarding all area either above $y=\frac{q}{2b}$ or to the right of $x=\frac{p}{2a}$ and we must consider both cuts. However, before we dive into our calculations, let us notice that a vertical cut at $x=\frac{p}{2a}$ would produce Voronoi cells $V^+(b_1)$ for $b_1$ in Section $II$ and beyond exactly resembling those studied in Section~\ref{sec:WhiteRow}, reflected in $y=x$. It should be a great relief to spot this as it saves us having to repeat our calculations since we can simply take our results from Section~\ref{sec:WhiteRow}, remembering to exchange $\frac{p}{n}$ and $q$ with $\frac{q}{b}$ and $\frac{p}{a}$ respectively.

Therefore we need only explore Section $I$ with the vertical cut, and then all sections with the horizontal cut. Another important point to note is that, with a horizontal cut, the partitioning lines donated by points $w_{{i}_L}$ for $i>0$ no longer exist (since the points $w_{{i}_L}$ no longer exist). This means that there is no distinction between Voronoi cells of points in Section $2l$ and Section $2l+1$, so we can explore these together.

But, despite the fear of holding the reader back from diving into the analysis, we can (and will) say still more. The structures from Section $I$ with a vertical cut and Section $II$ (and $III$) with a horizontal cut are identical up to the reflection in $x=y$ as already described. Therefore we need only investigate Section $I$ and transfer the representation symmetrically to Section $II$ (and $III$).

\paragraph{Section $I$} Finding the area of the cell obtained through a horizontal cut according to the Section $I$ structure (with vertices $(0,\frac{x+y}{2})$, $(x,\frac{y-x}{2})$, $(\frac{p}{2a},\frac{y-x}{2})$, $(\frac{p}{2a}+\frac{x+y}{2},y)$, $(\frac{p}{2a}+\frac{x+y}{2},\frac{q}{2b})$, $(-\frac{p}{2a}-\frac{y-x}{2},\frac{q}{2b})$, $(-\frac{p}{2a}-\frac{y-x}{2},y)$, and $(-\frac{p}{2a},\frac{x+y}{2})$) to be

\begin{equation*}
    \begin{split}
        Area(V^+(b_1))&=(\frac{p}{2a}+\frac{x+y}{2}) \times (\frac{q}{2b} - \frac{y-x}{2}) - \frac{1}{2}x^2 -\frac{1}{2}(\frac{x+y}{2})^2 \\
        &+ (\frac{p}{2a}+\frac{y-x}{2}) \times (\frac{q}{2b} - \frac{x+y}{2}) - \frac{1}{2}(\frac{y-x}{2})^2 \\
        &= - \frac{x^2}{4} - \frac{3y^2}{4} + (- \frac{p}{2a} + \frac{q}{2b})y + \frac{pq}{2ab} \, ,  \\
    \end{split}
\end{equation*}
or, if $w_{{0}_{LL}}$ does not exist (i.e. $V^\circ(w_0)$ also touches the perimeter of $\mathcal{P}$ on its left edge),
\begin{equation*}
    \begin{split}
        Area(V^+(b_1))&=(\frac{p}{2a}+\frac{x+y}{2}) \times (\frac{q}{2b} - \frac{y-x}{2}) - \frac{1}{2}x^2 -\frac{1}{2}(\frac{x+y}{2})^2 + (\frac{p}{2a}) \times (\frac{q}{2b} - \frac{x+y}{2}) \\
        &= - \frac{3x^2}{8} - \frac{3y^2}{8} - \frac{xy}{4} + \frac{q}{4b}x + (- \frac{p}{2a} + \frac{q}{4b})y + \frac{pq}{2ab}  \\
    \end{split}
\end{equation*}
gives partial derivatives
\begin{equation*}
\begin{split}
\frac{\delta A}{\delta x}&= - \frac{x}{2} \\
\frac{\delta A}{\delta y}&= - \frac{3y}{2} - \frac{p}{2a} + \frac{q}{2b}
\end{split}
\end{equation*}
enforcing $y^*=- \frac{p}{3a} + \frac{q}{3b} \ngtr 0$, or, if $w_{{0}_{LL}}$ does not exist, gives partial derivatives
\begin{equation*}
\begin{split}
\frac{\delta A}{\delta x}&= - \frac{3x}{4} - \frac{y}{4} + \frac{q}{4b} \\
\frac{\delta A}{\delta y}&= - \frac{3y}{4} - \frac{x}{4} - \frac{p}{2a} + \frac{q}{4b} \\
&\Rightarrow 2x^* - \frac{p}{2a} - \frac{q}{2b} = 0 \Rightarrow x^*= \frac{p}{4a} + \frac{q}{4b}, y^*=- \frac{3p}{4a} + \frac{q}{4b}
\end{split}
\end{equation*}
where $y^*=- \frac{3p}{4a} + \frac{q}{4b} \ngtr 0$ since $\frac{p}{a} \geq \frac{q}{b}$. Therefore for both instances we must explore the boundary of Section $I$ for an optimal location of $b_1$.
\begin{itemize}
    \item Upon boundary $x=0$ we have $Area(V^+((0,y)))= - \frac{3y^2}{4} + (- \frac{p}{2a} + \frac{q}{2b})y + \frac{pq}{2ab}$, maximised by $y^*= - \frac{p}{3a} + \frac{q}{3b} \ngtr 0$. Therefore the maximum will be at $b^*_1=(0,0)$ where $Area(V^+(b^*_1))=\frac{pq}{2ab}$. Alternatively, if $(-\frac{p}{a},0)$ does not exist then $Area(V^+((0,y)))= - \frac{3y^2}{8} + (- \frac{p}{2a} + \frac{q}{4b})y + \frac{pq}{2ab}$, maximised by $y^*= - \frac{2p}{3a} + \frac{q}{3b} \ngtr 0$. Therefore the maximum will also be at $b^*_1=(0,0)$ where $Area(V^+(b^*_1))=\frac{pq}{2ab}$.
    \item Upon boundary $x=y$ we have $Area(V^+((x,x)))= - x^2 + (- \frac{p}{2a} + \frac{q}{2b})x + \frac{pq}{2ab}$ which is maximised by $x^*=- \frac{p}{4a} + \frac{q}{4b} \ngtr 0$, so again the maximum is found at $b^*_1=(0,0)$. Alternatively, if $(-\frac{p}{a},0)$ does not exist then $Area(V^+((x,x)))= - x^2 + (- \frac{p}{2a} + \frac{q}{2b})x + \frac{pq}{2ab}$ is maximised by $x^*= - \frac{p}{4a} + \frac{q}{4b} \ngtr 0$ so again the maximum lies at $b^*_1=(0,0)$.
    \item Upon boundary $y=\frac{q}{2b}$, since its endpoints are shared with endpoints of the other two boundaries which were found not to be optimal over those boundaries, any area will be less than the maximised area already found, and thus the optimum will not exist on this boundary.
\end{itemize}
Therefore the optimal location of $b_1$ in Section $I$ of an edge quadrant for a horizontal edge is to place as close as possible to White's point, and since this technique is summarised in Lemma~\ref{steal} %4.1.1
we do not depict this in Figure \ref{fig:GridOptimals}.

Now, the cell obtained through a vertical cut according to the Section $I$ structure has vertices $(0,\frac{x+y}{2})$, $(x,\frac{y-x}{2})$, $(\frac{p}{2a},\frac{y-x}{2})$, $(\frac{p}{2a},\frac{q}{2b}+\frac{x+y}{2})$, $(x,\frac{q}{2b}+\frac{x+y}{2})$, $(0,\frac{q}{2b}+\frac{y-x}{2})$, $(-\frac{p}{2a},\frac{q}{2b}+\frac{y-x}{2})$, $(-\frac{p}{2a}-\frac{y-x}{2},\frac{q}{2b})$, $(-\frac{p}{2a}-\frac{y-x}{2},y)$, and $(-\frac{p}{2a},\frac{x+y}{2})$ and area

\begin{equation*}
    \begin{split}
        Area(V^+(b_1)) &= - \frac{x^2}{2} - \frac{y^2}{2} + \frac{q}{2b}y +  \frac{pq}{2ab} - (\frac{x+y}{2} \times (\frac{q}{2b} - y) + (\frac{x+y}{2})^2) \\
        &= - \frac{3x^2}{4} - \frac{y^2}{4} - \frac{q}{4b}x + \frac{q}{4b}y +  \frac{pq}{2ab} \, .\\
    \end{split}
\end{equation*}
Note that since $a>1$, the points $w_{{0}_{LL}}$ and $w_{{1}_{LL}}$ always exist. The partial derivatives
\begin{equation*}
\begin{split}
\frac{\delta A}{\delta x}&= - \frac{3x}{2} - \frac{q}{4b} \\
\frac{\delta A}{\delta y}&= - \frac{y}{2} + \frac{q}{4b} \\
\end{split}
\end{equation*}
give $b^*_1=(-\frac{q}{6b},\frac{q}{2b})$ which is not contained in Section $I$ so again we must explore the boundary of Section $I$.
\begin{itemize}
    \item Upon boundary $x=0$ we have $Area(V^+((0,y)))= - \frac{y^2}{4} + \frac{q}{4b}y +  \frac{pq}{2ab}$ which is maximised by $y^*=\frac{q}{2b}$ to give $Area(V^+((0,\frac{q}{2b}))=\frac{pq}{2ab} + \frac{q^2}{16b^2}$.
    \item Upon boundary $x=y$ we have $Area(V^+((x,x)))= - x^2 +  \frac{pq}{2ab}$ which is maximised by $x^*=0$ to give $Area(V^+((0,0)))=\frac{pq}{2ab}$.
    \item Upon boundary $y=\frac{q}{2b}$, since its endpoints are shared with endpoints of the other two boundaries which were found not to be optimal over those boundaries, any area will be less than the maximised area already found, and thus the optimum will not exist on this boundary.
\end{itemize}
%Since the area formula is separable in $x$ and $y$ and we know that $y^*=\frac{q}{2b}$, we need only check the two intersections of $y=\frac{q}{2b}$ with the boundary of Section $I$. Simply, $Area(V^+((0,\frac{q}{2b}))=\frac{pq}{2ab} + \frac{q^2}{16b^2}$ and $Area(V^+((0,\frac{q}{2b}))=\frac{pq}{2ab} - \frac{q^2}{4b^2}$,
Therefore, for the vertical cut, the optimal location in Section $I$ is $b^*_1=(0,\frac{q}{2b})$ giving $Area(V^+((0,\frac{q}{2b})) = \frac{pq}{2ab} + \frac{q^2}{16b^2}$. This is depicted in Figure \ref{fig:GridOptimalIEdgeVert}.

\begin{figure}[!ht]\ContinuedFloat
\centering
\begin{subfigure}{.5\textwidth}
  \centering
  \includegraphics[width=0.9\textwidth]{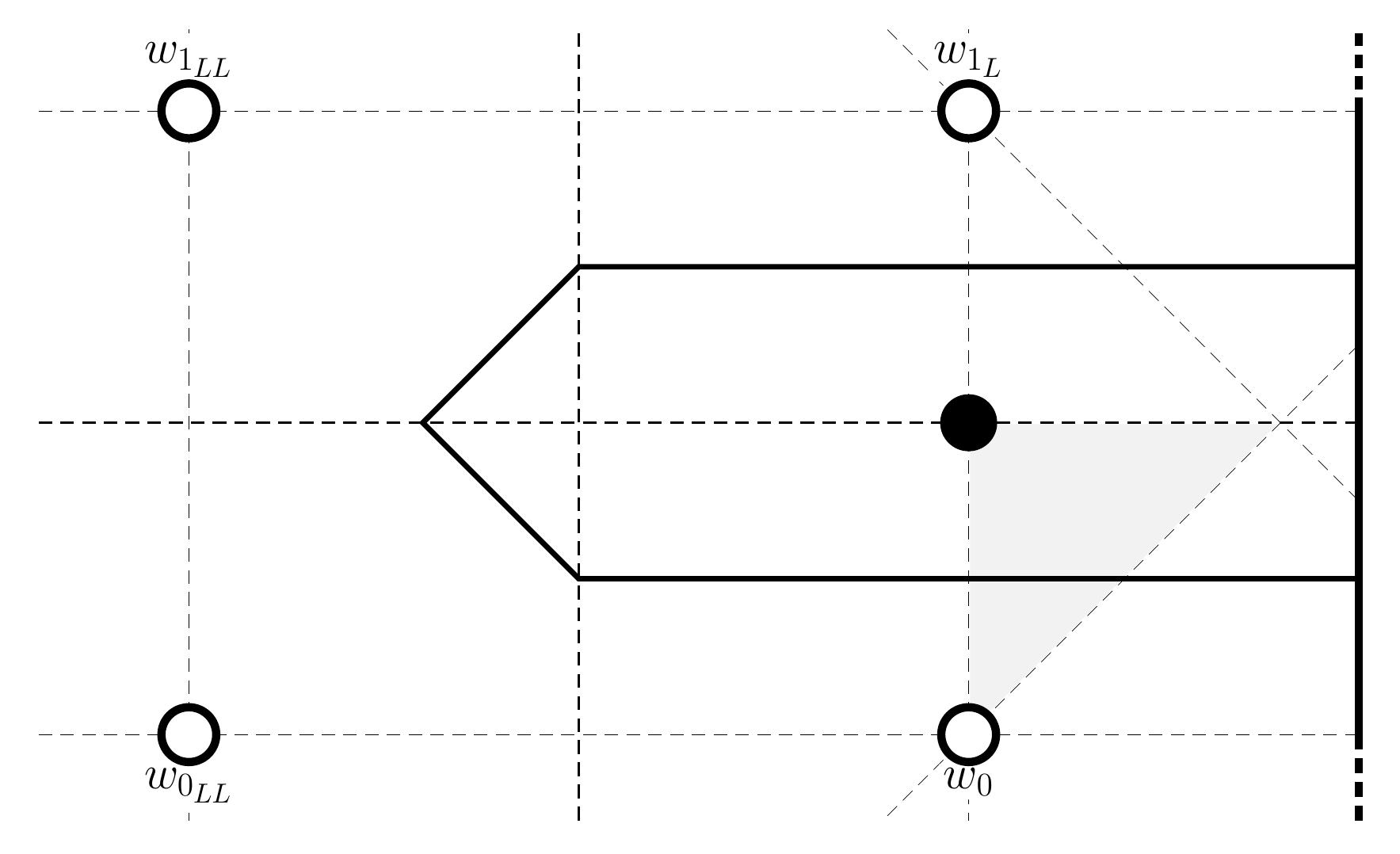}
  \caption{$Area(V^+((0,\frac{q}{2b})))=\frac{pq}{2ab} + \frac{q^2}{16b^2}$.}
  \label{fig:GridOptimalIEdgeVert}
\end{subfigure}
\caption{Maximal area Voronoi cell $V^+(b_1)$ touching a vertical boundary for $b_1$ within Section $I$.}
\end{figure}

\paragraph{Section $II$ (and $III$)} So what does this tell us about Section $II$ and Section $III$?
%Retracing our steps in the symmetrical framework,it is now the case, for the vertical edge situation (taking results from the horizontal edge situation studied for Section $I$), that the global optimum $(\frac{p}{3a}-\frac{q}{3b},0)$ is feasible ($y^*$ previously lay outside Section $I$), giving a total area $Area(V^+(\frac{p}{3a}-\frac{q}{3b},0))=\frac{pq}{3ab} + \frac{p^2}{12a^2} + \frac{q^2}{12b^2}$. This is depicted in Figure \ref{fig:GridOptimalIIEdgeVert}.
%
%However, the global optimum for the section\todo{This was cell} where $(0,-\frac{q}{b})$ does not exist is now $(\frac{p}{4a}-\frac{3q}{4b},\frac{p}{4a}+\frac{q}{4b})$ which lies the wrong side of $x=y$ to fall inside Section $II$. For this case we must study the boundaries of Section $II$ which we can take directly from the results found in Section $I$ for when $(-\frac{p}{a},0)$ does not exist; the optimal location of $b_1$ is as close as possible to White's point in order to achieve an area no greater than $\frac{pq}{2ab}$. Again, this is not depicted in Figure~\ref{fig:GridOptimals}.

For the horizontal edge situation in Sections $II$ and $III$, the results can be transformed directly from the vertical edge situation studied for Section $I$ since this work did not rely on any relationship between the sizes of $\frac{p}{a}$ and $\frac{q}{b}$. Therefore, retracing our steps, the area $$Area(V^+(b_1)) = - \frac{x^2}{4} - \frac{3y^2}{4} + \frac{p}{4a}x - \frac{p}{4a}y + \frac{pq}{2ab}$$ has partial derivatives
\begin{equation*}
\begin{split}
\frac{\delta A}{\delta x}&= - \frac{x}{2} + \frac{p}{4a} \\
\frac{\delta A}{\delta y}&= - \frac{3y}{2} - \frac{p}{4a} \\
\end{split}
\end{equation*}
which give an optimum $b^*_1=(\frac{p}{2a},-\frac{p}{6a})$ which is obviously not within Section $II$ or Section $III$. Therefore the optimum will lie upon either one of the boundaries $y=0$ or $y=x-\frac{q}{b}$ of Section $II$ and Section $III$:
\begin{itemize}
    \item Upon the boundary $y=0$, the optimum will clearly lie at $(\frac{p}{2a},0)$ giving $Area(V^+((\frac{p}{2a},0))) = \frac{pq}{2ab}+\frac{p^2}{16a^2}$. However, if $\frac{p}{2a} \geq \frac{q}{b}$ then this point will not lie within Section $II$ or $III$ so the optimum will instead be $(\frac{q}{b},0)$, giving $Area(V^+((\frac{q}{b},0)))=\frac{3pq}{4ab} - \frac{q^2}{4b^2}$.
    
    \item Upon the boundary $y=x-\frac{q}{b}$ we have $Area(V^+(x,x-\frac{q}{b}))= - x^2 + \frac{3q}{2b}x + \frac{3pq}{4ab} - \frac{3q^2}{4b^2}$ which is maximised by $x^*=\frac{3q}{4b}$. Since this is not within Section $II$ or $III$ the optimum must lie at the closest endpoint, the intersection of $y=0$ and $y=x-\frac{q}{b}$ already studied.
\end{itemize}
Hence $b_1=(\frac{p}{2a},0)$ gives the largest possible $Area(V^+((\frac{p}{2a},0))) = \frac{pq}{2ab}+\frac{p^2}{16a^2}$ as depicted in Figure~\ref{fig:GridOptimalIIEdgeHoria}, unless $\frac{p}{2a} > \frac{q}{b}$ in which case $b_1=(\frac{q}{b},0)$ gives the largest possible $Area(V^+((\frac{q}{b},0)))=\frac{3pq}{4ab} - \frac{q^2}{4b^2}$ as depicted in Figure \ref{fig:GridOptimalIIEdgeHorib}.

\begin{figure}[!ht]\ContinuedFloat
\centering
\begin{subfigure}{.5\textwidth}
  \centering
  \includegraphics[width=0.9\textwidth]{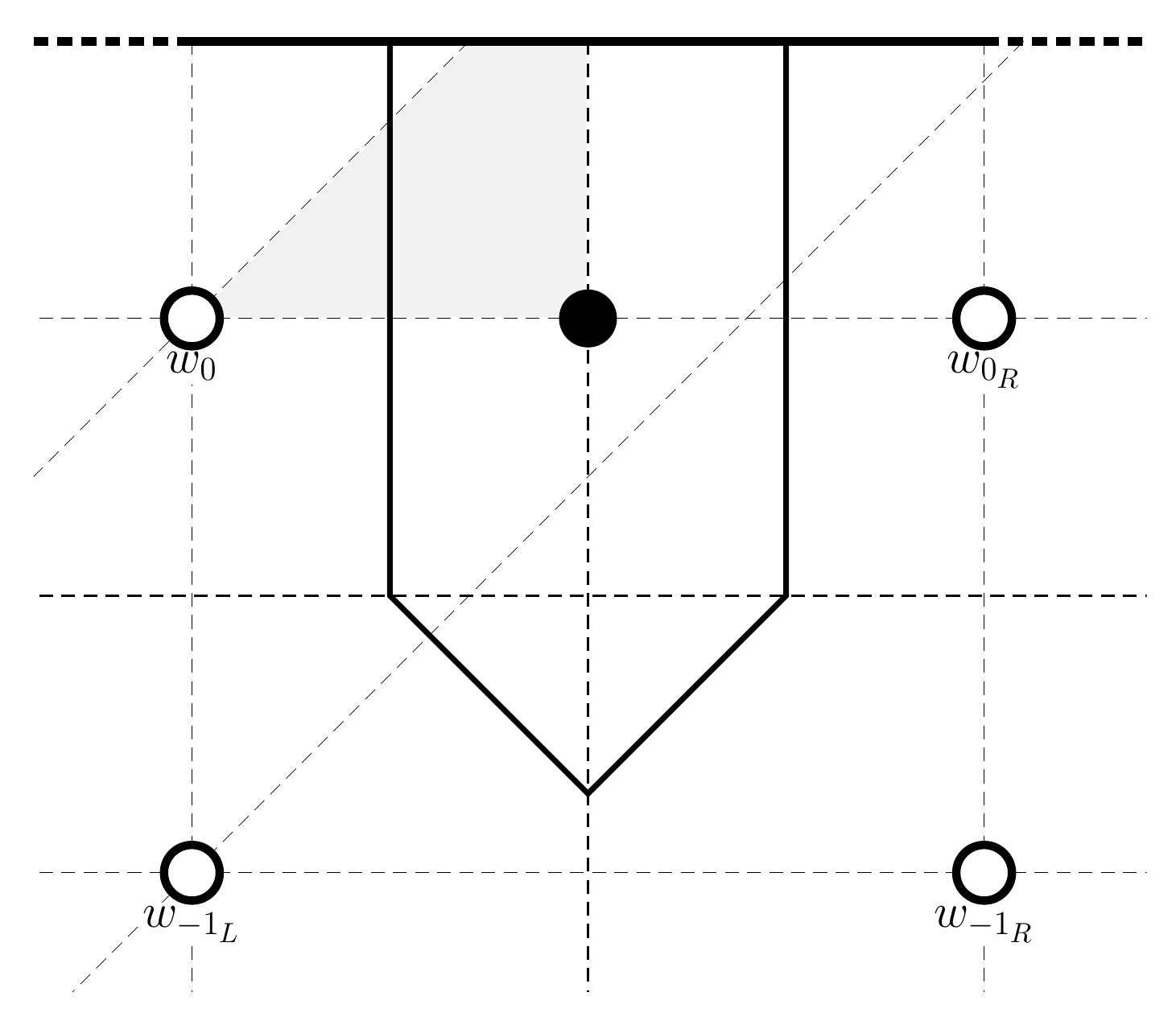}
  \caption{$b_1=(\frac{p}{2a},0)$ only if $\frac{p}{2a} \leq \frac{q}{b}$.}
  \label{fig:GridOptimalIIEdgeHoria}
\end{subfigure}
\begin{subfigure}{.9\textwidth}
  \centering
  \includegraphics[width=0.9\textwidth]{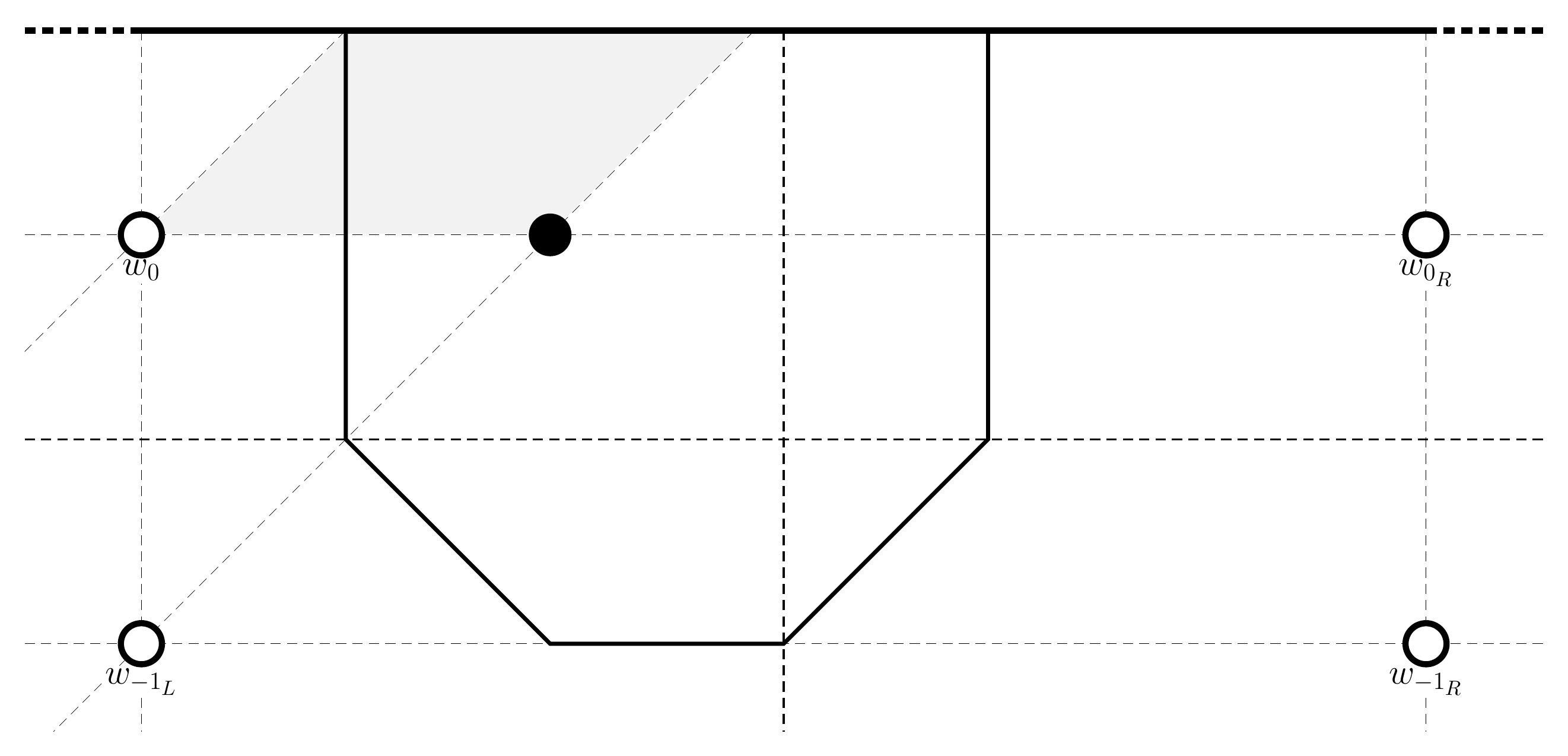}
  \caption{$b_1=(\frac{q}{b},0)$ only if $\frac{p}{2a} > \frac{q}{b}$.}
  \label{fig:GridOptimalIIEdgeHorib}
\end{subfigure}
\caption{Maximal area Voronoi cells $V^+(b_1)$ touching a horizontal boundary for $b_1$ within Sections $II$ and $III$.}
\label{fig:GridOptimalIandIIandIII}
\end{figure}

\subsubsection{Corner quadrants}

Now let us imagine that the quadrant of the cell containing $b_1$ contains a corner of $\mathcal{P}$ -- i.e. both outside boundaries of this quadrant are on the perimeter of $\mathcal{P}$ (neither $w_{1_L}$ nor $w_{0_R}$ exists). We shall refer to this type of quadrant as a \emph{corner quadrant}. In Figure \ref{fig:GridCellsExamples} this would amount to discarding all areas above $y=\frac{q}{2b}$ and to the right of $x=\frac{p}{2a}$. Fortuitously, as with many cases for edge quadrants, Section $II$ and beyond in corner quadrants have already been covered in Section~\ref{sec:WhiteRow} since the areas are identical after a reflection in $y=x$ and suitable rescaling of $p$ and $q$. To our delight, the same is also true for areas within Section $I$, even without the described reflection, since these cells would only require bisectors from one row of points in $W$. Therefore with one fell swoop we can discard having to explore any corner quadrants as the results are contained in the work in Section~\ref{sec:WhiteRow}.
%We can see that we have two distinct possible structures (Section $II$ and $III$ cells are identical) and, again, even that these are identical up to the reflection in $x=y$.

\bigskip

Having found all local optima within the sections closest to each $w_i$ we can experiment with different combinations of these points in order to form reasonable arrangements for Black. However, while we were able to make use of the best point $b^*$ within the first sections when White plays a row, we find that the points drawn in Figure~\ref{fig:GridOptimalIandIIandIII} make pretty lousy team players. This is due to the fact that without any existing black points the placement of $b_1$ within any section, with the exception of Section $I$, is always improved by locating closer to the line $x=\frac{p}{2a}$ in order to make the most of the alluring area left to capture from the Voronoi cells $V^\circ(w_{i_R})$. In this way $V^+(b_1)$ steals a mediocre amount from a large number of Voronoi cells, instead of a large amount from a few cells which we were hoping for (points which steal efficiently from fewer cells provide less risk of overlapping with other black points and so will work well within an arrangement).

It is for this reason that it is probably a more fruitful approach to consider the best points from Black's row strategy for candidates within an arrangement for the grid scenario. Since, when White plays a row, $\mathcal{P}$ provides a physical cap bounding the area available to capture in the opposite direction of the generator of the cell within which Black is locating, the optimal locations better reflect the attempt to steal as much as possible from fewer cells (at least for smaller sections). One such effective arrangement is the grid adaptation of the optimal row arrangement found in Section~\ref{sec:BlackRow} (see Figure~\ref{fig:RowOptimalArrangements} for a reminder) whereby, at least for the preferable even $b$ scenario, each column $w_{i_X}$ of white points for $i=1,\ldots,a$ can be sandwiched between points of Black to create $a$ columns of the arrangement in Figure~\ref{fig:RowOptimalArrangementEven}, rotated by $90^\circ$, where $n=b$. Similarly for a row arrangement $W$, this response from Black does seem, at least for certain values of $\frac{p}{a}$ and $\frac{q}{b}$, to be a good arrangement. However, if the challenges in proving the optimality of an arrangement presented in Section~\ref{sec:BlackRow} left us hiding behind our cushions then we should certainly avert our eyes from the problem of finding optimal arrangements in response to a grid, since the supreme difficulties exhibited here far overshadow what we have already seen. This is in part due to the fact that a point which steals from $V^\circ(w_0)$ will encroach on the thefts from cells below, and above, and to the left or right of $V^\circ(w_0)$ -- a whole new direction to consider in contrast to the two-dimensional reasoning exploited in Section~\ref{sec:BlackRow}.

For this reason we should feel content at having found the best points in response to both a general row and a general grid arrangement, as well as an optimal black arrangement for the row case, and reward ourselves with a forage among the bounties that the non-grid world may furnish.

\bibliographystyle{apacite}
\bibliography{literature}

\begin{thebibliography}{}

\bibitem [\protect \citeauthoryear {%
Averbakh%
, Berman%
, Kalcsics%
\BCBL {}\ \BBA {} Krass%
}{%
Averbakh%
\ \protect \BOthers {.}}{%
{\protect \APACyear {2015}}%
}]{%
AveBerKalKra15}
\APACinsertmetastar {%
AveBerKalKra15}%
\begin{APACrefauthors}%
Averbakh, I.%
, Berman, O.%
, Kalcsics, J.%
\BCBL {}\ \BBA {} Krass, D.%
\end{APACrefauthors}%
\unskip\
\newblock
\APACrefYearMonthDay{2015}{}{}.
\newblock
{\BBOQ}\APACrefatitle {Structural Properties of {V}oronoi Diagrams in Facility
  Location Problems with Continuous Demand} {Structural properties of {V}oronoi
  diagrams in facility location problems with continuous demand}.{\BBCQ}
\newblock
\APACjournalVolNumPages{Operations Research}{62}{2}{394--411}.
\PrintBackRefs{\CurrentBib}

\bibitem [\protect \citeauthoryear {%
Bhadury%
, Eiselt%
\BCBL {}\ \BBA {} Jaramillo%
}{%
Bhadury%
\ \protect \BOthers {.}}{%
{\protect \APACyear {2003}}%
}]{%
BhaEisJar03}
\APACinsertmetastar {%
BhaEisJar03}%
\begin{APACrefauthors}%
Bhadury, J.%
, Eiselt, H\BPBI A.%
\BCBL {}\ \BBA {} Jaramillo, J\BPBI H.%
\end{APACrefauthors}%
\unskip\
\newblock
\APACrefYearMonthDay{2003}{}{}.
\newblock
{\BBOQ}\APACrefatitle {An alternating heuristic for medianoid and centroid
  problems in the plane} {An alternating heuristic for medianoid and centroid
  problems in the plane}.{\BBCQ}
\newblock
\APACjournalVolNumPages{Computers \& Operations Research}{30}{2}{553--565}.
\PrintBackRefs{\CurrentBib}

\bibitem [\protect \citeauthoryear {%
Byrne%
, Fekete%
, Kalcsics%
\BCBL {}\ \BBA {} Kleist%
}{%
Byrne%
\ \protect \BOthers {.}}{%
{\protect \APACyear {2021}}%
}]{%
ByrFekKalKle20}
\APACinsertmetastar {%
ByrFekKalKle20}%
\begin{APACrefauthors}%
Byrne, T.%
, Fekete, S.%
, Kalcsics, J.%
\BCBL {}\ \BBA {} Kleist, L.%
\end{APACrefauthors}%
\unskip\
\newblock
\APACrefYearMonthDay{2021}{}{}.
\newblock
{\BBOQ}\APACrefatitle {Competitive location problems: balanced facility
  location and the one-round {M}anhattan {V}oronoi game} {Competitive location
  problems: balanced facility location and the one-round {M}anhattan {V}oronoi
  game}.{\BBCQ}
\newblock
\BIn{} \APACrefbtitle {{I}nternational {C}onference and {W}orkshops on
  {A}lgorithms and {C}omputations ({WALCOM}) [to appear].} {{I}nternational
  {C}onference and {W}orkshops on {A}lgorithms and {C}omputations ({WALCOM})
  [to appear].}
\PrintBackRefs{\CurrentBib}

\bibitem [\protect \citeauthoryear {%
Drezner%
}{%
Drezner%
}{%
{\protect \APACyear {1982}}%
}]{%
Dre82}
\APACinsertmetastar {%
Dre82}%
\begin{APACrefauthors}%
Drezner, Z.%
\end{APACrefauthors}%
\unskip\
\newblock
\APACrefYearMonthDay{1982}{}{}.
\newblock
{\BBOQ}\APACrefatitle {Competitive location strategies for two facilities}
  {Competitive location strategies for two facilities}.{\BBCQ}
\newblock
\APACjournalVolNumPages{Regional Science and Urban Economics}{12}{4}{485--493}.
\PrintBackRefs{\CurrentBib}

\bibitem [\protect \citeauthoryear {%
Eiselt%
\ \BBA {} Laporte%
}{%
Eiselt%
\ \BBA {} Laporte%
}{%
{\protect \APACyear {1997}}%
}]{%
EisLap97}
\APACinsertmetastar {%
EisLap97}%
\begin{APACrefauthors}%
Eiselt, H\BPBI A.%
\BCBT {}\ \BBA {} Laporte, G.%
\end{APACrefauthors}%
\unskip\
\newblock
\APACrefYearMonthDay{1997}{}{}.
\newblock
{\BBOQ}\APACrefatitle {Sequential location problems} {Sequential location
  problems}.{\BBCQ}
\newblock
\APACjournalVolNumPages{European Journal of Operational
  Research}{96}{2}{217--231}.
\PrintBackRefs{\CurrentBib}

\bibitem [\protect \citeauthoryear {%
K\"{u}\c{c}\"{u}kayd{\i}n%
, Aras%
\BCBL {}\ \BBA {} {Kuban Alt{\i}nel}%
}{%
K\"{u}\c{c}\"{u}kayd{\i}n%
\ \protect \BOthers {.}}{%
{\protect \APACyear {2012}}%
}]{%
KucAraKub12}
\APACinsertmetastar {%
KucAraKub12}%
\begin{APACrefauthors}%
K\"{u}\c{c}\"{u}kayd{\i}n, H.%
, Aras, N.%
\BCBL {}\ \BBA {} {Kuban Alt{\i}nel}, I.%
\end{APACrefauthors}%
\unskip\
\newblock
\APACrefYearMonthDay{2012}{}{}.
\newblock
{\BBOQ}\APACrefatitle {A leader–follower game in competitive facility
  location} {A leader–follower game in competitive facility location}.{\BBCQ}
\newblock
\APACjournalVolNumPages{Computers \& Operations Research}{39}{2}{437--448}.
\PrintBackRefs{\CurrentBib}

\bibitem [\protect \citeauthoryear {%
Moore%
\ \BBA {} Bard%
}{%
Moore%
\ \BBA {} Bard%
}{%
{\protect \APACyear {1990}}%
}]{%
MooBar90}
\APACinsertmetastar {%
MooBar90}%
\begin{APACrefauthors}%
Moore, J\BPBI T.%
\BCBT {}\ \BBA {} Bard, J\BPBI F.%
\end{APACrefauthors}%
\unskip\
\newblock
\APACrefYearMonthDay{1990}{}{}.
\newblock
{\BBOQ}\APACrefatitle {The mixed-integer linear bilevel programming problem}
  {The mixed-integer linear bilevel programming problem}.{\BBCQ}
\newblock
\APACjournalVolNumPages{Operations Research}{38}{2}{911--921}.
\PrintBackRefs{\CurrentBib}

\bibitem [\protect \citeauthoryear {%
Plastria%
}{%
Plastria%
}{%
{\protect \APACyear {2001}}%
}]{%
Pla01}
\APACinsertmetastar {%
Pla01}%
\begin{APACrefauthors}%
Plastria, F.%
\end{APACrefauthors}%
\unskip\
\newblock
\APACrefYearMonthDay{2001}{}{}.
\newblock
{\BBOQ}\APACrefatitle {Static competitive facility location: an overview of
  optimisation approaches} {Static competitive facility location: an overview
  of optimisation approaches}.{\BBCQ}
\newblock
\APACjournalVolNumPages{European Journal of Operational
  Research}{129}{3}{461--470}.
\PrintBackRefs{\CurrentBib}

\bibitem [\protect \citeauthoryear {%
Reilly%
}{%
Reilly%
}{%
{\protect \APACyear {1931}}%
}]{%
Rei31}
\APACinsertmetastar {%
Rei31}%
\begin{APACrefauthors}%
Reilly, W\BPBI J.%
\end{APACrefauthors}%
\unskip\
\newblock
\APACrefYear{1931}.
\newblock
\APACrefbtitle {The law of retail gravitation} {The law of retail gravitation}.
\newblock
\APACaddressPublisher{New York, NY}{Knickerbocker Press}.
\PrintBackRefs{\CurrentBib}

\bibitem [\protect \citeauthoryear {%
Serra%
\ \BBA {} Revelle%
}{%
Serra%
\ \BBA {} Revelle%
}{%
{\protect \APACyear {1994}}%
}]{%
SerRev94}
\APACinsertmetastar {%
SerRev94}%
\begin{APACrefauthors}%
Serra, D.%
\BCBT {}\ \BBA {} Revelle, C.%
\end{APACrefauthors}%
\unskip\
\newblock
\APACrefYearMonthDay{1994}{}{}.
\newblock
{\BBOQ}\APACrefatitle {Market capture by two competitors: the pre-emptive
  location problem} {Market capture by two competitors: the pre-emptive
  location problem}.{\BBCQ}
\newblock
\APACjournalVolNumPages{Journal of Regional Science}{34}{4}{549--561}.
\PrintBackRefs{\CurrentBib}

\end{thebibliography}

\end{document}